\font\myfont=cmr12 at 20pt
\def\tr{\mathop{\text{tr}}\kern.2ex}
\def\cb{{\mathbf c}}
\def\ab{{\mathbf a}}
\def\supp{\mathop{\text{supp}}}
\long\def\comment#1{}
\def\tr{\mathop{\text{Tr}}}
\def\cS{{\mathcal{S}}}
\def\diag{\operatorname{diag}}
\newcommand{\bel}{\begin{eqnarray}\label}
\newcommand{\eel}{\end{eqnarray}}
\newcommand{\bes}{\begin{eqnarray*}}
	\newcommand{\ees}{\end{eqnarray*}}
\let\hat\widehat
\let\tilde\widetilde
\def\EE{{\mathbb E}}
\def\supp{\mathop{\text{supp}\kern.2ex}}
\def\sgn{\operatorname{sgn}}
\def\tr{{\rm{Tr}}}
\def\supp{\mathop{\text{supp}}}
\def\M{{\mathrm M}}
\def\tr{\mathrm{Tr}}
\def \F {\rm F}
\renewcommand{\baselinestretch}{1.0}
\def \hbSigma {\widehat{\bSigma}}
\def \bLambda {\mathbf{\Lambda}}
\def \F {\text{F}}
\def \btheta {\boldsymbol{\theta}}
\def \bTheta {\mathbf{\Theta}}
\def\##1\#{\begin{align}#1\end{align}}
\def\$#1\${\begin{align*}#1\end{align*}}
\theoremstyle{plain}
\newcommand{\eb}{\mathbf{e}}
\newcommand{\fb}{\mathbf{f}}
\newcommand{\tb}{\mathbf{t}}
\newcommand{\ub}{\mathbf{u}}
\newcommand{\vb}{\mathbf{v}}
\newcommand{\xb}{\mathbf{x}}
\newcommand{\yb}{\mathbf{y}}
\newcommand{\bd}{\bm{d}}
\newcommand{\bv}{\bm{v}}
\newcommand{\bx}{\bm{x}}
\newcommand{\Ab}{\mathbf{A}}
\newcommand{\Bb}{\mathbf{B}}
\newcommand{\Cb}{\mathbf{C}}
\newcommand{\Db}{\mathbf{D}}
\newcommand{\Eb}{\mathbf{E}}
\newcommand{\Fb}{\mathbf{F}}
\newcommand{\Gb}{\mathbf{G}}
\newcommand{\Hb}{\mathbf{H}}
\newcommand{\Ib}{\mathbf{I}}
\newcommand{\Mb}{\mathbf{M}}
\newcommand{\Ob}{\mathbf{O}}
\newcommand{\Pb}{\mathbf{P}}
\newcommand{\Qb}{\mathbf{Q}}
\newcommand{\Rb}{\mathbf{R}}
\newcommand{\Ub}{\mathbf{U}}
\newcommand{\Vb}{\mathbf{V}}
\newcommand{\Xb}{\mathbf{X}}
\newcommand{\Yb}{\mathbf{Y}}
\newcommand{\Zb}{\mathbf{Z}}
\newcommand{\bW}{\bm{W}}
\newcommand{\bX}{\bm{X}}
\newcommand{\bZ}{\bm{Z}}
\newcommand{\cA}{\mathcal{A}}
\newcommand{\cB}{\mathcal{B}}
\newcommand{\cD}{\mathcal{D}}
\newcommand{\cE}{\mathcal{E}}
\newcommand{\cN}{\mathcal{N}}
\newcommand{\cP}{\mathcal{P}}
\newcommand{\cV}{\mathcal{V}}
\newcommand{\II}{\mathbb{I}}
\newcommand{\PP}{\mathbb{P}}
\newcommand{\RR}{\mathbb{R}}
\newcommand{\bSigma}{\bm{\Sigma}}
\newcommand{\bOmega}{\bm{\Omega}}
\DeclareMathOperator{\Var}{{\rm Var}}
\DeclareMathOperator*{\Cov}{\rm Cov}
\theoremstyle{plain}
\newtheorem{theorem}{Theorem}[section]
\newtheorem{lemma}[theorem]{Lemma}
\newtheorem{corollary}[theorem]{Corollary}
\theoremstyle{remark}
\newtheorem{example}{Example}
\newtheorem{remark}{Remark}
\newtheorem{assumption}{Assumption}
\newcommand{\blind}{1}
\newcommand{\litreview}{0}
\begin{document}

\def\spacingset#1{\renewcommand{\baselinestretch}%
{#1}\small\normalsize} \spacingset{1}


\if1\blind
{
  \title{\bf Dimension Reduction for Large-Scale Federated Data: Statistical Rate and Asymptotic Inference}
   \author{Shuting Shen, Junwei Lu, and Xihong Lin 
  \thanks{Shuting Shen is Assistant Professor of Statistics \& Data Science at the National University of Singapore ({\em shuting\_shen@nus.edu.sg}), Junwei Lu ({\em junweilu@hsph.harvard.edu}) is  Assistant Professor at the Department of Biostatistics at Harvard T.H. Chan School of Public Health, and Xihong Lin is Professor of Biostatistics at Harvard T.H. Chan School of Public Health and Professor of Statistics at Harvard University ({\em xlin@hsph.harvard.edu}).  This work was supported by the National Institutes of Health grants R35-CA197449, U01-HG009088, U01HG012064, U19-CA203654,  and P30 ES000002. The paper was finished while Shuting Shen was a Ph.D. student at the Department of Biostatistics at Harvard University. } \\
    }
    \date{}
  \maketitle
} \fi

\if0\blind
{
  \bigskip
  \bigskip
  \bigskip
  \begin{center}
  { {\myfont Dimension Reduction for Large-Scale Federated Data: Statistical Rate and 
  Asymptotic Inference}}
\end{center}
  \medskip
} \fi

\bigskip
\begin{abstract}
In light of the rapidly growing large-scale data in federated ecosystems, the traditional principal component analysis (PCA) is often not applicable due to privacy protection considerations and large computational burden. Algorithms were proposed to lower the computational cost, but few can handle both high dimensionality and massive sample size under distributed settings. In this paper, we propose the FAst DIstributed (FADI) PCA method for federated data when both  the dimension $d$ and the sample size $n$ are ultra-large, by simultaneously performing parallel computing along  $d$  and distributed computing along $n$. Specifically, we utilize $L$ parallel copies of $p$-dimensional fast sketches to divide the computing burden along $d$ and aggregate the results distributively along the split samples. We present a general framework applicable to multiple statistical problems, and establish comprehensive theoretical results under the general framework. We show that FADI accelerates the computation while enjoying the same non-asymptotic error rate as the traditional PCA when $Lp \ge d$.  We also derive inferential results that characterize the asymptotic distribution of FADI, and show a phase-transition phenomenon as $Lp$ increases. We perform extensive simulations to empirically validate our theoretical findings, and apply FADI to the 1000 Genomes data to study the population structure.
\end{abstract}


\noindent {\it Keywords:} {Computational efficiency}; {Distributed computing}; {Fast PCA}; {Large-scale inference}; 
{Random sketches}.


\newpage
\spacingset{1.9} 

\vspace{-20pt}
\section{Introduction}\label{sec: intro} 

Widely employed for dimension reduction, principal component analysis (PCA) finds applications in various scientific fields, including network studies \citep{abbe2017entrywise}, statistical genetics \citep{reich2008pca} and finance \citep{Pasini2017PRINCIPALCA}. Parameter estimation in many statistical models is based on PCA, such as spectral clustering in graphical models \citep{abbe2018sbmrev},  
and clustering with subsequent k-means refinement in Gaussian mixture models \citep{chen2020spectral}.
 When it comes to real data analysis, however, several shortcomings of the traditional PCA hinder its application to large-scale datasets. First, the high dimensionality and large sample size of modern big data can make the computation of PCA infeasible. {For instance}, PCA is frequently employed to address ancestry confounding in Genome-Wide Association Studies (GWAS) \citep{price2006population}, yet large biobanks, such as the UK Biobank \citep{ukbiobank2015cathie}, often contain hundreds of thousands to millions of Single Nucleotide Polymorphisms (SNPs)  and  subjects, necessitating more scalable algorithms for efficient computation. 
 Second, large-scale datasets in many applications are stored in federated  ecosystems, where data cannot leave individual warehouses due to privacy protection considerations \citep{BioMe1, dey2022efficient, BioVu1}.
 This calls for federated learning methods \citep{jordan2019communication,li2020federated} that provide efficient and privacy-protected  strategies for joint analysis across data warehouses without the need to exchange individual-level data. 
In view of those limitations, efforts have been made in recent years on  developing fast PCA and distributed PCA algorithms. 
 

Specifically, the existing fast PCA algorithms use the full-sample data and apply random projection to speed up calculation.
For instance,
\citet{halkofinding2011} proposed to estimate the $K$ leading eigenvectors of a $d \times d$ matrix ($K \ll d$) using Gaussian random sketches, which decreases the computation time
by a factor of $O(d)$ at the cost of increasing the statistical error by a factorial power of $d$. \citet{halkofinding2011} tried to enhance estimation accuracy by employing ``subspace iteration'' via taking power of the original matrix. However, this method is not practically applicable to federated data as it necessitates
numerous rounds of data communications. \citet{chen2016integrating} modified \citet{halkofinding2011}'s method by repeating fast sketching and showed the algorithm is consistent when the number of i.i.d. sketches goes to infinity. However, they did not discuss the finite sample statistical rates, and their results are limited to deterministic matrices without accounting for data randomness.
  As all of these methods use the full data, they have two major limitations. First, though allowing for large $d$, the existing fast PCA methods are not scalable to large sample sizes $n$.  
Second,
they are not applicable to federated data when data in different sites cannot be shared.

The existing distributed PCA algorithms reduce the computation burden by partitioning the full data  ``horizontally'' or ``vertically''. 
The horizontal partition splits the data 
over the sample size $n$, whereas the vertical partition splits the data over the dimension $d$. 
\citet{fandistributed2019}  considered the horizontally distributed PCA where they estimated the $K$ leading eigenvectors of the $d \times d$ population covariance matrix by applying traditional PCA to each data split and aggregating the results across sites. They showed when the number of splits is not too large, the error rate is of the same order as the traditional PCA. Since traditional PCA is used for each partition, the computational complexity is at least of order $O(d^3)$, which will be computationally difficult for large $d$.
\citet{Kargupta2001distributed} considered vertical partition and developed a method that collects local principal components (PCs) and then reconstructs  global PCs by linear transformations. However, there is no theoretical guarantee on the error rate,
and the method may fail for correlated variables.

 { Apart from the above applications in parameter estimation, inference also constitutes an important part of PCA.}  For example, when studying the ancestry groups of  whole genome data under the mixed membership models, while the estimation error rate guarantees the overall misclustering rate for all subjects, one may be interested in testing whether two individuals of interest share the same ancestry membership profile and assessing the associated uncertainty \citep{fan2022simple}. 
 Furthermore, despite the rich literature depicting the asymptotic distribution of traditional PCA estimators under different statistical models \citep{anderson1963asymp, johnstone2001spiked, baik2005phase, paul2007asymptotics, fan2017eigenasymp, Chen2019matcompinf, yan2021heteroPCA, fan2022simple}, distributional characterization of fast PCA methods and distributed PCA methods are not well-studied. For instance,  \citet{halkofinding2011} and  \citet{fandistributed2019} provided error bound for the fast PCA and distributed PCA algorithms, but with no characterization of the asymptotic distribution and hence no evaluation of the testing efficiency. \citet{Yang2021HowTR} analyzed the convergence in probability for various sketching methods involving random projections, yet they did not provide inferential analysis on the estimator. In independent work by \citet{zhang2022perturbation}, the error bounds and asymptotic distribution of the ``subspace iteration'' sketching estimator were derived. However, their method is not efficiently applicable to federated data, and their model assumptions primarily focus on error matrices with independent entries, which may not hold for correlated entry-wise errors.

{
 
  In view of the gaps in the existing literature, we propose in this paper a scalable and computationally efficient FAst DIstributed (FADI) PCA method applicable to federated data that could be large in both $d$ and $n$. 
  Specifically,  to obtain the $K$-leading PCs of a $d\times d$ matrix $\Mb$ from the observed data distributed across $m$ sites, we apply $L$ {\it parallel} copies of $p$-dimensional fast sketching to each local split, which serves to distribute the computing burden along $d$ across $L$ {\it parallel} machines. The {\it parallel } fast sketching results are subsequently aggregated across the data splits to leverage the information from the complete data. Finally, we aggregate the PC results across parallel fast sketches to restore statistical accuracy. These two levels of aggregations offer distinct advantages: the first aggregation across data partitions ensures the robustness of our method regardless of the number of machines $m$, while the second aggregation across parallel sketches reduces statistical errors.

  We will show that FADI has computational complexities of smaller magnitudes than the existing methods (see Table~\ref{tab:comparison intro}), { while achieving the same asymptotic efficiency as the traditional PCA.  }
 {Moreover, we establish FADI under a general framework that covers multiple statistical models}, including the spiked covariance model, the Gaussian mixture models (GMM), the degree-corrected mixed membership (DCMM) model, and the incomplete matrix inference model. For the clarity of presentation, we focus on the spiked covariance model and the GMM in the main text as illustrative applications and will discuss the other two in Supplementary Materials~A. We consider the horizontally distributed setting for the spiked covariance model, and the vertically distributed setting for GMM. Further elaboration on each model can be found in Section~\ref{sec: problem setting}.

\if 1 \litreview {
\noindent{\bf {Related Literature}.} { (add into main content not separate subsection)} Prior research has made advancements in inferential analyses using the traditional PCA across various statistical models, including the DCMM model \citep{fan2022simple}, the matrix completion problem \citep{Chen2019matcompinf}, and high-dimensional data with heteroskedastic noise and missingness under the spiked covariance model \citep{yan2021heteroPCA}.
However, those works were all based upon the traditional PCA approach and considered no distributed data setting, and hence will suffer from low computational efficiency when the data are high-dimensional or distributively stored across different sites. 

The statistical properties of sketching methods have also garnered recent research interest. \citet{Yang2021HowTR} analyzed the convergence in probability for various sketching methods involving random projections, although they did not provide inferential analysis on the estimator. In independent work by \citet{zhang2022perturbation}, the error bounds and asymptotic distribution of the ``subspace iteration'' estimator were derived. However, their method is not efficiently applicable to federated data, and their model assumptions primarily focus on error matrices with independent entries, which may not hold for correlated entry-wise errors.
Besides, sketching methods have found applications beyond PCA. For instance, \citet{sun2018nips} proposed sketching algorithms for testing combinatorial properties of large-scale graphical models by randomly subsampling the graph. In the context of linear regression, \citet{Ahfock2020sketchreg} investigated the properties of sketching methods and provided inferential results for the regression coefficients based upon sketched data.
} \fi

We summarize the major contributions of our paper as follows. 
  First, {the existing fast PCA or distributed PCA methods either handle high dimensions $d$ or large sample sizes $n$, but not both. FADI allows both $n$ and $d$ to be large.  It improves over fast PCA \citep{halkofinding2011} by achieving scalability for large $n$ through data splitting and accommodating federated data.  It improves over distributed PCA \citep{fandistributed2019} by allowing for large $d$ using multiple fast sketches. 
  Due to the fact that variables are usually dependent, it is challenging to achieve parallel computing along $d$ and distributed computing along $n$ simultaneously.
 To address this challenge, FADI splits the data along $n$ and untangles the variable dependency along $d$ by dividing the high-dimensional data into $L$ copies of $p$-dimensional fast sketches. 
 We establish theoretical error bounds to show that FADI is as accurate as the traditional PCA so long as $Lp \gtrsim d$. 
 Second, 
 we provide distributional guarantees on the FADI estimator to facilitate inference, which is absent in previous literature on  fast or distributed PCA methods. Specifically, we depict the trade-off between computational complexity and testing efficiency by studying FADI's asymptotic distribution under the regimes $Lp \ll d$ and $Lp \gg d$ respectively, and show a phase-transition phenomenon. 
   Third, we propose FADI under a general framework  applicable to multiple statistical models.
 We provide a comprehensive investigation of FADI's performance
 both methodologically and theoretically under the general framework, and  illustrate the results with specific  statistical models. { In comparison, the existing distributed methods mainly focus on estimating the covariance structure of independent samples \citep{fandistributed2019}.}
 The rest of the paper is organized as follows. Section~\ref{sec: problem setting} introduces the concrete problem setups. Section~\ref{sec: mtd} discusses FADI's implementation details, as well as its complexity and modifications when $K$ is unknown. Section~\ref{sec: theory} presents the theoretical results on the statistical rates and asymptotic normality.
Section~\ref{sec: simulations} shows the empirical evaluation of FADI's finite sample performance and comparisons with  several existing methods.
The application of FADI to the 1000 Genomes Data is given in Section~\ref{sec: real data},
 followed by discussions in Section~\ref{sec: discussion}.



\section{Eigenspace Estimation for Low-Rank Matrix}\label{sec: problem setting}
We first introduce some useful notations. For a vector $\vb$, we denote by $\|\vb\|_2$ the $\ell_2$-norm, and  $\|\vb\|_{\infty}$ the $\ell_{\infty}$-norm. For a matrix $\Ab = [\Ab_{ij}] \in \RR^{m \times n}$, denote by $\Ab = \Ub \mathbf{\Lambda} \Vb^{\top} = \sum_{j = 1}^K \sigma_j \ub_j \vb_j^{\top}$ its singular value decomposition (SVD). We use $\sigma_j(\Ab)$ (respectively $\lambda_j(\Ab)$) to represent the $j$-th largest singular value (respectively eigenvalue) of $\Ab$, and $\sigma_{\max}(\Ab)$ or $\sigma_{\min}(\Ab)$ (respectively $\lambda_{\max}(\Ab)$ or $\lambda_{\min}(\Ab)$) stands for the largest or smallest singular value (respectively eigenvalue) of $\Ab$. Denote by $\sgn(\Ab) = \sum_{\sigma_j > 0}\ub_j \vb_j^{\top}$ { the matrix signum}, by $\|\Ab\|_2$ the matrix spectral norm, $\|\Ab\|_{\F}$ the Frobenius norm,  and $\|\Ab\|_{2,\infty}  = \sup_{\|\xb\|_{2} = 1} \|\Ab \xb\|_{\infty} = \max_{i} \|\Ab^{\top} \eb_i\|_2$  the 2-to-$\infty$ norm, where $\{\eb_i\}_{i = 1}^m \subseteq \RR^m$ is the canonical basis. For two orthonormal matrices $\Vb, \Ub \in \RR^{n_1 \times n_2} $ with $n_1 > n_2$, we define the metric $\cD(\Ub, \Vb) = \|\Ub\Ub^{\top} - \Vb\Vb^{\top}\|_{\F}$.  For an integer $ n  $, define $[n] = \{1,2,\ldots, n\}$. { For matrices $A_1, \ldots, A_k$, let $\diag(A_1, \ldots, A_k)$ denote the block diagonal matrix with $A_1, \ldots, A_k$ as its diagonal blocks.} We use $c$ and $C$ to represent generic constants, whose values may vary from place to place.

In this paper, we aim to estimate the eigenspace of the rank-$K$ symmetric 
matrix $\Mb \in \RR^{d \times d}$, whose eigen-decomposition is $\Mb = \Vb \mathbf{\Lambda} \Vb^{\top}$,  where $\mathbf{\Lambda} = \operatorname{diag}(\lambda_1,\ldots, \lambda_K)$, $|\lambda_1| \ge |\lambda_2| \ge \ldots \ge |\lambda_K| > 0$ and $\Vb$ is the stacked $K$ leading eigenvectors. Note that {when $\Mb$ is asymmetric, we can deploy the ``symmetric dilation'' trick \citep{chen2020spectral}
to fit it into the setting. Denote by $\Delta = |\lambda_K|$ the eigengap, and assume without loss of generality that $\lambda_1 > 0 $.  $\widehat{\Mb}$ is a corrupted version of $\Mb$ obtained from observed data, with $\Eb = \widehat\Mb - \Mb$ being the error matrix. 
Our goal is to estimate the column space of $\Vb$ from 
$\widehat\Mb$  distributively and scalably.
The following two examples provide concrete statistical setups.

\begin{example}[Spiked Covariance Model \citep{johnstone2001spiked}]\label{ex: spiked gaussian}{Let
$\bX_1, \ldots, \bX_n \in \RR^d$ be i.i.d. { sub-Gaussian} random vectors with $\EE(\bX_i) = \mathbf{0}$ and $\EE(\bX_i \bX_i^{\top}) = \mathbf{\Sigma}$. We assume $\{\bX_i \}_{i=1}^n$ are i.i.d. for the simplicity of presentation and will generalize the theoretical results to non-i.i.d. data in Section~\ref{sec: thm err rate bound}. We assume the following decomposition for the covariance matrix: $\mathbf{\Sigma} =  \Vb \mathbf{\Lambda} \Vb^{\top}+\sigma^2 \Ib_d$, where $\Vb \in \RR^{d \times K}$ is the stacked $K$ leading eigenvectors and $\bLambda = \diag(\lambda_1, \ldots, \lambda_K)$ with $\lambda_1 \ge \ldots \ge \lambda_K > 0$. Assume that the data are split along the sample size $n$ and stored on $m$ different sites. Denote by $\{\bX_i^{(s)}\}_{i=1}^{n_s}$  the  sample split of size $n_s$ on the $s$-th site, and by $\Xb^{(s)} = (\bX_1^{(s)}, \ldots, \bX_{n_s}^{(s)})^{\top}$ the corresponding data matrix split ($s=1,\ldots, m$ and $\sum_{s=1}^m n_s = n$). Denote by $\Xb = (\bX_1, \ldots, \bX_n)^{\top}$ the full $n \times d$ data matrix. Then $\Mb = \Vb \mathbf{\Lambda} \Vb^{\top}$, and $\widehat{\Mb} = \widehat{\mathbf{\Sigma}} - \hat{\sigma}^2 \Ib_d$, where $\widehat{\mathbf{\Sigma}} = \frac{1}{n} \sum_{i=1}^n \bX_i \bX_i^{\top}$ is the sample covariance matrix and $\hat{\sigma}^2$ is a consistent estimator for $\sigma^2$. 
}
\end{example}

\begin{example}[Gaussian Mixture Models (GMM) \citep{chen2020spectral}]\label{ex: GMM}{
Let $\bW_1, \ldots, \bW_d \in \RR^n$ be independent samples with $\bW_j$ $(j \in [d])$ generated from one of  $K$  Gaussian distributions with means $\btheta_k  \in \RR^n$ ($k \in [K]$). Specifically, for $j \in [d]$, $\bW_j$ is associated with a  membership label $k_j \in [K]$,
 and $\bW_j \sim {\cN}(\sum_{k = 1}^K \btheta_{k} \II\{k_j = k\}, \Ib_n
)$.  Our goal is to recover the unknown membership labels $k_j$'s.  Denote $\Xb = (\bW_1, \ldots, \bW_d) = (\bX_1, \ldots, \bX_n)^{\top}$, where $\bX_i$ is the $i$-th row of $\Xb$.  Without loss of generality, we order $\bW_j$'s such that $\EE(\Xb) = \bTheta \Fb^{\top}$, 
\vspace{-15pt}
{\singlespace
$$
\text{where } \quad \bTheta = (\btheta_1, \ldots, \btheta_K) \in \RR^{n \times K},\quad \Fb = \diag(\mathbf{1}_{d_1}, \ldots, \mathbf{1}_{d_K}) \in \RR^{d \times K},
$$}{ with $d_k$ denoting the number of samples 
with mean $\btheta_k$ and $\mathbf{1}_{d_k} \in \RR^{d_k}$ denoting vector with all entries equal to 1.}
Then we define $\Mb = \EE[\Xb^\top\Xb] - n\Ib_d = \Fb \bTheta^{\top}\bTheta\Fb^{\top}$ and $\widehat\Mb = \Xb^{\top}\Xb - n\Ib_d$.  { Recall $\Mb = \Vb \mathbf{\Lambda} \Vb^{\top}$.} Since $\Vb$ and $\Fb$ share the same column space, we can recover the memberships from $\Vb$.
We consider the regime where $n > d$, and assume there exists a constant $C>0$ such that $\max_k d_k \le C \min_k d_k$ and $ \sigma_{1}(\bTheta) \le  C\sigma_{K}(\bTheta)$. We consider the vertically distributed setting where the data are split along the dimension $n$ on $m$ sites. Denote by $\Xb^{(s)} = (\bX_1^{(s)}, \ldots, \bX_{n_s}^{(s)} )^{\top}$ the data split on the $s$-th site of size $n_s$ ($s \in [m]$).

}
\end{example}
We primarily illustrate with the above two examples for readability and will provide additional applications to the  degree-corrected mixed membership (DCMM) model and the incomplete matrix inference model in Supplementary Materials~A. 
 

}
\vspace{-20pt}
\section{Fast Distributed Principal Component Analysis}\label{sec: mtd}
In this section, we present the FADI algorithm and its application to different examples.  We then provide the computational complexities of FADI and compare it with the existing methods. We also discuss how to estimate the rank $K$ when it is unknown.
\vspace{-20pt}
\subsection{Overview and Intuition}
\label{sec: FADI-overview}

For a given matrix $\widehat\Mb \in \RR^{d\times d}$, the computational cost of the traditional PCA on $\widehat\Mb$ is $O(d^3)$. In the case where $\widehat\Mb$ is computed from observed data, e.g., the sample covariance matrix $\hbSigma = \frac{1}{n} \sum_{i = 1}^n \bX_i \bX_i^{\top}$, extra computational burden comes from calculating $\widehat\Mb$, e.g., $O(nd^2)$ flops for computing the sample covariance matrix. Hence performing traditional PCA for large-scale data with high dimensions and huge sample sizes can be considerably expensive. 

To reduce the computational cost when $d$ is large, the most straightforward idea is to reduce the data dimension. One popular method for dimension reduction is random sketching \citep{halkofinding2011}. For instance, for a low-rank matrix $\Mb$  of rank $K$, its column space can be represented by a low-dimensional fast sketch $\Mb\bOmega \in \RR^{d \times p}$,
where $\mathbf{\Omega} \in \RR^{d\times p}$ is a random Gaussian matrix with $K < p \ll d$. 
In practice, $\Mb$ is usually replaced by an almost low-rank corrupted matrix $\widehat\Mb$ calculated from observed data. Traditional fast PCA methods then consider performing random sketching on $\widehat\Mb$ instead, and  use the full sample to obtain the fast sketch $\widehat\Yb= \widehat\Mb \mathbf{\Omega} \approx \Vb \mathbf{\Lambda} \Vb^{\top} \mathbf{\Omega}$ that almost maintains the same left singular space as $\Mb = \Vb \mathbf{\Lambda} \Vb^{\top}$.  It is hence reasonable to estimate $\Vb$ by performing SVD on the $d\times p$ matrix $\widehat\Yb$ that has a much smaller computational cost  than directly performing PCA on $\widehat\Mb$. However, one major drawback of this approach is that information might be lost due to fast sketching. Furthermore, the method is not scalable when $n$ is large or the data are federated. This motivates us to propose FADI, where we repeat the fast sketching  multiple times on each local split and aggregate the results to reduce the statistical error. 
Specifically,  
assume the data are stored across $m$ sites, and we have the decomposition $\widehat\Mb = \sum_{s=1}^m \widehat\Mb^{(s)}$, where $\widehat\Mb^{(s)}$ is the component that can be computed locally on the $s$-th site ($s \in [m]$). { Then instead of applying random sketching directly to $\widehat\Mb$, FADI computes in parallel the local fast sketching for each component $\widehat\Mb^{(s)}$  and aggregates the results across $m$ sites, which will reduce the cost of computing $\widehat\Mb \bOmega$ by a factor of $1/m$. } Note that this representation of $\widehat\Mb$ is legitimate in many models.   Taking Example~\ref{ex: spiked gaussian} for instance, define $\widehat\Mb^{(s)} = \frac{1}{n}(\Xb^{(s) \top} \Xb^{(s)})  - (\hat\sigma^2/m) \Ib_d$, and we have $\widehat\Mb = \hbSigma - \hat\sigma^2 \Ib_d = \sum_{s=1}^m \widehat\Mb^{(s)}$. 
We will further elaborate on the decomposition 
in Section~\ref{sec: step 0}. 

{ 


}

 
\vspace{-20pt}

\subsection{General Algorithmic Framework}\label{sec: alg}
Figure~\ref{fig: FADI illustrate} illustrates the fast distributed PCA (FADI) algorithm:

\begin{figure}[htbp]
		\centering

			\includegraphics[width=1\textwidth]{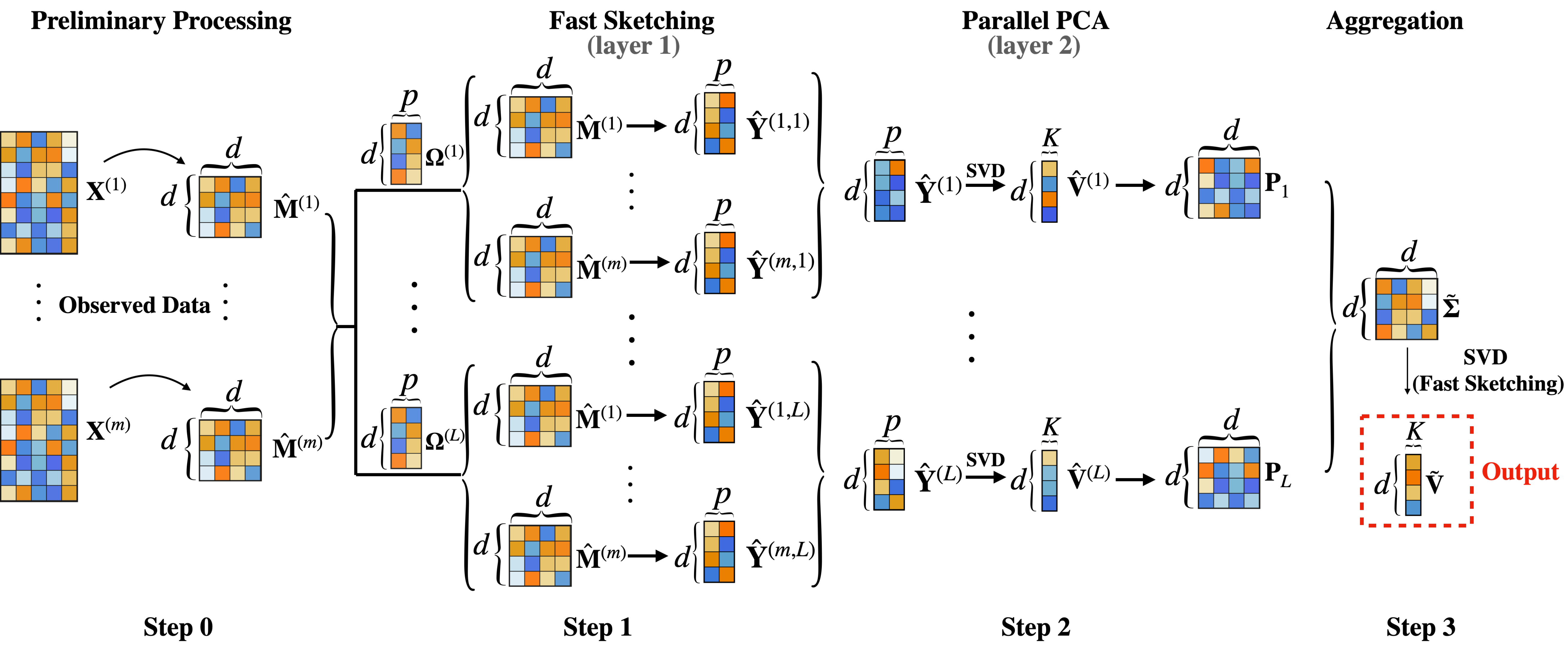}
		\caption{\small Illustration of FADI.   { Here $\{\Xb^{(s)}\}_{s=1}^m$ are the raw data stored distributively on $m$ sites, and $\widehat\Mb^{(s)}$ is the $s$-th component of $\widehat\Mb$ that can be calculated from $\Xb^{(s)}$. $\widehat\Yb^{(\ell)} = \sum_{s \in [m]} \widehat\Yb^{(s,\ell)}$ is the $\ell$-th copy of the fast sketch by aggregating the fast sketches distributively for each data split.
		}}\label{fig: FADI illustrate}
	\end{figure}


In Step 0, we perform preliminary processing on the raw data to produce $\{\widehat\Mb^{(s)}\}_{s=1}^m$. We will elaborate on the case-specific preprocessing in Section~\ref{sec: step 0}.

{ In Step 1,  
we generate $L$ i.i.d. Gaussian test matrices $\{\bOmega^{(\ell)}\}_{\ell =1}^L$, and for each $\ell \in [L]$, we apply $\bOmega^{(\ell)}$ distributively to $\widehat\Mb^{(s)}$ for each $s \in [m]$ and obtain the $\ell$-th fast sketch of $\widehat\Mb^{(s)}$ as  $ \widehat\Yb^{(s,\ell)} = \widehat\Mb^{(s)} \bOmega^{(\ell)}$. We send $\widehat\Yb^{(s,\ell)}$ ($s=1,\cdots, m$) to the $\ell$-th parallel server for aggregation.}

In Step 2, on the $\ell$-th server, the random sketches $\widehat\Yb^{(s,\ell)}$ ($s=1,\cdots, m$) from the $m$ split datasets  corresponding to the  $\ell$-th test matrix  $\mathbf{\Omega}^{(\ell)}$ will be collected and added up to get the $\ell$-th fast sketch: $\widehat\Yb^{(\ell)} = \sum_{s=1}^m \widehat\Yb^{(s,\ell)}$ $(\ell \in [L])$. We next compute in parallel the top $K$ left singular vectors $\widehat{\Vb}^{(\ell)}$ of $\widehat\Yb^{(\ell)}$ and send the $\widehat{\Vb}^{(\ell)}$'s to the central processor for aggregation.

In Step 3, on the central processor, calculate  $\widetilde{\mathbf{\Sigma}} =\frac{1}{L} \sum_{\ell = 1}^L \widehat{\Vb}^{(\ell)} \widehat{\Vb}^{(\ell) \top} = \frac{1}{L}\sum_{\ell = 1}^L \mathbf{P}_{\ell}$, where  $\Pb_{\ell} = \widehat{\Vb}^{(\ell)} \widehat{\Vb}^{(\ell) \top}$
is the projection matrix of $\widehat{\Vb}^{(\ell)}$. 
We next calculate the $K$ leading eigenvectors $\widetilde{\Vb}$ of $\widetilde{\mathbf{\Sigma}}$, which will serve as the final estimator of $\Vb$.

To further improve the computational efficiency, we might conduct another fast sketching in Step 3 to compute $\widetilde\Vb$: we apply the power method \citep{halkofinding2011} to $\widetilde\bSigma$ by calculating $\widetilde{\Yb} = \widetilde{\mathbf{\Sigma}}^q \mathbf{\Omega}^{\rm F}= \left(\frac{1}{L} \sum_{\ell = 1}^{L} \widehat{\Vb}^{(\ell)} \widehat{\Vb}^{(\ell) \top} \right)^q \mathbf{\Omega}^{\rm F}$ for $q\ge 1$, where $\mathbf{\Omega}^{\rm F} \in \RR^{d \times p'}$ is a Gaussian test matrix with dimension $p'$ that can be set different from $p$ for { computational efficiency}. Here, $\widetilde{\Yb}$ is calculated iteratively: $\widetilde{\Yb}_{(i)} = \frac{1}{L} \sum_{\ell = 1}^{L} \left(\widehat{\Vb}^{(\ell)} \widehat{\Vb}^{(\ell) \top} \widetilde{\Yb}_{(i-1)} \right)$ for $i = 1, \ldots, q$, where $ \widetilde{\Yb}_{(0)} = \mathbf{\Omega}^{\rm F}$ and $\widetilde{\Yb} = \widetilde{\Yb}_{(q)}$. 
We denote by $\widetilde\Vb^{\F}$ the leading $K$ left singular vectors of $\widetilde\Yb$.  We will show in Section~\ref{sec: theory} when $q$ is properly large, the distance between $\widetilde\Vb$ and $\widetilde\Vb^{\F}$ is negligible.  
\begin{remark}\label{rmk: different bias reductions}
We refer to Theorem~\ref{thm: error bound} for the choice of $p$ and $L$. In general, taking $p = 2K$ is sufficient. 
{ 
Different error reduction strategies are used in Steps~1 and 3 to accommodate different data structures: it is not desirable to replace the power method in Step 3 by repeated sketching.
This is because aggregating parallel sketching results incurs additional $O(d^3)$ flops for the SVD, which the fast sketching in Step 3 aims to avoid. Conversely, the power method is unsuitable for Step 1 because applying it to the distributed data $\widehat\Mb = \sum_{s=1}^m \widehat\Mb^{(s)}$ requires repeated data communication among the $m$ splits for iterative computation of $(\sum_{s=1}^m \widehat\Mb^{(s)})^q \bOmega^{\F}$, which is usually infeasible due to communication constraints in distributed systems.  In addition, applying the power method locally to each split and aggregating the results subsequently will increase the statistical error significantly when the local sample size is insufficient (see Supplementary Materials~B.4 for further details).}
\end{remark}
\vspace{-20pt}
\subsection{Case-Specific Processing of Raw Data}\label{sec: step 0}
In this section, we discuss the calculation of $\widehat\Mb$ in Step 0 specifically for each example.

\noindent  \textbf{Example~\ref{ex: spiked gaussian}: }
Recall that in Step 0 of FADI, to obtain $\widehat\Mb$, we need a consistent estimator of the residual variance $\sigma^2$. Denote by $S = \{i_1, i_2, \ldots, i_{K'}\} \subseteq [d]$ an arbitrary index set of size $K' \ge K + 1$. Then we estimate $\sigma^2$ by $\hat\sigma^2 = \lambda_{\min} (\hbSigma_S)$, where $\hat\bSigma_S$ is a $K' \times K'$ principal submatrix of $\hbSigma$ computed using only data columns in the set $S$, and 
can be easily computed distributively (see Figure~S6 in the supplement for reference).
Then for $s \in [m]$, we have $\widehat\Mb^{(s)} = \frac{1}{n}(\Xb^{(s) \top} \Xb^{(s)})  - (\hat\sigma^2/m)\Ib_d$.  Note that since computing $\widehat\Mb^{(s)}\bOmega = \frac{1}{n}\Xb^{(s) \top}( \Xb^{(s)} \bOmega) - m^{-1}\hat\sigma^2 \bOmega$ is much faster than first computing $\widehat\Mb^{(s)}$ then computing $\widehat\Mb^{(s)}\bOmega$, { we will calculate $\widehat\Mb^{(s)}\bOmega$ by calculating $\Xb^{(s)} \bOmega$ first rather than directly computing $\widehat\Mb^{(s)}$.}
  \vspace{5pt}

\noindent  \textbf{Example~\ref{ex: GMM}: }Recall that the data $\{\bW_j\}_{j=1}^d \subseteq \RR^{n}$ are vertically distributed across $m$ sites, and $\{\Xb^{(s)}\}_{s=1}^m$ are the corresponding data splits. 
For the $s$-th site, we have $\widehat\Mb^{(s)} =  \Xb^{(s)\top} \Xb^{(s)} - (n/m)\Ib_d$, and for $\ell \in [L]$, we compute $\widehat\Yb^{(s,\ell)}$ by $ \Xb^{(s)\top} (\Xb^{(s)}\bOmega^{(\ell)}) - (n/m)\bOmega^{(\ell)}$.

\vspace{-20pt}
\subsection{Computational Complexity}\label{sec: compu cost}
In this section, we provide the communicational and computational complexities of FADI for each example given in Section~\ref{sec: problem setting}. The complexity of each step is listed in Table \ref{table: compu cost}.
{\singlespace
\begin{table}[ht]
    \centering
     \resizebox{0.8\textwidth}{!}{%
    \begin{tabular}{c c c c c}
    \hline
    \hline
    & \multicolumn{2}{c}{Communication } & \multicolumn{2}{c}{Computation }\\
    \hline
         &Example~\ref{ex: spiked gaussian}  & Example~\ref{ex: GMM} &Example~\ref{ex: spiked gaussian}  & Example~\ref{ex: GMM}  \\
         \hline
         \textbf{Step 0} & $O(m K^2)$ & N/A &  \thead{$\hbSigma_S: O(\frac{K^2 n}{m} + K^2m)$\\ $\hat\sigma^2: O(K^3)$} &O(1)\\
         \hline
         \textbf{Step 1} & $O(mpd)$ & $O(mpd)$ & \thead{$\widehat\Yb^{(s,\ell)}: O(\frac{dnp}{m})$} & \thead{$\widehat\Yb^{(s,\ell)}: O(\frac{dnp}{m})$} \\
         \hline
         \textbf{Step 2} & $O(LKd)$ &$O(LKd)$ & \thead{$\widehat\Yb^{(\ell)}: O(mdp)$\\ $\widehat\Vb^{(\ell)}: O(dp^2)$} & \thead{$\widehat\Yb^{(\ell)}: O(mdp)$\\ $\widehat\Vb^{(\ell)}: O(dp^2)$} \\
         \hline
         \multirow{2.5}{*}{\textbf{Step 3}} & \multirow{2.5}{*}{{N/A}}&  \multirow{2.5}{*}{{N/A}}& \thead{$\widetilde\Vb:O( d^2 p L + d^3)$}& \thead{$\widetilde\Vb:O( d^2 p L + d^3)$} \\
          \cline{4-5}
         & & & \multicolumn{2}{c}{\thead{$\widetilde\Vb^{\F}: O( d K p' L q +dp^{\prime2})$}}\\
         \hline
         \textbf{Total} &$O(mpd + LKd)$&$O(mpd + LKd)$& \thead{$O(\frac{dnp}{m} + d K p' L q   +{ dmp})$} &\thead{$O(\frac{dnp}{m} + d K p' L q + { dmp})$} \\
         \hline
         \hline
    \end{tabular}}
    \caption{\small Complexity for Examples \ref{ex: spiked gaussian} and \ref{ex: GMM}. For the simplicity of presentation, we assume $\max_{s \in [m]} n_s \asymp n/m$. In Step 3,
    we recommend computing $\widetilde\Vb^{\rm F}$ instead of $\widetilde\Vb$ in practice. The total complexity in the last line refers to the total computational cost for $\widetilde\Vb^{\rm F}$.}
    \label{table: compu cost}
\end{table}
}
 { The optimal choice of $L$ depends on whether the priority is estimation or inference. For consistent estimation of $\Vb$, we recommend choosing $L = d/p$. For inferential analysis, we suggest choosing $L = \log d \cdot d/p$ when sufficient computational resources are available to balance valid inference and optimal estimation rates. Alternatively, $L$ should be chosen based on the maximum available computational budget when time constraints are a concern. Case-specific recommendations for $L$ are provided in Remarks~\ref{rmk: col spiked gaussian large L} and \ref{rmk: col GMM large L}, as well as Corollaries~\ref{col: guassian case} and \ref{col: GMM L small}.}
 When $p \asymp (K \vee \log d)$, $L \asymp d/p$, $p'\asymp K$  and $q \asymp \log d$, the total computational cost is $O\big(dn(K\vee\log d)/m + d^2K\log d + { md(K\vee\log d)}\big)$.

\begin{table}[htbp]
    \centering
     \resizebox{0.65\textwidth}{!}{%
    \begin{tabular}{c|c|c}
    \hline
    \hline
         Method & Error Rate & Computational Complexity  \\
         \hline
         FADI & $O( \sqrt{{Kr}/{n}} )$ & $O\left(dn(K\vee\log d)/m + d^2K\log d { + d(K \vee \log d) m}\right)$\\
         \hline
         Traditional PCA & $O( \sqrt{{Kr}/{n}})$ & $O(d^2n + d^3)$\\
         \hline
           Fast PCA & $O( \sqrt{{Kdr}/{n}})$ & $O(dnK + d^2K)$\\ 
           \hline
         Distributed PCA & $O( \sqrt{{Kr}/{n}} )$ & $O(d^2 n/m + d^3 { + d^2 m})$\\
         \hline
         \hline
    \end{tabular}}
    \caption{\small   Error rates and computational complexities for FADI,
    traditional PCA, fast PCA (one sketching) \citep{halkofinding2011} and distributed PCA \citep{fandistributed2019} for Example~\ref{ex: spiked gaussian}, where the error rate is evaluated by $\big(\EE |\cD(\,\cdot\,, \Vb)|^2\big)^{1/2}$. Here $r = \operatorname{tr}(\mathbf{\Sigma}) / \|\mathbf{\Sigma}\|_2$ 
    and $m$ is the number of sites. For FADI, we take $p \asymp (K \vee \log d)$, $L \asymp d/p$, $p' \asymp K$ and $q \asymp \log d $.
    }
    \label{tab:comparison intro}
\end{table}

To compare FADI with existing works, we provide in Table \ref{tab:comparison intro} the theoretical error rates and the computational complexities of FADI against different PCA methods under Example~\ref{ex: spiked gaussian}.
We illustrate by Example~\ref{ex: spiked gaussian} because the existing distributed PCA methods mainly consider this setting
\citep{fandistributed2019}.
Under the distributed setting, FADI has a much lower computational complexity than the other three methods, while enjoying the same error rate as the traditional full-sample PCA. In comparison, the distributed PCA in \citep{fandistributed2019} is slowed down by applying traditional PCA to each data split.
 The fast PCA algorithm in \citep{halkofinding2011} has suboptimal computational complexity and theoretical error rate due to their downstream projection that hinders aggregation.

\vspace{-20pt}
\subsection{Estimation of the Rank  \texorpdfstring{$K$}{K}}\label{sec: k unknown}	

 FADI requires inputting the rank $K$ of $\Mb$. In practice, if we are only interested in estimating the leading PCs, the exact value of $K$ is not needed as long as
 the fast sketching dimensions,
 $p$ and $p'$, are sufficiently larger than $K$. Yet knowing the exact value of $K$ will improve the computational efficiency as well as facilitate inference on PCs. In fact, the estimation of $K$ can be incorporated into Step 2 and Step 3 of FADI. Specifically, for the $\ell$-th parallel server ( $\ell \in [L]$), after performing the SVD $\widehat\Yb^{(\ell)} = \hat{\Vb}_p^{(\ell)} \widehat\bLambda_p^{(\ell)} \hat{\Ub}_p^{(\ell) \top}$, we estimate $K$ by 
 \vspace{-15pt}
{\singlespace 
\begin{equation}\label{eq: est K ell}
    \hat{K}^{(\ell)} = \min  \{k < p: \sigma_{k+1}(\widehat\Yb^{(\ell)}) - \sigma_{p}(\widehat\Yb^{(\ell)}) \le \sqrt{p}\mu_0 \},
\end{equation}}where $\mu_0 > 0$ is a user-specified parameter (we refer to Theorem~\ref{thm: est k} for the choice of $\mu_0$). 
 Then send all the left singular vectors $\hat{\Vb}_p^{(\ell)}$ and $\hat{K}^{(\ell)}, \ell \in [L]$ to the central processor. Finally, on the central processor, take $\hat{K} = 
 \lceil
 \operatorname{median}\big\{\hat{K}^{(1)},\hat{K}^{(2)},\ldots, \hat{K}^{(L)}\big\}
 \rceil$ 
 as the estimator for $K$, and obtain $\widetilde\Vb_{\hat{K}}$ (respectively $\widetilde\Vb_{\hat{K}}^{\F}$) by performing PCA (respectively powered fast sketching) on the aggregated average of $\{\widehat\Vb_{\hat{K}}^{(\ell)}\}_{\ell \in [L]}$ and taking the $\hat{K}$ leading PCs, where $\widehat\Vb_{\hat{K}}^{(\ell)}$ is the $\hat{K}$ leading PCs of $\widehat\Yb^{(\ell)}$. We will show in Theorem~\ref{thm: est k} that $\hat{K}$ is a consistent estimator of $K$.
\vspace{-20pt}
\section{Theoretical Guarantees on the FADI Estimator}\label{sec: theory}
\vspace{-15pt}
\subsection{Theoretical Bound on Error Rates}\label{sec: thm err rate bound}

We need the following condition to guarantee that the error term converges at a proper rate.

\begin{assumption}[Convergence of $\|\Eb\|_2$]\label{asp: tail prob bound}
Recall that ${\Eb} = \widehat{\Mb} - \Mb$ is the error matrix. Assume that $\|\Eb\|_2$ is sub-exponential, and there exists a rate $r_1(d)$ such that  
\vspace{-15pt}
{\singlespace
$$\|\|\Eb\|_2\|_{\psi_1} = \sup_{q \ge 1} q^{-1}\left(\EE\|\Eb\|_2^q\right)^{1/q}\lesssim r_1(d).$$}
\end{assumption}
\begin{remark}\label{rmk: Eb subexp const}
By {standard probability theory}, 
we know that there exists a constant $c_e > 0$ such that for any $t > 0$ we have $\PP(\|\Eb\|_2 \ge t) \le { 2\exp\left(-c_e t/r_1(d)\right)}$ and $\|\Eb\|_2 = O_P\left(r_1(d)\right)$.
\end{remark}
We will conduct a variance-bias decomposition on the error rate $\cD(\widetilde{\Vb}, \Vb)$.
We first introduce the intermediate matrix ${\mathbf{\Sigma}}^\prime = \EE_{\mathbf{\Omega}} \big( \widehat{\Vb}^{(\ell)}\widehat{\Vb}^{(\ell)\top}  \big)$, 
where the expectation is taken with respect to $\mathbf{\Omega}$. Let $\Vb^\prime$ be the top $K$ eigenvectors of ${\mathbf{\Sigma}}^\prime$. 
Then for the FADI PC estimator $\widetilde{\Vb}$, we have the following ``variance-bias'' decomposition of the error rate:
\vspace{-15pt}
{\singlespace
$$
\cD(\widetilde{\Vb}, \Vb) \le \underbrace{ \cD(\widetilde\Vb, {\Vb}^{\prime})}_{\text{variance}} + \underbrace{ \cD(\Vb^\prime, {\Vb})}_{\text{bias}}.
$$}

\noindent Conditional on all the available data, the first term characterizes the statistical randomness of $\widetilde{\Vb}$ due to  fast sketching, whereas the second bias term is deterministic and depends on all the information provided by the data. Intuitively, since $\widetilde{\mathbf{\Sigma}} = \frac{1}{L} \sum_{\ell = 1}^{L} \widehat{\Vb}^{(\ell)} \widehat{\Vb}^{(\ell) \top}$ converges to the conditional expectation ${\mathbf{\Sigma}}^\prime$, $\widetilde\Vb$ will also converge to $\Vb^\prime$. Hence the first variance term goes to 0 asymptotically. 
As for the second bias term, let $\hat\Vb$ be the top $K$ eigenvectors of $\widehat\Mb$, and we have: $\cD(\Vb^\prime, {\Vb}) \le \cD(\hat\Vb, {\Vb}) + \cD(\Vb^\prime, \widehat{\Vb})$.
We can see that the first term is the error rate for the traditional PCA, whereas the second term is the bias caused by fast sketching. We can show 
that the second term is 0 with high probability and is hence negligible,
and the bias of the FADI estimator is of the same order as the error rate of the traditional PCA. Namely, the bias of FADI mainly comes from $\hat\Vb$, which is due to the  information we can get from the available data.
The following theorem gives the overall error rate of the FADI PC estimator.  Its proof is given in  Supplementary Materials~E.2. 

\begin{theorem}\label{thm: error bound}{\it
Under Assumption \ref{asp: tail prob bound}, if $p \ge \max(2K, K+7)$ and $ (\log d)^{-1}\sqrt{p/d} \Delta/r_1(d)$ $\ge C$ for some large enough constant $C > 0$, we have 
\vspace{-15pt}
{\singlespace
\begin{equation}\label{eq: error main}
      \left( \EE |\cD(\widetilde\Vb, \Vb)|^2 \right)^{1/2} \lesssim { \frac{\sqrt{K}}{\Delta} r_1(d)} + {\sqrt{\frac{Kd }{\Delta^2 pL}} r_1(d)}.
\end{equation}
}

\noindent Furthermore, 
under the conditions that $p \ge \max(2K, K+ 8q -1) $ and $ (\log d)^{-1}\sqrt{p/d} \Delta/r_1(d) \ge C$, there exists some constant $\eta > 0$ such that
\vspace{-15pt}
{\singlespace
\begin{equation}\label{eq: error VF Vk}
\begin{aligned}
   \left( \EE  |\cD(\widetilde\Vb^{\rm F}, \Vb)|^2 \right)^{1/2} & \! \lesssim  \! { \frac{\sqrt{K}}{\Delta} r_1(d)}\! +\! {\sqrt{\!\frac{Kd }{\Delta^2 pL}} r_1(d)} \!+\! \sqrt{\!\frac{Kd}{p'}}\!\!\left(\! \eta q^2  \!\sqrt{\frac{d }{\Delta^2p}}r_1(d) \!\right)^{q}\!\!\!,
   \end{aligned}
\end{equation}}
}
\end{theorem}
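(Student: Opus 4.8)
The plan is to establish \eqref{eq: error main} through the variance--bias split already introduced, $\rho(\widetilde\Vb,\Vb)\le\rho(\widetilde\Vb,\Vb')+\rho(\Vb',\Vb)$, and then to obtain \eqref{eq: error VF Vk} from the additional triangle step $\rho(\widetilde\Vb^{\rm F},\Vb)\le\rho(\widetilde\Vb^{\rm F},\widetilde\Vb)+\rho(\widetilde\Vb,\Vb)$. All the probabilistic estimates are carried out on a good event $\cE$ on which (i) $\|\Eb\|_2\lesssim r_1(d)\log d$, so that $|\widehat\lambda_K|\asymp\Delta$ by Weyl's inequality (using Remark~\ref{rmk: Eb subexp const}), and (ii) for each $\ell$ the Gaussian blocks $\bOmega_1^{(\ell)}=\widehat\Vb^\top\bOmega^{(\ell)}$ and $\bOmega_2^{(\ell)}=\widehat\Vb_\perp^\top\bOmega^{(\ell)}$ (together with the corresponding blocks of $\bOmega^{\rm F}$) are well-conditioned. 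Since $\rho(\cdot,\cdot)\le\sqrt{2K}$ holds deterministically and $\PP(\cE^c)$ is polynomially small in $d$ by Assumption~\ref{asp: tail prob bound} and Gaussian tail bounds, the contribution of $\cE^c$ to any second moment is negligible, so it suffices to bound each piece on $\cE$; the moment conditions $p\ge K+7$ (and $p,p'\ge K+7$, $p\ge K+8q-1$ in the second part) are precisely what guarantee the requisite finite moments of $\|(\bOmega_1^{(\ell)})^{\dagger}\|_2$ and of its powered counterparts.

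\textbf{Bias and variance for \eqref{eq: error main}.} Let $\widehat\Vb$ denote the top-$K$ eigenspace of $\widehat\Mb$. For the bias, $\rho(\Vb',\Vb)\le\rho(\widehat\Vb,\Vb)+\rho(\Vb',\widehat\Vb)$: the first term is the Davis--Kahan bound for traditional PCA, $\rho(\widehat\Vb,\Vb)\lesssim\sqrt K\,\|\Eb\|_2/\Delta$, which after taking the second moment (Remark~\ref{rmk: Eb subexp const}) is $\lesssim(\sqrt K/\Delta)\,r_1(d)$; the second term vanishes on the high-probability event of Lemma~\ref{lm: bias control}, with negligible residual on its complement. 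For the variance, the randomized range-finder analysis of \citet{halkofinding2011} gives, on $\cE$, $\rho(\widehat\Vb^{(\ell)},\widehat\Vb)\lesssim\sqrt K\,(\|\Eb\|_2/\Delta)\,\|\bOmega_2^{(\ell)}(\bOmega_1^{(\ell)})^{\dagger}\|_2\lesssim\sqrt K\,(\|\Eb\|_2/\Delta)\sqrt{d/p}$, hence $\EE[\rho(\widehat\Vb^{(\ell)},\widehat\Vb)^2;\cE]\lesssim (Kd/(\Delta^2 p))\,r_1(d)^2$ (using independence of $\bOmega^{(\ell)}$ and $\widehat\Mb$) and, by Jensen, $\|{\mathbf{\Sigma}}'-\widehat\Vb\widehat\Vb^\top\|_2$ is small on $\cE$; since Lemma~\ref{lm: bias control} identifies the top-$K$ eigenspace of ${\mathbf{\Sigma}}'$ with that of $\widehat\Mb$, the $K$-th eigengap of ${\mathbf{\Sigma}}'$ is bounded below by a positive constant, so Davis--Kahan yields $\rho(\widetilde\Vb,\Vb')\lesssim\|\widetilde{\mathbf{\Sigma}}-{\mathbf{\Sigma}}'\|_{\F}$. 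Conditionally on $\widehat\Mb$, $\widetilde{\mathbf{\Sigma}}-{\mathbf{\Sigma}}'$ is an average of $L$ i.i.d.\ mean-zero matrices, so $\EE\|\widetilde{\mathbf{\Sigma}}-{\mathbf{\Sigma}}'\|_{\F}^2=\frac{1}{L}\EE\|\Pb_1-{\mathbf{\Sigma}}'\|_{\F}^2\le\frac{1}{L}\EE\,\rho(\widehat\Vb^{(1)},\widehat\Vb)^2\lesssim (Kd/(\Delta^2 pL))\,r_1(d)^2$. Summing the bias and variance bounds gives \eqref{eq: error main}.

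\textbf{The powered-sketch estimator.} For \eqref{eq: error VF Vk} it remains to bound $\rho(\widetilde\Vb^{\rm F},\widetilde\Vb)$, where $\widetilde\Vb^{\rm F}$ is the output of $q$-step randomized subspace iteration on the positive semidefinite matrix $\widetilde{\mathbf{\Sigma}}$, whose top-$K$ eigenspace is $\widetilde\Vb$. Applying the Halko--Martinsson--Tropp power-iteration bound to $\widetilde{\mathbf{\Sigma}}^{q}$ with sketch dimension $p'$ gives $\rho(\widetilde\Vb^{\rm F},\widetilde\Vb)\lesssim\sqrt{Kd/p'}\,\big(\lambda_{K+1}(\widetilde{\mathbf{\Sigma}})/\lambda_K(\widetilde{\mathbf{\Sigma}})\big)^{q}$, and one then controls the eigenvalue ratio on $\cE$: $\lambda_K(\widetilde{\mathbf{\Sigma}})\ge\lambda_K({\mathbf{\Sigma}}')-\|\widetilde{\mathbf{\Sigma}}-{\mathbf{\Sigma}}'\|_2$ is bounded below by a constant, while $\lambda_{K+1}(\widetilde{\mathbf{\Sigma}})$ --- the leakage of the aggregated near-projection $\widetilde{\mathbf{\Sigma}}$ outside $\widehat\Vb$ --- is at most $\|\widehat\Vb_\perp^\top\widetilde{\mathbf{\Sigma}}\widehat\Vb_\perp\|_2+O(\|\widehat\Vb^\top\widetilde{\mathbf{\Sigma}}\widehat\Vb_\perp\|_2^2)\lesssim\max_\ell\rho(\widehat\Vb^{(\ell)},\widehat\Vb)^2$, which on $\cE$ is $\lesssim q^{2}\sqrt{d/(\Delta^2 p)}\,r_1(d)$ once the polynomial-in-$q$ constants and logarithmic factors coming from the range-finder bounds and the tail estimates are absorbed into the constant $\eta$; here the condition $p\ge K+8q-1$ is what keeps the relevant Gaussian blocks well-conditioned through all $q$ iterations. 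Substituting and adding $\rho(\widetilde\Vb,\Vb)$ from \eqref{eq: error main} yields \eqref{eq: error VF Vk}.

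\textbf{Main obstacle.} I expect the delicate part to be the $\widetilde\Vb^{\rm F}$ analysis --- sharply bounding $\lambda_{K+1}(\widetilde{\mathbf{\Sigma}})$ (equivalently, the leakage of the aggregated near-projection outside $\widehat\Vb$) and tracking the $q$-dependence through the randomized power iteration so that it enters only through the exponent $q$ and the $q^{2}$ prefactor, while keeping the Gaussian-conditioning and moment estimates uniform over the $L$ sketches and the $q$ iterations so that the passage from high-probability bounds to the stated $L^2$ bounds survives. The two workhorses underneath are the single-sketch range-finder estimate $\rho(\widehat\Vb^{(\ell)},\widehat\Vb)\lesssim\sqrt{Kd/(\Delta^2 p)}\,r_1(d)$ and the bias-control Lemma~\ref{lm: bias control}.
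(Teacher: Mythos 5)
Your overall decomposition matches the paper's: the variance--bias split through the intermediate $\bSigma'$, Lemma~\ref{lm: bias control} together with Davis--Kahan for the bias, the conditional-MSE reduction $\EE\|\widetilde\bSigma-\bSigma'\|_{\F}^2=\frac{1}{L}\EE\|\Pb_1-\bSigma'\|_{\F}^2$ for the variance, and a perturbation bound for the single-sketch error. For \eqref{eq: error main} the argument is essentially right. The one cosmetic difference is that you reference the single-sketch error to $\widehat\Vb$ through $\|\bOmega_2^{(\ell)}(\bOmega_1^{(\ell)})^{\dagger}\|_2$, while the paper applies Wedin to $\widehat\Yb^{(\ell)}$ against the noiseless sketch $\Yb^{(\ell)}=\Mb\bOmega^{(\ell)}$, obtaining $\rho(\widehat\Vb^{(\ell)},\Vb)\lesssim \sqrt{K}\,\|\Eb\,\bOmega^{(\ell)}/\sqrt p\|_2/\bigl(\Delta\,\sigma_{\min}(\Vb^\top\bOmega^{(\ell)}/\sqrt p)\bigr)$; both control the same quantity. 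But be sure you obtain the variance rate by computing moments directly (independence of $\Eb$ and $\bOmega^{(\ell)}$, Cauchy--Schwarz, Lemma~\ref{lm: gaussian norm}, and Lemma~\ref{lm: bound min eigenvalue} with $a=4$, which is where $p\ge K+7$ enters), not by truncating to a high-probability event on which $\|\Eb\|_2\lesssim r_1(d)\log d$: the latter leaves a spurious $\log d$ in the variance term, and the theorem has none.

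The genuine gap is in the $\widetilde\Vb^{\rm F}$ analysis, and it is exactly this moment-versus-tail distinction. You want the third term of \eqref{eq: error VF Vk} to carry the prefactor $(\eta q^2\sqrt{d/(\Delta^2 p)}\,r_1(d))^{q}$, and you try to produce $\lambda_{K+1}(\widetilde\bSigma)/\lambda_K(\widetilde\bSigma)\lesssim q^2\sqrt{d/(\Delta^2 p)}\,r_1(d)$ on a good event. But on an event where $\|\Eb\|_2\lesssim r_1(d)\log d$ and the Gaussian blocks are well conditioned, the bound you get for this ratio is $\lesssim(\log d)\sqrt{d/(\Delta^2 p)}\,r_1(d)$, and $\log d$ cannot be ``absorbed into the constant $\eta$'': $\eta$ is a constant while $\log d\to\infty$, and $\log d$ and $q^2$ are incomparable for general $q\ge 1$. (You also have an internal inconsistency: your own bound $\lambda_{K+1}(\widetilde\bSigma)\lesssim\max_\ell\rho(\widehat\Vb^{(\ell)},\widehat\Vb)^2$ is \emph{quadratic} in the per-sketch error, yet you then declare it $\lesssim q^2\sqrt{d/(\Delta^2 p)}\,r_1(d)$, which is linear in $r_1(d)$; one does not imply the other.) The paper produces the $q^2$ factor by bounding $\EE\|\widetilde\bSigma-\Vb\Vb^\top\|_2^{2q}$ directly: the $\psi_1$ norms of $\|\Eb\|_2$ and $\|\bOmega^{(\ell)}/\sqrt p\|_2$ each contribute a factor $(cq)^{2q}$ from their $4q$-th moments, giving $(q\cdot q)^{2q}=(q^2)^{2q}$, while $\bigl(\EE\sigma_{\min}(\tilde\bOmega^{(\ell)}/\sqrt p)^{-4q}\bigr)^{1/2}\le C^{2q}$ via Lemma~\ref{lm: bound min eigenvalue} with $a=4q$, and this is precisely where the condition $p\ge K+8q-1$ is used. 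To repair your argument, replace the high-probability control of the spectral ratio by this direct $2q$-th moment bound on $\|\widetilde\bSigma-\Vb\Vb^\top\|_2$, combine with the conditional Wedin bound on the sketch $\widetilde\bSigma^q\bOmega^{\rm F}$, and handle the complementary event $\{\|\widetilde\bSigma-\Vb\Vb^\top\|_2\ge 1/2\}$ by Markov applied to the same $2q$-th moment.
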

On the RHS of \eqref{eq: error main}, the first term is the bias term, while the second term is the variance term. When $Lp \asymp d$, the variance term will be of the same order as the bias term, which is the error rate of the traditional PCA. As for \eqref{eq: error VF Vk}, the first term and the second term on the RHS are the same as 
the bias and the variance terms in 
\eqref{eq: error main}, while the third term comes from the additional fast sketching.
{ When we further impose that $ \sqrt{p/d}  \Delta/r_1(d) \ge C \log^3 d $ for some large enough constant $C > 0$, if we properly choose $q$ to be 
{\singlespace 
\begin{equation}\label{eq: q range}
    3\log d / \log \big(\sqrt{p/d} \Delta/(\eta r_1(d) )\big) + 3 \le q \le  \big(\sqrt{p/d}  \Delta/(\eta r_1(d)) \big)^{1/3},
\end{equation}}the third term in \eqref{eq: error VF Vk} will be negligible. For $d \ge 3$, the LHS of \eqref{eq: q range} is upper bounded by $\log d /\log\log d + 3$, and hence taking $q = \lceil \log d /\log\log d + 3\rceil \lesssim \log d$ is sufficient in practice.}
Based upon Theorem~\ref{thm: error bound}, we provide the case-specific error rate for each example given in Section~\ref{sec: problem setting} in the following corollary. The proof is deferred to Supplementary Materials~E.3.

\begin{corollary}\label{prop: err rate terms}
{\it For Examples~\ref{ex: spiked gaussian} and \ref{ex: GMM}, we have the following error bounds for each case under corresponding regularity conditions.
\begin{itemize}
    \item {Example~\ref{ex: spiked gaussian}: } 
    Define $\kappa_1 = (\lambda_1 + \sigma^2)/\Delta$ and recall $r = \operatorname{tr}(\mathbf{\Sigma}) / \|\mathbf{\Sigma}\|_2$. Under the conditions that $d \ge 3$, $p' \ge \max(2K, K + 7)$, { $q = \lceil \log d /\log\log d + 3\rceil$}, $p \ge \max(2K,  K + 8q -1)$ and { $n \ge C (rd/{p})\kappa_1^2 \log^6 d$} for some large enough constant $C > 0$, it holds that
    \vspace{-15pt}
    {\singlespace
    \begin{equation}\label{eq: err bd exm 1}
            \left( \EE |\cD(\widetilde\Vb^{\rm F}, \Vb)|^2 \right)^{1/2} \lesssim  \kappa_1 \sqrt{\frac{Kr}{n}} + \kappa_1  \sqrt{\frac{Kd r}{npL}}.
    \end{equation}}

        \item {Example~\ref{ex: GMM}: } Under the conditions that { $d \ge 3$ and $ \Delta_0^2 \ge C K\log^3 d  \max\left\{ d \log^3  d/p , \sqrt{n/p}\right\}$}
        for some large enough constant $C > 0$,  where $\Delta_0 = \|\bTheta\|_2$,  taking $p' \ge \max(2K, K + 7)$, { $q = \lceil \log d /\log\log d + 3\rceil$} and $p \ge \max(2K, K + 8q -1)$, it holds that
    \vspace{-15pt}
    {\singlespace 
    \begin{equation}\label{eq: err bd exm 3}
         \begin{aligned}
        \left( \EE |\cD(\widetilde\Vb^{\rm F}, \Vb)|^2 \right)^{1/2} &\!\!\!\lesssim \left(\frac{K}{\Delta_0} \!+\!\frac{K}{\Delta_0^2} \sqrt{\frac{Kn}{d}}\right)\!+  \!\sqrt{\frac{d}{pL}}\left(\frac{K}{\Delta_0} \!+\!\frac{K}{\Delta_0^2} \sqrt{\frac{Kn}{d}}\right).
    \end{aligned}
    \end{equation}}
\end{itemize}
}
\end{corollary}

\begin{remark}\label{rmk: err rate case-specific}
{ The results of Example~\ref{ex: spiked gaussian} can be extended to heterogeneous residual variance models for non-i.i.d. data. For example, each sample $\bX_i$ can have a different covariance matrix $\bSigma_i = \EE(\bX_i \bX_i^{\top}) = \Vb \bLambda_i \Vb^{\top} + \Db_i$,
where $\bLambda_i$ and $\Db_i$ are diagonal and can vary across samples. This is useful when population covariances differ across different sites but share the same $K$-leading eigenspace. Supplementary Materials~E.4 details how FADI handles more heterogeneous settings by redefining $\widehat\Mb$, including this example as a special case. Specifically, we consider the more general setting where $\bSigma_i$ varies across samples and $n^{-1} \sum_{i=1}^n \bSigma_i$ converges to a limiting matrix $\bSigma = \Db + \Vb \bLambda \Vb^{\top} \in \RR^{d \times d}$ with a spiked structure. Corollary~E.5 provides the statistical rate of FADI under this heterogeneous setting. }
\end{remark}
 In Example~\ref{ex: spiked gaussian}, when $Lp \gtrsim d$, our error rate in \eqref{eq: err bd exm 1} achieves optimality \citep{fandistributed2019}. In the distributed data setting, we impose the condition $n/r \gtrsim d/p$, while \citet{fandistributed2019}'s distributed PCA requires $n/r > m$. In our approach, the additional factor $d/p$ can be interpreted as the number of ``vertical splits'' along the dimension $d$, playing a similar role as the extra factor $m$ (the number of splits along $n$) in \citet{fandistributed2019}'s method. Both \citet{fandistributed2019}'s and our scaling conditions involve an extra factor in exchange for reduced computational costs under the distributed setting.
 As for Example~\ref{ex: GMM}, our estimation rate in \eqref{eq: err bd exm 3} is the same as in \citep{chen2020spectral}. 
When the rank $K$ is unknown and estimated by FADI, the following theorem shows that under appropriate conditions, our estimator $\hat{K}$ presented in Section~\ref{sec: k unknown} recovers the true $K$ with high probability.
{
\begin{theorem}\label{thm: est k}
{\it 
Under Assumption \ref{asp: tail prob bound}, define $\eta_0 = 480 c_e^{-1} \sqrt{d/(\Delta^2 p)} r_1(d)\log d$, where $c_e > 0$ is the constant defined in Remark~\ref{rmk: Eb subexp const}. When $d \ge 2$, $2K \le p \ll d(\log d)^{-2}$ and $ \eta_0 \le (32\log d)^{-{2}/{(p-K+1)}}$, if we choose $\mu_0$ such that $\Delta \eta_0/24 \le \mu_0 \le \Delta \sqrt{\eta_0}/12$, then with probability at least $1-  O(d^{-(L \wedge 20)/2})$, $\hat{K} = K$.
}
\end{theorem}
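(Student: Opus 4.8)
The plan is to show that each server's estimate $\hat K^{(\ell)}$ equals $K$ with high probability and then propagate this through the median. Fix $\ell\in[L]$ and write $\widehat\Yb^{(\ell)}=\widehat\Mb\,\bOmega^{(\ell)}=\Mb\,\bOmega^{(\ell)}+\Eb\,\bOmega^{(\ell)}$. Since $\Mb$ has rank $K$, $\Mb\,\bOmega^{(\ell)}$ has rank at most $K$, so Weyl's inequality for singular values gives $\sigma_{K+1}(\widehat\Yb^{(\ell)})\le\|\Eb\bOmega^{(\ell)}\|_2$ and $\sigma_p(\widehat\Yb^{(\ell)})\le\sigma_{p-K}(\Eb\bOmega^{(\ell)})\le\|\Eb\bOmega^{(\ell)}\|_2$; on the other hand, letting $\widehat\Vb$ be the $K$ leading eigenvectors of $\widehat\Mb$ (independent of $\bOmega^{(\ell)}$) and using $\widehat\Vb^{\top}\widehat\Mb=\widehat\bLambda\widehat\Vb^{\top}$ with $\widehat\bLambda$ the corresponding diagonal eigenvalue block, compression gives $\sigma_K(\widehat\Yb^{(\ell)})\ge\sigma_K(\widehat\Vb^{\top}\widehat\Yb^{(\ell)})=\sigma_K(\widehat\bLambda\,\widehat\Vb^{\top}\bOmega^{(\ell)})\ge(\Delta-\|\Eb\|_2)\,\sigma_{\min}(\widehat\Vb^{\top}\bOmega^{(\ell)})$, where I used $\sigma_K(\widehat\Mb)\ge\sigma_K(\Mb)-\|\Eb\|_2=\Delta-\|\Eb\|_2$. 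Because $k\mapsto\sigma_{k+1}(\widehat\Yb^{(\ell)})-\sigma_p(\widehat\Yb^{(\ell)})$ is nonincreasing, $\hat K^{(\ell)}=K$ holds as soon as the defining inequality holds at $k=K$ and fails at $k=K-1$, which is in turn guaranteed once
\[
\|\Eb\bOmega^{(\ell)}\|_2\le\sqrt p\,\mu_0
\qquad\text{and}\qquad
(\Delta-\|\Eb\|_2)\,\sigma_{\min}(\widehat\Vb^{\top}\bOmega^{(\ell)})-2\|\Eb\bOmega^{(\ell)}\|_2>\sqrt p\,\mu_0 .
\]

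Next I would pin down these two events on a single high-probability event. The quantity $\|\Eb\|_2$ is \emph{shared} by all servers, and by the sub-exponential tail in Remark~\ref{rmk: Eb subexp const}, $\PP\big(\|\Eb\|_2>\beta\,c_e^{-1}r_1(d)\log d\big)\le d^{-\beta}$ for any $\beta>0$, which for $\beta$ of constant size also forces $\|\Eb\|_2\le\Delta/2$ under the scaling that is implied by $\eta_0\le(32\log d)^{-2/(p-K+1)}$. For each $\ell$, $\|\Eb\bOmega^{(\ell)}\|_2\le\|\Eb\|_2\|\bOmega^{(\ell)}\|_2$, and since $p\ll d(\log d)^{-2}$ the Gaussian operator-norm bound gives $\|\bOmega^{(\ell)}\|_2=(1+o(1))\sqrt d$ with failure probability $e^{-\Omega(d)}$; combining, $\|\Eb\bOmega^{(\ell)}\|_2\lesssim\sqrt d\,r_1(d)\log d$, which by the very definition $\eta_0=480\,c_e^{-1}\sqrt{d/(\Delta^2p)}\,r_1(d)\log d$ is at most $\Delta\sqrt p\,\eta_0/24\le\sqrt p\,\mu_0$ whenever $\mu_0\ge\Delta\eta_0/24$ — this is the first event and the lower endpoint of the $\mu_0$ window. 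For the second event, conditionally on $\widehat\Mb$ the matrix $\widehat\Vb^{\top}\bOmega^{(\ell)}$ has i.i.d.\ $\cN(0,1)$ entries and is $K\times p$ with $p\ge2K$, so a lower-tail (Davidson--Szarek/small-ball) bound for the smallest singular value of a fat Gaussian matrix yields $\sigma_{\min}(\widehat\Vb^{\top}\bOmega^{(\ell)})\gtrsim\sqrt{p\eta_0}$ except on an event of probability of order $\eta_0^{(p-K+1)/2}$; on that event, using $\Delta-\|\Eb\|_2\ge\Delta/2$, $\|\Eb\bOmega^{(\ell)}\|_2\lesssim\Delta\sqrt p\,\eta_0$ and $\mu_0\le\Delta\sqrt{\eta_0}/12$, one checks $(\Delta-\|\Eb\|_2)\sigma_{\min}(\widehat\Vb^{\top}\bOmega^{(\ell)})-2\|\Eb\bOmega^{(\ell)}\|_2\gtrsim\Delta\sqrt{p\eta_0}>\sqrt p\,\mu_0$. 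Thus the admissible window $\Delta\eta_0/24\le\mu_0\le\Delta\sqrt{\eta_0}/12$ is exactly the band that simultaneously dominates $\|\Eb\bOmega^{(\ell)}\|_2$ from above and stays below $\Delta\sigma_{\min}(\widehat\Vb^{\top}\bOmega^{(\ell)})/\sqrt p$ from below, and the condition $\eta_0\le(32\log d)^{-2/(p-K+1)}$, equivalently $\eta_0^{(p-K+1)/2}\le(32\log d)^{-1}$, is what keeps this window nonempty and turns the small-ball exponent $p-K+1$ into a usable bound.

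For the aggregation, condition on the shared event $\{\|\Eb\|_2\le\beta c_e^{-1}r_1(d)\log d\}$; then $\{\hat K^{(\ell)}\neq K\}$ depends only on the independent test matrices $\bOmega^{(\ell)}$, so these events are i.i.d.\ with some probability $q$ controlled by the two Gaussian bounds above. Since $\lceil\operatorname{median}\{\hat K^{(1)},\ldots,\hat K^{(L)}\}\rceil=K$ whenever a strict majority of the $\hat K^{(\ell)}$ equal $K$,
\[
\PP(\hat K\neq K)\;\le\;\PP\big(\|\Eb\|_2>\beta c_e^{-1}r_1(d)\log d\big)\;+\;\PP\big(\mathrm{Bin}(L,q)\ge L/2\big)\;\le\;d^{-\beta}+(4q)^{L/2},
\]
and, under the stated conditions, the right-hand side is $O\big(d^{-(L\wedge 20)/2}\big)$: the ``$20$'' traces to the choice of $\beta$ in the tail bound for the shared term $\|\Eb\|_2$, and the ``$/2$'' to the majority vote over the $L$ independent copies.

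The delicate point — and the place where the quantitative conditions really matter — is the second event, $\hat K^{(\ell)}\ge K$. Unlike $\|\Eb\|_2$ and $\|\bOmega^{(\ell)}\|_2$, whose relevant concentration occurs at a scale that is independent of, or exponentially small in, $d$, the smallest singular value of the fat Gaussian block $\widehat\Vb^{\top}\bOmega^{(\ell)}$ only admits a polynomial lower tail with exponent $p-K+1$ (which, under $p\ge2K$, is $\asymp p$ but may be as small as $K+1$), so one must push $\mu_0$ all the way down to the scale $\Delta\sqrt{\eta_0}$ — yet not below $\Delta\eta_0$, or the error $\|\Eb\bOmega^{(\ell)}\|_2$ would break the upper-bound inequality $\hat K^{(\ell)}\le K$. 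Reconciling these two requirements, and extracting from $\eta_0\le(32\log d)^{-2/(p-K+1)}$ a per-server failure probability small enough for the median step to deliver the precise exponent $(L\wedge20)/2$, is the main obstacle; it is what fixes the constants $24$ and $12$ in the range of $\mu_0$, the constant $480$ in $\eta_0$, and the exact form of the technical restriction. Everything else is routine bookkeeping with Weyl's inequality and standard Gaussian concentration.
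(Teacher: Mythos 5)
Your overall architecture matches the paper's: bound the per-server failure probability $\PP(\hat K^{(\ell)}\ne K)$ on a shared high-probability event for $\|\Eb\|_2$, then aggregate via the median. Your per-server argument is a valid small variant of the paper's: you lower-bound $\sigma_K(\widehat\Yb^{(\ell)})$ by compression through $\widehat\Vb^{\top}$ (getting $(\Delta-\|\Eb\|_2)\,\sigma_{\min}(\widehat\Vb^{\top}\bOmega^{(\ell)})$), whereas the paper writes $\widehat\Yb^{(\ell)}=\Vb\bLambda(\Vb^{\top}\bOmega^{(\ell)})+\Eb\bOmega^{(\ell)}$ and applies Weyl directly; both reduce to Halko's Proposition~10.4 for the $K\times p$ Gaussian block. (Minor point: the compression route only costs $\sigma_p(\widehat\Yb^{(\ell)})\le\|\Eb\bOmega^{(\ell)}\|_2$, so the margin should read $-\|\Eb\bOmega^{(\ell)}\|_2$ rather than $-2\|\Eb\bOmega^{(\ell)}\|_2$; with the factor $2$ you do not actually clear the threshold at $\mu_0=\Delta\sqrt{\eta_0}/12$, so either drop the factor or tighten your constants.)

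The genuine gap is in the aggregation step. You bound $\PP(\mathrm{Bin}(L,q)\ge L/2)\le (4q)^{L/2}$, which is the (essentially tight) Chernoff--Hoeffding tail for the binomial at $L/2$. But under the stated hypothesis $\eta_0\le(32\log d)^{-2/(p-K+1)}$, the per-server failure probability satisfies only $q\lesssim \eta_0^{(p-K+1)/2}+d^{-10}\lesssim 1/\log d$, so $(4q)^{L/2}$ is of order $(\log d)^{-L/2}$, not $d^{-(L\wedge 20)/2}$. You assert that the right-hand side ``is $O(d^{-(L\wedge 20)/2})$,'' but this is never verified and is in fact false at this level of $q$: the theorem's rate would require $q=O(1/d)$, which the stated condition on $\eta_0$ does not imply. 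The paper obtains $\exp(-L\log d/2)$ by invoking ``Hoeffding's inequality'' in the form $\exp\big(-(L/4)^2/(2Lp_K)\big)$, but that expression is not a consequence of Hoeffding, Bernstein, or Chernoff when $p_K$ is small (the Chernoff--Hoeffding bound for $\PP(\mathrm{Bin}(L,p_K)\ge L/2)$ is exactly your $(4p_K)^{L/2}$, which is strictly larger). So the discrepancy you are papering over is the very step that does the work in the paper's proof; rediscovering your weaker but honest bound is progress, but you have not supplied — and it is unclear one can supply, under the stated condition on $\eta_0$ — an argument that actually delivers the claimed $O(d^{-(L\wedge 20)/2})$.
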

We defer the proof to Supplementary Materials~E.5. 
We provide case-specific choices of  $\mu_0$ in the following corollary, whose proof is in Supplementary Materials~E.6.

\begin{corollary}\label{prop: est K}{\it
We specify the choice of $\mu_0$ for Examples \ref{ex: spiked gaussian} and \ref{ex: GMM}, 
\begin{itemize}
    \item {Example~\ref{ex: spiked gaussian}: } Under the conditions that $2K \le p \ll (\log d)^{-2}d$, $n \gg \kappa_1^2 rd/p (\log d)^4$, $(\lambda_1 + \sigma^2) \ll \left(\sqrt{np}/(d \log d)\right)^{1/4}$ and $\Delta \gg \left({\sigma^{-2} (np)^{-1/2}}d\log d\right)^{1/3}$, if we take $\mu_0 = \left(d(np)^{-1/2} \log d\right)^{3/4}/12$, with probability at least $1- O\left(d^{-(L \wedge 20)/2}\right) $, we have $\hat{K} = K$.
    
    \item {Example~\ref{ex: GMM}: } Under the conditions that $2K \le p \ll (\log d)^{-2}d$ and $ {K (\log d)^3}\sqrt{n/p} \ll \Delta_0^2 \ll {nK/d} (\log d)^2$, if we take $\mu_0 = d(\log d)^2\sqrt{n/p}/12$, with probability at least $1- O\left(d^{-(L \wedge 20)/2}\right) $, we have $\hat{K} = K$. 
    
\end{itemize}
}
\end{corollary}
}
\vspace{-20pt}
\subsection{Inferential Analysis: Intuition and Assumptions}\label{sec: inf intuition and asps}

In Section~\ref{sec: thm err rate bound}, we discuss the theoretical error bounds
and present the bias-variance decomposition for the FADI estimator $\widetilde\Vb^{\F}$. From \eqref{eq: error VF Vk}, we can see that when $Lp \gg d$, the bias term will be the leading term, and the dominating error comes from $\cD(\widehat\Vb, \Vb)$, whereas when $Lp \ll d$, the variance term will be the leading term and the main error derives from $\cD(\widetilde\Vb^{\F}, \widehat\Vb)$. This offers insight into conducting inference on the estimator and implies a possible phase transition in the asymptotic distribution. Before moving on to further discussions, we state the following assumption to ensure that the bias of $\widehat\Mb$ is negligible. 
\begin{assumption}[Statistical Rate for the Biased Error Term]\label{asp: stat rate biased error}{
For the error matrix $\Eb$ we have the decomposition $\Eb = \Eb_0 + \Eb_{b}$, where $\EE(\Eb_0) = \mathbf{0}$ and $\Eb_{b}$ is the biased error term satisfying
$ \lim_{d \rightarrow \infty} \PP\big(\|\Eb_b\|_2 {\le} r_2(d)\big) = 1$ with $r_2(d) = o\big(r_1(d)\big)$. }
\end{assumption}

In fact, we will later show in Section~\ref{sec: inf L much larger} and Section~\ref{sec: inf L much smaller} that the leading term for the distance between $\widetilde\Vb^{\F}$ and $\Vb$ takes on two different forms under the two regimes: 
\vspace{-15pt}
{\singlespace
$$
 \begin{array}{ll}
       \widetilde\Vb^{\F}\Hb - \Vb \approx \Pb_{\perp} \Eb_0 \Vb \bLambda^{-1} &, \quad \text{ if } Lp \gg d ; \\
       \widetilde\Vb^{\F}\Hb - \Vb \approx \Pb_{\perp} \Eb_0 \bOmega \Bb_{\bOmega} L^{-1}&, \quad  \text{ if } Lp \ll d,
\end{array}
$$}
%

\noindent where $\Hb$ is {some orthogonal aligning matrix,
$\Pb_{\perp} = \Ib - \Pb_{\Vb}= \Ib - \Vb \Vb^{\top}$,
$\mathbf{\Omega} = \frac{1}{\sqrt{p}}(\mathbf{\Omega}^{(1)}, \ldots, \mathbf{\Omega}^{(L)}) \in \RR^{d \times Lp}$ and $\Bb_{\mathbf\Omega} = (\Bb^{(1) \top}, \ldots, \Bb^{(L)\top})^{\top}$ with $\Bb^{(\ell)} = (\bLambda \Vb^{\top}\mathbf{\Omega}^{(\ell)}/\sqrt{p})^{\dagger} \in \RR^{p \times K}$ for $\ell = 1,\ldots,L$. Here $(\cdot)^{\dagger}$ stands for the Moore-Penrose pseudo inverse. To get an intuitive understanding on the form of the leading error term, let's start with the regime $Lp \gg d$ where $\cD(\widetilde\Vb^{\F}, \Vb) \approx \cD(\widehat\Vb, \Vb)$ and consider the case where $\{|\lambda_k|\}_{k =1}^K$ are well-separated such that $\Hb \approx \Ib_K$. Following basic algebra, we have 
{\singlespace \small
$$
    \widetilde\Vb^{\F} - \Vb  \approx \widehat\Vb - \Vb \approx \Pb_{\perp} ( \widehat\Vb - \Vb ) = \Pb_{\perp}(\widehat\Mb \widehat\Vb \widehat\bLambda^{-1} - \Mb \Vb \bLambda^{-1}) \approx \Pb_{\perp} (\widehat\Mb - \Mb)\Vb \bLambda^{-1} =  \Pb_{\perp} \Eb_0 \Vb \bLambda^{-1}, 
$$
}where $\widehat\bLambda$ is the $K$-leading eigenvalues of  $\widehat\Mb$ corresponding to $\widehat\Vb$, and the second approximation is due to the fact that $\widehat\Vb$ and $\Vb$ are fairly close and $\Pb_{\Vb}(\widehat\Vb  - \Vb)$ will be negligible. 

Now we turn to the scenario $Lp \ll d$, where the error mainly comes from $\widetilde\Vb^{\F} - \widehat\Vb$. For a given $\ell \in [L]$, denote $\Yb^{(\ell)} = \Mb \bOmega^{(\ell)} = \Vb\bLambda \widetilde\bOmega^{(\ell)}$, where $\widetilde\bOmega^{(\ell)} = \Vb^{\top}\bOmega^{(\ell)}$ is also a Gaussian test matrix. Intuitively, $p^{-1}\tilde{\mathbf{\Omega}}^{(\ell)} \tilde{\mathbf{\Omega}}^{(\ell)\top} \approx \Ib_K$   when $p$ is much larger than $K$. Hence $\tilde{\mathbf{\Omega}}^{(\ell)}$  acts like an orthonormal matrix scaled by $\sqrt{p}$,
and the rank-$K$ truncated SVD for  $\widehat\Yb^{(\ell)}/\sqrt{p}$ and $\Yb^{(\ell)}/\sqrt{p}$ will approximately be $\widehat\Vb^{(\ell)} \widehat\bLambda (\widetilde\bOmega^{(\ell)}/\sqrt{p})$ and $\Vb \bLambda (\widetilde\bOmega^{(\ell)}/\sqrt{p})$ respectively. Then following similar arguments as when $Lp \gg d$, we have
\vspace{-15pt}
{\singlespace
\begin{align*}
    &\widehat\Vb^{(\ell)} - \Vb \approx \Pb_{\perp} \left((\widehat\Yb^{(\ell) } /\sqrt{p})(\widetilde\bOmega^{(\ell)}/\sqrt{p})^{\top}\widehat\bLambda^{-1} - (\Yb^{(\ell) } /\sqrt{p})(\widetilde\bOmega^{(\ell)}/\sqrt{p})^{\top}\bLambda^{-1}   \right)\\
    &\quad  \approx \Pb_{\perp} \left(\widehat\Yb^{(\ell) }/\sqrt{p} - \Yb^{(\ell) }/\sqrt{p}\right)(\widetilde\bOmega^{(\ell)}/\sqrt{p})^{\top}\bLambda^{-1} \approx \Pb_{\perp} \Eb_0 (\bOmega^{(\ell)}/\sqrt{p})\Bb^{(\ell)},
\end{align*}}
 
\noindent where the last approximation is because when $\widetilde\bOmega^{(\ell)}/\sqrt{p}$ is almost orthonormal we have $\Bb^{(\ell)} = (\bLambda \widetilde\bOmega^{(\ell)}/\sqrt{p})^{\dagger} \approx (\widetilde\bOmega^{(\ell)}/\sqrt{p})^{\top}\bLambda^{-1}$. Then aggregating the results over $\ell \in [L]$  we have 
\vspace{-15pt}
{\singlespace
$$\widetilde\Vb^{\F} - \Vb \approx \frac{1}{L} \sum_{\ell = 1}^L  \left\{\widehat\Vb^{(\ell)} - \Vb \right\} \approx  \frac{1}{L} \sum_{\ell = 1}^L  \Pb_{\perp} \Eb_0 (\bOmega^{(\ell)}/\sqrt{p})\Bb^{(\ell)} = \Pb_{\perp} \Eb_0 \bOmega \Bb_{\bOmega}L^{-1}.$$}

\noindent It is worth noting that 
\vspace{-15pt}
{\singlespace 
\begin{equation}\label{eq: converge of cov two scenarios}
\frac{1}{L} \bOmega \Bb_{\bOmega} \approx \frac{1}{L} \Big(\sum_{\ell = 1}^L (\bOmega^{(\ell)}/\sqrt{p})(\bOmega^{(\ell)}/\sqrt{p})^{\top}\Big) \Vb \bLambda^{-1} \rightarrow  \Vb \bLambda^{-1},    
\end{equation}
}when $Lp \gg d $, which demonstrates the consistency of the leading term across different regimes of $Lp$. To unify the notations, we denote the leading term for $\widetilde\Vb^{\F}\Hb - \Vb$ by 
\vspace{-15pt}
{\singlespace
$$
\cV (\Eb_0) = \left\{\begin{array}{ll}
     \Pb_{\perp} \Eb_0 \Vb \bLambda^{-1} &, \quad \text{ if } Lp \gg d ; \\
     \Pb_{\perp} \Eb_0 \bOmega \Bb_{\bOmega} L^{-1} &, \quad  \text{ if } Lp \ll d.
\end{array}\right.
$$}

\noindent Before we formally present the theorems, we introduce the following extra regularity conditions necessary for studying the asymptotic features of the eigenspace estimator. 

\begin{assumption}[Incoherence Condition]\label{asp: incoh}{
For the eigenspace of the true matrix $\Mb$, we assume 
$
\|\Vb\|_{2,\infty} \le \sqrt{{\mu K}/{d}}
$,}
where $\mu \ge 1$ may change with $d$.
\end{assumption}
\begin{assumption}[Statistical Rates for Eigenspace Convergence]\label{asp: eigenspace}{ 
For the unbiased error term $\Eb_0$ and the traditional PCA estimator $\widehat\Vb$, we have the following statistical rates
\vspace{-15pt}
{\singlespace
$$
\lim_{d \rightarrow \infty} \PP\big(\|\widehat{\Vb} \operatorname{sgn}(\widehat{\Vb}^{\top} \Vb) - \Vb\|_{2,\infty} { \le} r_3(d)\big) = 1,
\,\,
 \lim_{d \rightarrow \infty}  \PP\big(\|\Eb_0(\Ib_d - \widehat\Vb\widehat\Vb^{\top})\Vb\|_{2,\infty} {\le} r_4(d)\big) = 1.
$$}}
\end{assumption}
\begin{assumption}[Central Limit Theorem]\label{asp: clt}{ 
For the leading term $\cV (\Eb_0)$ and any $j \in [d]$, it holds that $\mathbf{\Sigma}_j^{-1/2}\cV (\Eb_0)^{\top} \eb_j \overset{d}{\rightarrow} {\cN}(\mathbf{0}, \Ib_K)
$, where $\bSigma_j = \Cov(\cV (\Eb_0)^{\top} \eb_j | \bOmega)$ when $Lp \ll d$ and $\bSigma_j = \Cov(\cV (\Eb_0)^{\top} \eb_j )$ when $Lp \gg d$.}
\end{assumption}

Assumption \ref{asp: incoh} is the incoherence condition 
to guarantee that the information of the eigenspace is uniformly spread.
In Assumption \ref{asp: eigenspace} , $r_3(d)$ bounds the row-wise estimation error for the eigenspace, while $r_4(d)$ characterizes the row-wise convergence rate of the residual error term {projected onto the spaces spanned by $\hat\Vb_{\perp}$ and $\Vb$ consecutively}, i.e., $\|\Eb_0(\Ib_d - \widehat\Vb\widehat\Vb^{\top})\Vb\|_{2,\infty} = \|\Eb_0\Pb_{\widehat\Vb_{\perp}}\Pb_{\Vb}\|_{2,\infty}$. Assumption \ref{asp: clt} states that the leading term satisfies the central limit theorem (CLT). { These assumptions are for the general framework and will be translated into case-specific conditions for concrete examples.}
With the above assumptions in place, we are ready to present the formal inferential results. 
\vspace{-15pt}
\subsection{Inference When \texorpdfstring{$Lp \gg d$}{Lp gg d}}\label{sec: inf L much larger}
 We first define $\Hb = \Hb_2 \Hb_1 \Hb_0$ to be the alignment matrix between $\widetilde\Vb^{\F}$ and $\Vb$, where $\Hb_2 = \operatorname{sgn}(\widetilde{\Vb}^{\text{F}\top} \widetilde{\Vb})$, $\Hb_1 = \operatorname{sgn}(\widetilde{\Vb}^{\top} \widehat{\Vb})$ and $\Hb_0 = \operatorname{sgn}(\widehat{\Vb}^{\top} {\Vb})$. The following theorem provides the distributional guarantee of FADI when $Lp \gg d$.
\begin{theorem}\label{thm: leading term L big}{\it
    When { $d \ge 3$ and} $Lp \gg d$, under Assumptions \ref{asp: tail prob bound} - \ref{asp: clt}, recall $\bSigma_j = \Cov\big(\cV (\Eb_0)^{\top}\eb_j\big)$  for $j \in [d]$. Define $r(d) = \Delta^{-1}\big(\sqrt{\frac{Kd }{ pL}} r_1(d) + r_3(d)r_1(d) \!+\! \sqrt{\frac{\mu K}{d\Delta^2}}r_1(d)^2\!+\! r_2(d)\!+\! r_4(d)\big)$, and assume that there exists a statistical rate $\eta_1(d)$ such that 
    \vspace{-15pt}
    {\singlespace
    $$\min_{j \in [d]} \lambda_K \big(\bSigma_j \big) \gtrsim \eta_1(d) \quad \text{and} \quad \eta_1(d)^{-1/2} r(d) = o(1).$$
}
    
 \noindent If $\Delta^{-1}r_1(d)(\log d)^2\sqrt{d/p} = o(1)$ and we take 
 \vspace{-25pt}
 {\singlespace
    $$q \ge 2 + \log (Ld)/\log\log d, \quad p' \ge \max(2K, K+7) \quad\text{and}\quad p \ge \max(2K, K+8q-1),$$
    \begin{equation}\label{eq: general clt vk large L}
          \text{we have}\quad   \mathbf{\Sigma}_j^{-1/2}(\widetilde{\Vb}^{\rm F} \Hb - \Vb)^{\top} \eb_j \overset{d}{\rightarrow} {\cN}(\mathbf{0}, \Ib_K), \quad \forall j \in [d].
    \end{equation}}}
\end{theorem}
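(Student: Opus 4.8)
The plan is to carry out a perturbation expansion of $\widetilde\Vb^{\F}$ around $\Vb$, show that the remainder after extracting the leading term $\cV(\Eb_0) = \Pb_\perp \Eb_0 \Vb \bLambda^{-1}$ is negligible row-wise, and then invoke the CLT Assumption~\ref{asp: clt} together with a Slutsky-type argument to replace the true covariance $\bSigma_j$ by its population value. Concretely, the argument proceeds in three stages.

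\textbf{Stage 1: reduce $\widetilde\Vb^{\F}$ to $\widehat\Vb$ and then to $\Vb$.} Since $Lp \gg d$, Theorem~\ref{thm: error bound} and its remark (with the stated choice $q \ge 2 + \log(Ld)/\log\log d$) give that the extra fast-sketching term in \eqref{eq: error VF Vk} is negligible, so that $\rho(\widetilde\Vb^{\F},\widetilde\Vb)$ and $\rho(\widetilde\Vb,\widehat\Vb)$ are both of smaller order than $\rho(\widehat\Vb,\Vb)$; this is where $\Hb = \Hb_2\Hb_1\Hb_0$ enters, each factor being the $\sgn$ alignment between consecutive estimators. The point is that $\widetilde\Vb^{\F}\Hb - \Vb = (\widehat\Vb\Hb_0 - \Vb) + (\text{lower order})$, and moreover the lower-order pieces must be controlled in the $\|\cdot\|_{2,\infty}$ (row-wise) norm, not merely the Frobenius norm, since the CLT is entry-wise in $j$. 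This requires the $\ell_{2,\infty}$ eigenspace perturbation bounds encoded by $r_3(d)$ in Assumption~\ref{asp: eigenspace} together with the incoherence Assumption~\ref{asp: incoh}; a standard leave-one-out or Davis–Kahan-type $\ell_{2,\infty}$ argument handles the aggregation step $\widetilde\Vb \to \widehat\Vb$.

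\textbf{Stage 2: first-order expansion of $\widehat\Vb\Hb_0 - \Vb$.} Write the eigen-identity $\widehat\Mb\widehat\Vb = \widehat\Vb\widehat\bLambda$ and $\Mb\Vb = \Vb\bLambda$, project onto $\Pb_\perp = \Ib - \Vb\Vb^\top$, and use $\widehat\Mb = \Mb + \Eb = \Mb + \Eb_0 + \Eb_b$. The standard expansion gives $\widehat\Vb\Hb_0 - \Vb = \Pb_\perp\Eb_0\Vb\bLambda^{-1} + \text{(remainder)}$, where the remainder collects: the $\Pb_\Vb$ component of $\widehat\Vb - \Vb$ (which is $O(\rho(\widehat\Vb,\Vb)^2)$), the bias contribution $\Pb_\perp\Eb_b\Vb\bLambda^{-1}$ (order $r_2(d)/\Delta$ by Assumption~\ref{asp: stat rate biased error}), a term $\Pb_\perp\Eb_0(\widehat\Vb - \Vb)\bLambda^{-1}$ reorganized using $\widehat\Vb\widehat\Vb^\top$ so that its row-wise norm is controlled by $r_4(d)/\Delta$, and a term of order $\|\Vb\|_{2,\infty}\rho(\widehat\Vb,\Vb)\cdot r_1(d)/\Delta$ which by incoherence is $\sqrt{\mu K/(d\Delta^2)}\,r_1(d)^2/\Delta$. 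Collecting these and dividing by $\Delta^{-1}$ reproduces exactly the definition of $r(d)$ in the theorem statement, so the row-wise remainder is $o_\PP(r(d))$, uniformly in $j$.

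\textbf{Stage 3: CLT and covariance replacement.} For fixed $j$, Assumption~\ref{asp: clt} gives $\bSigma_j^{-1/2}\cV(\Eb_0)^\top\eb_j \overset{d}\to \cN(\zero,\Ib_K)$. The eigenvalue lower bound $\lambda_K(\bSigma_j) \gtrsim \eta_1(d)$ together with $\eta_1(d)^{-1/2}r(d) = o(1)$ ensures $\bSigma_j^{-1/2}(\text{remainder})^\top\eb_j = o_\PP(1)$, so Slutsky yields $\bSigma_j^{-1/2}(\widetilde\Vb^{\F}\Hb - \Vb)^\top\eb_j \overset{d}\to \cN(\zero,\Ib_K)$, which is \eqref{eq: general clt vk large L}. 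The condition $\Delta^{-1}r_1(d)(\log d)^2\sqrt{d/p} = o(1)$ is what is needed to make the $\ell_{2,\infty}$ perturbation bounds in Stages 1–2 valid with high probability.

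\textbf{Main obstacle.} The delicate part is Stage 1–2 carried out in the $\ell_{2,\infty}$ norm rather than the operator/Frobenius norm: bounding $\|\Pb_\perp\Eb_0(\widehat\Vb\widehat\Vb^\top - \Vb\Vb^\top)\Vb\|_{2,\infty}$ and the aggregation error $\|\widetilde\Vb\Hb_1 - \widehat\Vb\|_{2,\infty}$ requires a careful decoupling of $\Eb_0$ from $\widehat\Vb$ (since $\widehat\Vb$ depends on $\Eb_0$), which is precisely what Assumption~\ref{asp: eigenspace} is designed to abstract away — but one still has to show that the aggregated sketch estimator $\widetilde\Vb$ inherits the same row-wise rate, and that the additional randomness from $\bOmega$ and $\bOmega^{\F}$ (present in $\widetilde\Vb^{\F}$ but not in $\widehat\Vb$) does not inflate the row-wise error. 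I expect this aggregation-plus-sketching $\ell_{2,\infty}$ control to be the technical heart of the proof; the CLT step itself is then essentially Slutsky once the remainder is shown to be $o_\PP$ of the covariance scale.
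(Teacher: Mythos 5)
Your overall strategy matches the paper's proof closely: the decomposition $\widetilde\Vb^{\F}\Hb - \Vb = (\widetilde\Vb^{\F}\Hb - \widetilde\Vb\Hb_1\Hb_0) + (\widetilde\Vb\Hb_1\Hb_0 - \widehat\Vb\Hb_0) + (\widehat\Vb\Hb_0 - \Vb)$, the first-order eigenvector expansion of $\widehat\Vb\Hb_0 - \Vb$ in Stage~2 (which accounts term-by-term for $r_2,r_3,r_4$ and the incoherence piece exactly as in the paper), and the Slutsky step in Stage~3 are all essentially what the paper does.

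Where you go astray is in Stage~1 and in your ``Main obstacle'' paragraph. You claim the aggregation error $\widetilde\Vb\Hb_1 - \widehat\Vb$ and the sketching error $\widetilde\Vb^{\F}\Hb_2 - \widetilde\Vb$ need a dedicated $\ell_{2,\infty}$ analysis via leave-one-out or a row-wise Davis--Kahan argument, and you flag this as ``the technical heart.'' In fact the paper controls both of these pieces only in operator norm: $\|\widetilde\Vb\Hb_1 - \widehat\Vb\|_2$ is bounded via $\|\widetilde\Vb\widetilde\Vb^{\top} - \widehat\Vb\widehat\Vb^{\top}\|_2 \le \|\widetilde\Vb\widetilde\Vb^{\top} - \Vb'\Vb'^{\top}\|_{\F} + \|\Vb'\Vb'^{\top} - \widehat\Vb\widehat\Vb^{\top}\|_2 = O_P\bigl(d^{-1/2}r_1(d)^2/\Delta^2 + \sqrt{Kd/(\Delta^2 pL)}\,r_1(d)\bigr)$, and $\|\widetilde\Vb^{\F}\Hb_2 - \widetilde\Vb\|_2$ is handled by the power-method contraction with $q$ large. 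Since $\|\cdot\|_{2,\infty} \le \|\cdot\|_2$, these operator-norm rates immediately control the row-wise error, and the $\sqrt{Kd/(pL)}\,r_1(d)/\Delta$ term is precisely what appears in $r(d)$; the condition $Lp\gg d$ (together with $\eta_1(d)^{-1/2}r(d)=o(1)$) is what makes this crude bound sufficient. No leave-one-out or $\ell_{2,\infty}$ perturbation theory is invoked for these pieces, and it is not at all clear that a ``standard'' leave-one-out argument would even apply to $\widetilde\Vb$, which depends on all $L$ test matrices $\bOmega^{(\ell)}$ in a non-row-separable way. The only place a genuine row-wise bound is needed is $\widehat\Vb\Hb_0 - \Vb - \Pb_{\perp}\Eb_0\Vb\bLambda^{-1}$, and that difficulty is exactly what Assumptions~\ref{asp: incoh} and~\ref{asp: eigenspace} abstract away. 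So your proposal, if executed as written, would attempt a harder (and unjustified) route on the aggregation step; the paper's observation that the operator-norm rate already suffices there is the simplification you missed.
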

\begin{remark}\label{rmk: asymp normal large L}
The proof 
is deferred to Supplementary Materials~E.10. Here $\eta_1(d)$ guarantees that the asymptotic covariance
is positive definite, and the rate $r(d)$ bounds the remainder term stemming from fast sketching approximation and eigenspace misalignment. 
We will see in the concrete examples that the asymptotic covariance of the FADI estimator under the regime $Lp \gg d$ is the same as that of the traditional PCA estimator.  Namely, we can  increase the number of repeated sketches in exchange for the same testing efficiency as the traditional PCA. { We also discuss in Supplementary Materials~E.11 how incorporating additional sparsity assumptions on $\Vb$ can further improve estimation rate when $Lp \gg d$.}
We present the corollaries of Theorem~\ref{thm: leading term L big} for Examples \ref{ex: spiked gaussian} and \ref{ex: GMM} as follows. 
\end{remark}
Recall the set $S$ of size $K'$ defined in Section~\ref{sec: step 0} for estimating $\hat\sigma^2$. Denote by $\bSigma_{S}$ the population covariance matrix corresponding to $\hbSigma_S$ and by $\delta = \lambda_K(\bSigma_S) - \sigma^2$ the eigengap of $\bSigma_S$. Define $\tilde{\sigma}_1 = \|\mathbf{\Sigma}_{S}\|_2$.
We have the following corollary of Theorem~\ref{thm: leading term L big} for Example~\ref{ex: spiked gaussian}.
\begin{corollary}[Spiked Covariance Model]\label{col: inf gaussian case L big}{\it 
Assume that $\{\bX_i\}_{i=1}^n$ are i.i.d. multivariate Gaussian. 
If we take $K' = K+1$, $p' \ge \max(2K, K+7)$, $q \ge 2 + \log(Ld)/\log\log d$ and $p \ge \max(2K, K + 8q -1)$, then when { $d \ge 3$ }and  $Lp \gg  {Kdr}\kappa_1^2{\lambda_1}/{\sigma^2}$, under Assumption \ref{asp: incoh} and the conditions that 
\vspace{-15pt}
{\singlespace
$$n \gg \max \Big(\kappa_1^4(\log d)^4 r^2 {\lambda_1}/{\sigma^2}, \big(\kappa_1 {\lambda_1}/{\sigma^2}\big)^6\Big) \,\, \text{and} \,\, { K \ll \min\left(\big({\tilde\sigma_1}/{\delta}\big)^{-2}\kappa_1 r, \mu^{-{2}/{3}}\kappa^{-{4}/{3}}_1d^{{2}/{3}}\right),}$$  }

\noindent we have that \eqref{eq: general clt vk large L} holds.
Furthermore, we have 
\vspace{-15pt}
{\singlespace
\begin{equation}\label{eq: col gaussian L large 2}
    \widetilde{\bSigma}_j^{-1/2}(\widetilde{\Vb}^{\rm F} \Hb - \Vb)^{\top} \eb_j \overset{d}{\rightarrow} {\cN}(\mathbf{0}, \Ib_K), \quad \forall j \in [d],
\end{equation}}

\noindent where $\widetilde\bSigma_j  = \frac{\sigma^2}{n} \bLambda^{-1} \Vb^{\top}\bSigma \Vb \bLambda^{-1}$ is a simplification of $\bSigma_j$ under Example~\ref{ex: spiked gaussian}. Besides, if we define $\widetilde\bLambda = \widetilde\Vb^{{\rm F}\top}\widehat\Mb \widetilde\Vb^{\rm F}$ and estimate $\widetilde\bSigma_j$ by $\hbSigma_j = \frac{1}{n}( \hat\sigma^2 \widetilde\bLambda^{-1} + \hat\sigma^4 \widetilde\bLambda^{-2} )$, then we have
\vspace{-15pt}
{\singlespace
\begin{equation}\label{eq: est cov spiked gaussian L large}
    {\hbSigma}_j^{-1/2}(\widetilde{\Vb}^{\rm F}  - \Vb\Hb^{\top})^{\top}  \eb_j \overset{d}{\rightarrow} {\cN}(\mathbf{0}, \Ib_K), \quad \forall j \in [d].
\end{equation}}

}
\end{corollary}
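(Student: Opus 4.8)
The plan is to prove the three displayed conclusions in turn: \eqref{eq: general clt vk large L} by checking the hypotheses of Theorem~\ref{thm: leading term L big} for the spiked covariance model; \eqref{eq: col gaussian L large 2} by evaluating the asymptotic covariance $\bSigma_j$ in closed form and showing it equals $(1+o(1))\widetilde\bSigma_j$; and \eqref{eq: est cov spiked gaussian L large} by showing that the plug-in matrix $\hbSigma_j$ consistently estimates $\widetilde\bSigma_j$ up to the (orthogonal) alignment $\Hb$. For the first conclusion I would first instantiate the abstract rates. Since $\widehat\Mb=\hbSigma-\hat\sigma^2\Ib_d$ and $\Mb=\Vb\bLambda\Vb^\top$, we have $\Eb=(\hbSigma-\bSigma)-(\hat\sigma^2-\sigma^2)\Ib_d$; the standard sub-exponential bound on the operator-norm fluctuation of a Gaussian sample covariance yields $\big\|\,\|\hbSigma-\bSigma\|_2\,\big\|_{\psi_1}\lesssim(\lambda_1+\sigma^2)\sqrt{r/n}$, while $\hat\sigma^2=\sigma_{\min}(\hbSigma_S)$ with $|S|=K+1$ obeys $|\hat\sigma^2-\sigma^2|\le\|\hbSigma_S-\bSigma_S\|_2\lesssim\tilde{\sigma}_1\sqrt{K/n}$ (Weyl) with bias of order $\tilde{\sigma}_1^2K/(n\delta)$ from second-order perturbation of the smallest eigenvalue of $\bSigma_S$. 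Hence Assumption~\ref{asp: tail prob bound} holds with $r_1(d)\asymp(\lambda_1+\sigma^2)\sqrt{r/n}$, and Assumption~\ref{asp: stat rate biased error} holds with $\Eb_0=\hbSigma-\bSigma$ and biased part of order $r_2(d)=o(r_1(d))$ under the stated conditions (a multiple of $\Ib_d$ in $\Eb$ is irrelevant to the leading term since $\Pb_\perp\Vb=\mathbf{0}$).

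Assumption~\ref{asp: incoh} is assumed in the corollary. For Assumption~\ref{asp: eigenspace} I would run an entrywise / leave-one-out eigenvector analysis of $\widehat\Mb$ to get a row-wise rate $r_3(d)$ for $\|\widehat\Vb\sgn(\widehat\Vb^\top\Vb)-\Vb\|_{2,\infty}$ (of order $\lesssim\kappa_1\sqrt{\mu K r}/\sqrt{dn}$) and a comparable $r_4(d)$ for $\|\Eb_0(\Ib_d-\widehat\Vb\widehat\Vb^\top)\Vb\|_{2,\infty}$, and for Assumption~\ref{asp: clt} I would invoke the multivariate Lindeberg CLT, uniformly over $j\in[d]$, for the i.i.d. sum $\cV(\Eb_0)^\top\eb_j=\tfrac1n\bLambda^{-1}\sum_{i=1}^n(\Vb^\top\bX_i)(\bX_i^\top\Pb_\perp\eb_j)$. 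It then remains to verify the three scalar conditions of Theorem~\ref{thm: leading term L big}. Substituting the rates above and $\eta_1(d)\asymp\sigma^2(\lambda_1+\sigma^2)/(n\lambda_1^2)$ into $\min_j\lambda_K(\bSigma_j)\gtrsim\eta_1(d)$, $\eta_1(d)^{-1/2}r(d)=o(1)$ and $\Delta^{-1}r_1(d)(\log d)^2\sqrt{d/p}=o(1)$ reproduces exactly the displayed requirements $n\gg\max\big(\kappa_1^4(\log d)^4 r^2\lambda_1/\sigma^2,(\kappa_1\lambda_1/\sigma^2)^6\big)$, $K\ll\min\big((\tilde{\sigma}_1/\delta)^{-2}\kappa_1 r,\mu^{-2/3}\kappa_1^{-4/3}d^{2/3}\big)$ and $Lp\gg Kdr\kappa_1^2\lambda_1/\sigma^2$ (the last of which forces $Lp\gg d$), so Theorem~\ref{thm: leading term L big} delivers \eqref{eq: general clt vk large L}.

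Next, Gaussianity makes $\bSigma_j$ explicit. Because $\Vb^\top\bSigma\Pb_\perp=\sigma^2\Vb^\top\Pb_\perp=\mathbf{0}$, the vector $\Vb^\top\bX_i$ and the scalar $\bX_i^\top\Pb_\perp\eb_j$ are jointly Gaussian with zero cross-covariance, hence independent; combined with $\Vb^\top(\hbSigma-\bSigma)\Pb_\perp\eb_j=\Vb^\top\hbSigma\Pb_\perp\eb_j$, $\Vb^\top\bSigma\Vb=\bLambda+\sigma^2\Ib_K$ and $\Pb_\perp\bSigma\Pb_\perp=\sigma^2\Pb_\perp$, this gives
\[
\bSigma_j=\Cov\big(\cV(\Eb_0)^\top\eb_j\big)=\frac{\sigma^2\|\Pb_\perp\eb_j\|_2^2}{n}\,\bLambda^{-1}(\bLambda+\sigma^2\Ib_K)\bLambda^{-1}=\|\Pb_\perp\eb_j\|_2^2\,\widetilde\bSigma_j,
\]
where the last step uses $\bLambda^{-1}(\bLambda+\sigma^2\Ib_K)\bLambda^{-1}=\bLambda^{-1}\Vb^\top\bSigma\Vb\bLambda^{-1}$. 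Since $\|\Pb_\perp\eb_j\|_2^2=1-\|\Vb^\top\eb_j\|_2^2=1-O(\mu K/d)\to1$ by Assumption~\ref{asp: incoh}, we get $\widetilde\bSigma_j^{-1/2}\bSigma_j\widetilde\bSigma_j^{-1/2}\to\Ib_K$ uniformly in $j$, and \eqref{eq: col gaussian L large 2} follows from \eqref{eq: general clt vk large L} by Slutsky.

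Finally, for \eqref{eq: est cov spiked gaussian L large} I would use the identity $(\widetilde\Vb^{\rm F}-\Vb\Hb^\top)^\top\eb_j=\Hb(\widetilde\Vb^{\rm F}\Hb-\Vb)^\top\eb_j$, valid since $\Hb$ is orthogonal. The error bound \eqref{eq: err bd exm 1} together with the $r_3$ rate gives $\rho(\widetilde\Vb^{\rm F},\Vb)=o_{\PP}(1)$ and $\widetilde\Vb^{\rm F}=\Vb\Hb^\top+o_{\PP}(1)$, so $\widetilde\bLambda=\widetilde\Vb^{{\rm F}\top}\widehat\Mb\widetilde\Vb^{\rm F}=\Hb\Vb^\top\widehat\Mb\Vb\Hb^\top+o_{\PP}(\Delta)=\Hb\bLambda\Hb^\top+o_{\PP}(\Delta)$ (using $\|\Eb\|_2=O_{\PP}(r_1(d))=o_{\PP}(\Delta)$), and $\hat\sigma^2=\sigma^2(1+o_{\PP}(1))$. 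As $\widetilde\bSigma_j=\tfrac1n(\sigma^2\bLambda^{-1}+\sigma^4\bLambda^{-2})$, these relations give $\hbSigma_j=\tfrac1n(\hat\sigma^2\widetilde\bLambda^{-1}+\hat\sigma^4\widetilde\bLambda^{-2})=\Hb\widetilde\bSigma_j\Hb^\top(1+o_{\PP}(1))$, hence $\hbSigma_j^{-1/2}\Hb=\Hb\widetilde\bSigma_j^{-1/2}(1+o_{\PP}(1))$. Combining, $\hbSigma_j^{-1/2}(\widetilde\Vb^{\rm F}-\Vb\Hb^\top)^\top\eb_j=\Hb\,\widetilde\bSigma_j^{-1/2}(\widetilde\Vb^{\rm F}\Hb-\Vb)^\top\eb_j+o_{\PP}(1)$, and since $\Hb$ is orthogonal and the limit $\cN(\mathbf{0},\Ib_K)$ in \eqref{eq: col gaussian L large 2} is rotation invariant (with $\Hb$, up to $o_{\PP}(1)$, measurable with respect to the $\sigma$-field under which that CLT is established), \eqref{eq: est cov spiked gaussian L large} follows. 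I expect the main obstacle to be the first conclusion: establishing the row-wise eigenvector bounds of Assumption~\ref{asp: eigenspace} for $\widehat\Mb=\hbSigma-\hat\sigma^2\Ib_d$ via a leave-one-out decoupling argument, and then carrying out the delicate bookkeeping — tracking $r_1,\dots,r_4$, $\tilde{\sigma}_1$, $\delta$ and $\kappa_1$ — that shows $r(d)=o(\eta_1(d)^{1/2})$ precisely under the stated sample-size, rank and $Lp$ conditions; the random-rotation step in the last part is a secondary but genuine subtlety.
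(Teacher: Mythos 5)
Your overall plan — verify Assumptions 1--5, instantiate Theorem~\ref{thm: leading term L big}, compute $\bSigma_j$ explicitly, and finish with a Slutsky/orthogonal-rotation argument for $\hbSigma_j$ — matches the paper's proof. Two sub-steps are actually done differently (and more cleanly) than in the paper: (i) the exact identity $\bSigma_j = \|\Pb_\perp\eb_j\|_2^2\,\widetilde\bSigma_j$, which follows from the independence of $\Vb^\top\bX_i$ and $\Pb_\perp\bX_i$ under Gaussianity and from $\Vb^\top\Pb_\perp = \mathbf{0}$, whereas the paper detours through an intermediate $\bSigma_j'$ and two operator-norm bounds; and (ii) you propose a leave-one-out entrywise bound for $r_3(d)$ and $r_4(d)$, while the paper is content with the cruder $\|\cdot\|_{2,\infty}\le\|\cdot\|_2$ bound ($r_3\asymp\kappa_1\log d\,\sqrt{r/n}$), which is already sufficient under the stated sample-size condition. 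Both differences are fine — the second merely does unnecessary work.

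There is, however, a genuine gap in the choice of the bias decomposition $\Eb=\Eb_0+\Eb_b$. You take $\Eb_0 = \hbSigma-\bSigma$ and $\Eb_b = -(\hat\sigma^2-\sigma^2)\Ib_d$, so that $r_2(d)\asymp\tilde\sigma_1\sqrt{K/n}$. It is true that this $r_2$ is $o(r_1)$ under the stated condition on $K$, and it is also true that the $\Ib_d$ piece does not contribute to the leading term $\bLambda^{-1}\Vb^\top\Eb_0\Pb_\perp\eb_j$. But Theorem~\ref{thm: leading term L big} requires more: the residual rate $r(d)$ contains the raw term $r_2(d)/\Delta$, and you need $\eta_1(d)^{-1/2}r(d)=o(1)$. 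With $\eta_1(d)\asymp\sigma^2(\sigma^2+\lambda_1)/(n\lambda_1^2)$ you get
$$
\eta_1(d)^{-1/2}\,\frac{r_2(d)}{\Delta}\;\asymp\;\frac{\lambda_1\tilde\sigma_1\sqrt{K}}{\Delta\sqrt{\sigma^2(\sigma^2+\lambda_1)}},
$$
which is \emph{$n$-free}. For a toy example $\bSigma=\diag(2,1,\dots,1)$, $K=1$, this ratio is a fixed constant and never goes to zero, although all the displayed conditions of the corollary are satisfied for $n$ large. So your decomposition does not verify the hypothesis $\eta_1^{-1/2}r(d)=o(1)$ under the stated conditions, and the invocation of Theorem~\ref{thm: leading term L big} breaks down. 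The paper avoids this by a more refined split, borrowed from its Corollary~\ref{col: guassian case}: it folds the leading linear-in-$(\hbSigma-\bSigma)$ part of $\hat\sigma^2-\sigma^2$ (the quantity $\tilde\cb_0^\top(\tfrac1n\sum_i\bZ_i\bZ_i^\top-\bLambda_d)\tilde\cb_0$, which is mean-zero) into $\Eb_0$, leaving only the second-order perturbation remainder as $\Eb_b$, so that $r_2(d)\asymp\tilde\sigma_1^3 K(\log d)^2/(\delta^2 n)$. With that choice, $\eta_1^{-1/2}r_2/\Delta$ acquires an extra $n^{-1/2}$ factor and vanishes under the stated conditions. (Your structural observation that a scalar multiple of $\Ib_d$ in $\Eb_b$ feeds into the residual only through $\Pb_\perp\widehat\Vb$ — so that its true contribution is $\lesssim|c|\,\|\Eb\|_2/\Delta^2$ rather than $|c|/\Delta$ — is correct, but exploiting it requires re-proving the residual bound rather than citing Theorem~\ref{thm: leading term L big} as a black box.) Either adopt the paper's $\Eb_0$ or redo the residual analysis; as written, the first conclusion does not follow.
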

\begin{remark}\label{rmk: col spiked gaussian large L}
The proof is in Supplementary Materials~E.12.   We  compute $\widetilde\bLambda$
 distributively across the $m$ data splits, and the cost for computing $\hbSigma_j$ is $O(ndK/m)$. 
We recommend taking $p = \lceil\sqrt{d}\rceil$, $L = \lceil \kappa_1^2 K d^{3/2}\log d \rceil $ and $q = \lceil\log d\rceil \gg 2 + \log (Ld)/\log\log d$ for optimal computational efficiency, where the total computation cost will be $O(K^3 d^{5/2} (\log d)^2)$. Our asymptotic covariance matrix is the same as that of the traditional PCA estimator under the incoherence condition \citep{fan2017eigenasymp}. Specifically, \citet{fan2017eigenasymp} studied the asymptotic distribution of the traditional PCA estimator by assuming that the spiked eigenvalues are well-separated and diverging to infinity, which is not required by our paper.  
Our scaling conditions are stronger than the estimation results in Corollary~\ref{prop: err rate terms} to cancel out the additional randomness induced by fast sketching and allow for efficient inference.
\end{remark}

 Denote by $\mu_{\theta} = \Delta_0^{-1}\sqrt{n/K} \|\bTheta\|_{2,\infty}$ the incoherence parameter for the Gaussian means. Then we have the following corollary 
 for Example~\ref{ex: GMM}.
\begin{corollary}[Gaussian Mixture Models]\label{col: GMM L big}{\it
When $d \ge 3$ and $Lp \gg d$, if we take $q \ge 2 + \log(Ld)/\log \log d$, $p \ge \max(2K, K + 8q -1)$ and $p' \ge \max(2K, K+7)$, under the conditions that
\vspace{-15pt}
{\singlespace
$${ K = o(d)}, \quad n \gg d^2, \quad K\sqrt{n}(\log d)^2 \ll \Delta_0^2 \ll  \frac{n^{4/3}}{\mu_{\theta}^2d} \quad \text{and} \quad L \gg \frac{Kd^2}{p},$$ }

\noindent \eqref{eq: general clt vk large L} holds. Furthermore, denote $\widetilde{\bSigma}_j = \mathbf{\Lambda}^{-1}\Vb^{\top}\big\{ \Fb\bTheta^{\top}\bTheta\Fb^{\top} + n\Ib_d \big\} \Vb \mathbf{\Lambda}^{-1}$, we have
\vspace{-15pt}
{\singlespace
\begin{equation}\label{eq: col GMM L big 2}
    \widetilde{\bSigma}_j^{-1/2}(\widetilde{\Vb}^{\rm F} \Hb - \Vb)^{\top} \eb_j \overset{d}{\rightarrow} {\cN}(\mathbf{0}, \Ib_K), \quad \forall j \in [d].
\end{equation}}

\noindent If we define $\widetilde\bLambda = \widetilde\Vb^{{\rm F}\top}\widehat\Mb \widetilde\Vb^{\rm F}$ and estimate $\widetilde\bSigma_j$ by $\hbSigma_j = \widetilde\bLambda^{-1} + n \widetilde\bLambda^{-2}$,  we have 
\vspace{-15pt}
{\singlespace
\begin{equation}\label{eq: est cov gmm L large}
{\hbSigma}_j^{-1/2}(\widetilde{\Vb}^{\rm F}  - \Vb\Hb^{\top})^{\top}  \eb_j \overset{d}{\rightarrow} {\cN}(\mathbf{0}, \Ib_K), \quad \forall j \in [d].
\end{equation}}

} 
\end{corollary}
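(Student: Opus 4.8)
The plan is to obtain Corollary~\ref{col: GMM L big} by instantiating Theorem~\ref{thm: leading term L big} for Example~\ref{ex: GMM}: verify Assumptions~\ref{asp: tail prob bound}--\ref{asp: clt} together with the rate hypotheses of the theorem to get \eqref{eq: general clt vk large L}, then evaluate $\bSigma_j$ in closed form and prove consistency of the plug-in estimator to upgrade this to \eqref{eq: col GMM L big 2} and \eqref{eq: est cov gmm L large}. First I would set up the algebra. Writing $\Xb=\bTheta\Fb^\top+\Zb$ with $\Zb$ having i.i.d.\ $\cN(0,1)$ entries, expansion gives
$$\Eb=\widehat\Mb-\Mb=\Fb\bTheta^\top\Zb+(\Fb\bTheta^\top\Zb)^\top+(\Zb^\top\Zb-n\Ib_d),$$
which is symmetric and mean zero, so $\Eb_b=\mathbf 0$, $\Eb_0=\Eb$, and Assumption~\ref{asp: stat rate biased error} holds with $r_2(d)=0$. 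Since $\Mb=\Fb\bTheta^\top\bTheta\Fb^\top$ has the same nonzero spectrum as $\diag(\sqrt{d_k})\bTheta^\top\bTheta\diag(\sqrt{d_k})$, the balance conditions $\max_k d_k\lesssim\min_k d_k$ and $\sigma_1(\bTheta)\lesssim\sigma_K(\bTheta)$ give $\Delta\asymp\lambda_1\asymp d\Delta_0^2/K$; and since $\Vb$ and $\Fb$ share a column space while each row of $\Fb$ has a single nonzero entry, $\|\Vb\|_{2,\infty}\asymp\sqrt{K/d}$, so Assumption~\ref{asp: incoh} holds with $\mu\asymp1$.

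Next I would verify the probabilistic assumptions. For Assumption~\ref{asp: tail prob bound} I would bound $\|\,\|\Eb\|_2\,\|_{\psi_1}\lesssim r_1(d)$ with $r_1(d)\asymp\Delta_0 dK^{-1/2}+\sqrt{nd}$, using an $\varepsilon$-net together with Gaussian concentration for the rank-$K$ cross term and a Hanson--Wright / matrix-Bernstein bound for the sub-exponential quadratic term $\|\Zb^\top\Zb-n\Ib_d\|_2$. For Assumption~\ref{asp: eigenspace}, the row-wise rates $r_3(d)$ (for $\|\widehat\Vb\,\sgn(\widehat\Vb^\top\Vb)-\Vb\|_{2,\infty}$) and $r_4(d)$ (for $\|\Eb_0(\Ib_d-\widehat\Vb\widehat\Vb^\top)\Vb\|_{2,\infty}$) I would obtain by a leave-one-out / entrywise-eigenvector argument: for each $j$, replace the $j$-th column $\Xb\eb_j=\bW_j$ by its mean $\btheta_{k_j}$ to form an auxiliary matrix independent of $\bW_j$, bound the resulting perturbation, and combine with Davis--Kahan; these are the same orders already used for Example~\ref{ex: GMM} in Corollary~\ref{prop: err rate terms}. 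Assumption~\ref{asp: clt} I would check via the Lindeberg/Lyapunov CLT: for fixed $j$, $\cV(\Eb_0)^\top\eb_j=\bLambda^{-1}\Vb^\top\Eb_0\Pb_{\perp}\eb_j$ is, up to a lower-order self-correction, a linear functional of the Gaussian array $\Zb$ (the cross terms give $\bLambda^{-1}\Vb^\top\Fb\bTheta^\top\Zb\eb_j$ and $\bLambda^{-1}(\Zb\Vb)^\top\btheta_{k_j}$, the quadratic term gives $\bLambda^{-1}\bigl[(\Zb\Vb)^\top\Zb\eb_j-n\,\Vb^\top\eb_j\bigr]$), hence a sum of many asymptotically independent, uniformly negligible summands whose normalized sum is asymptotically $\cN(\mathbf0,\Ib_K)$.

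With the assumptions in hand I would compute the covariance and do the rate bookkeeping. Evaluating $\bSigma_j=\Cov(\cV(\Eb_0)^\top\eb_j)$: the $\Pb_\perp\eb_j$ correction and the $(\Zb\Vb)^\top\btheta_{k_j}$ piece are of lower order after the $\bLambda^{-1}$ conjugation under the stated scaling, the cross term contributes covariance $\bLambda^{-1}(\Vb^\top\Mb\Vb)\bLambda^{-1}=\bLambda^{-1}$, and the noise-quadratic term contributes $\bLambda^{-1}(n\,\Vb^\top\Vb)\bLambda^{-1}=n\bLambda^{-2}$ (the off-diagonal blocks vanish because the coordinate groups are disjoint), with cross-covariances between the pieces lower order; hence $\bSigma_j\to\widetilde\bSigma_j=\bLambda^{-1}\Vb^\top(\Fb\bTheta^\top\bTheta\Fb^\top+n\Ib_d)\Vb\bLambda^{-1}=\bLambda^{-1}+n\bLambda^{-2}$, which is $j$-free and positive definite with smallest eigenvalue $\asymp n/\lambda_1^2=:\eta_1(d)$. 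Substituting $r_1(d)$, $r_2(d)=0$, $r_3(d)$, $r_4(d)$, $\mu\asymp1$, $\Delta\asymp d\Delta_0^2/K$ and this $\eta_1(d)$ into $r(d)$ and into $\Delta^{-1}r_1(d)(\log d)^2\sqrt{d/p}=o(1)$ and $\eta_1(d)^{-1/2}r(d)=o(1)$ and simplifying, the lower bound $K\sqrt n(\log d)^2\ll\Delta_0^2$ controls the $\Delta^{-1}r_1(d)(\log d)^2\sqrt{d/p}$ term, the upper bound $\Delta_0^2\ll n^{4/3}\mu_{\theta}^{-2}d^{-1}$ keeps $\eta_1(d)$ from being too small relative to the misalignment remainder, $n\gg d^2$ handles the remaining pieces of $r(d)$, and $L\gg Kd^2/p$ absorbs the $\sqrt{Kd/(pL)}$ part of $r(d)$ (strengthening $Lp\gg d$); the $q,p,p'$ conditions are inherited verbatim. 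Theorem~\ref{thm: leading term L big} then yields \eqref{eq: general clt vk large L}, and Slutsky with the closed form of $\bSigma_j$ gives \eqref{eq: col GMM L big 2}. Finally, for \eqref{eq: est cov gmm L large} I would show $\widetilde\bLambda=\widetilde\Vb^{{\rm F}\top}\widehat\Mb\widetilde\Vb^{\rm F}$ satisfies $\|\widetilde\bLambda-\Hb^\top\bLambda\Hb\|_2=o(\Delta)$ using the error bound for $\widetilde\Vb^{\rm F}$ from Corollary~\ref{prop: err rate terms} (whose hypotheses are implied here) and $\|\Eb\|_2=O_P(r_1(d))$; then $\hbSigma_j=\widetilde\bLambda^{-1}+n\widetilde\bLambda^{-2}$ is consistent for $\Hb^\top\widetilde\bSigma_j\Hb$, and combining this with the orthogonality of $\Hb$ converts \eqref{eq: col GMM L big 2} into \eqref{eq: est cov gmm L large}.

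The step I expect to be the main obstacle is the CLT verification together with the uniform non-degeneracy of the limit: one must establish $\min_j\lambda_K(\bSigma_j)\gtrsim\eta_1(d)$ while simultaneously showing that the lower-order pieces of $\cV(\Eb_0)^\top\eb_j$ — the self-correction, the $(\Zb\Vb)^\top\btheta_{k_j}$ contribution, and the eigenspace-misalignment remainder captured by $r(d)$ — are all negligible relative to this $\eta_1(d)$, which is delicate because for the GMM the leading term mixes a Gaussian-linear and a Gaussian-quadratic piece of comparable magnitude. A close runner-up is deriving the sharp entrywise rates $r_3(d),r_4(d)$, which requires a leave-one-out analysis adapted to the bilinear structure $\widehat\Mb_{jj'}=\bW_j^\top\bW_{j'}-n\,\II\{j=j'\}$.
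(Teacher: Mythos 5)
Your overall plan matches the paper's proof: instantiate Theorem~\ref{thm: leading term L big}, verify Assumptions~\ref{asp: tail prob bound}--\ref{asp: clt}, evaluate the limiting covariance in closed form, and upgrade to the plug-in estimator via Slutsky. Three remarks on where your route diverges. First, you flag a leave-one-out entrywise eigenvector analysis as necessary for the rates $r_3(d),r_4(d)$ and expect this to be one of the hardest steps; the paper avoids it entirely. It simply bounds $\|\widehat\Vb\operatorname{sgn}(\widehat\Vb^\top\Vb)-\Vb\|_{2,\infty}\le\|\widehat\Vb\operatorname{sgn}(\widehat\Vb^\top\Vb)-\Vb\|_2\lesssim r_1'(d)/\Delta$ and $\|\Eb_0(\widehat\Vb\widehat\Vb^\top\Vb-\Vb)\|_{2,\infty}\le r_1'(d)^2/\Delta$ by Davis--Kahan, i.e.\ treats the $2{,}\infty$ norm as a spectral norm. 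These bounds are loose by roughly a $\sqrt{K/d}$ factor, but the strong scaling $n\gg d^2$ in the corollary's hypothesis is imposed precisely so that the loose bounds still force $\eta_1(d)^{-1/2}r(d)=o(1)$; you would be doing substantial extra work to obtain sharper entrywise rates that the stated conditions render unnecessary (and Corollary~\ref{prop: err rate terms} in fact gives no entrywise rates to reuse, contrary to what you suggest). Second, your stated mechanism for the vanishing of cross-covariance terms in $\Cov(\Eb_0\Pb_\perp\eb_j)$ --- ``the coordinate groups are disjoint'' --- is not quite the reason. The paper's computation relies on $\Pb_\perp\Qb_i=\Pb_\perp\Fb\bTheta_{i\cdot}=\mathbf 0$ because $\Fb$ and $\Vb$ span the same column space; this annihilates the cross term between the Gaussian-linear piece and the projection, giving $\Cov(\Eb_0\Pb_\perp\eb_j)=\|\Pb_\perp\eb_j\|_2^2(\Fb\bTheta^\top\bTheta\Fb^\top+n\Ib_d)+n\Pb_\perp\eb_j\eb_j^\top\Pb_\perp$ exactly, not just up to lower-order corrections. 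Third, the CLT under Assumption~\ref{asp: clt} is not rechecked from scratch in the large-$Lp$ proof: the paper establishes in the proof of Corollary~\ref{col: GMM L small} a general Lyapunov-type CLT for statistics of the form $\ab^\top\bSigma_j^{-1/2}\Ab^\top\Eb_0\Pb_\perp\eb_j$ under two conditions on $\Ab$, and here simply verifies that $\Ab=\Vb\bLambda^{-1}$ meets them; you were right that the hypothesis $\Delta_0^2\ll n^{4/3}\mu_\theta^{-2}d^{-1}$ is what makes the Lyapunov ratio vanish. None of these differences is a gap in your reasoning, but the leave-one-out detour would significantly lengthen the argument relative to what the assumed scaling allows.
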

\begin{remark}\label{rmk: col GMM large L}
Please refer to Supplementary Materials~E.13 for the proof. 
The distributive computation cost of $\hbSigma_j$ is $O(ndK/m)$. 
 We recommend taking $p = \lceil\sqrt{d}\rceil$, $L = \lceil K d^{3/2}\log d\rceil $ and $q = \lceil\log d\rceil$, 
 with total complexity of $O(K^3 d^{5/2} (\log d)^2)$. In Corollary~\ref{col: GMM L big}, the scaling condition for $n$ is $n \gg d^2$ compared to $n > d$ in Corollary~\ref{prop: err rate terms}, where the extra factor $d$ is to guarantee fast enough convergence rate of the remainder term for inference. It can be verified that the Cramér-Rao lower bound for unbiased estimators of $\Vb^{\top}\eb_j$ is $\bLambda^{-1}$, and thus
 when $\Delta_0$ is large enough, the asymptotic efficiency of $\widetilde\Vb^{\F}$ is 1 under the regime $Lp \gg d$.
\end{remark}

\vspace{-20pt}

\subsection{Inference When \texorpdfstring{$Lp \ll d$}{Lp ll d}}\label{sec: inf L much smaller}
Similar as when $Lp \gg d$, we first redefine the alignment matrix between $\widetilde\Vb^{\F}$ and $\Vb$ as $\Hb = \Hb_1 \Hb_0$, where $\Hb_1 = \operatorname{sgn}(\widetilde{\Vb}^{\text{F}\top} \widetilde{\Vb}) $ and $\Hb_0 = \operatorname{sgn}(\widetilde{\Vb}^{\top} \Vb)$. Then we have the following theorem characterizing the limiting distribution for $\widetilde\Vb^{\F}$. 


\begin{theorem}\label{thm: leading term}{\it 
     For the case when $Lp \ll d$, under Assumptions \ref{asp: tail prob bound}, \ref{asp: stat rate biased error}, \ref{asp: incoh} and \ref{asp: clt}, for $j \in [d]${,} recall $\mathbf{\Sigma}_j = \operatorname{Cov}(\cV (\Eb_0)^{\top}\eb_j | \mathbf{\Omega})$ and assume that there exists a statistical rate $\eta_2(d)$ such that
\vspace{-15pt}
{\singlespace
      $$\lim_{d \rightarrow \infty} \!\!\PP_{\mathbf{\Omega}} \Big(\!\min_{j \in [d]}\lambda_K\big(\bSigma_j\big) { \ge} \eta_2(d)\!\! \Big) \!= 1, \, \frac{d^2 r_1(d)^4(\log d)^4}{p^2 \Delta^4 \big( \eta_2(d)\!\!\wedge \!\!(\log d)^{-1}\big)} = o(1) \, \text{and} \, \frac{d r_2(d)^2}{Lp \Delta^2 \eta_2(d)} = o(1).$$}
      
     \noindent Then if we take $K(\log d)^2 \ll p \asymp p' \lesssim d/(\log d)^2$ and $q \ge \log d$ we have 
    \vspace{-15pt}
    {\singlespace \begin{equation}\label{eq: general clt for vk small L}
            \mathbf{\Sigma}_j^{-1/2}(\widetilde{\Vb}^{\rm F} \Hb - \Vb)^{\top} \eb_j \overset{d}{\rightarrow} {\cN}(\mathbf{0}, \Ib_K),\quad \forall j \in [d].
    \end{equation}}}
\end{theorem}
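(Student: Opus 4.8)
The plan is to establish \eqref{eq: general clt for vk small L} by decomposing $\widetilde{\Vb}^{\rm F}\Hb - \Vb$ into the leading term $\cV(\Eb_0) = \Pb_{\perp}\Eb_0\bOmega\Bb_{\bOmega}L^{-1}$ plus a negligible remainder, then invoking Assumption~\ref{asp: clt} together with the lower eigenvalue bound $\eta_2(d)$ on $\bSigma_j$ to conclude. First I would make rigorous the heuristic chain given before the theorem: for each $\ell\in[L]$, show that $\widehat\Yb^{(\ell)}/\sqrt p$ has truncated SVD close to $\Vb\bLambda(\widetilde\bOmega^{(\ell)}/\sqrt p)$ up to perturbation $\Pb_{\perp}\Eb_0(\bOmega^{(\ell)}/\sqrt p)\Bb^{(\ell)}$ plus higher-order terms, using a Davis--Kahan / Wedin-type expansion for $\widehat\Vb^{(\ell)}$ and the fact that $p^{-1}\widetilde\bOmega^{(\ell)}\widetilde\bOmega^{(\ell)\top}\approx\Ib_K$ when $p\gg K(\log d)^2$ (a standard Gaussian-matrix concentration bound, likely already isolated as a lemma in the appendix). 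Summing over $\ell$ and dividing by $L$ gives $\widetilde\bSigma - \Pb_{\Vb}\approx$ something whose top-$K$ eigenspace is within the claimed remainder of $\Vb$; then I would pass from $\widetilde\Vb$ to $\widetilde\Vb^{\rm F}$ using the power-iteration error bound from Theorem~\ref{thm: error bound} (equation \eqref{eq: error VF Vk}), which is negligible for $q\ge\log d$.

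The bulk of the work is controlling the remainder terms row-wise, i.e. bounding $\|(\widetilde\Vb^{\rm F}\Hb - \Vb - \cV(\Eb_0))^{\top}\eb_j\|_2$ uniformly in $j$. Here I would split the remainder into: (i) the second-order eigenspace-perturbation term $\Pb_{\Vb}(\widehat\Vb^{(\ell)}-\Vb)$-type contributions, handled via Assumption~\ref{asp: eigenspace} ($r_3(d)$) and the incoherence Assumption~\ref{asp: incoh}; (ii) the error in approximating $(\bLambda\widetilde\bOmega^{(\ell)}/\sqrt p)^{\dagger}$ by $(\widetilde\bOmega^{(\ell)}/\sqrt p)^{\top}\bLambda^{-1}$, controlled by the deviation of $p^{-1}\widetilde\bOmega^{(\ell)}\widetilde\bOmega^{(\ell)\top}$ from $\Ib_K$, which contributes a factor $\sqrt{K/p}$ and enters the condition $d^2 r_1(d)^4(\log d)^4/(p^2\Delta^4(\eta_2(d)\wedge(\log d)^{-1}))=o(1)$; (iii) the biased part $\Pb_{\perp}\Eb_b\bOmega\Bb_{\bOmega}L^{-1}$, bounded using Assumption~\ref{asp: stat rate biased error} and yielding the condition $d r_2(d)^2/(Lp\Delta^2\eta_2(d))=o(1)$; and (iv) the term $\Eb_0\Pb_{\widehat\Vb^{(\ell)}_\perp}\Pb_{\Vb}$ whose $2$-to-$\infty$ norm is $r_4(d)$. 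Each piece must be shown to be $o_{\PP}(\eta_2(d)^{1/2})$ after normalization by $\bSigma_j^{-1/2}$, which is exactly what the three displayed smallness conditions encode; I would track constants carefully through a union bound over $j\in[d]$, absorbing the $(\log d)$ powers coming from high-probability tail bounds on $\|\Eb_0\|_2$ and on Gaussian suprema.

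The final step is an application of Slutsky: writing $\bSigma_j^{-1/2}(\widetilde\Vb^{\rm F}\Hb-\Vb)^{\top}\eb_j = \bSigma_j^{-1/2}\cV(\Eb_0)^{\top}\eb_j + \bSigma_j^{-1/2}(\text{remainder})^{\top}\eb_j$, the first term converges to $\cN(\zero,\Ib_K)$ by Assumption~\ref{asp: clt} (conditional on $\bOmega$, with the conditional covariance being exactly $\bSigma_j$), and the second term is $o_{\PP}(1)$ since its $\ell_2$-norm is bounded by $\lambda_K(\bSigma_j)^{-1/2}$ times the row-wise remainder bound, which is $o(1)$ on the event $\{\min_j\lambda_K(\bSigma_j)\ge\eta_2(d)\}$ of probability tending to $1$. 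Since the conditional CLT holds for $\PP_{\bOmega}$-almost every realization and the remainder is uniformly negligible on a high-probability $\bOmega$-event, the unconditional convergence follows.

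I expect the main obstacle to be step (ii)/(iii) above — specifically, showing that the deviation of the rescaled Gaussian sketch from orthonormality, compounded across the $L$ aggregated copies and interacting with the eigengap $\Delta$ and the (possibly small) covariance floor $\eta_2(d)$, is genuinely controlled by the stated conditions. The delicate point is that when $Lp\ll d$ the leading term $\cV(\Eb_0)$ itself has variance of order $d/(Lp\Delta^2)$ times a factor depending on $\Eb_0$, so $\eta_2(d)$ can be small, and one must be careful that the remainder is smaller still; getting the right power of $\log d$ and the right dependence on $p/K$ in the approximation $(\bLambda\widetilde\bOmega^{(\ell)}/\sqrt p)^{\dagger}\approx(\widetilde\bOmega^{(\ell)}/\sqrt p)^{\top}\bLambda^{-1}$ is where the accounting is most error-prone. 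A secondary subtlety is verifying that the alignment matrix $\Hb=\Hb_1\Hb_0$ is close enough to the ``true'' rotation that it does not contaminate the row-wise expansion — this requires the eigenvalue separation within $\bLambda$ to be handled via the $\sgn$-matrix machinery rather than assuming distinct $|\lambda_k|$.
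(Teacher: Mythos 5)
Your overall scaffold matches the paper's: decompose $\widetilde{\Vb}^{\rm F}\Hb - \Vb$ into the leading term $\cV(\Eb_0) = \Pb_{\perp}\Eb_0\bOmega\Bb_{\bOmega}L^{-1}$ plus a remainder, bound the remainder after normalization by $\bSigma_j^{-1/2}$, and finish with Assumption~\ref{asp: clt} and Slutsky. The linearization of each $\widehat\Vb^{(\ell)}\widehat\Vb^{(\ell)\top}$ via a Wedin-type first-order expansion (the paper uses the symmetric-dilation trick on $\widehat\Yb^{(\ell)}$ together with Lemma 8 of Fan et al.\ 2019) and the passage from $\widetilde\Vb$ to $\widetilde\Vb^{\rm F}$ by the power-iteration error bound with $q\ge\log d$ are also correctly identified.

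However, there is a genuine gap in your remainder control. Your items (i) and (iv) invoke Assumption~\ref{asp: eigenspace} — the row-wise rates $r_3(d)$ and $r_4(d)$ — but Theorem~\ref{thm: leading term} is stated \emph{only} under Assumptions~\ref{asp: tail prob bound}, \ref{asp: stat rate biased error}, \ref{asp: incoh}, and \ref{asp: clt}; Assumption~\ref{asp: eigenspace} is deliberately absent. The row-wise $2$-to-$\infty$ machinery you describe is what the paper uses for Theorem~\ref{thm: leading term L big} (the $Lp\gg d$ regime), where the covariance floor $\eta_1(d)\asymp\lambda_1^{-2}\lambda_{\min}(\Cov(\Eb_0\eb_j))$ is small and only entrywise-sharp remainder bounds can survive the normalization. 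In the $Lp\ll d$ regime the variance floor is of larger order, $\eta_2(d)\asymp(d/(\lambda_1^2Lp))\cdot\lambda_{\min}(\Cov(\Eb_0\eb_j))$, so \emph{operator-norm} control of the residuals already suffices: since $\|\Rb^{\top}\eb_j\|_2\le\|\Rb\|_2$, it is enough that $\|\Rb_0(\widetilde\bSigma)+\Rb_1(\widetilde\bSigma)\|_2 = O_P\bigl(r_1(d)^2(\log d)^2 d/(p\Delta^2)\bigr)$ and $\|L^{-1}\Pb_{\perp}\Eb_b\bOmega\Bb_{\bOmega}\|_2 = O_P\bigl(\sqrt{d/(Lp\Delta^2)}\,r_2(d)\bigr)$ are $o_P\bigl(\eta_2(d)^{1/2}\bigr)$, and these are exactly what the two displayed smallness conditions guarantee. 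By routing the argument through $r_3, r_4$ you would be imposing hypotheses the theorem does not grant, and a union bound over $j$ is likewise unnecessary because operator-norm bounds are already uniform in the row index. The fix is to replace your steps (i) and (iv) with the operator-norm second-order bound from Lemma 8 of Fan et al.\ (2019) applied to $\widetilde\bSigma - \Vb\Vb^{\top}$, and to retain only the Gaussian-sketch approximation (your step (ii), which contributes $\Rb_1$) and the bias term (your step (iii)) as the substantive residuals.
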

\begin{remark}
Theorem~\ref{thm: leading term} states that under 
proper scaling conditions,
the FADI estimator still enjoys asymptotic normality even when $Lp \ll d$.
The rate $\eta_2(d)$ is usually at least of order $(d/\lambda_1^2Lp)\lambda_{\min}(\Cov(\Eb_0\eb_j))$. In comparison, the rate $\eta_1(d)$ in Theorem~\ref{thm: leading term L big} is usually of order $\lambda_1^{-2}\lambda_{\min}(\Cov(\Eb_0\eb_j))$, suggesting a larger variance and lower testing efficiency of FADI at $Lp \ll d$ than at $Lp \gg d$.
The proof is deferred to Supplementary Materials~E.7.  
\end{remark}


The following corollaries of Theorem~\ref{thm: leading term} provide case-specific distributional guarantee for Examples~\ref{ex: spiked gaussian} and \ref{ex: GMM} under the regime $Lp \ll d$.


\begin{corollary}[Spiked Covariance Model]\label{col: guassian case}{\it
Assume  $\{\bX_i\}_{i=1}^n$ are i.i.d. multivariate Gaussian.  When $Lp \ll {\lambda_1^{-2} \Delta^2 d} $, if we take $K' = K+1$, $K(\log d)^2 \ll p \asymp p' \lesssim d/(\log d)^2$ and $q \ge \log d$,   under Assumption \ref{asp: incoh} and the conditions that 
\vspace{-15pt}
{\singlespace
$$\quad {n} \gg \max\Big( \frac{\kappa_1^4\lambda_1^2 d r^2 L}{p\sigma^4}, \frac{\lambda_1^2 \tilde{\sigma}_1^6K^2}{\Delta^2 \delta^4 \sigma^4} \Big)(\log d)^4 \quad \text{and} \quad  { \frac{K\lambda_1^2}{\Delta^2}\sqrt{\frac{\mu}{d}} = o(1) },$$ }

\noindent we have that \eqref{eq: general clt for vk small L} holds. Furthermore, if we define $\widetilde\bSigma_j = \frac{\sigma^2}{nL^2}\Bb_{\mathbf{\Omega}}^{\top}\mathbf{\Omega}^{\top}\bSigma\mathbf{\Omega}\Bb_{\mathbf{\Omega}}$, we have 
\vspace{-15pt}
{\singlespace
 \begin{equation}\label{eq: col gaussian L small 2}
    \widetilde\bSigma_j^{-1/2}(\widetilde{\Vb}^{\rm F} \Hb - \Vb)^{\top} \eb_j \overset{d}{\rightarrow} {\cN}(\mathbf{0}, \Ib_K),\quad \forall j \in [d].
\end{equation}}

\noindent If we further assume { ${\sigma^{-2}}{\lambda_1\kappa_1^4}\sqrt{{d^2 r}/{(np^2L)}} = o(1)$} and estimate $\widetilde\bSigma_j$ by $\widehat\bSigma_j = \frac{\hat\sigma^2}{nL^2}\widehat\Bb_{\mathbf{\Omega}}^{\top}\mathbf{\Omega}^{\top}\hbSigma\mathbf{\Omega}\widehat\Bb_{\mathbf{\Omega}}$, where  $\widehat\Bb_{\mathbf{\Omega}} = (\widehat\Bb^{(1)\top},\ldots, \widehat\Bb^{(L)\top})^{\top}$ with $\widehat\Bb^{(\ell)} = (\widetilde\Vb^{{\rm F}\top}\widehat\Yb^{(\ell)}/\sqrt{p})^{\dagger}$ for $\ell \in [L]$,  we have
\vspace{-15pt}
{\singlespace
\begin{equation}\label{eq: est cov spiked gaussian small L}
{\hbSigma}_j^{-1/2}(\widetilde{\Vb}^{\rm F}  - \Vb\Hb^{\top})^{\top}  \eb_j \overset{d}{\rightarrow} {\cN}(\mathbf{0}, \Ib_K), \quad \forall j \in [d].
\end{equation}}

}
\end{corollary}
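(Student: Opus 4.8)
The statement is the instantiation of Theorem~\ref{thm: leading term} in the Gaussian spiked covariance setting of Example~\ref{ex: spiked gaussian}, so the plan is to verify the hypotheses of that theorem with the abstract rates $r_1(d),r_2(d),\eta_2(d)$ made explicit, to identify the conditional covariance $\bSigma_j$ with $\widetilde\bSigma_j$, and then to upgrade to the plug-in covariance $\widehat\bSigma_j$ by a separate consistency argument. First I would fix a mean-zero decomposition $\Eb = \widehat\Mb - \Mb = \Eb_0 + \Eb_b$: writing $\hat\sigma^2 - \sigma^2 = \langle \bu\bu^\top,(\hbSigma-\bSigma)_S\rangle + R$, where $\bu$ is the bottom eigenvector of $\bSigma_S$ and $R$ is the second-order eigenvalue-perturbation remainder, I set $\Eb_0 = (\hbSigma-\bSigma) - \langle\bu\bu^\top,(\hbSigma-\bSigma)_S\rangle\Ib_d$ (so $\EE\Eb_0 = \zero$) and $\Eb_b = -R\,\Ib_d$. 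Assumption~\ref{asp: tail prob bound} holds with $r_1(d)\asymp(\lambda_1+\sigma^2)\sqrt{r/n}$ by the standard sub-exponential operator-norm deviation bound for a Gaussian sample covariance in terms of the effective rank $r$ (already used in Corollary~\ref{prop: err rate terms}); Assumption~\ref{asp: stat rate biased error} holds with $r_2(d)\asymp \|\hbSigma_S-\bSigma_S\|_2^2/\delta$, a second-order perturbation bound for $\sigma_{\min}(\hbSigma_S)$ through the submatrix eigengap $\delta=\lambda_K(\bSigma_S)-\sigma^2$, which is $o(r_1(d))$ under the stated scaling; Assumption~\ref{asp: incoh} is assumed.

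I would then compute $\bSigma_j = \Cov\big(\cV(\Eb_0)^\top\eb_j\mid\bOmega\big)$. Conditionally on $\bOmega$, $\cV(\Eb_0)^\top\eb_j = (nL)^{-1}\sum_{i=1}^n \Bb_{\bOmega}^\top\bOmega^\top\{(\bX_i^\top\Pb_\perp\eb_j)\bX_i - \bSigma\Pb_\perp\eb_j - \langle\bu\bu^\top,(\bX_i\bX_i^\top-\bSigma)_S\rangle\Pb_\perp\eb_j\}$ is a sum of $n$ i.i.d.\ mean-zero $K$-vectors, so the Gaussian moment identity $\Cov(\bX\bX^\top\vb) = (\vb^\top\bSigma\vb)\bSigma + \bSigma\vb\vb^\top\bSigma$ together with $\bSigma\Pb_\perp\eb_j = \sigma^2\Pb_\perp\eb_j$ (since $\Vb^\top\Pb_\perp = \zero$) gives, for the $(\hbSigma-\bSigma)$-part, $(nL^2)^{-1}\Bb_{\bOmega}^\top\bOmega^\top[\sigma^2(1-\|\Vb^\top\eb_j\|_2^2)\bSigma + \sigma^4\Pb_\perp\eb_j\eb_j^\top\Pb_\perp]\bOmega\Bb_{\bOmega}$. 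Under Assumption~\ref{asp: incoh} the factor $1-\|\Vb^\top\eb_j\|_2^2 = 1+O(\mu K/d) = 1+o(1)$, so the leading piece is exactly $\widetilde\bSigma_j$, and I would show that the rank-one remainder, together with the cross-terms from the $\hat\sigma^2$-correction and the scalar plug-in fluctuation, are of smaller order, using $\|\Bb_{\bOmega}\|_2\asymp\sqrt L/\Delta$, $L^{-1}\bOmega\Bb_{\bOmega}\approx(\tfrac{1}{Lp}\sum_\ell\bOmega^{(\ell)}\bOmega^{(\ell)\top})\Vb\bLambda^{-1}$ from \eqref{eq: converge of cov two scenarios}, and $\Pb_\perp\Vb = \zero$; this is precisely where the regime bound $Lp \ll \lambda_1^{-2}\Delta^2 d$ and the incoherence-type condition $K\lambda_1^2\Delta^{-2}\sqrt{\mu/d} = o(1)$ enter (the latter because the scalar plug-in error $\hat\sigma^2-\sigma^2$ is common to all $L$ sketches and does not self-average). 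For non-degeneracy I would take $\eta_2(d)\asymp\sigma^4 d/(nLp\lambda_1^2)$: from $\bSigma\succeq\sigma^2\Ib_d$, $\lambda_{\min}(\bOmega^\top\bOmega)\gtrsim d/p$ (valid with probability $\to1$ since $\bOmega$ is a tall Gaussian matrix when $Lp\lesssim d/(\log d)^2$), and $\lambda_{\min}(\Bb_{\bOmega}^\top\Bb_{\bOmega})\gtrsim L/\lambda_1^2$ (each block $\Bb^{(\ell)\top}\Bb^{(\ell)}$ concentrates around $\bLambda^{-2}$), one obtains $\min_{j\in[d]}\lambda_K(\widetilde\bSigma_j)\gtrsim\eta_2(d)$ uniformly over $j$ with high probability over $\bOmega$.

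With these in hand I would invoke Theorem~\ref{thm: leading term}. Assumption~\ref{asp: clt} is the conditional multivariate CLT for the normalized sum of $n$ i.i.d.\ degree-two polynomials of Gaussians, which follows from the Lyapunov CLT (all moments finite; the third-absolute-moment to variance$^{3/2}$ ratio is $O(n^{-1/2})$ uniformly over $\bOmega$ on the good event), and the passage from conditional to unconditional convergence is legitimate because the $\bOmega$-events used above have probability $\to1$. It then remains to check the two displayed rate conditions of Theorem~\ref{thm: leading term}: substituting $r_1(d)\asymp(\lambda_1+\sigma^2)\sqrt{r/n}$, $r_2(d)\asymp\|\hbSigma_S-\bSigma_S\|_2^2/\delta$, $\eta_2(d)\asymp\sigma^4 d/(nLp\lambda_1^2)$ and $Lp\ll\lambda_1^{-2}\Delta^2 d$, the condition $d^2 r_1(d)^4(\log d)^4 p^{-2}\Delta^{-4}(\eta_2(d)\wedge(\log d)^{-1})^{-1}=o(1)$ becomes $n\gg\kappa_1^4\lambda_1^2 d r^2 L p^{-1}\sigma^{-4}(\log d)^4$, and $d\,r_2(d)^2(Lp\Delta^2\eta_2(d))^{-1}=o(1)$ becomes $n\gg\lambda_1^2\tilde\sigma_1^6 K^2(\Delta^2\delta^4\sigma^4)^{-1}(\log d)^4$ once the concentration bound for $\|\hbSigma_S-\bSigma_S\|_2$ is inserted; the hypotheses $K(\log d)^2\ll p\asymp p'\lesssim d/(\log d)^2$ and $q\ge\log d$ are assumed. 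This delivers \eqref{eq: general clt for vk small L}, and the covariance identification of the previous paragraph upgrades it to \eqref{eq: col gaussian L small 2}. For the feasible version \eqref{eq: est cov spiked gaussian small L} I would show $\widehat\bSigma_j^{-1/2}\widetilde\bSigma_j^{1/2}\overset{p}{\rightarrow}\Ib_K$ up to the orthogonal alignment $\Hb$ (hence the centering $\Vb\Hb^\top$) and conclude by Slutsky: this uses $\hat\sigma^2\overset{p}{\rightarrow}\sigma^2$, the replacement $\bOmega^\top\hbSigma\bOmega = \bOmega^\top\bSigma\bOmega + O(\|\bOmega\|_2^2\|\Eb_0\|_2) + O(|\hat\sigma^2-\sigma^2|)\bOmega^\top\bOmega$ with the perturbation negligible against $\bOmega^\top\bSigma\bOmega\succeq(\sigma^2 d/p)\Ib_{Lp}$, and the replacement of $\Bb^{(\ell)}=(\bLambda\Vb^\top\bOmega^{(\ell)}/\sqrt p)^\dagger$ by $\widehat\Bb^{(\ell)}=(\widetilde\Vb^{\rm F\top}\widehat\Yb^{(\ell)}/\sqrt p)^\dagger$, which are pseudo-inverses of $K\times p$ matrices differing by lower-order terms since $\widetilde\Vb^{\rm F\top}\widehat\Yb^{(\ell)}/\sqrt p = \Hb^\top\bLambda\Vb^\top\bOmega^{(\ell)}/\sqrt p + (\text{lower order})$; tracking these errors through the quadratic form produces exactly the extra hypothesis $\sigma^{-2}\lambda_1\kappa_1^4\sqrt{d^2 r/(np^2 L)}=o(1)$.

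The hardest part will be the analysis of the random normalization $\widetilde\bSigma_j = \tfrac{\sigma^2}{nL^2}\Bb_{\bOmega}^\top\bOmega^\top\bSigma\bOmega\Bb_{\bOmega}$ in the regime $Lp\ll d$, where $\bOmega^\top\bOmega$ is far from a multiple of the identity and $\bOmega\bOmega^\top$ is rank-deficient: establishing the uniform-in-$j$ lower bound $\lambda_K(\widetilde\bSigma_j)\gtrsim\eta_2(d)$ with high probability over $\bOmega$, and reconciling this $\bOmega$-dependent normalization with the conditional CLT, requires care. The closely related delicate point is the handling of the plug-in error $\hat\sigma^2-\sigma^2$: its mean-zero first-order part is $\Theta(\sigma^2/\sqrt n)$, a priori the same order as the signal in the leading term, and because it is common to all $L$ sketches it does not average out in $\ell$; one must show that after the projection $\Pb_\perp$ (which annihilates the dominant, identity-like part of $\tfrac1{Lp}\sum_\ell\bOmega^{(\ell)}\bOmega^{(\ell)\top}$) and passing to a fixed row $\eb_j$ its contribution is $o(\eta_2(d)^{1/2})$, which is exactly what the condition $K\lambda_1^2\Delta^{-2}\sqrt{\mu/d}=o(1)$ buys.
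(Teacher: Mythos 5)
Your proposal is correct and follows essentially the same route as the paper's proof of Corollary~\ref{col: guassian case}: decompose $\Eb$ into a mean-zero part $\Eb_0$ plus a second-order eigenvalue-perturbation remainder $\Eb_b$ from the estimation of $\sigma^2$ (and note that your $(\Ib_d)_{[:,S]}\bu$ equals the paper's $\Vb_d\tilde\cb_0$ exactly, since $\Vb_{[S,:]}^\top\ub_{K+1}=\zero$, so your $\Eb_0$ is identical to theirs), compute the conditional covariance $\bSigma_j$ via the Gaussian fourth-moment identity, identify $\widetilde\bSigma_j$ as its leading piece with remainder $o_P(\eta_2(d))$ under the stated incoherence condition, verify the Lyapunov CLT conditionally on the good $\bOmega$-event (where the ratio is $\kappa_1^3(\lambda_1/\sigma^2)^3/\sqrt n$, not bare $n^{-1/2}$ — the $\bOmega$-dependent conditioning number enters), invoke Theorem~\ref{thm: leading term}, and pass to $\hbSigma_j$ by pseudo-inverse perturbation plus Slutsky. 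Your route to $r_2(d)$ via a generic second-order bound $\|\hbSigma_S-\bSigma_S\|_2^2/\delta$ is slightly more elementary and tighter (by a factor $\tilde\sigma_1/\delta$) than the paper's pass through the $M_{K+1},W_{K+1}$ expansion of \citet{fan2017eigenasymp}, and still sits comfortably inside the corollary's stated sample-size condition.
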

\begin{remark}\label{rmk: col spiked gaussian small L}
The proof is in Supplementary Materials~E.8. 
For the computation of $\hbSigma_j$, apart from $\widehat\Vb^{(\ell)}$, the $\ell$-th machine on layer 2 (see Figure~\ref{fig: FADI illustrate}) will send $\mathbf{\Omega}^{(\ell)}$ and $\widehat\Yb^{(\ell)}$ to the central processor, and the total communication cost for each server is $O(dp)$. On the central processor,
the total computational cost of $\Bb_{\mathbf{\Omega}}$ will be $O(dpKL)$. Then we will compute 
$\mathbf{\Omega}^{\top} \hbSigma \mathbf{\Omega} = \frac{1}{\sqrt{p}}\mathbf{\Omega}^{\top}(\widehat\Yb^{(1)},\ldots, \widehat\Yb^{(L)}) + \hat\sigma^2\bOmega^{\top} \bOmega$ 
with total cost $O\big(d(Lp)^2\big) = o(d^3)$. Compared to Corollary~\ref{col: inf gaussian case L big},
Corollary~\ref{col: guassian case} has stronger scaling conditions on the sample size $n$ to compensate for the extra variability due to less fast sketches. As indicated by \eqref{eq: converge of cov two scenarios}, the asymptotic covariance matrix of Corollary~\ref{col: GMM L small} is consistent with Corollary~\ref{col: GMM L big}. 
\end{remark}

\begin{corollary}[Gaussian Mixture Models]\label{col: GMM L small}{\it 
When $Lp \ll d$, if we take $K(\log d)^2 \ll p \asymp p' \lesssim d/(\log d)^2$ and $q \ge \log d$, we have that \eqref{eq: general clt for vk small L} holds under the conditions that 
\vspace{-20pt}
{\singlespace
$${ \sqrt{\frac{K}{d}}\log d = O(1)}, \quad n \gg \frac{d^3L}{p}, \quad \text{and} \quad K (\log d)^2 \sqrt{\frac{dnL}{p}} \ll \Delta_0^2 \ll \min\left(n,\frac{n^{4/3}}{\mu_{\theta}^2d}\right).$$}

\noindent Furthermore, if we define 
$
    \widetilde\bSigma_j = L^{-2}\Bb_{\mathbf{\Omega}}^{\top}\mathbf{\Omega}^{\top}\Big(\Fb \bTheta^{\top} \bTheta\Fb^{\top} + n\Ib_d\Big)\mathbf{\Omega}\Bb_{\mathbf{\Omega}},
$
then we have 
\vspace{-15pt}
{\singlespace
 \begin{equation}\label{eq: col GMM L small 2}
    \widetilde\bSigma_j^{-1/2}(\widetilde{\Vb}^{\rm F} \Hb - \Vb)^{\top} \eb_j \overset{d}{\rightarrow} {\cN}(\mathbf{0}, \Ib_K), \quad \forall j \in [d].
\end{equation}}

\noindent If we further assume $d^4\Delta_0^2 \ll {KLp^2n^2}$ and estimate $\widetilde\bSigma_j$ by $\hbSigma_j = \frac{1}{L^2}\widehat\Bb_{\mathbf{\Omega}}^{\top}\mathbf{\Omega}^{\top}\Big(\widehat\Mb + n\Ib_d\Big)\mathbf{\Omega}\widehat\Bb_{\mathbf{\Omega}}$, where  $\widehat\Bb_{\mathbf{\Omega}} = (\widehat\Bb^{(1)\top},\ldots, \widehat\Bb^{(L)\top})^{\top}$ with $\widehat\Bb^{(\ell)} = (\widetilde\Vb^{{\rm F}\top}\widehat\Yb^{(\ell)}/\sqrt{p})^{\dagger}$ for $\ell \in [L]$, we have 
\vspace{-15pt}
{\singlespace
\begin{equation}\label{eq: est cov gmm L small}
{\hbSigma}_j^{-1/2}(\widetilde{\Vb}^{\rm F}  - \Vb\Hb^{\top})^{\top}  \eb_j \overset{d}{\rightarrow} {\cN}(\mathbf{0}, \Ib_K), \quad \forall j \in [d].
\end{equation}}

}
\end{corollary}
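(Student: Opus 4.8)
The plan is to specialize the general result of Theorem~\ref{thm: leading term} to the GMM setting of Example~\ref{ex: GMM}, and then to verify that the empirical covariance $\widehat\bSigma_j$ consistently estimates $\widetilde\bSigma_j$. First I would identify all the case-specific rates entering Theorem~\ref{thm: leading term}. Here $\widehat\Mb - \Mb = \Eb = \Xb^\top\Xb - n\Ib_d - \Fb\bTheta^\top\bTheta\Fb^\top$; writing $\Xb = \EE\Xb + \Zb$ with $\Zb$ having i.i.d.\ $\cN(0,1)$ entries, one gets $\Eb = \Fb\bTheta^\top\Zb + \Zb^\top\bTheta\Fb^\top + (\Zb^\top\Zb - n\Ib_d)$. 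Standard random-matrix bounds (as in \citet{chen2020spectral}) give $r_1(d) \asymp \Delta_0\sqrt{d} + \sqrt{nd}$ (so $r_1(d)\asymp\sqrt{nd}$ under $n \gg d^2 \gg \Delta_0^2$, using $\Delta_0^2 \ll n$), $\Delta = \Delta_0^2$, and the unbiased part $\Eb_0 = \Eb$ itself (with $\Eb_b = \mathbf 0$, so $r_2(d)$ can be taken arbitrarily small and the condition $d r_2(d)^2/(Lp\Delta^2\eta_2(d)) = o(1)$ is vacuous). I would then check the key scaling requirement $d^2 r_1(d)^4(\log d)^4 /\big(p^2\Delta^4(\eta_2(d)\wedge(\log d)^{-1})\big) = o(1)$: with $\eta_2(d)\asymp (d/(\Delta_0^2 Lp))\cdot\lambda_{\min}(\Cov(\Eb_0\eb_j))$ and $\lambda_{\min}(\Cov(\Eb_0\eb_j)) \asymp n$ (from the $\Zb^\top\Zb - n\Ib_d$ block contributing variance $\asymp n$ per coordinate), this reduces, after substituting $r_1 \asymp \sqrt{nd}$ and $\Delta \asymp \Delta_0^2$, to a condition of the form $d^3 n^2 (\log d)^5 L/(p^3 \Delta_0^2) = o(1)$ type — which is exactly why the hypotheses demand $n \gg d^3 L/p$ together with the lower bound $\Delta_0^2 \gg K(\log d)^2\sqrt{dnL/p}$. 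The incoherence Assumption~\ref{asp: incoh} holds with $\mu \asymp \mu_\theta^2$ since $\Vb$ and $\Fb$ share a column space and $\Fb$ has balanced block sizes ($\max_k d_k \lesssim \min_k d_k$), giving $\|\Vb\|_{2,\infty}\asymp 1/\sqrt{d}$, hence $\sqrt{K/d}\log d = O(1)$ suffices; and $\Fb$-incoherence together with $n\gg d^2$ controls the residual row-wise rates $r_3(d), r_4(d)$.

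Second, I would verify Assumption~\ref{asp: clt}, the CLT for $\cV(\Eb_0)^\top\eb_j = \Bb_{\bOmega}^\top\bOmega^\top \Eb_0 \Pb_\perp \eb_j / L$ conditionally on $\bOmega$. Since $\Pb_\perp\eb_j$ is a fixed (given the data, but here deterministic) vector and $\Eb_0$ is a sum of a Gaussian bilinear form and a centered Wishart-type term, $\Eb_0 \Pb_\perp\eb_j$ is a sum of independent (across coordinates) mean-zero random variables with bounded fourth moments relative to their variances; conditionally on $\bOmega$, contracting with the fixed matrix $\Bb_\bOmega^\top\bOmega^\top/L$ yields a vector to which the Lindeberg CLT applies once $\min_j\lambda_K(\bSigma_j) \gtrsim \eta_2(d)$ is in force, which is the role of the lower bound on $\Delta_0^2$ and the upper bound $\Delta_0^2 \ll \min(n, n^{4/3}/(\mu_\theta^2 d))$ (the upper bound keeps the $\Fb\bTheta^\top\bTheta\Fb^\top$ contribution to $\bSigma_j$ from overwhelming the $n\Ib_d$ contribution in a way that would degenerate the limit, and also bounds $\mu_\theta$-dependent remainder terms). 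Then \eqref{eq: general clt for vk small L} follows directly from Theorem~\ref{thm: leading term}, and \eqref{eq: col GMM L small 2} is simply the statement that under Example~\ref{ex: GMM} the general covariance $\bSigma_j = \Cov(\cV(\Eb_0)^\top\eb_j\mid\bOmega)$ equals $L^{-2}\Bb_\bOmega^\top\bOmega^\top(\Fb\bTheta^\top\bTheta\Fb^\top + n\Ib_d)\bOmega\Bb_\bOmega$, which follows because $\Cov(\Eb_0\eb_j)$ is $\diag$-type with $j'$-th diagonal entry $\asymp \Mb_{jj'}\cdot\text{(stuff)} + n$ — more precisely, a short computation with the explicit form of $\Eb_0$ shows $\Cov((\Pb_\perp\Eb_0^\top\eb_j)) = \Pb_\perp\big(\Fb\bTheta^\top\bTheta\Fb^\top + n\Ib_d\big)\Pb_\perp$ up to lower-order terms absorbed into the remainder, and $\Pb_\perp$ collapses against $\bOmega\Bb_\bOmega$ since $\Bb_\bOmega^\top\bOmega^\top\Vb\approx\bLambda^{-1}\cdot\mathbf 0$ is handled by the $\Pb_\perp$ in $\cV$.

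Third, for the estimated-covariance statements \eqref{eq: est cov gmm L small}, I would show $\widehat\bSigma_j^{1/2}\bSigma_j^{-1/2} \to \Ib_K$ in probability, so that \eqref{eq: est cov gmm L small} follows from \eqref{eq: col GMM L small 2} by Slutsky together with the fact that $\Hb$ can be replaced by $\widehat\Hb^\top$-alignment without changing the limit (as done in the earlier corollaries). This requires (i) $\widetilde\bLambda = \widetilde\Vb^{\F\top}\widehat\Mb\widetilde\Vb^{\F} \to \bLambda$ and $\widetilde\Vb^\F$ close to $\Vb$ up to rotation, both from Corollary~\ref{prop: err rate terms} and Theorem~\ref{thm: leading term L big}'s error control; (ii) $\widehat\Bb^{(\ell)} = (\widetilde\Vb^{\F\top}\widehat\Yb^{(\ell)}/\sqrt p)^\dagger$ approximating $\Bb^{(\ell)} = (\bLambda\Vb^\top\bOmega^{(\ell)}/\sqrt p)^\dagger$ uniformly over $\ell$, which follows from $\widehat\Yb^{(\ell)} = \widehat\Mb\bOmega^{(\ell)} \approx \Vb\bLambda\Vb^\top\bOmega^{(\ell)} = \Mb\bOmega^{(\ell)}$ with error $\|\Eb\bOmega^{(\ell)}\|_2 \lesssim r_1(d)\sqrt p$ controlled by the pseudo-inverse perturbation bound (invertibility of the $K\times p$ object holds since $p \gg K(\log d)^2$); and (iii) $\bOmega^\top(\widehat\Mb + n\Ib_d)\bOmega \approx \bOmega^\top(\Fb\bTheta^\top\bTheta\Fb^\top + n\Ib_d)\bOmega$, where the error is $\bOmega^\top\Eb\bOmega$ of operator norm $\lesssim r_1(d)\cdot Lp$ relative to the target of order $n\cdot Lp$, which is where the extra condition $d^4\Delta_0^2 \ll KLp^2 n^2$ enters to make the relative error vanish after the $\Bb_\bOmega L^{-1}$ contraction. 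The main obstacle I anticipate is part (iii): tracking the propagation of the error $\bOmega^\top\Eb\bOmega$ through the sandwich $\widehat\Bb_\bOmega^\top(\cdot)\widehat\Bb_\bOmega/L^2$ and showing it is $o_\PP$ of $\eta_2(d)$ uniformly in $j$ — in particular because $\widehat\Bb_\bOmega$ itself depends on $\bOmega$, so one cannot simply condition, and the Wishart-type block $\Zb^\top\Zb - n\Ib_d$ in $\Eb$ interacts with $\bOmega$ through a quadratic form whose fluctuations must be bounded by Hanson–Wright, carefully against the small denominator $\eta_2(d)\asymp dn/(\Delta_0^2 Lp)$. Managing the sketching randomness and the data randomness simultaneously in this quadratic-form estimate is the crux of the argument; everything else is bookkeeping building on Theorem~\ref{thm: leading term} and the error bounds already established.
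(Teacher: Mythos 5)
Your high-level outline matches the paper's argument: specialize Theorem~\ref{thm: leading term} by identifying $r_1(d)$, $\Delta$, $\eta_2(d)$; verify the incoherence and CLT assumptions; then establish consistency of $\hbSigma_j$ via perturbation bounds and Slutsky. However there are several material errors in the quantitative bookkeeping that would prevent your sketch from closing as written.

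First, you repeatedly use $\Delta = \Delta_0^2$, but the eigengap of $\Mb = \Fb\bTheta^\top\bTheta\Fb^\top$ is $\Delta = \sigma_K(\Fb\bTheta^\top\bTheta\Fb^\top) \asymp d\Delta_0^2/K$, not $\Delta_0^2$ (the paper establishes this in the proof of Corollary~\ref{prop: err rate terms}; the factor $\sigma_K(\Fb)^2 \asymp d/K$ is essential). Since $\Delta^4$ appears in the denominator of the key rate condition $d^2 r_1(d)^4(\log d)^4/\big(p^2\Delta^4(\eta_2(d)\wedge(\log d)^{-1})\big) = o(1)$, this discrepancy of a factor $(d/K)^4$ is not cosmetic. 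Your intermediate condition ``$d^3n^2(\log d)^5L/(p^3\Delta_0^2) = o(1)$'' does not match what the correct substitution gives, and it is only by fiat, not derivation, that you land on the hypotheses of the corollary. Second, you claim the incoherence parameter is $\mu \asymp \mu_\theta^2$; in fact, because $\Vb$ shares the column space of $\Fb$ and the block sizes $d_k$ are balanced, $\|\Vb\|_{2,\infty} \lesssim \sqrt{K/d}$ directly and one can take $\mu = O(1)$ — $\mu_\theta$ (the incoherence of $\bTheta$) plays a different role, entering the Lyapunov bound for the CLT. Third, your covariance computation writes $\Cov(\Pb_\perp\Eb_0\eb_j) = \Pb_\perp(\cdot)\Pb_\perp$, but the object that actually appears in $\cV(\Eb_0)^\top\eb_j$ is $\Eb_0\Pb_\perp\eb_j$, and its exact covariance is $\|\Pb_\perp\eb_j\|_2^2(\Fb\bTheta^\top\bTheta\Fb^\top + n\Ib_d) + n\Pb_\perp\eb_j\eb_j^\top\Pb_\perp$; the projectors are not on both sides, and the rank-one correction $n\Pb_\perp\eb_j\eb_j^\top\Pb_\perp$ is explicitly absorbed into the remainder when comparing to $\widetilde\bSigma_j$.

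Finally, the ``crux'' you anticipate — a Hanson--Wright control of $\bOmega^\top\Eb\bOmega$ entangled with the $\bOmega$-dependent $\widehat\Bb_\bOmega$ — is a step the paper does not need. The paper works on the high-probability event where $\|\bOmega^{(\ell)}/\sqrt p\|_2 \lesssim \sqrt{d/p}$, $\|\Bb_\bOmega\|_2 \lesssim \sqrt{L}/\Delta$, and $\|\Eb\|_2 \lesssim r_1'(d)$, and then controls $\|\hbSigma_j - \Hb\widetilde\bSigma_j\Hb^\top\|_2$ by \emph{crude operator-norm submultiplicativity}: the term $\bOmega^\top\Eb\bOmega$ contributes $\|\Eb\|_2\|\bOmega\|_2^2$, and the perturbation $\|\widehat\Bb^{(\ell)}-\Bb^{(\ell)}\Hb^\top\|_2$ is handled via the pseudo-inverse perturbation bound (Theorem~3.3 of Stewart). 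The additional hypothesis $d^4\Delta_0^2 \ll KLp^2n^2$ is precisely what makes these crude bounds small relative to $\eta_2(d)$. No quadratic-form concentration beyond what is already in the operator-norm estimates is required, so your anticipated difficulty dissolves.
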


\begin{remark}\label{rmk: col GMM L small}
The proof of Corollary~\ref{col: GMM L small} is deferred to Supplementary Materials~E.9. Computation of $\hbSigma_j$ is very similar to Example~\ref{ex: spiked gaussian} as described in Remark~\ref{rmk: col spiked gaussian small L},  and the total computational cost is $O(d(Lp)^2) = o(d^3)$. The stronger scaling conditions are the trade-off for higher computational efficiency with less fast sketches.  
\end{remark}

\vspace{-20pt}

\section{Numerical Results}\label{sec: simulations}
In this section, we conduct extensive simulation studies to assess the performance of FADI
under each example given in Section~\ref{sec: problem setting} and compare it with several existing methods.

\vspace{-15pt}
\subsection{Example~\ref{ex: spiked gaussian}: Spiked Covariance Model}\label{sec: exm1 simu}
We generate $\{\bX_i\}_{i=1}^n$ i.i.d. from ${\cN}(\mathbf{0},\mathbf{\Sigma})$, 
 where $\bSigma = \Vb \bLambda \Vb^{\top} + \sigma^2 \Ib_d$.
We consider $K= 3$, 
$n=20000$ and set $d = 500, 1000, 2000$ respectively to study the asymptotic properties of the FADI estimator under different settings. To ensure the incoherence condition is satisfied, we set $\Vb$ to be the left singular vectors of a $d \times K$ i.i.d. Gaussian matrix. We take $\bLambda = \diag(6,4,2)$ and $\sigma^2 = 1$. 
We split the data into $m = 20$ subsamples, and set  $K' = 6$, $p = p' = 12$ and $q = 7$ 
to compute $\widetilde\Vb^{\F}$. We set $L$ at a range of values by taking the ratio $Lp/d \in \{0.2,0.6,0.9,1,1.2,2,5,10\}$ for each setting and compute the asymptotic covariance via Corollary~\ref{col: inf gaussian case L big} and Corollary~\ref{col: guassian case} correspondingly.
We define $\tilde\vb = \hbSigma_1^{-1/2}(\widetilde\Vb^{\F} - \Vb \Hb^{\top})^{\top}\eb_1$, where 
$\Hb = \sgn(\widetilde\Vb^{\F\top}\Vb)$, and calculate the coverage probability by empirically evaluating $\PP\big(\|\tilde\vb\|_2^2  \le \chi_3^2(0.95)\big)$ with $\chi_3^2(0.95)$ being the 0.95 quantile of the Chi-squared distribution with degrees of freedom equal to 3. Results
are shown in Figure~\ref{fig: exm1 simu}. Figure~\ref{fig: exm1 simu}(a) shows that as $Lp/d$ increases, the error rate of FADI converges to that of the traditional PCA. From Figure~\ref{fig: exm1 simu}(b) we can see that when $Lp/d$ is approaching 1 from the left, the computational efficiency drops due to the cost of computing $\hbSigma_1$. For Figure~\ref{fig: exm1 simu}(c), convergence towards the nominal 95\% level can be observed when $Lp/d$ is much smaller or much larger than 1, while the valley at $Lp/d$ around 1 is consistent with the theoretical conditions on $Lp/d$ in Section~\ref{sec: theory} and implies a possible phase-transition phenomenon on the distributional convergence of FADI. Note that the empirical coverage is closer to the nominal level 0.95 at $d = 2000$ than at $d \in \{500, 1000\} $,{  which might be caused by the vanishing of some error terms for approximation of the asymptotic covariance matrix as $d$ grows larger. }
The good Gaussian approximation of $\tilde\vb_1$  is further validated by Figure~\ref{fig: exm1 simu}(d), { where $\tilde\vb_1$ is the first entry of $\tilde\vb$}. { Based upon the low computational efficiency and poor empirical coverage at $Lp/d$ around 1, we recommend conducting inference based on FADI at regimes $Lp \gg d$ and $Lp \ll d$ only. In particular, we suggest the regime $Lp \gg d$ if priority is given to higher testing efficiency, and the regime $Lp \ll d$ if one needs valid inference with faster computation.}

We also compare FADI with traditional and distributed PCA \citep{fandistributed2019}. 
Results over 100 Monte Carlos are given in Table \ref{tab: dmn err rate n runtime}. We can see that FADI significantly outperforms both distributed PCA and traditional PCA in terms of computation time under the distributed setting. Specifically, FADI enjoys similar error rates to traditional PCA and distributed PCA while being computationally much faster, ranging from 65 to 717 times faster  than traditional PCA and 8.4 to 80.5 times faster than distributed PCA for a range of $d$ and $n$. Its computational advantage is more pronounced as $d$ and $n$ increase.  

\begin{figure}[ht]
		\centering
    \begin{tabular}{cc}
       \quad {\small (a) Error Rate} & \quad {\small (b) Running Time} \\
       \!\!\!\includegraphics[height=0.27\textwidth]{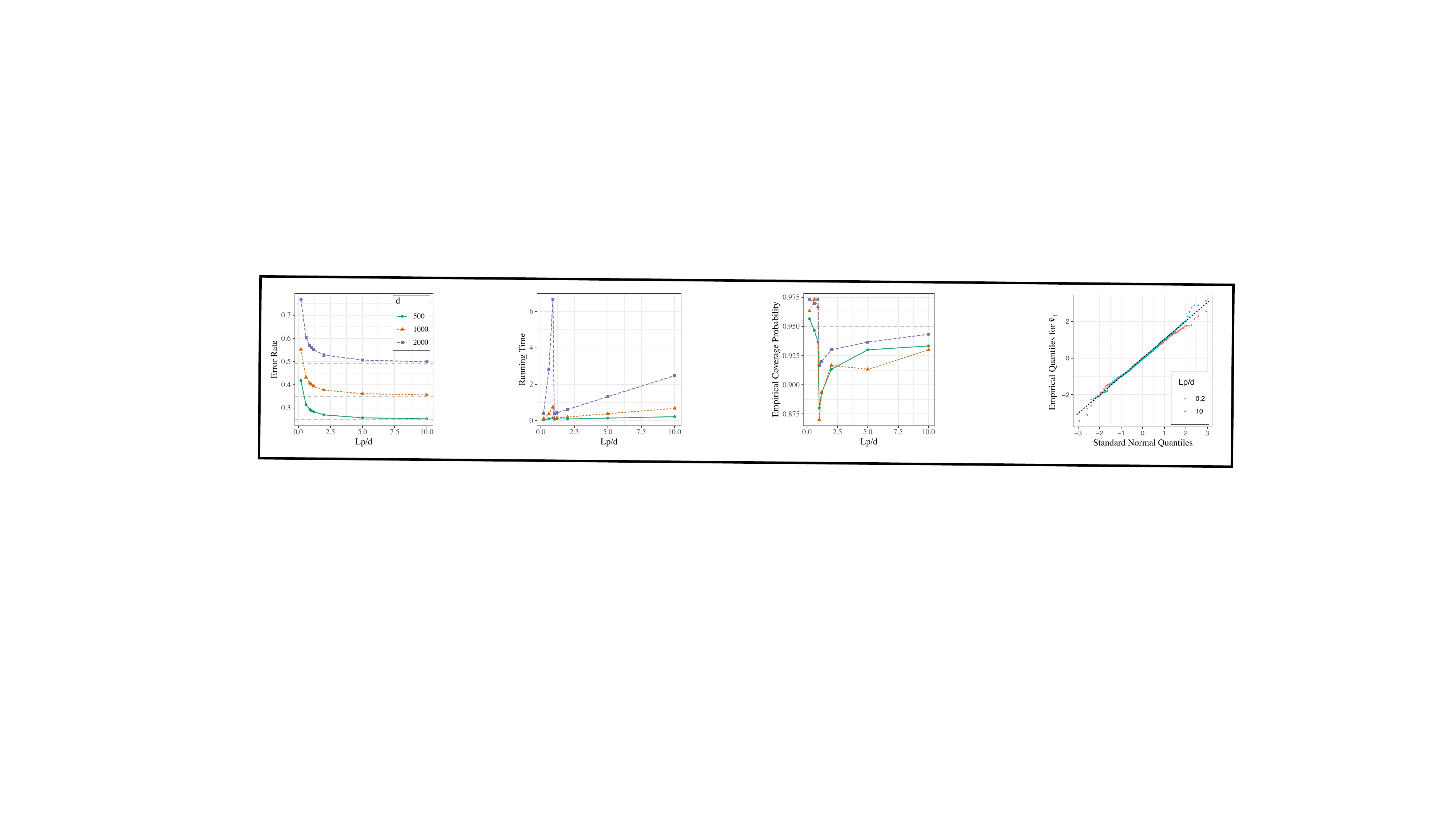} & \!\!\!\!\includegraphics[height=0.27\textwidth]{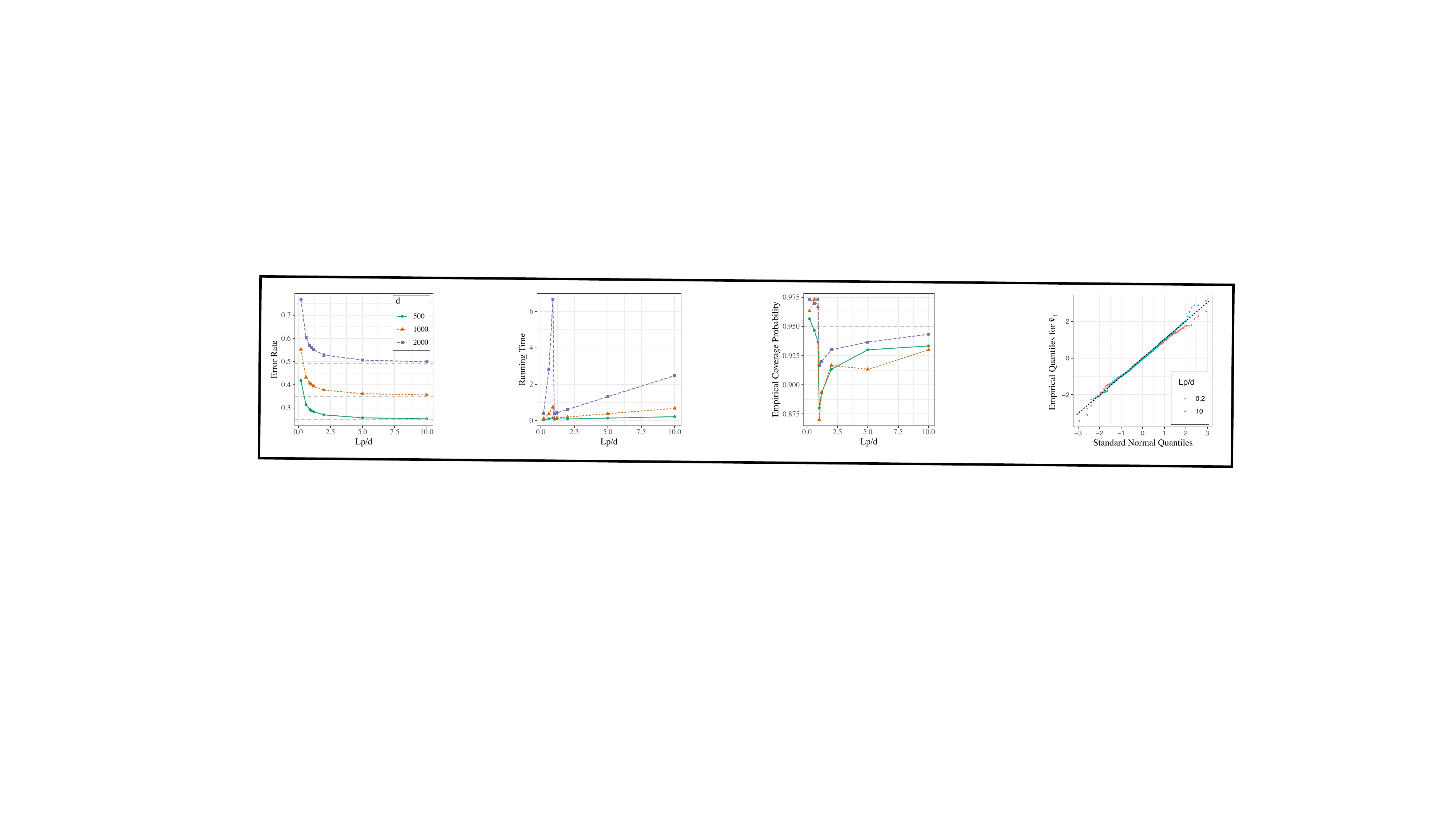} \\
        \quad \quad \!\!{\small (c) Coverage Probability} & \quad{\small (d) Q-Q Plot }\\
       \!\!\!\!\includegraphics[height=0.27\textwidth]{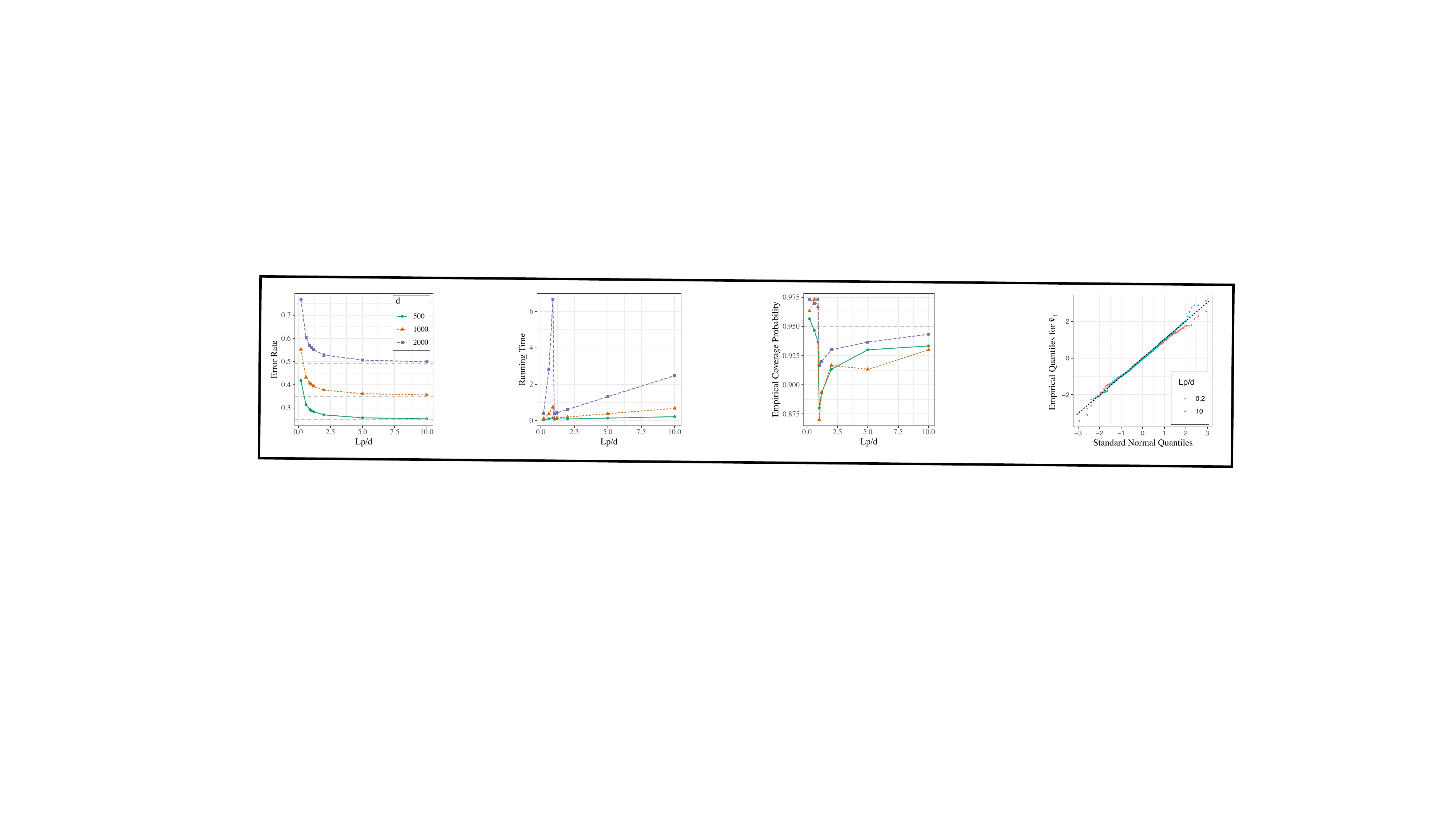} & \!\!\!\!\includegraphics[height=0.27\textwidth]{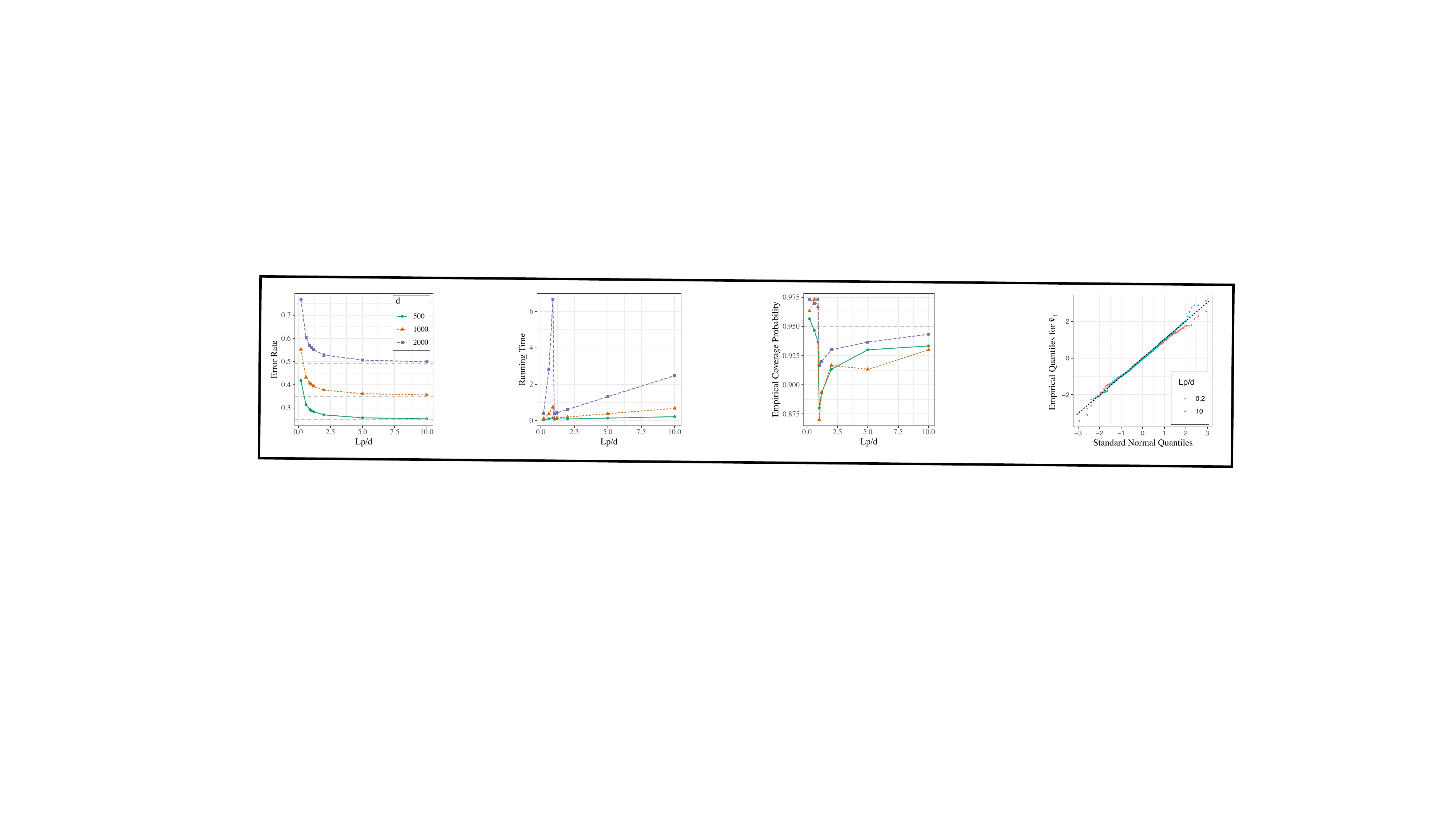}     
  \end{tabular}
		\caption{\small Performance of FADI under different settings for Example~\ref{ex: spiked gaussian} (with 300 Monte Carlos). (a) Empirical error rates of $\cD(\widetilde\Vb^{\rm F}, \Vb)$, where the grey dashed lines represent the error rates for the traditional PCA estimator $\widehat\Vb$; (b) Running time (in seconds) under different settings (including the computation time of $\hbSigma_1$). For the traditional PCA, the running time is 4.86 seconds at $d = 500$, 20.95 seconds at $d = 1000$ and 99.23 seconds at $d=2000$; (c) Empirical coverage probability; (d) Q-Q plot for $\tilde\vb_1$ at $Lp/d \in \{0.2, 10\}$.}\label{fig: exm1 simu}
	\end{figure}

  \begin{table}[htbp]
  \centering
    \resizebox{0.9\textwidth}{!}{%
    \begin{tabular}{c c c c | r r r |r r r}
      \hline
      \hline
      \multicolumn{4}{c|}{Parameters} & \multicolumn{3}{c|}{Error rate} & \multicolumn{3}{c}{Running time (seconds)}  \\
      \hline
      {$d$}&{$n$}&{$m$}&{$L$} & {FADI} & {Traditional} & {Distributed } &  {FADI} & {Traditional} & {Distributed }   \\
      \hline
400 & 30000 & 15 & 40 & 0.068 & 0.065 & 0.065 & 0.07 & 4.53  & 0.59\\
400 & 60000 & 30 & 40 & 0.048 & 0.046  & 0.046  & 0.05 & 8.84  & 0.60\\
400 & 100000 & 50 & 40 & 0.037 & 0.036  & 0.036 & 0.05 & 14.84  & 0.62\\
800 & 100000 & 50 & 80 & 0.052 & 0.050  & 0.050  & 0.10 & 55.76  & 3.66\\
800 & 5000 & 50 & 80 & 0.230 & 0.220 & 0.230 & 0.05 & 3.76  & 2.56\\
800 & 25000 & 50 & 80 & 0.106 & 0.103  & 0.103  & 0.07 & 15.07  & 2.82\\
800 & 50000 & 50 & 80 & 0.073 & 0.070  & 0.070  & 0.07 & 28.68  & 3.23\\
1600 & 30000 & 15 & 160 & 0.134 & 0.130  & 0.130  & 0.31 & 80.72  & 27.02\\
1600 & 60000 & 30 & 160 & 0.095 & 0.092  & 0.092  & 0.35 & 150.75  & 27.29\\
1600 & 100000 & 50 & 160 & 0.074 & 0.071 & 0.071  & 0.34 & 243.83  & 27.38\\
      \hline
      \hline
    \end{tabular}}
     \caption{\small Comparison of the empirical error rates (of $\cD(\cdot, \Vb)$) and the running times (in seconds) between FADI,  traditional full sample PCA and distributed PCA  \citep{fandistributed2019} 
     at $\mathbf{\Sigma} = \operatorname{diag}(50,25,12.5, 1,\ldots, 1)$.  For FADI, $p = p' = 12$, $K=3$, $K' = 4$, $\Delta = 11.5$ and $q=7$.
  }\label{tab: dmn err rate n runtime}
  \end{table}

\vspace{-20pt}
\subsection{Example~\ref{ex: GMM}: Gaussian Mixture Models}\label{sec: exm3 simu}
Under this setting, we take $K = 3$, fix the Gaussian vector dimension at $n = 20000$ and set $\Delta_0^2 = n^{2/3}$. Then we generate the Gaussian means by $\btheta_k \overset{\text{i.i.d.}}{\sim} N\left(\mathbf{0}, \frac{\Delta^2_0}{2n} \Ib_n \right)$, $k \in [K]$. We set $d = 500, 1000, 2000$ respectively and generate independent Gaussian samples $\{\bW_i\}_{i=1}^d \in \RR^n$ from a mixture of Gaussian with means $\btheta_k, k \in [K]$
under different settings. We assign each cluster $k \in [K]$ with $d/K$ Gaussian samples. We divide the data vertically along $n$ into $m = 20$ splits, set $p = p' = 12$ and $q = 7$ for the final powered fast sketching. We take the ratio $Lp/d \in \{0.2,0.6,0.9,1,1.2,2,5,10\}$ for each setting and compute the asymptotic covariance via Corollary~\ref{col: GMM L big} and Corollary~\ref{col: GMM L small}.
We define $\tilde\vb = \hbSigma_1^{-1/2}(\widetilde\Vb^{\F} - \Vb \Hb^{\top})^{\top}\eb_1$ where $\hbSigma_1$ is the asymptotic covariance for the first row of $\widetilde\Vb^{\F}$ and $\Hb = \sgn(\widetilde\Vb^{\F\top}\Vb)$ is the alignment matrix, and calculate the empirical coverage probability by empirically evaluating $\PP\big(\|\tilde\vb\|_2^2  \le \chi_3^2(0.95)\big)$.
We perform 300 Monte Carlo simulations and the results under different settings are shown in Figure~\ref{fig: exm3 simu}. We can see that the error rate of FADI gets closer to that of traditional PCA estimator as $Lp/d$ increases while  FADI greatly outperforms the traditional PCA in terms of running time under different settings. Note that here $d$ is the sample size, and the decreasing of error rates with increasing $d$ and fixed $n$ (at the same $Lp/d$ ratio) is consistent with Corollary~\ref{prop: err rate terms}. Similar to Example~\ref{ex: spiked gaussian},
we can see from Figure~\ref{fig: exm3 simu}(b) the running time is large due to the calculation of $\hbSigma_1$ at $Lp/d$ approaching 1 from the left, and we do not recommend inference at this regime. Validation of the inferential properties are shown in Figure~\ref{fig: exm3 simu}(c) and Figure~\ref{fig: exm3 simu}(d).

\begin{figure}[ht]
		\centering
		\begin{tabular}{cc}
       \quad \quad {\small (a) Error Rate} & \quad  {\small (b) Running Time} \\
       \includegraphics[height=0.27\textwidth]{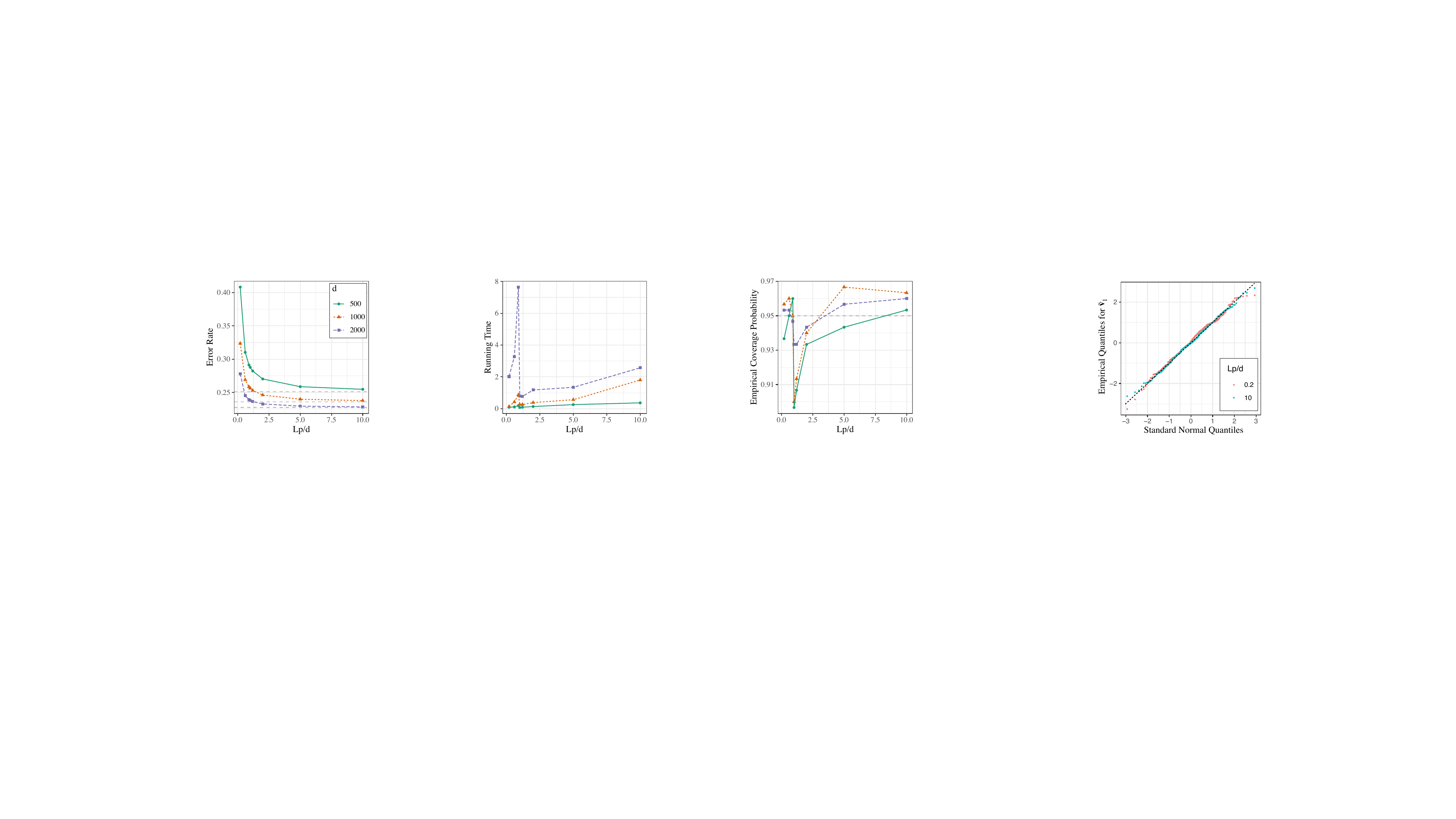} & \includegraphics[height=0.27\textwidth]{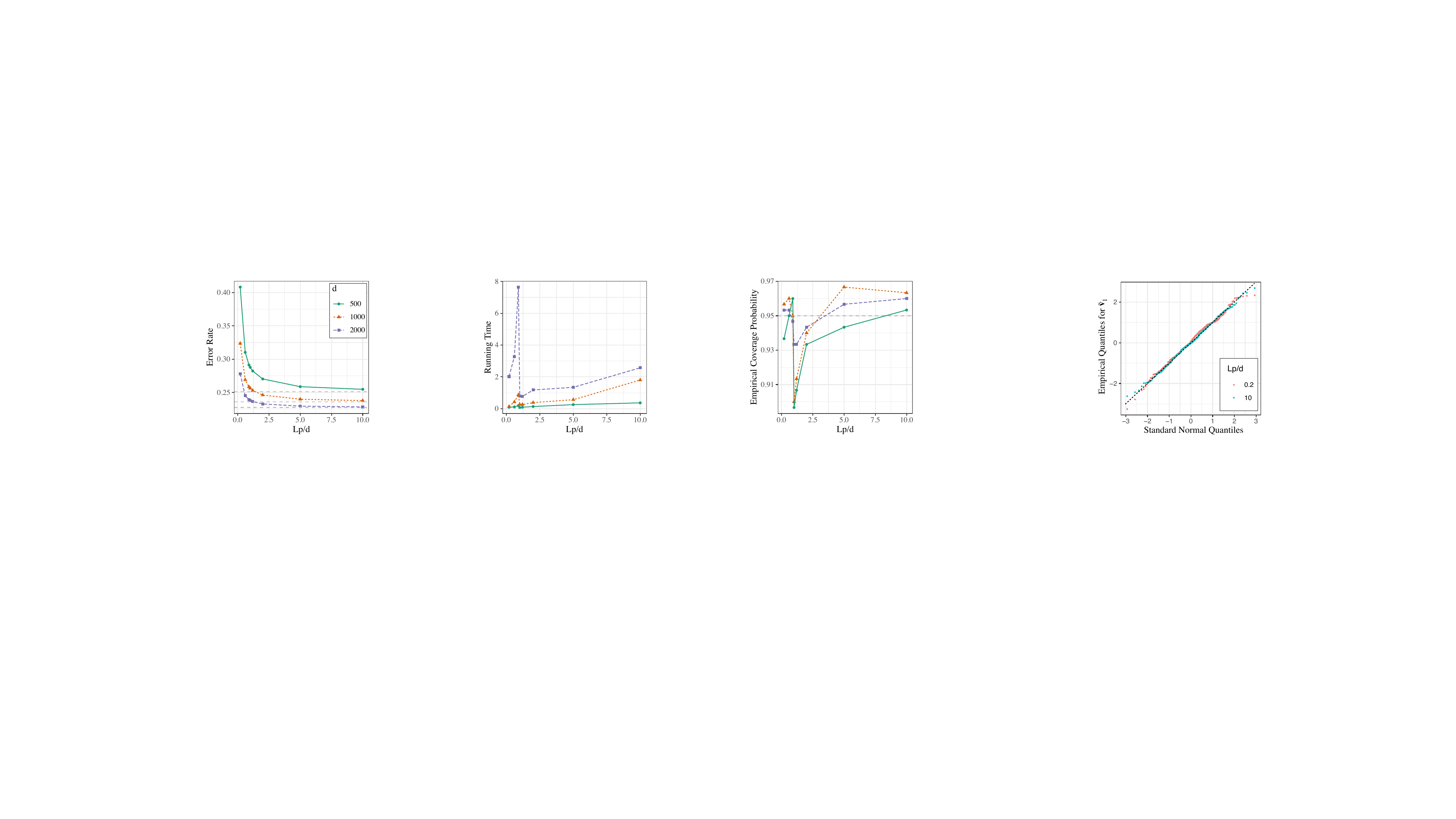}\\
       \quad \,\,{\small (c) Coverage Probability} & \quad  {\small (d) Q-Q Plot} \\
       \includegraphics[height=0.27\textwidth]{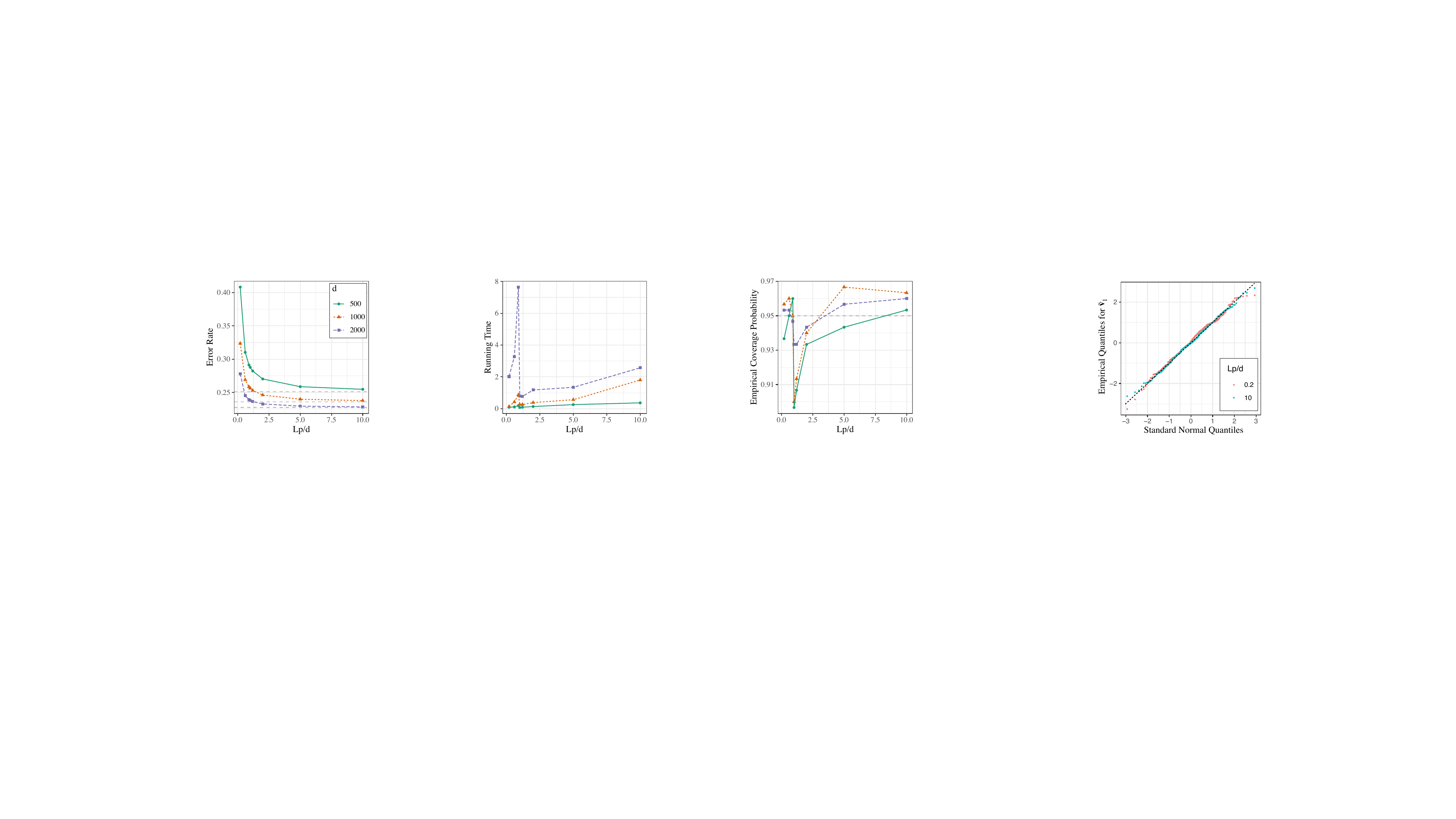} & \includegraphics[height=0.27\textwidth]{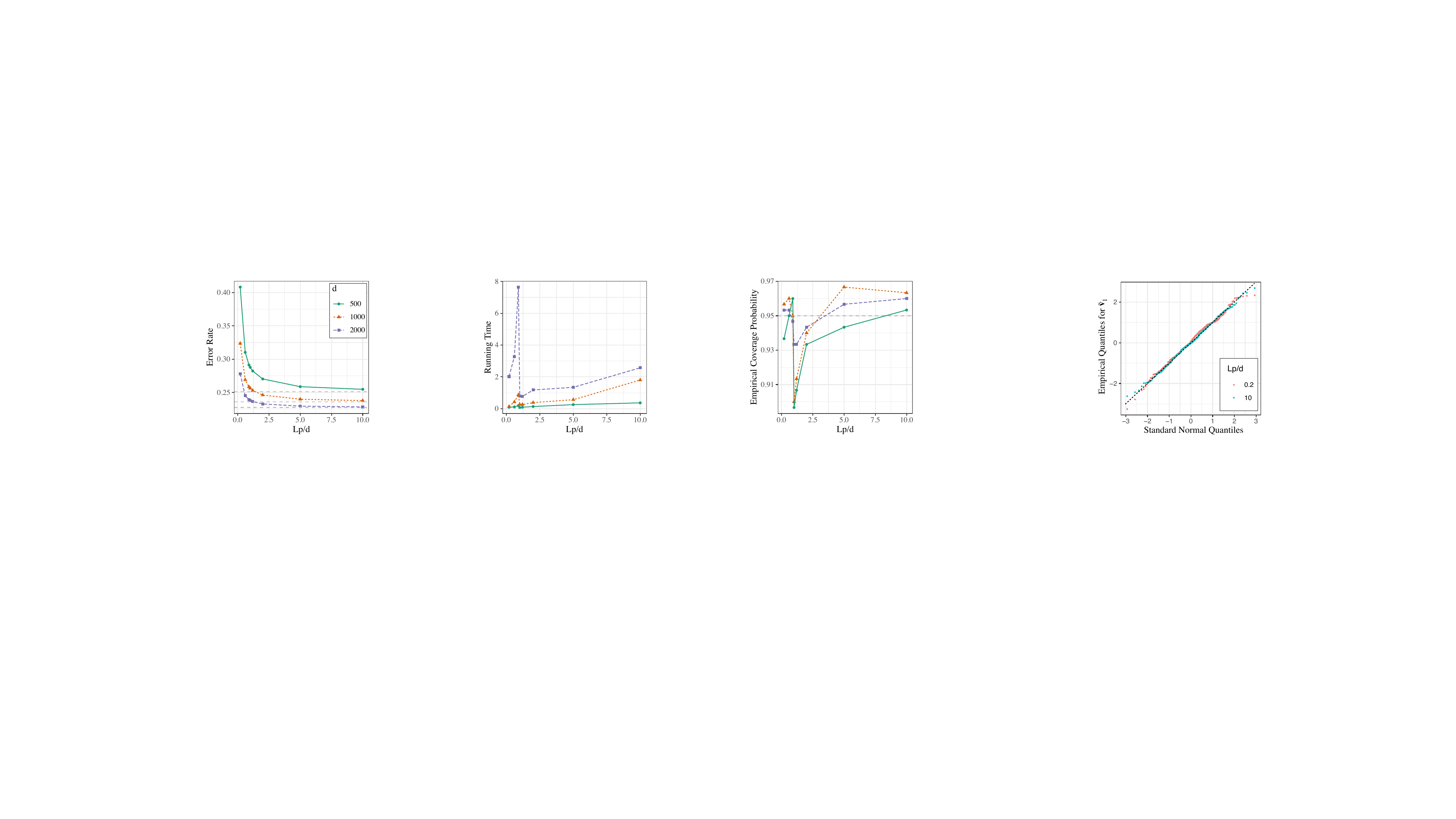}
  \end{tabular}
		\caption{\small \small Performance of FADI under different settings for Example~\ref{ex: GMM}. (a) Empirical error rates of $\cD(\widetilde\Vb^{\rm F}, \Vb)$;
  (b) Running time (in seconds) under different settings. For the traditional PCA, the running time is 5.43 seconds at $d = 500$, 23.32 seconds at $d = 1000$ and 105.58 seconds at $d=2000$; (c) Empirical coverage probability; 
  (d) Q-Q plot for $\tilde\vb_1$ at $Lp/d \in \{0.2, 10\}$.}\label{fig: exm3 simu}
	\end{figure}
 \vspace{-20pt}
\section{Application to the 1000 Genomes Data}\label{sec: real data}

In this section, we apply FADI and the existing methods to the 1000 Genomes Data \citep{10002015global}. 
We use phase 3 of the 1000 Genomes Data and focus on common variants 
with minor allele frequencies larger than or equal to 0.05. 
There are 2504 subjects in total, and 168,047 independent variants after the linkage disequilibrium (LD) pruning.  As we are interested in the ancestry principal components to capture population structure,  the sample size $n$ is the number of independent variants after LD pruning ($n=168,047$), and the dimension $d$ is the number of subjects ($d=2504$) \citep{price2006population}. The data were collected from 7 super populations: (1) \textbf{AFR}: African; (2) \textbf{AMR}: Admixed American; (3) \textbf{EAS}: East Asian; (4) \textbf{EUR}: European; (5) \textbf{SAS}: South Asian; (6) \textbf{PUR}: Puerto Rican and (7) \textbf{FIN}: Finnish; and 26 sub-populations.

For the estimation of the principal components, we assume that the data follow the spiked covariance model specified in Example~\ref{ex: spiked gaussian}. We also perform additional inferential analysis that we defer to Supplementary Materials~D. We set $K' = 27$, $p=50$, $p' = 100$, $q=3$, $m=100$ and $L = 80$.
For the estimation of the number of spikes, we take the thresholding parameter $\mu_0 = \left(d(np)^{-1/2} \log d\right)^{3/4}/12$. The estimated number of spikes from FADI is { $\hat{K} = 26$}, which is close to 25, the number of self-reported ethnicity groups minus 1, i.e., $K=26-1$. The results of the 4 leading PCs are shown in Figure~\ref{fig: PCA 1000g}, where a clear separation can be observed among different super-populations.
{ We compare the computational times of different methods for analyzing the 1000 Genomes Data. FADI takes 5.6 seconds at $q = 3$, whereas the traditional PCA method takes 595.4 seconds and the distributed PCA method \citep{fandistributed2019} takes  120.2 seconds.
These results show that FADI greatly outperforms the existing PCA methods in terms of computational time.  
}
  \begin{figure}[ht]
		\centering
		\begin{tabular}{ccc}
			\includegraphics[height=0.3\textwidth]{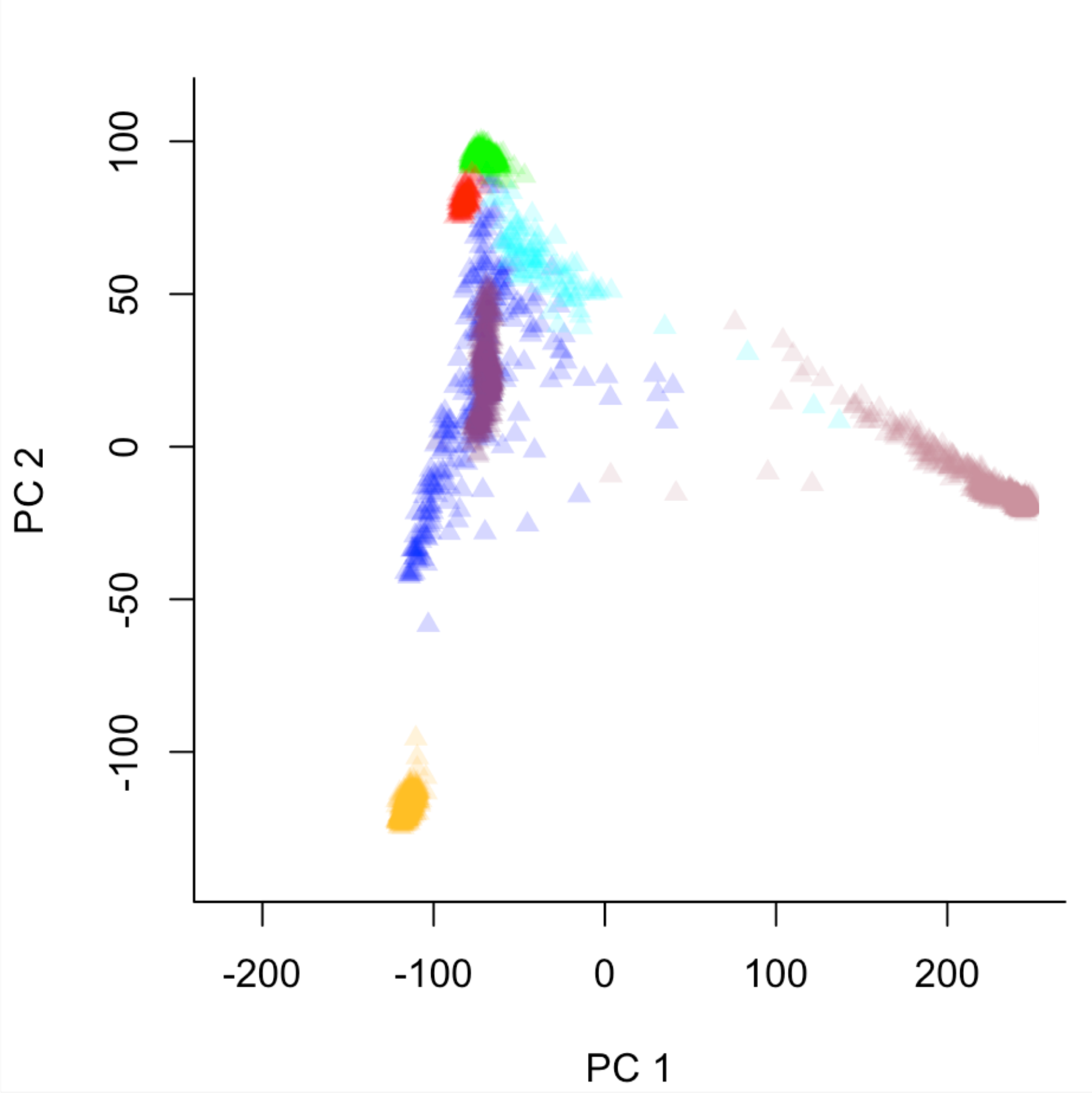}
			&\includegraphics[height=0.3\textwidth]{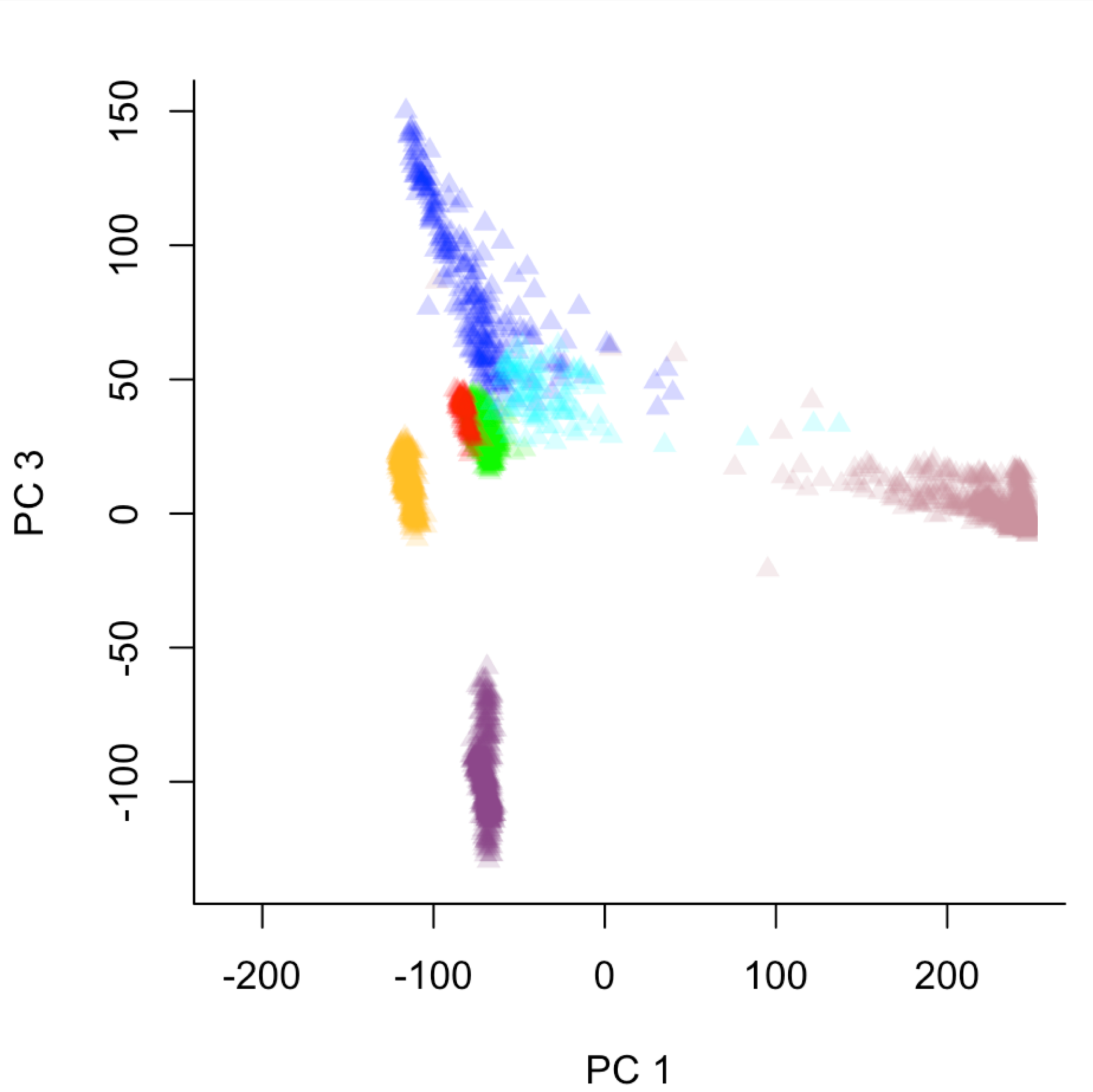}
			&\includegraphics[height=0.3\textwidth]{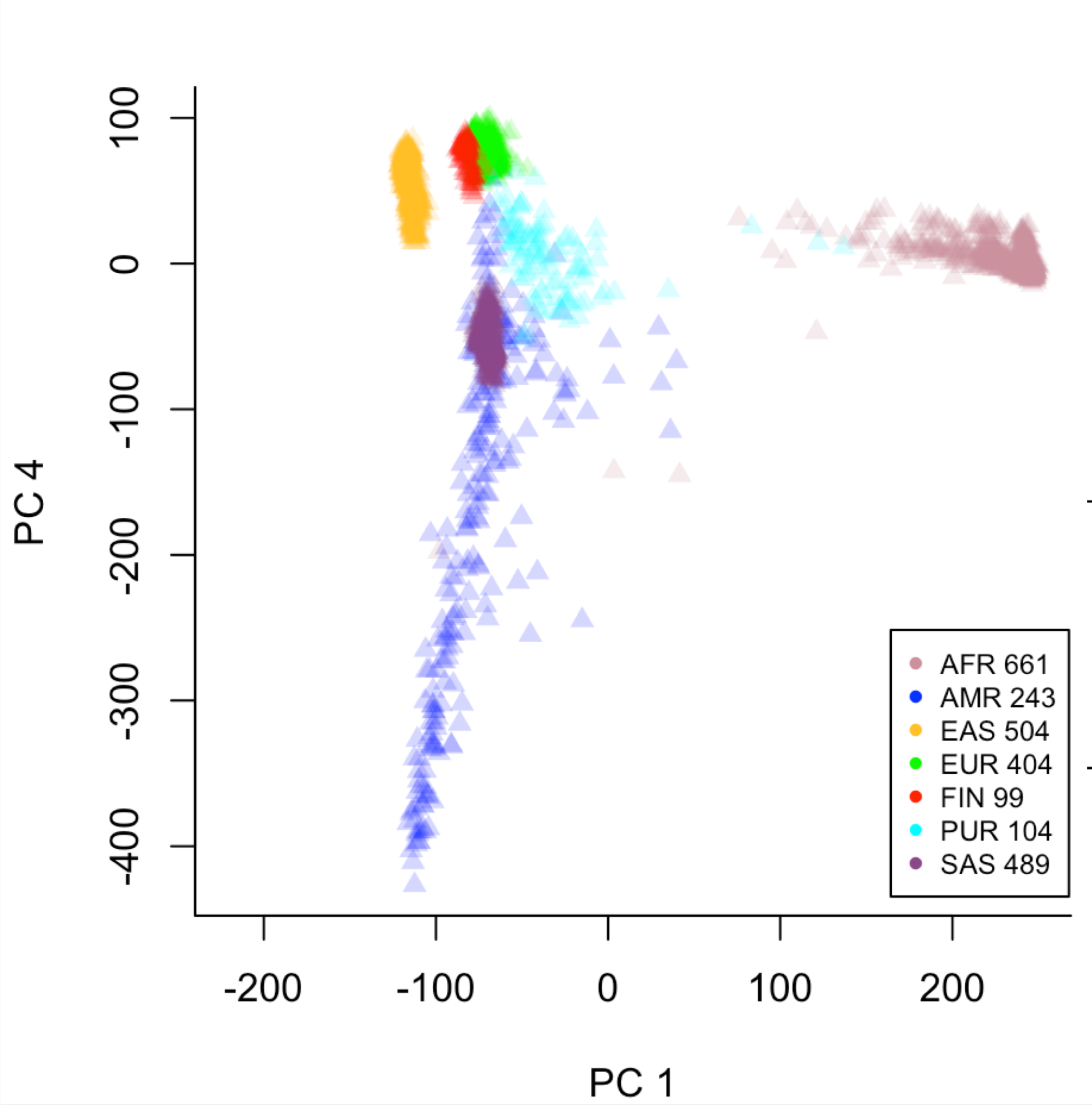}\\
			{\small (a) PC 1 versus PC 2 }& {\small (b) PC 1 versus PC 3} &{\small (c) PC 1 versus PC 4}\\
			\includegraphics[height=0.3\textwidth]{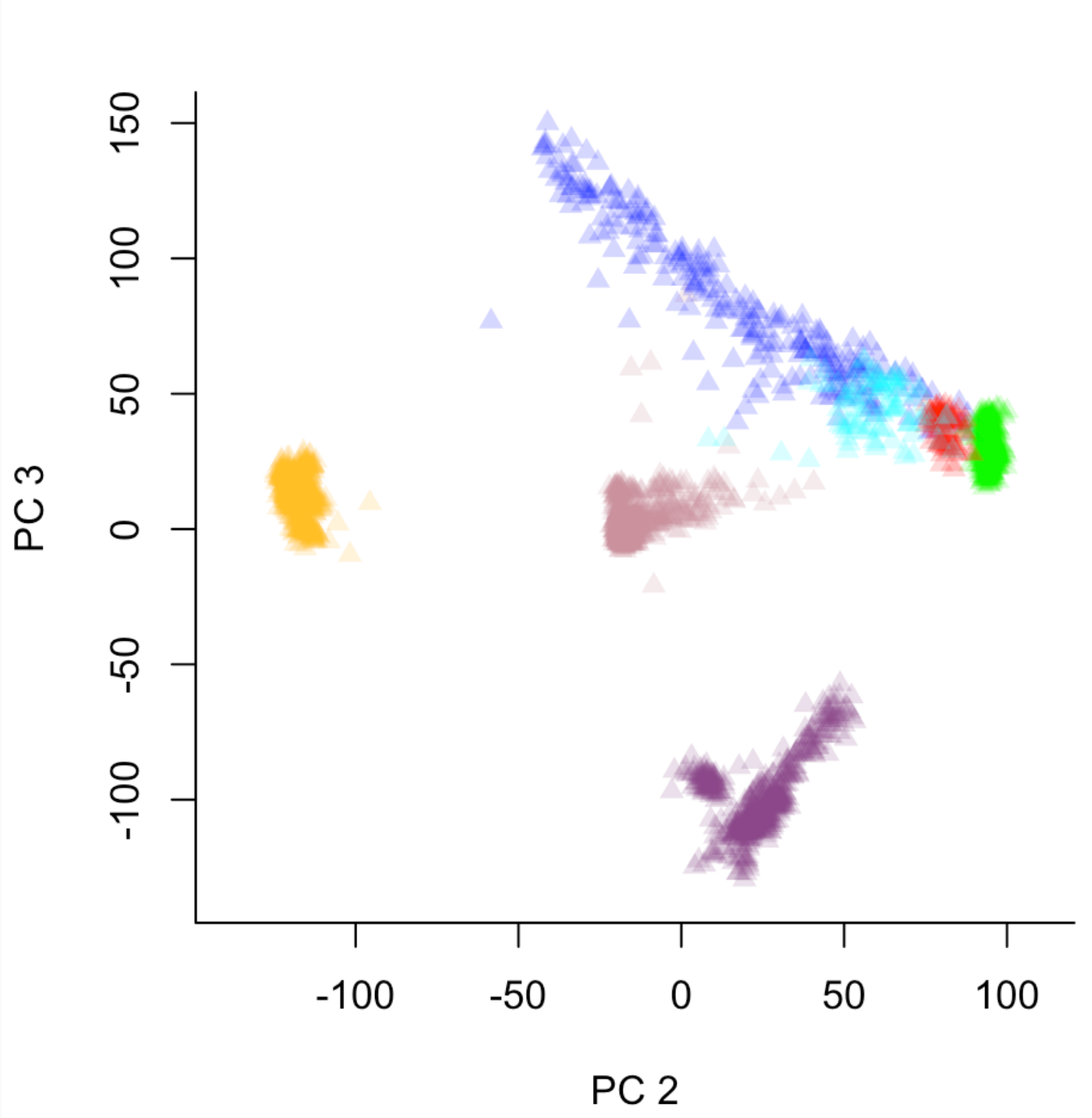}
			&\includegraphics[height=0.3\textwidth]{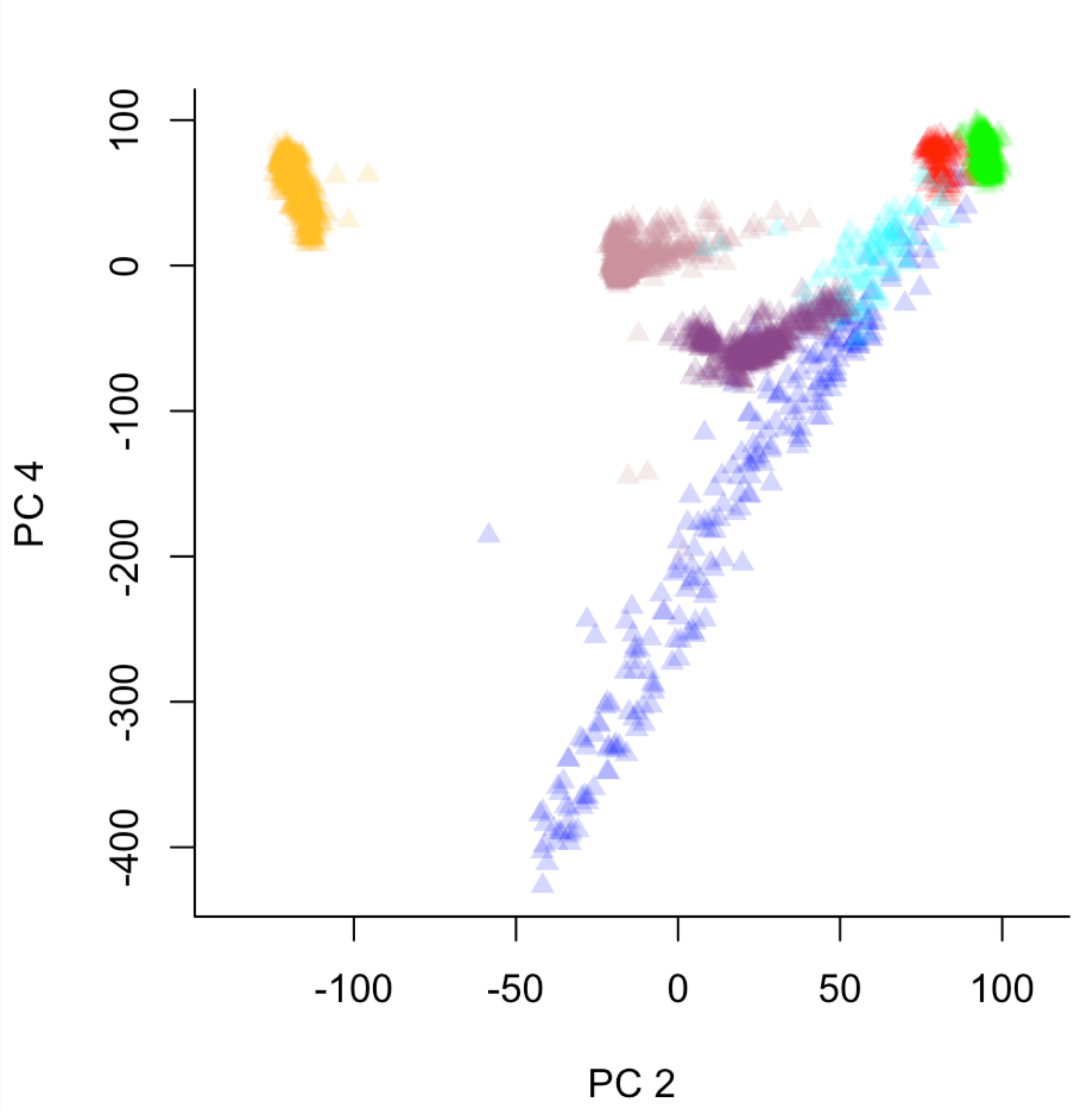}
			&\includegraphics[height=0.3\textwidth]{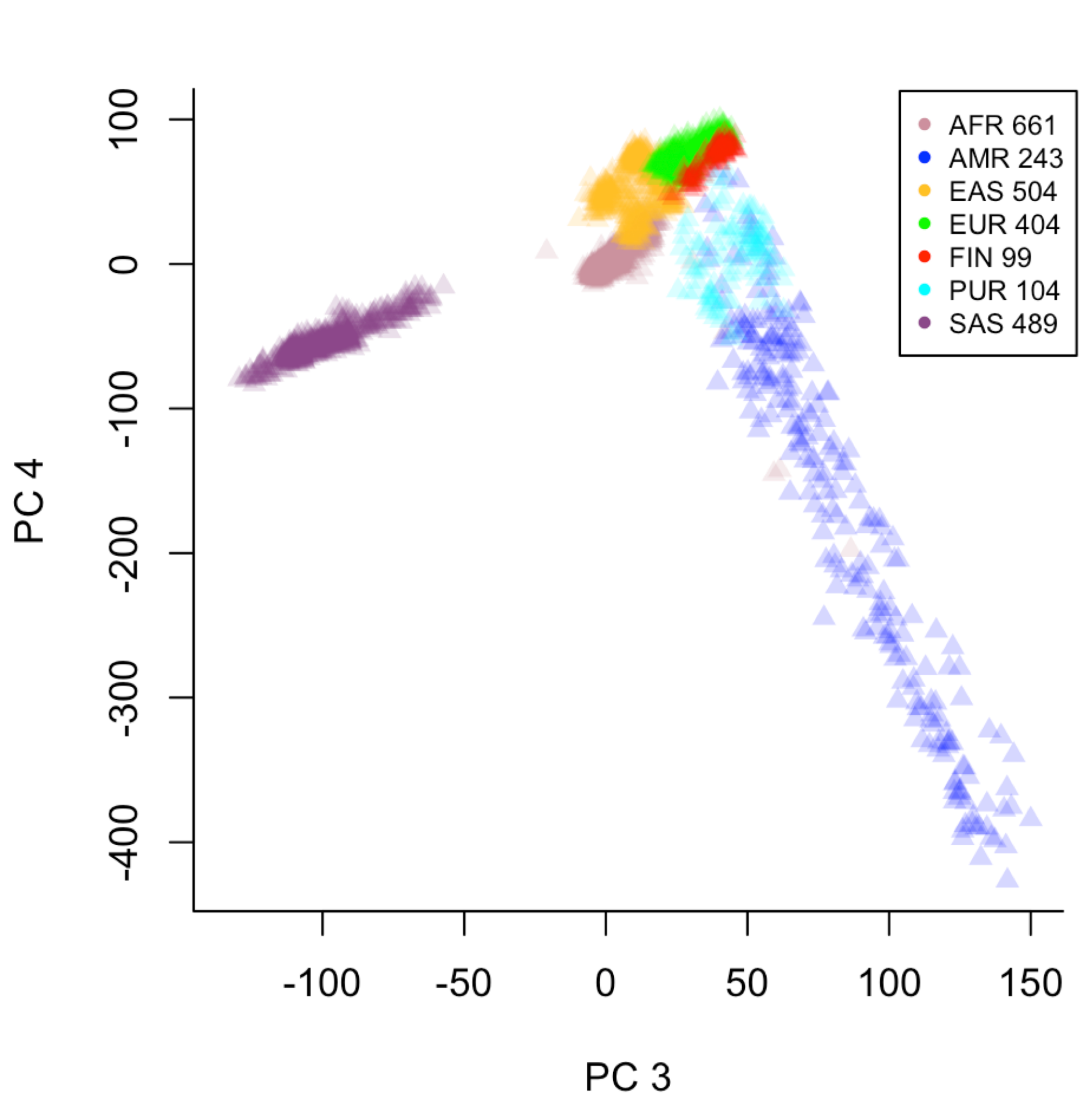}\\
			{\small (d) PC 2 versus PC 3 }& {\small (e) PC 2 versus PC 4 }&{\small (f) PC 3 versus PC 4}\\
		\end{tabular}
		\caption{\small The top 4 PCs of the 1000 Genomes Data. For the first two PCs, PC 1  separates African (AFR) super-population from the others, whereas PC 2  separates East Asian (EAS) from the others. As for PC 3 and PC 4, South Asian (SAS) and Ad Mixed American (AMR) are well separated from the rest of the super-populations by PC 3, while PC 4 presents some additional separation. }\label{fig: PCA 1000g}
	\end{figure}

\vspace{-20pt}

\section{Discussion}\label{sec: discussion}

{
In this paper, we develop a FAst DIstributed PCA algorithm (FADI) to address the challenges posed by high-dimensional PCA computations, offering a compelling balance between computational efficiency and result accuracy.
The main idea is to apply distributed-friendly random sketches so as to reduce the data dimension and aggregate the results from multiple sketches to improve the statistical accuracy and accommodate federated data.


In contrast to the computationally expensive traditional full sample PCA, which is not suitable for federated data, FADI significantly reduces computational costs and is well-suited for large federated data. While existing distributed PCA methods \citep{fandistributed2019} can handle federated data by applying traditional PCA to each data split, they lack scalability when the dimension $d$ is large. On the other hand, existing fast PCA methods \citep{halkofinding2011,chen2016integrating} use random sketches on full data, allowing for large $d$ but lack scalability for large sample sizes $n$ and are not applicable to federated data.
FADI addresses the limitations of both distributed PCA and fast PCA methods, offering significant scalability when both $d$ and $n$ are large or when dealing with federated data. It achieves computational scalability by computing multiple random sketches to split datasets and efficiently aggregating the results across them.
Theoretical analysis shows that FADI enjoys the same non-asymptotic error rate as the traditional PCA when the number of repeated sketches $L$ is of order $d/p$, which is also affirmed by extensive simulation studies. We also establish distributional guarantee for the FADI estimator and perform numerical experiments to validate the potential phase-transition phenomenon in distributional convergence.

Fast PCA algorithms using random sketches usually require the data to have certain ``almost low-rank'' structures, without which the approximation  might not be accurate \citep{halkofinding2011}. It is of future research interest to investigate whether the proposed FADI approach can be extended to non-low-rank settings. In Step 3 of FADI, we aggregate local estimators by taking a simple average over the projection matrices. It would be of future research interest to explore the performance of other weighted averages.
}
\subsection*{Disclosure Statement}
The authors report there are no competing interests to declare.

	\bibliographystyle{apalike}
		\bibliography{main.bib}

\newpage

\renewcommand*{\theHsection}{\thesection}
\renewcommand*{\theHsubsection}{\thesubsection}

\setcounter{section}{0}
\renewcommand{\thesection}{\Alph{section}}
\renewcommand{\theequation}{\Alph{section}.\arabic{equation}}
\setcounter{page}{1}
\renewcommand\thefigure{S\arabic{figure}}\setcounter{figure}{0}
\renewcommand\thetable{S\arabic{table}}\setcounter{table}{0}

\title{\singlespace
\begin{center}
\textit{\large Supplementary Materials to }
\end{center}
\begin{center}
{\Large Dimension Reduction for Large-Scale Federated Data: Statistical Rate and Asymptotic Inference}

\author{}
\date{}
\end{center}}
\maketitle
This file contains the supplementary materials to the paper ``Dimension Reduction for Large-Scale Federated Data: Statistical Rate and Asymptotic Inference''. Section~\ref{sec: add models} presents additional applications of FADI under the degree-corrected mixed membership (DCMM) model and the incomplete matrix inference model. In Section~\ref{sec: additional simus} we provide numerical results for Example~\ref{ex: SBM} and Example~\ref{ex: missing mat} along with some additional simulation results for Example~\ref{ex: spiked gaussian} under the genetic setting. In Section~\ref{sec: real data inference}, we present additional real data application of FADI to the 1000 Genomes Data for inference under Example~\ref{ex: SBM}. In Section~\ref{sec: proof main theory}, we present the proofs for the main theorems, propositions and corollaries given in Section~\ref{sec: theory} of the main paper. In Section~\ref{sec: proof tec lems} we give the proofs of some technical lemmas useful for the proofs of the main theorems. In Section~\ref{sec: supp}, we present the modified version of Wedin's theorem, which is used in several proofs.  Section~\ref{sec: supp figs} provides the supplementary figures deferred from the main paper.

\section{Additional Applications to Other Statistical Models}\label{sec: add models}
As mentioned in Section~\ref{sec: intro}, FADI is developed within a general framework that encompasses multiple statistical models. In the main text, we present applications of FADI to the spiked covariance model and the Gaussian mixture models (GMM) for the purpose of illustration. In this section, we provide additional applications of FADI to the degree-corrected mixed membership (DCMM) model and the incomplete matrix inference model. The specific model setups are provided below.
\begin{example}[Degree-Corrected Mixed Membership (DCMM) Model \citep{fan2022simple}]\label{ex: SBM}{ 
Let $\Xb  \in \RR^{d \times d}$ be a symmetric adjacency matrix for an undirected graph of $d$ nodes, where $\Xb_{ij} = 1$ if nodes $i, j \in [d]$ are connected and $\Xb_{ij} = 0$ otherwise. Assume $\Xb_{ij}$'s are independent for $i \le j$ and $\EE (\Xb) = \mathbf{\Theta} \mathbf{\Pi} \Pb \mathbf{\Pi}^{\top} \mathbf{\Theta}$,
where $\mathbf{\Theta} = \operatorname{diag}(\theta_1, \ldots, \theta_d)$ stands for the degree heterogeneity matrix, $\mathbf{\Pi} = (\boldsymbol{\pi}_1,\ldots,\boldsymbol{\pi}_d)^{\top} \in \RR^{d \times K}$ is the stacked community assignment probability vectors  
and $\Pb \in \RR^{K \times K}$ is a symmetric rank-$K$ matrix
with constant entries $\Pb_{kk'} \in (0,1)$ for $k, k' \in [K]$. 
Then $\Mb = \EE (\Xb) =  \mathbf{\Theta} \mathbf{\Pi} \Pb \mathbf{\Pi}^{\top} \mathbf{\Theta}$ and $\widehat\Mb = \Xb$.\footnote{In the case where self-loops are absent, $\Xb$ will be replaced by $\Xb' = \Xb - \diag(\Xb)$ and $\Eb$ will be replaced by $\Eb' = \Eb - \diag(\Xb)$.
Our theoretical results hold for both cases. } The goal is to infer the community membership profiles $\mathbf{\Pi}$.
 Recall $\Mb = \Vb \mathbf{\Lambda} \Vb^{\top}$. Since $\Vb$ and $\bTheta \mathbf{\Pi}$ share the same column space, we can make inference on $\mathbf{\Pi}$
through $\Vb$.\footnote{To address the degree heterogeneity, one can perform the SCORE normalization to cancel out $\bTheta$ \citep{jin2015fast}. }
In this paper, we assume that there exist constants $C \ge c >0$ such that $\sigma_{K}(\mathbf{\Pi}) \ge  c \sqrt{d/K}$, $c \le \lambda_{K}(\Pb) \le \lambda_{1}(\Pb) \le C K$ and $\max_i \theta_i \le C \min_i \theta_i$, where we define $\theta = \max_i \theta_i^2$ as the rate of signal strength.    We assume that the adjacency matrix is distributed across $m$ sites, where on the $s$-th site we observe the connectivity matrix $\Xb^{(s)} \in \RR^{d \times d_s}$ and $\Xb = (\Xb^{(1)}, \ldots, \Xb^{(m)})$. }
\end{example}
\begin{example}[Incomplete Matrix Inference \citep{Chen2019matcompinf}]\label{ex: missing mat}{
Assume that $\Mb = \Vb \bLambda \Vb^{\top}$ is a symmetric rank-$K$ matrix, and $\cS \subseteq [d] \times [d]$ is a subset of indices. We only observe the perturbed entries of $\Mb$ in the subset $\cS$. Specifically, for $i\le j$, we denote $\delta_{ij} = \delta_{ji} = \II\{(i,j) \in \cS\}$, and $\delta_{ij} \overset{\text{i.i.d}}{\sim} \operatorname{Bernoulli}(\theta)$ is an indicator for whether the $(i,j)$-th entry is missing. Then for $i,j \in [d]$, the observation for $\Mb_{ij}$ is
$\Xb_{ij}=(\Mb_{ij} + \varepsilon_{ij})\delta_{ij}$, where $\varepsilon_{ij} = \varepsilon_{ji}$ are i.i.d. random variables satisfying $\EE(\varepsilon_{ij}) = 0$, $\EE(\varepsilon_{ij}^2) = \sigma^2$ and $\sup_{i\le j}|\varepsilon_{ij}| \lesssim \sigma \log d$.\footnote{We can generalize the results to sub-Gaussian error $\varepsilon_{ij}$'s with variance proxy $\sigma^2$ by taking the truncated error $\varepsilon_{ij}^t = \varepsilon_{ij} \II\{|\varepsilon_{ij}| \le 4 \sigma \sqrt{ \log d}\}$, and by the maximal inequality for sub-Gaussian random variables we know that with probability at least $1 - O(d^{-6})$, $\varepsilon_{ij} = \varepsilon_{ij}^t, \forall i,j \in [d]$, and the theorems can be generalized with minor modifications.} Then to adjust for scaling, we define the observed data as $\widehat\Mb = [\widehat\Mb_{ij}] = \hat{\theta}^{-1}[\Xb_{ij}]$,
where $\hat\theta = 2|\cS|/\big(d(d+1)\big)$.\footnote{In practice, we can estimate $\Vb$ by $\Xb$ rather than by $\widehat\Mb = \hat{\theta}^{-1}\Xb$, since the two matrices share exactly the same eigenvectors. However, we need  the factor $\hat{\theta}^{-1}$ to preserve correct scaling for the estimation of eigenvalues as well as the follow-up inference. Please see Theorem~\ref{thm: est k} and Corollary~\ref{col: missing mat} for more details. } Consider the distributed setting where the data are split along $d$ on $m$ servers, where $\Xb^{(s)} \in \RR^{d \times d_s}$ stands for the observations on the $s$-th server and $\widehat\Mb = \hat\theta^{-1}(\Xb^{(1)}, \ldots, \Xb^{(m)})$. The goal is to infer
$\Vb$ from $\widehat\Mb$ in the presence of  missing data. 
}
\end{example}
In Example~\ref{ex: SBM} and Example~\ref{ex: missing mat}, the sample size $n$ coincides with the dimension $d$, and we consider the distributed settings along $d$. Table~\ref{table: complexity and par choice intro add exms} presents the computational complexities and parameter choices of FADI for these two examples.

\begin{table}[htb]
    \centering
    \begin{tabular}{c c c c}
    \hline
    \hline
         &Complexity & $p$ & $L$ \\
         \hline
         DCMM model & \thead{$O(d^2 p/m + d K p L \log d)$} & $\sqrt{d}$ & $\sqrt{d}$\\
         \hline
         Incomplete matrix inference & \thead{$O(d^2 p/m + d K p L \log d)$} & $\sqrt{d}$ & $\sqrt{d}$\\
         \hline
         \hline
    \end{tabular}
    \caption{\small Computational complexities and parameter choice  of FADI
    for PCA estimation under Example~\ref{ex: SBM} and Example~\ref{ex: missing mat},  where $K$ is the rank of $\Mb$, $d$ is the dimension of $\Mb$, $m$ is the number of data splits, $p$ is the fast sketching dimension and $L$ is the number of repeated sketches.
    } 
    \label{table: complexity and par choice intro add exms}
\end{table}

\subsection{Raw Data Processing}
The preliminary data processing for generating $\widehat\Mb$ in Step 0 of FADI is presented as follows. 

\noindent \textbf{Example~\ref{ex: SBM}: } Recall that the adjacency matrix is stored distributively on $m$ sites, and for the $s$-th site we observe the connectivity matrix $\Xb^{(s)}$.  Then for $s \in [m]$, define 
$\widehat\Mb^{(s)} = (\eb_s^{\top} \otimes \Ib_d) \diag(\Xb^{(1)},\ldots, \Xb^{(m)})$, where $\otimes$ is the Kronecker product, and $\{\eb_s\}_{s=1}^m \subseteq \RR^{m}$ is the canonical basis for $\RR^m$. Namely, $\widehat\Mb^{(s)}$ is the $s$-th observation $\Xb^{(s)}$ augmented by zeros, and $\widehat\Mb = \sum_{s=1}^m \widehat\Mb^{(s)} = (\Xb^{(1)}, \ldots, \Xb^{(m)}) = \Xb$. No preliminary computation is needed.

\noindent  \textbf{Example~\ref{ex: missing mat}: }
 Recall that we observe the split data $\{\Xb^{(s)}\}_{s=1}^m$ with missing entries on $m$ servers. Define $\widehat\Mb^{(s)} = \hat\theta^{-1} (\eb_s^{\top} \otimes \Ib_d) \diag(\Xb^{(1)},\ldots, \Xb^{(m)})$ for the $s$-th server, where $\hat\theta = 2|\cS|/\big(d(d+1)\big)$, then we have $\widehat\Mb = \sum_{s=1}^m \widehat\Mb^{(s)} = \hat\theta^{-1}(\Xb^{(1)}, \ldots, \Xb^{(m)})$.  

\subsection{Complexity Analysis}
\begin{table}[htbp]
    \centering
    \begin{tabular}{c c c c c}
    \hline
    \hline
    & \multicolumn{2}{c}{Communication } & \multicolumn{2}{c}{Computation }\\
    \hline
         &Example~\ref{ex: SBM}  & Example~\ref{ex: missing mat} &Example~\ref{ex: SBM}  & Example~\ref{ex: missing mat}  \\
         \hline
         \textbf{Step 0} & N/A & $O(1)$ &  N/A & $O(\frac{d^2} {m})$\\
         \hline
         \textbf{Step 1} & $O(mpd)$ & $O(mpd)$ & \thead{$\widehat\Yb^{(s,\ell)}: O(\frac{d^2 p}{m})$} & \thead{$\widehat\Yb^{(s,\ell)}: O(\frac{d^2 p}{m})$} \\
         \hline
         \textbf{Step 2} & $O(LKd)$ &$O(LKd)$ & \thead{$\widehat\Yb^{(\ell)}: O(mdp)$\\ $\widehat\Vb^{(\ell)}: O(dp^2)$} & \thead{$\widehat\Yb^{(\ell)}: O(mdp)$\\ $\widehat\Vb^{(\ell)}: O(dp^2)$} \\
         \hline
         {\textbf{Step 3}} & {{N/A}}&  {{N/A}} & \multicolumn{2}{c}{\thead{$\widetilde\Vb^{\F}: O( d K p' L q +dp^{\prime2})$}}\\
         \hline
         \textbf{Total} &$O(mpd + LKd)$&$O(mpd + LKd)$& \thead{$O(\frac{d^2 p}{m}+ d K p' L q { +mdp})$}  & \thead{$O(\frac{d^2 p}{m}+ d K p' L q{ +mdp})$}  \\
         \hline
         \hline
    \end{tabular}
    \caption{\small Communication and computational costs for Example~\ref{ex: SBM} and Example~\ref{ex: missing mat}. For the simplicity of presentation, we assume  $\max_{s \in [m]} d_s \asymp d/m$. We only recommend computing $\widetilde\Vb^{\rm F}$ instead of $\widetilde\Vb$ for Example~\ref{ex: SBM} and Example~\ref{ex: missing mat}.}
    \label{table: compu cost add exms}
\end{table}

Table~\ref{table: compu cost add exms} provides the complexities of each step for Example~\ref{ex: SBM} and Example~\ref{ex: missing mat}. When $m$ can be customized, we recommend taking $m \asymp \sqrt{d}$ for optimal efficiency. Since direct SVD on $\widetilde\bSigma$ will induce computational cost of order $d^3$ and we only suggest $\widetilde\Vb^{\F}$ as the eigenspace estimator.  If we take $p \asymp \sqrt{d}$, $L \asymp d/p$, $p' \asymp K$ and  $q \asymp \log d$, the total computational cost will be $O(d^{5/2}/m + K^2 d^{3/2} \log d)$. Computational costs for the inferential procedures will be discussed in Section~\ref{sec: inf Lp gg d add exms}.
\subsection{Statistical Rates and Rank Estimation}
Below, we present the corollary of Theorem~\ref{thm: error bound} that illustrates the error rates of FADI in Example~\ref{ex: SBM} and Example~\ref{ex: missing mat}. The proof is deferred to Section~\ref{sec: proof prop err rates}.
\begin{corollary}\label{prop: err rate terms add exms}
{\it For Example~\ref{ex: SBM} and Example~\ref{ex: missing mat}, we have the following error bounds under corresponding regularity conditions.
\begin{itemize}
    \item {Example~\ref{ex: SBM}: } Suppose $d \ge 3$ and $\theta { \ge} K^2 d^{-1/2+\epsilon}$ for some constant $\epsilon >0$. If we take $p' \ge \max(2K, K + 7)$, $p  \gtrsim \sqrt{d}$ and { $q = \lceil \log d /\log\log d + 3\rceil$}, it holds that
    \begin{equation}\label{eq: err bd exm 2}
            \left( \EE |\cD(\widetilde\Vb^{\rm F}, \Vb)|^2 \right)^{1/2} \lesssim K\sqrt{\frac{K}{d\theta}} + K \sqrt{\frac{K}{pL\theta}}.
    \end{equation}
    \item {Example~\ref{ex: missing mat}: } Define $\kappa_2 = |\lambda_1|/\Delta$. Suppose $d \ge 3$, $\theta \ge  d^{-1/2 + \epsilon}$ for some constant $\epsilon > 0$, $\sigma/\Delta \ll (\log d)^{-3} d^{-1}\sqrt{p\theta}$, $\|\Vb\|_{2, \infty} \le \sqrt{\mu K/d} $ for some $\mu \ge 1$ and { $\kappa_2\mu K \ll (\log d)^{-3} d^{\epsilon/2} $}, if we take $p' \ge \max(2K, K + 7)$, $p \gtrsim \sqrt{d}$ and { $q = \lceil \log d /\log\log d + 3\rceil$}, it holds that
    \begin{equation}\label{eq: err bd exm 4}
        \begin{aligned}
        \left( \EE |\cD(\widetilde\Vb^{\rm F}, \Vb)|^2 \right)^{1/2} &\!\!\lesssim \sqrt{K} \left(\frac{\kappa_2 \mu K}{\sqrt{d\theta}} \!+\! \sqrt{\frac{d\sigma^2}{\Delta^2 \theta}}\right) \!\!+\!  \sqrt{\frac{Kd}{pL}} \!\!\left(\frac{\kappa_2 \mu K}{\sqrt{d\theta}} \!+\! \sqrt{\frac{d\sigma^2}{\Delta^2 \theta}}\right).
    \end{aligned}
    \end{equation}

\end{itemize}
}
\end{corollary}

For Example~\ref{ex: SBM}, our estimation rate in \eqref{eq: err bd exm 2} matches the inferential results in \citep{fan2022simple}. Section~\ref{sec: inf Lp gg d add exms} gives a detailed comparison with the method in \citep{fan2022simple} in terms of the limiting distributions. For Example~\ref{ex: missing mat}, our error rate in \eqref{eq: err bd exm 4} matches the results in \citep{chen2020spectral}. Recall we show in Theorem~\ref{thm: est k} that when the rank $K$ is unknown, it can be recovered with high probability by properly choosing the thresholding parameter $\mu_0$. Corollary~\ref{prop: est K add exms} specifies the choice of $\mu_0$ for Example~\ref{ex: SBM} and Example~\ref{ex: missing mat}. Please refer to Section~\ref{sec: proof of prop est K} for the proof.

\begin{corollary}\label{prop: est K add exms}{\it
For Examples \ref{ex: SBM} and \ref{ex: missing mat}, we specify the choice of $\mu_0$ under certain regularity conditions.
\begin{itemize} 
    \item {Example~\ref{ex: SBM}: } Define $\hat{\theta} = d^{-2} \sum_{i \le j} \widehat{\Mb}_{ij}$, then under the condition that $\theta \ge K^2 d^{-1/2+\epsilon}$ for some constant $\epsilon > 0$ and $\sqrt{d} \lesssim p \ll (\log d)^{-2} d$,  if we take $\mu_0 = (\hat{\theta}/p)^{1/2} d \log d/12$, with probability at least $1- O\left(d^{-(L \wedge 20)/2}\right) $, we have $\hat{K} = K$. 
    \item {Example~\ref{ex: missing mat}: } 
    When $\theta \ge d^{-1/2 + \epsilon}$ for some constant $\epsilon > 0$, $\|\Vb\|_{2,\infty} \le \sqrt{\mu K/d}$ for some $\mu \ge 1$, $\kappa_2^2\mu^2 K \ll (\log d)^2$, $\sqrt{d} \lesssim p \ll (\log d)^{-2} d$ and $(p\theta)^{-1/4}\sqrt{d\sigma/\Delta}\log d = o(1)$, if we take $\mu_0 =  d\hat\sigma_0\log d (p\hat\theta)^{-1/2}/12$, where $\hat\sigma_0 = \big( \sum_{(i,j) \in \cS} (\hat\theta\widehat\Mb_{ij})^2 / |\cS|\big)^{1/2}$, then with probability at least $1- O\left(d^{-(L \wedge 20)/2}\right) $, we have $\hat{K} = K$.
\end{itemize}
}
\end{corollary}

\subsection{Inferential Results When \texorpdfstring{$Lp \gg d$}{Lp gg d}} \label{sec: inf Lp gg d add exms}
In this section, we provide the inferential results of Example~\ref{ex: SBM} and Example~\ref{ex: missing mat} based on Theorem~\ref{thm: leading term L big}.
\subsubsection{Degree-Corrected Mixed Membership Models}
\begin{corollary}\label{col: sbm}{\it 

When $d \ge 3$ and $\theta \ge K^2 d^{-1/2+\epsilon}$ for some constant $\epsilon >0$ and $K = o(d^{1/32})$, if we take $p \gtrsim \sqrt{d}$, $p' \ge \max(2K,K+7)$, $L \gg K^5 d^2/p$ and $q \ge 2 + \log (Ld)/\log \log d $,  then \eqref{eq: general clt vk large L} holds.
Furthermore, if we denote $\widetilde{\bSigma}_j = \mathbf{\Lambda}^{-1}\Vb^{\top} \diag \big(\left[\Mb_{jj'}(1-\Mb_{jj'})\right]_{j' \in [d]}\big) \Vb \mathbf{\Lambda}^{-1}$, we have
\begin{equation}\label{eq: col SBM 2}
    \widetilde{\bSigma}_j^{-1/2}(\widetilde{\Vb}^{\rm F} \Hb - \Vb)^{\top} \eb_j \overset{d}{\rightarrow} {\cN}(\mathbf{0}, \Ib_K), \quad \forall j \in [d].
\end{equation}
Besides, define $\widetilde{\Mb} =(\widetilde\Vb^{\rm F} \widetilde\Vb^{{\rm F}\top})\widehat\Mb(\widetilde\Vb^{\rm F}\widetilde\Vb^{{\rm F}\top})$ and  $\widetilde{\mathbf{\Lambda}} = \widetilde\Vb^{{\rm F}\top}\widehat\Mb\widetilde\Vb^{\rm F}$, then if we estimate $\widetilde\bSigma_j$ by $\widehat{\bSigma}_j = \widetilde{\mathbf{\Lambda}}^{-1} \widetilde\Vb^{{\rm F}\top} \diag\big([\widetilde\Mb_{jj'}(1-\widetilde\Mb_{jj'})]_{j' \in [d]}\big)\widetilde\Vb^{\rm F} \widetilde{\mathbf{\Lambda}}^{-1} $,
we have 
\begin{equation}\label{eq: est cov SBM}
{\hbSigma}_j^{-1/2}(\widetilde{\Vb}^{\rm F}  - \Vb\Hb^{\top})^{\top}  \eb_j \overset{d}{\rightarrow} {\cN}(\mathbf{0}, \Ib_K), \quad \forall j \in [d].
\end{equation}

}

\end{corollary}
\begin{remark}\label{rmk: col sbm}
The proof is deferred to Section~\ref{sec: proof col sbm}. We can obtain $\widetilde\bLambda$ by computing $\widetilde\Vb^{\F \top} \Xb^{(s)}$ in parallel for $s \in [m]$, and  the computational cost for $\widehat\bSigma_j$ is $O(d^2K/m)$. To achieve the optimal computational efficiency, we would take $p =\lceil \sqrt{d}\rceil$ and $L =\lceil K^5 d^{3/2} \log d\rceil$. Hence taking $q = \lceil \log d \rceil$ is sufficient, and the total computational cost will be $O(K^7 d^{5/2}(\log d)^2)$. Inferential analyses on the membership profiles has received attention in previous works \citep{fan2022simple, shenlu2020sbm}. \citet{fan2022simple} studied the asymptotic normality of the spectral estimator under the DCMM model with complicated assumptions on the eigen-structure (see Conditions 1, 3, 6, 7 in their paper). In comparison, we only impose non-singularity conditions on the membership profiles, but have a stronger scaling condition on the signal strength to facilitate the divide-and-conquer process. Our asymptotic covariance is almost the same as \citet{fan2022simple}'s, suggesting the same level of asymptotic efficiency. 
\end{remark}
\subsubsection{Incomplete Matrix Inference}
\begin{corollary}\label{col: missing mat}{
When $d \ge 3$ and $Lp \gg \kappa_2^2 Kd^2$ and $\theta \ge d^{-1/2 + \epsilon}$ for some constant $\epsilon >0$, if we take $p' \ge \max(2K, K+7)$, $p \gtrsim \sqrt{d}$ and $q \ge 2 + \log(Ld)/\log\log d$, then under Assumption \ref{asp: incoh} and the conditions that 
$$\kappa_2^6 K^3\mu^3=o(d^{1/2}) \quad \text{and} \quad \sigma/\Delta \ll \sqrt{\theta/d} \cdot \min\left(\big(\kappa_2^2\sqrt{\mu K} + \kappa_2\sqrt{K \log d}\big)^{-1}, \sqrt{p/d} \right),$$
we have that \eqref{eq: general clt vk large L} holds. Furthermore, if we denote $\widetilde{\bSigma}_j = \mathbf{\Lambda}^{-1}\Vb^{\top}\diag\big([\Mb_{jj'}^2(1-\theta)/\theta + \sigma^2/\theta ]_{j'=1}^d\big) \Vb \mathbf{\Lambda}^{-1}$, we have
\begin{equation}\label{eq: missing mat 2}
    \widetilde{\bSigma}_j^{-1/2}(\widetilde{\Vb}^{\rm F} \Hb - \Vb)^{\top} \eb_j \overset{d}{\rightarrow} {\cN}(\mathbf{0}, \Ib_K), \quad \forall j \in [d].
\end{equation}
 Define 
  $\widetilde\bLambda = \widetilde\Vb^{{\rm F}\top}\widehat\Mb \widetilde\Vb^{\rm F}$ and $\widetilde\Mb = \widetilde\Vb^{\rm F}\widetilde\bLambda\widetilde\Vb^{{\rm F}\top}$. If we estimate  $\sigma^2$ by $\hat\sigma^2 = \sum_{(i,i') \in \cS} (\hat\theta\widehat\Mb_{ii'}-\widetilde\Mb_{ii'})^2 / |\cS|$ and $\widetilde\bSigma_j$ by $\hbSigma_j = \widetilde\bLambda^{-1}\widetilde\Vb^{{\rm F}\top}\diag\big([\widetilde\Mb_{jj'}^2(1-\hat\theta)/\hat\theta + \hat\sigma^2/\hat\theta ]_{j'=1}^d\big) \widetilde\Vb^{\rm F} \widetilde\bLambda^{-1}$, we have 
  \begin{equation}\label{eq: est cov mis mat}
  {\hbSigma}_j^{-1/2}(\widetilde{\Vb}^{\rm F}  - \Vb\Hb^{\top})^{\top}  \eb_j \overset{d}{\rightarrow} {\cN}(\mathbf{0}, \Ib_K), \quad \forall j \in [d].
  \end{equation}
  
}
\end{corollary}
\begin{remark}\label{rmk: col missing mat}
Please see Supplementary Materials~\ref{sec: proof col missing mat} for the proof of Corollary~\ref{col: missing mat}. We compute $\widetilde\bLambda$ by calculating $\widetilde\Vb^{\F \top} \Xb^{(s)}$ in parallel, and then $\widetilde\bLambda$ can be communicated across servers at low cost for computing $\hat\sigma^2$.
The total computational cost for calculating $\hbSigma_j$ is $O(d^2 K/m)$.  We recommend taking $p = \lceil \sqrt{d} \rceil$, $L = \lceil \kappa_2^2 K d^{3/2}\log d\rceil$ and $q = \lceil \log d\rceil$, and the total computational cost will be $O(K^3 d^{5/2}(\log d)^2)$. \citet{Chen2019matcompinf} studied the incomplete matrix inference problem through penalized optimization, and their testing efficiency is the same as ours.
\end{remark}

 We do not have distributional results for Examples~\ref{ex: SBM} and \ref{ex: missing mat} under the regime $Lp \ll d$. An intuitive explanation would be that the information contained in each entry is independent for Example~\ref{ex: SBM} and Example~\ref{ex: missing mat}, and when $Lp \ll d$, too much information will be lost from the $d \times d$ graph or matrix. In comparison, we can still recover information from Examples~\ref{ex: spiked gaussian} and \ref{ex: GMM} under the regime $Lp \ll d$ due to the correlation structure of the matrix. 
\section{Additional Simulation Results}\label{sec: additional simus}
In this section we present the simulation results for Example~\ref{ex: SBM} and Example~\ref{ex: missing mat}, and we provide some additional simulation results for Example~\ref{ex: spiked gaussian} to evaluate the performance of FADI under the  genetic settings.

\subsection{Example~\ref{ex: SBM}: Degree-Corrected Mixed Membership Models}\label{sec: exm2 simu}
 We consider the mixed membership model without degree heterogeneity for the simulation, i.e., $\bTheta = \sqrt{\theta} \Ib_d$, and $\Mb = \theta \mathbf{\Pi} \Pb \mathbf{\Pi}^{\top}$. For two preselected nodes $j,j' \in [d]$, we test 
 ${\rm H}_0: \boldsymbol{\pi}_j = \boldsymbol{\pi}_{j'}$ vs. ${\rm H}_1: \boldsymbol{\pi}_j \neq  \boldsymbol{\pi}_{j'}$ by testing

whether $ \Vb^{\top} (\eb_j - \eb_{j'}) = 0$.  
To simulate the data, we set $\theta = 0.9$, $K = 3$, and set the membership profiles $\mathbf{\Pi}$ and the connection probability matrix $\Pb$ to be 
{\singlespace
$$
    \boldsymbol{\pi}_j=\left\{\begin{array}{ll}(1,0,0)^{\top} & \text { if } 1 \le j \le \lfloor d/6 \rfloor \\ (0,1,0)^{\top} & \text { if } \lfloor d/6 \rfloor <j \le \lfloor d/3 \rfloor \\ (0,0,1)^{\top} & \text{ if } \lfloor d/3 \rfloor < j \le \lfloor d/2 \rfloor \\ (0.6,0.2,0.2)^{\top} & \text{ if } \lfloor d/2 \rfloor < j \le \lfloor 5d/8 \rfloor \\ (0.2,0.6,0.2)^{\top} & \text{ if } \lfloor 5d/8 \rfloor < j \le \lfloor 3d/4 \rfloor \\ (0.2,0.2,0.6)^{\top} & \text{ if } \lfloor 3d/4 \rfloor < j \le \lfloor 7d/8 \rfloor \\ (1/3,1/3,1/3)^{\top} & \text{ if } \lfloor 7d/8 \rfloor < j \le \lfloor d \rfloor \end{array}\right., \quad \Pb = \begin{pmatrix}
1 & 0.2 & 0.1\\
0.2 & 1 & 0.2\\
0.1 & 0.2 & 1 \end{pmatrix}.
$$
}

We test the performance of FADI under $d \in \{ 500, 1000, 2000\}$ respectively, and under each setting of $d$, we take $m=10$, $p = p' = 12$, $q = 7$ and set $L$ by the ratio $Lp/d \in \{ 0.2, 0.6, 0.9, 1, 1.2, 2, 5, 10\}$. For each setting, we conduct 300 independent Monte Carlo simulations. { To perform the test, with minor modifications of Corollary~\ref{col: sbm}, we can show that 
\begin{equation}\label{eq: col simu SBM cov}
    \widetilde\bSigma_{j,j'}^{-1/2} (\widetilde\Vb^{\F}\Hb - \Vb)^{\top}(\eb_j - \eb_{j'}) \overset{d}{\rightarrow} {\cN}(\mathbf{0}, \Ib_K),
\end{equation}
where the asymptotic covariance is defined as $\widetilde\bSigma_{j,j'} = \widetilde\bSigma_j + \widetilde\bSigma_{j'}$
and can be consistently estimated by $\hbSigma_{j,j'} = \hbSigma_j + \hbSigma_{j'}$.
} 
We first preselect two nodes, which we denote by $j$ and $j'$, with membership profiles both equal to $(0.6,0.2,0.2)^{\top}$ and calculate the empirical coverage probability of $\PP\big(\|\tilde{\bd}\|_2^2  \le \chi_3^2(0.95)\big)$, where $\tilde\bd = \hbSigma_{j,j'}^{-1/2}\widetilde\Vb^{\F \top}(\eb_j - \eb_{j'})$.
We also evaluate the power of the test by choosing two nodes with different membership profiles equal to $(0.6, 0.2,0.2)^{\top}$ and $(1/3,1/3,1/3)^{\top}$ respectively, which we denote by $j$ and $k$. We empirically calculate the power $\PP\big(\|\tilde{\bd'}\|_2^2 \ge \chi_3^2(0.95)  \big)$, where $\tilde\bd' = \hbSigma_{j,k}^{-1/2}\widetilde\Vb^{\F \top}(\eb_j - \eb_k)$. Under the regime $Lp/d < 1$, we calculate the asymptotic covariance referring to Theorem  \ref{thm: leading term} by $$\widehat\bSigma_{j,j'} =L^{-2}\widehat\Bb_{\bOmega}^{\top}\bOmega{\top} \diag\left([\widetilde\Mb_{jk}(1 - \widetilde\Mb_{jk}) + \widetilde\Mb_{j'k}(1 - \widetilde\Mb_{j'k})]_{k=1}^d\right) \bOmega \widehat\Bb_{\bOmega},$$ where $\widehat\Bb_{\mathbf\Omega} = (\widehat\Bb^{(1) \top}, \ldots, \widehat\Bb^{(L)\top})^{\top}$ with $\widehat\Bb^{(\ell)} = (\widetilde\Vb^{{\rm F}\top} \widehat\Yb^{(\ell)}/\sqrt{p})^{\dagger} \in \RR^{p \times K}$ for $\ell = 1,\ldots,L$. We also apply k-means to $\widetilde\Vb^{\rm F}$ to differentiate different membership profiles and compare the misclustering rate with the traditional PCA. The results of different settings are shown in Figure~\ref{fig: simu 1 exm 2}. We can see from Figure~\ref{fig: simu 1 exm 2}(d) that under the regime $Lp/d < 1$, the empirical coverage probability is zero under all settings, which validates the necessity of $Lp/d \gg 1$ for performance guarantee. Figure~\ref{fig: simu 1 exm 2}(f) demonstrates the asymptotic normality of $\tilde\bd_1$ at $Lp/d = 10$ and poor Gaussian approximation of FADI at $Lp/d = 0.2$, { where $\tilde\bd_1$ is the first entry of $\tilde\bd$}. 

We also compare FADI with the SIMPLE method \citep{fan2022simple} on the membership profile inference under the DCMM model. The SIMPLE method conducted inference directly on the traditional PCA estimator $\widehat\Vb$ and adopted a one-step correction to the empirical eigenvalues for calculating the asymptotic covariance matrix. We compare the inferential performance of FADI at $Lp/d = 10$ with the SIMPLE method (under 100 independent Monte Carlos), and summarize the results in Table~\ref{tab: inf comp fadi simple}, where the running time includes both the PCA procedure and the computation time of $\hbSigma_{j,j'}$. Compared to the SIMPLE method, our method has a similar coverage probability and power but is  computationally more efficient.

  \begin{table}[htbp]
  \centering
    \begin{tabular}{c c c| r r | r r| r r}
      \hline
      \hline
      \multicolumn{3}{c|}{Parameters}& \multicolumn{2}{c|}{Coverage probability} & \multicolumn{2}{c|}{Power} & \multicolumn{2}{c}{Running time (seconds)}  \\
      \hline
      {$d$}& $p$ & $L$ & {FADI} & SIMPLE & {FADI} & SIMPLE & {FADI} & SIMPLE  \\
      \hline
500 & 12 & 417 & 0.91 & 0.92 & 0.87 & 0.88 & 0.21  & 0.73\\
1000 & 12 & 833 & 0.94 & 0.94  & 1.00  & 1.00 & 0.69  & 6.77\\
2000 & 12 & 1667 & 0.95 & 0.98  & 1.00 & 1.00 & 2.61  & 59.42\\
      \hline
      \hline
    \end{tabular}
    \caption{\small Comparison of the coverage probability, power and running time (in seconds) between FADI and SIMPLE \citep{fan2022simple} under different settings of $d$.  In all settings, we take $m = 10$, $p = p' = 12$, $q=7$ and set $Lp/d =10$ for FADI.}\label{tab: inf comp fadi simple}
  \end{table}
 \begin{figure}[ht]
		\centering
  \begin{tabular}{ccc}
      \quad \quad (a) Error Rate &  \quad (b) Misclustering Rate & \quad \!\!(c) Running Time \\
      \includegraphics[height = 0.3\textwidth]{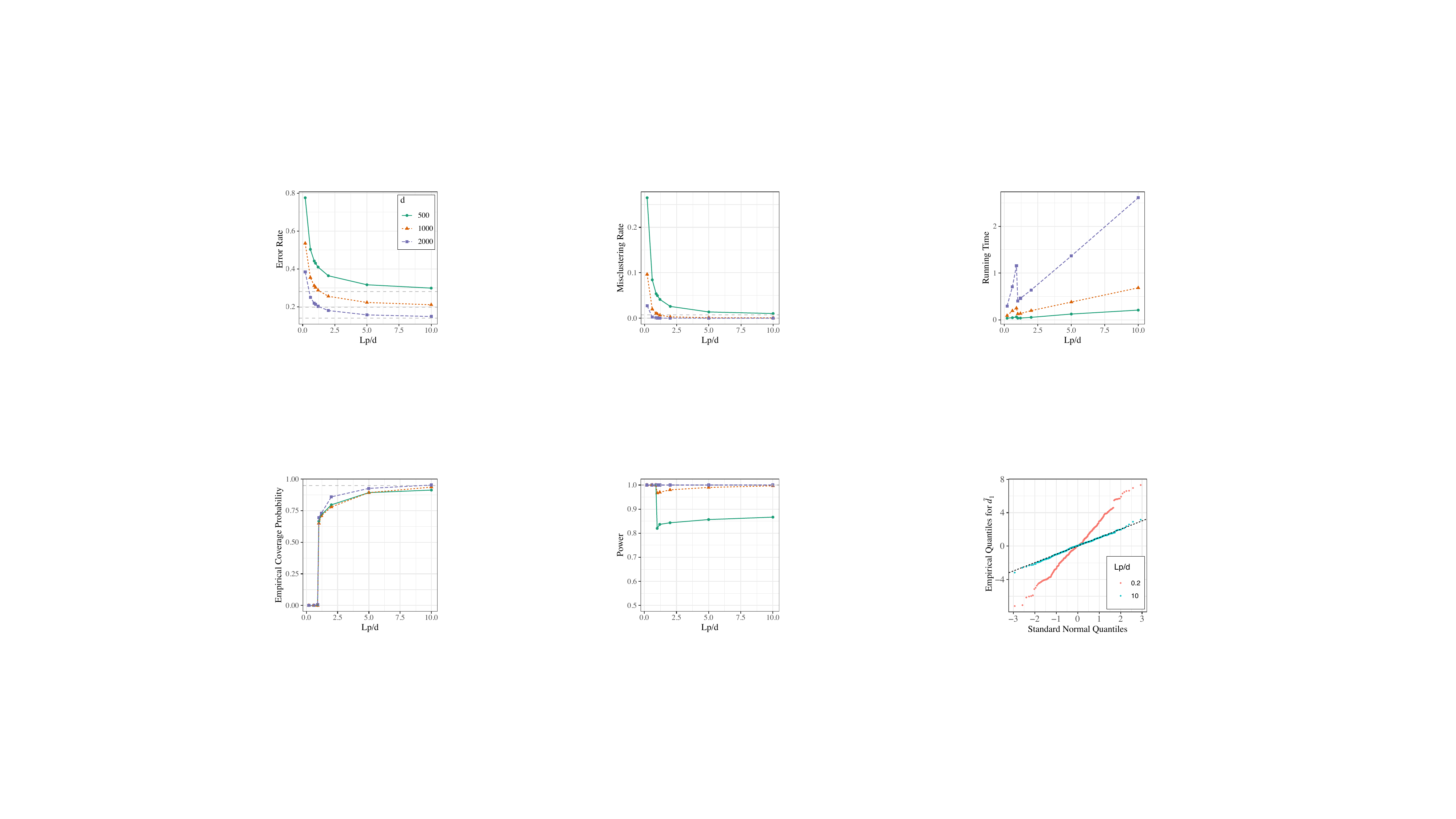} & \!\!\!\!\!\!\includegraphics[height = 0.3\textwidth]{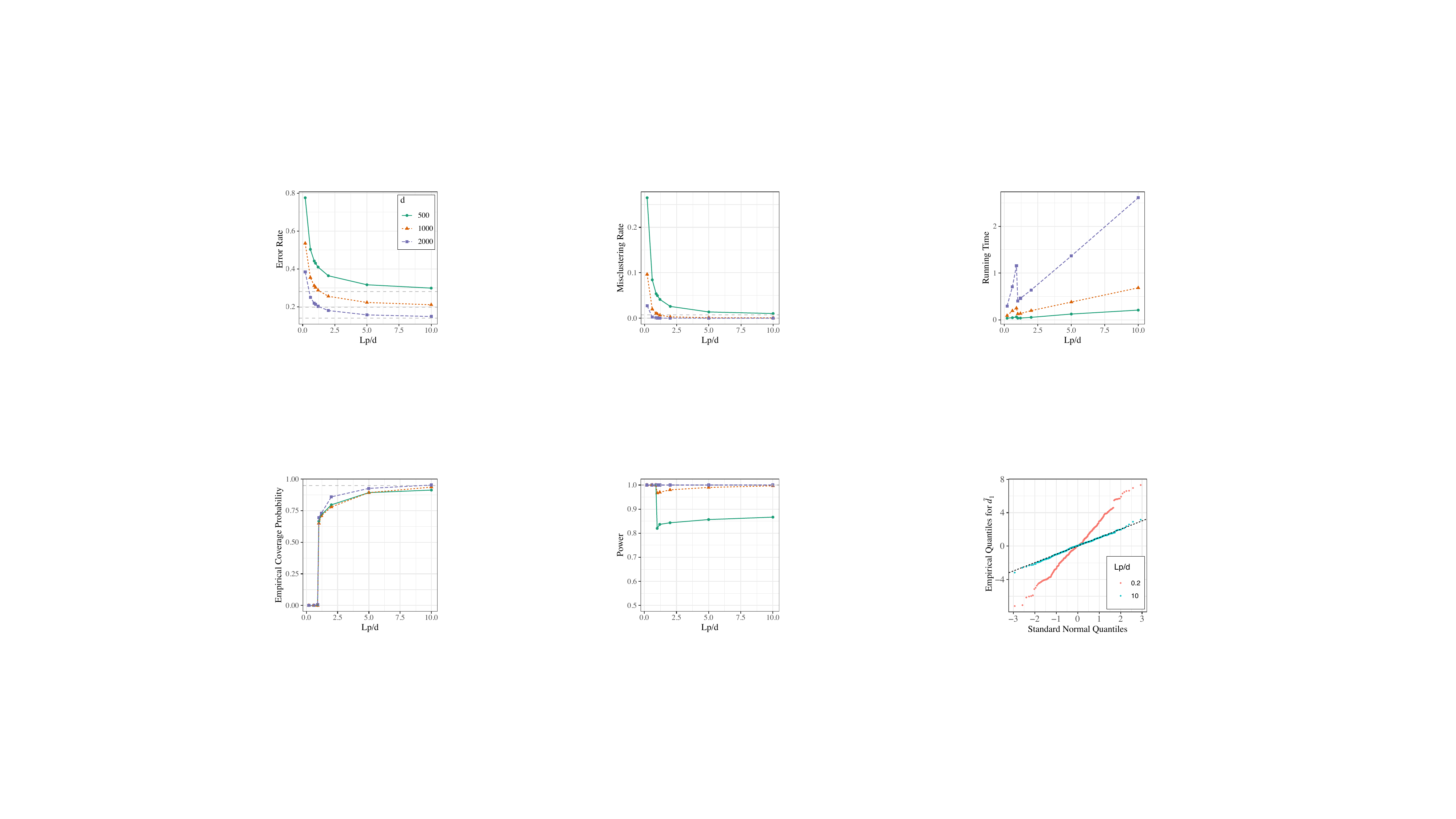} & \!\!\!\!\!\!\!\includegraphics[height = 0.3\textwidth]{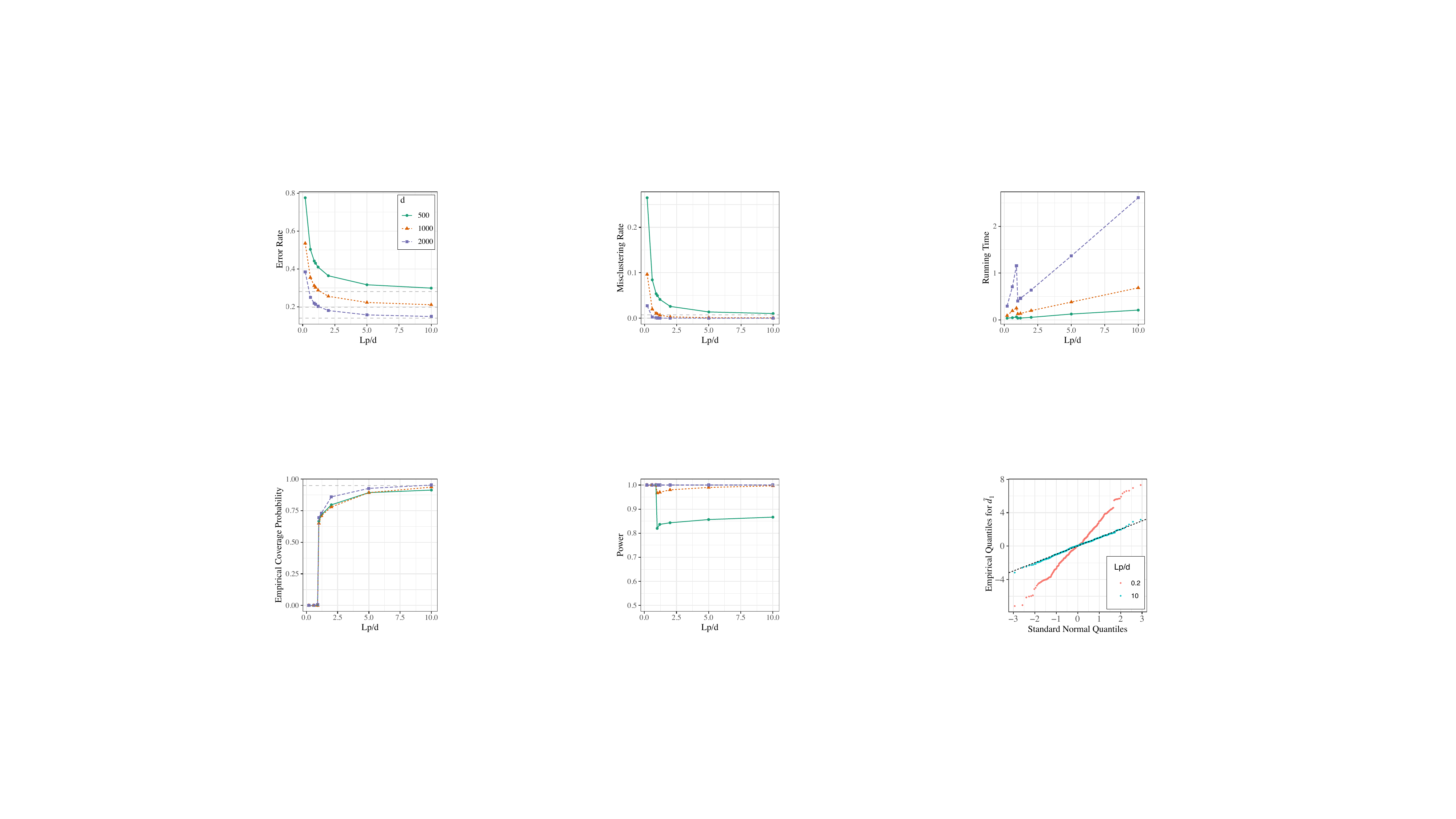}\\
      \quad (d) Coverage Probability &\quad  (e) Power & \quad (f) Q-Q Plot \\
      \includegraphics[height = 0.3\textwidth]{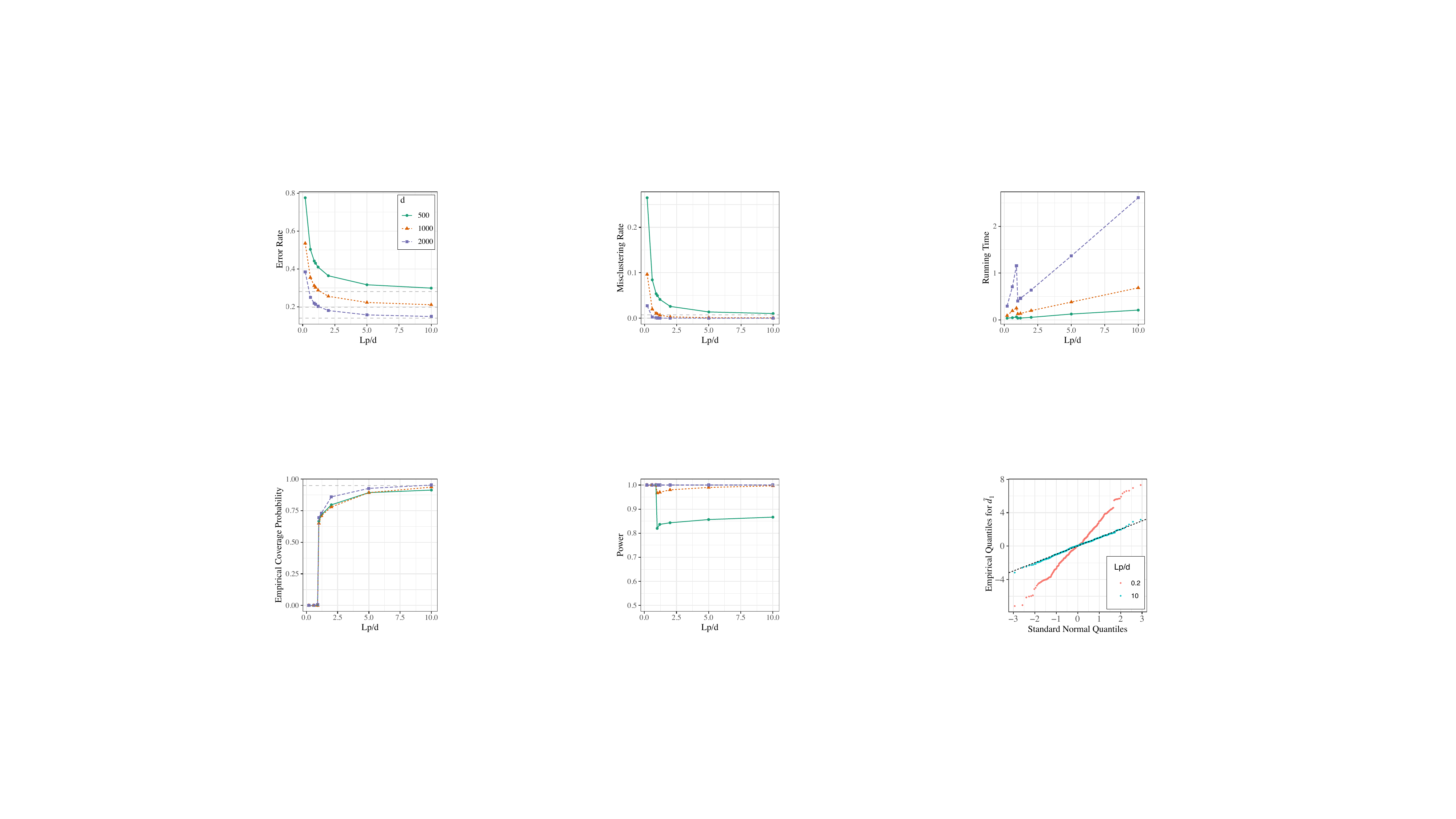} & \!\!\!\!\!\! \includegraphics[height = 0.3\textwidth]{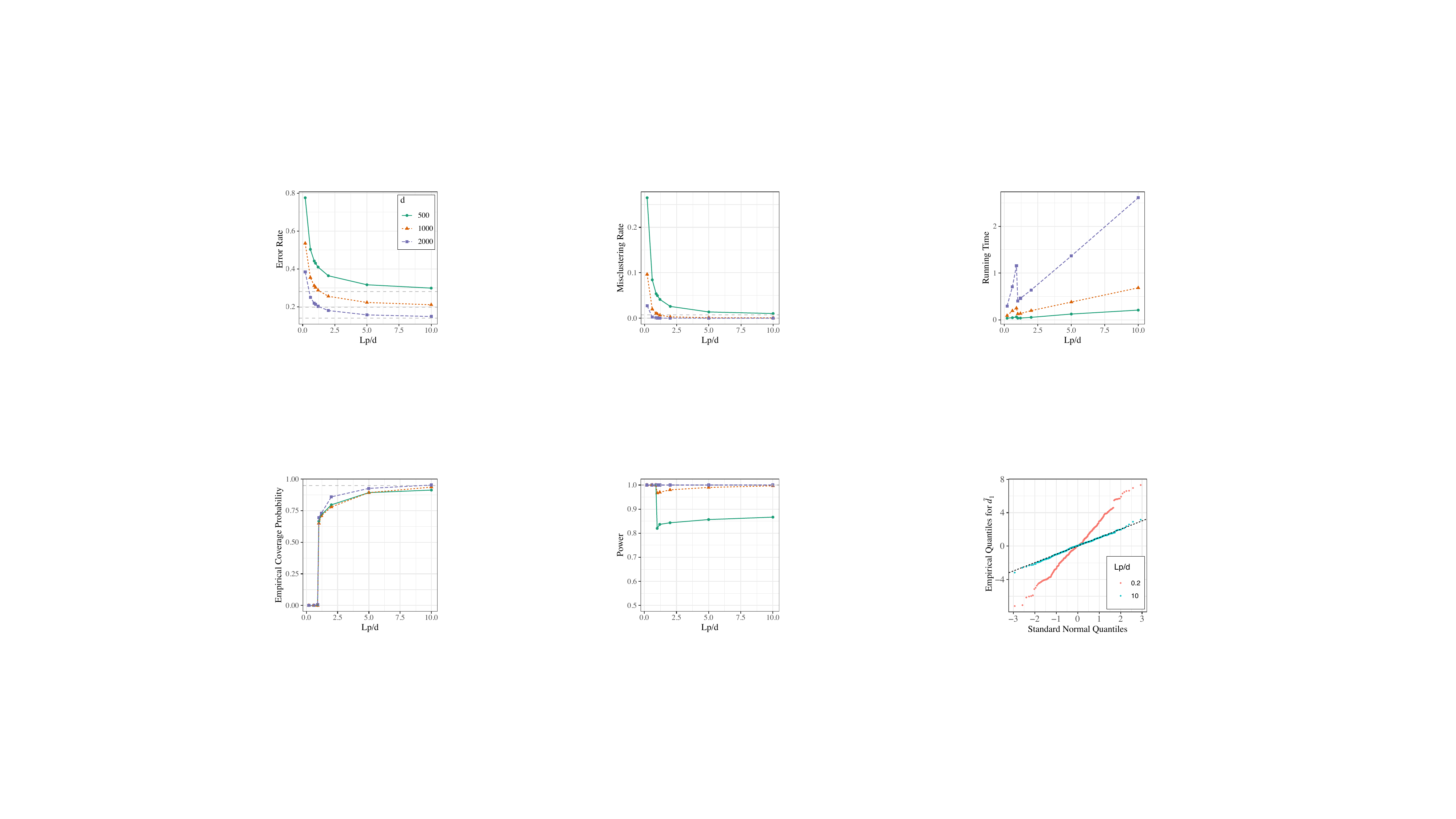} &\!\!\!\!\!\!\!\! \includegraphics[height = 0.3\textwidth]{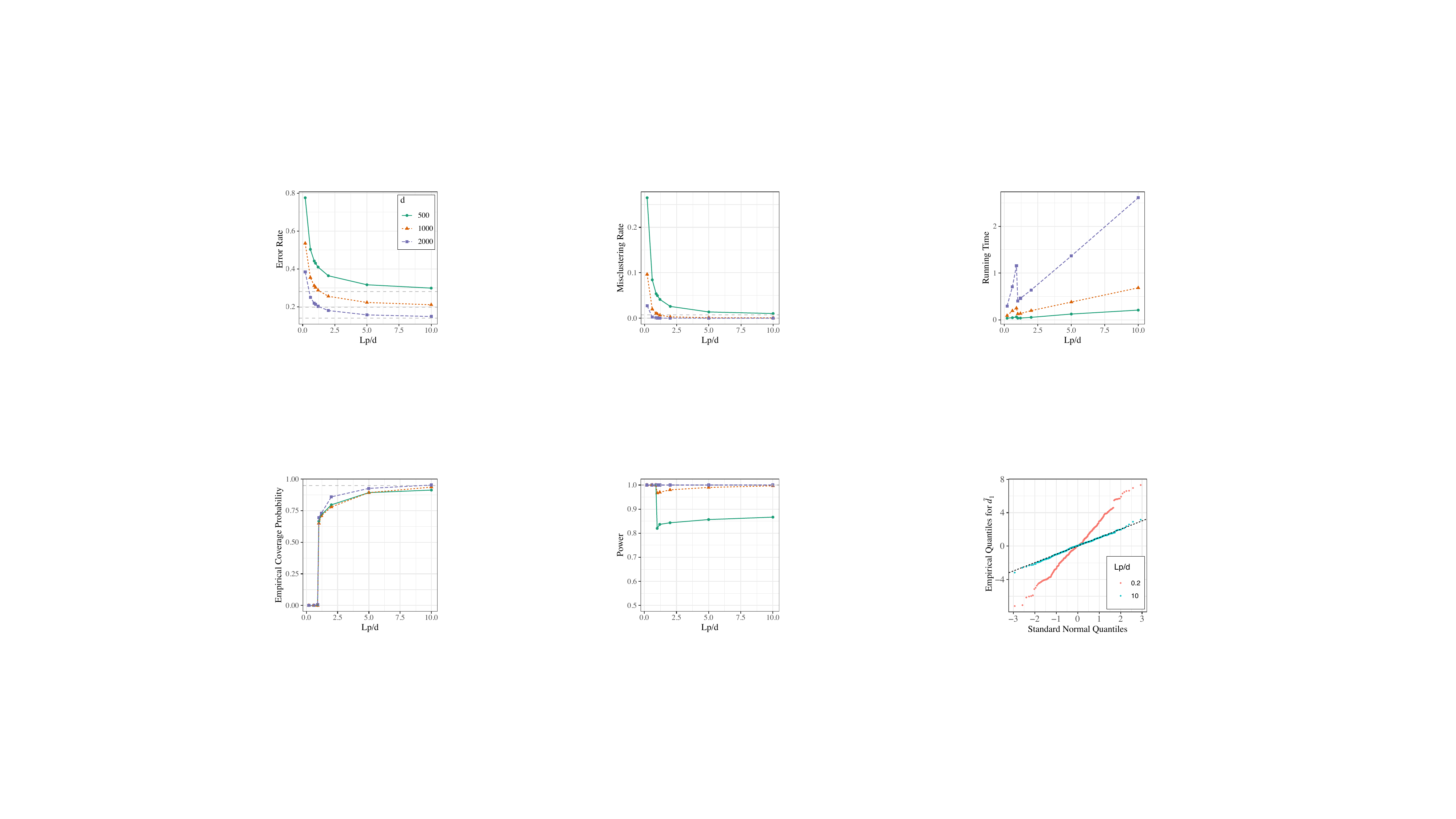}
  \end{tabular}
   \caption{\small Performance of FADI under different settings for Example~\ref{ex: SBM}. (a) Empirical error rates of $\cD(\widetilde\Vb^{\rm F}, \Vb)$; (b) Misclustering rate for $\widetilde\Vb^{\rm F}$ by K-means with grey dashed lines representing the misclustering rates for the traditional PCA estimator $\widehat\Vb$; (c) Running time (in seconds) under different settings (including computing $\hbSigma_{j,j'}$). For the traditional PCA, the running time is 0.43 seconds at $d = 500$, 3.77 seconds at $d = 1000$ and 32.62 seconds at $d=2000$; (d) Empirical coverage probability ($1 -$ Type I error); (e) Power of the test; (f) Q-Q plot for $\tilde\bd_1$ at $Lp/d \in \{0.2, 10\}$.    }\label{fig: simu 1 exm 2}
	\end{figure}
	

\subsection{Example~\ref{ex: missing mat}: Incomplete Matrix Inference}\label{sec: exm4 simu}
For the true matrix $\Mb$, we consider $K = 3$, take $\Vb$ to be the $K$ left singular vectors of a pregenerated $d \times K$ i.i.d. Gaussian matrix, and take $\bLambda = \diag(6, 4, 2)$. We { consider the distributed setting $m=10$, and} set the dimension at $d \in \{500, 1000, 2000\}$ respectively, and set $\theta = 0.4$ and $\sigma = 8/d$ for each setting. Then we generate the entry-wise noise by $\varepsilon_{ij} \overset{\text{i.i.d.}}{\sim} {\cN}(0, \sigma^2)$ for $i \le j$, and subsample non-zero entries of $\Mb$ with probability $\theta = 0.4$. Under each setting, we perform FADI at $p=p'=12$, $q=7$ and  $Lp/d \in \{0.2,0.6,0.9,1,1.2,2,5,10\}$ for the computation of $\widetilde\Vb^{\F}$. Define $\tilde\vb = \hbSigma_1^{-1/2}(\widetilde\Vb^{\F} - \Vb \Hb^{\top})^{\top}\eb_1$ with $\hbSigma_1$ being the asymptotic covariance for $\widetilde\Vb^{\F \top} \eb_1$ defined in Corollary~\ref{col: missing mat} and $\Hb = \sgn(\widetilde\Vb^{\F\top}\Vb)$, and empirically calculate the coverage probability, i.e., $\PP\big(\|\tilde\vb\|_2^2  \le \chi_3^2(0.95)\big)$. Similar as in Section~\ref{sec: exm2 simu}, for the regime $Lp < d$, we refer to Theorem~\ref{thm: leading term} and calculate $\hbSigma_1$ by
$$
\hbSigma_1 = L^{-2} \widehat\Bb_{\bOmega}^{\top} \bOmega^{\top} \diag\big([\widetilde\Mb_{1j}^2(1-\hat\theta)/\hat\theta + \hat\sigma^2/\hat\theta ]_{j=1}^d\big) \bOmega \widehat\Bb_{\bOmega}.
$$
Results over 300 Monte Carlo simulations are provided in Figure~\ref{fig: exm4 simu}. Figure~\ref{fig: exm4 simu}(a) illustrates that the error rate of FADI is almost the same as the traditional PCA as $Lp/d$ gets larger, and Figure~\ref{fig: exm4 simu}(b) shows that the computational efficiency of FADI greatly outperforms the traditional PCA for large dimension $d$. We can observe from Figure~\ref{fig: exm4 simu}(c) that the confidence interval performs poorly at $Lp/d < 1$ with the coverage probability equal to 1, which is consistent with the theoretical conditions in Corollary~\ref{col: missing mat} for distributional convergence. Figure~\ref{fig: exm4 simu}(d) shows the good Gaussian approximation of FADI at $Lp/d=10$, and the results at $Lp/d = 0.2$ is consistent with Figure~\ref{fig: exm4 simu}(c).  
\begin{figure}[ht]
		\centering
		\begin{tabular}{cc}
       \quad (a) Error Rate &  \quad (b) Running Time \\
  \includegraphics[height=0.3\textwidth]{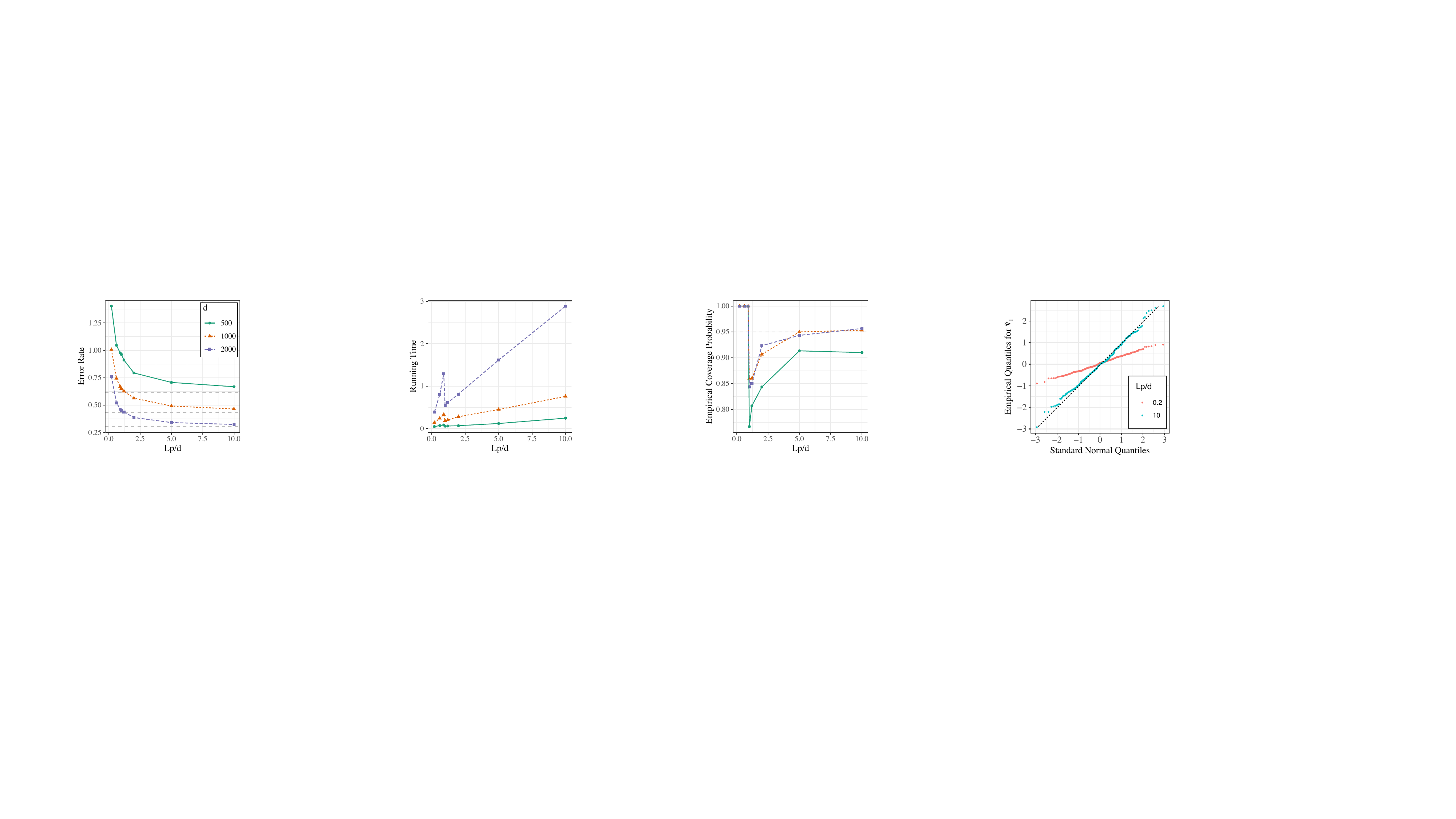} & \includegraphics[height=0.3\textwidth]{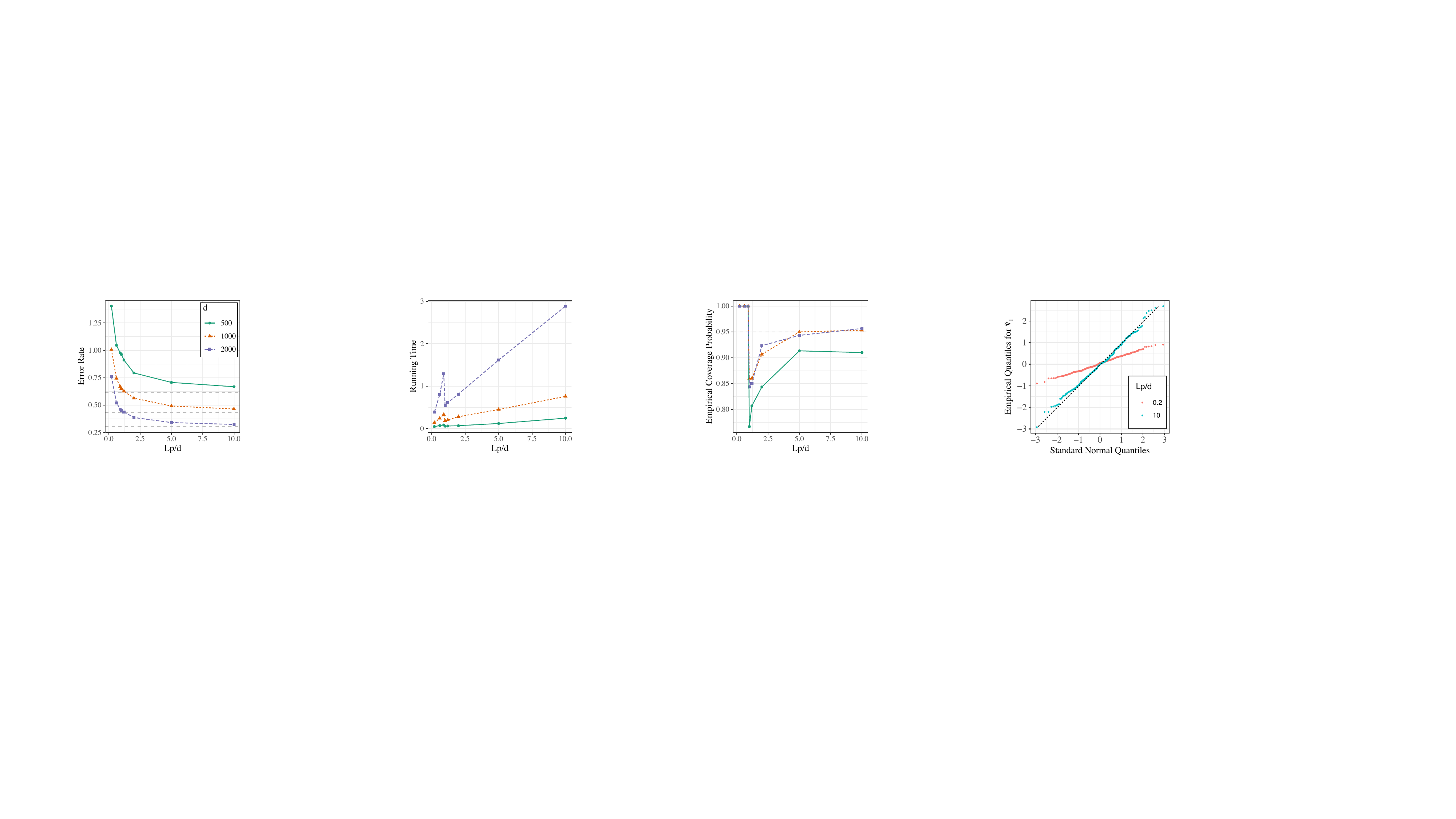} \\
 \quad (c) Coverage Probability & \quad (d) Q-Q Plot \\
 \includegraphics[height=0.3\textwidth]{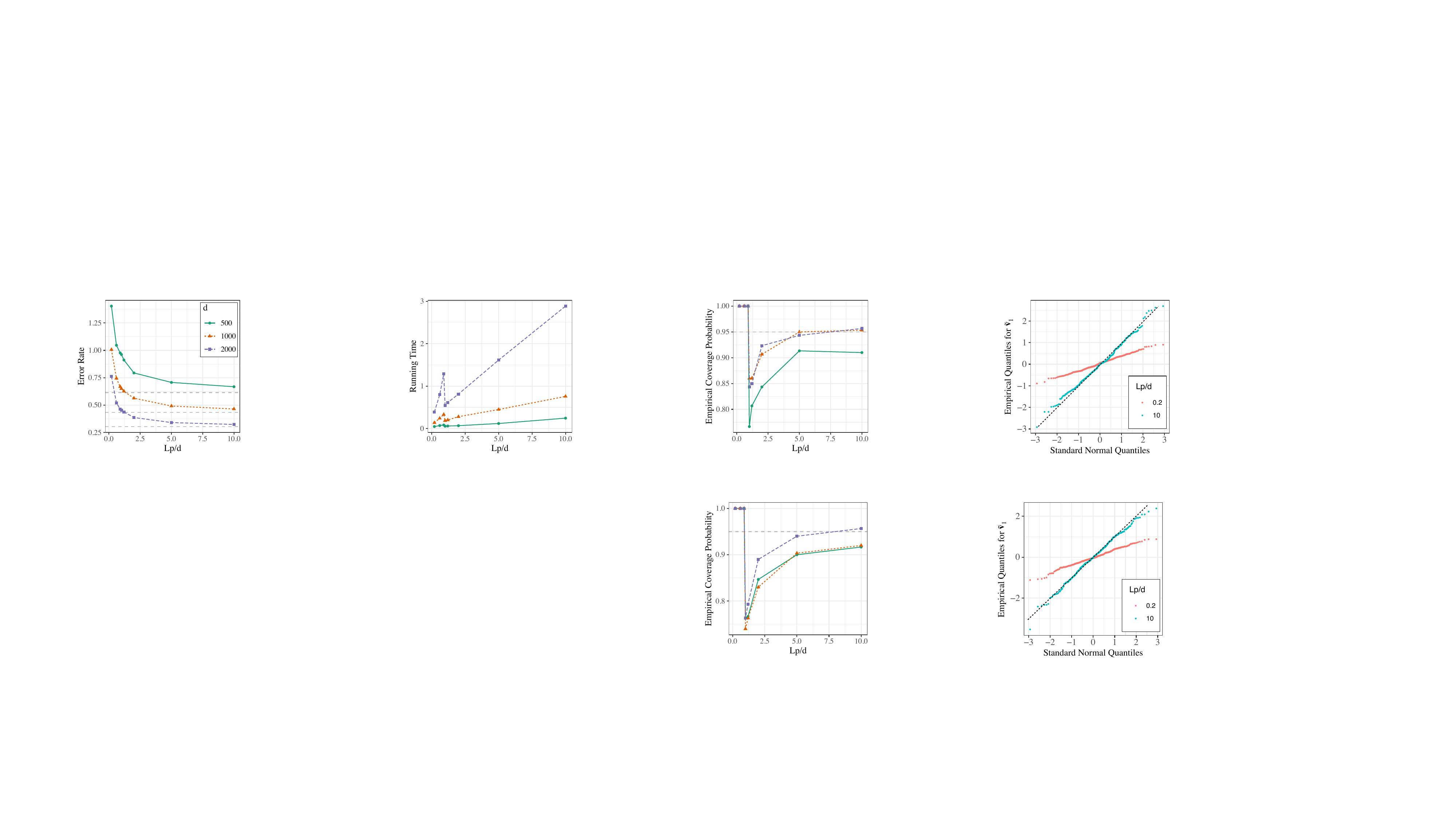} & \includegraphics[height=0.3\textwidth]{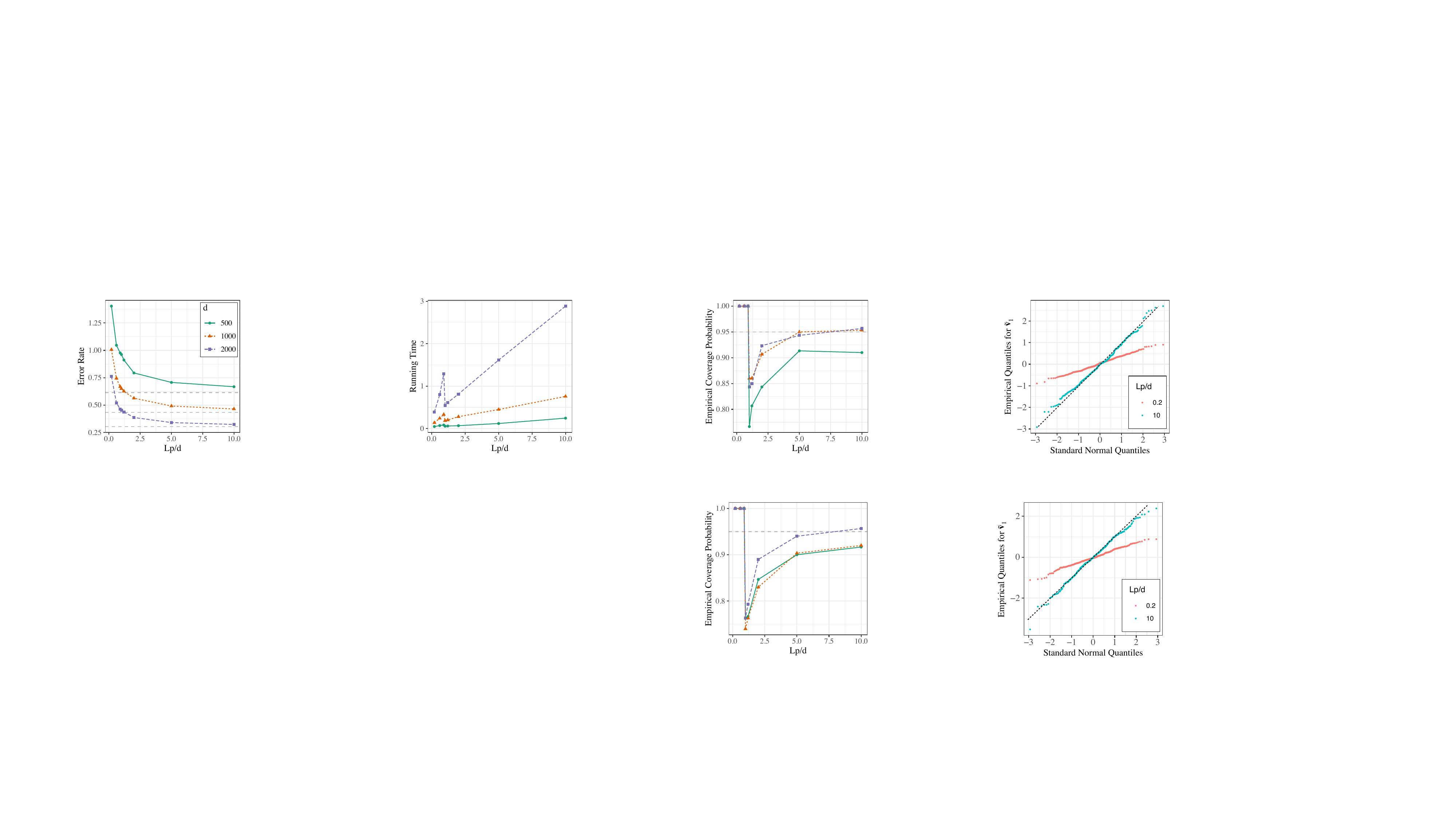}
  \end{tabular}
		\caption{\small Performance of FADI under different settings for Example~\ref{ex: missing mat}. (a) Empirical error rates of $\cD(\widetilde\Vb^{\rm F}, \Vb)$ with traditional PCA error rates as the reference; (b) Running time (in seconds) under different settings (including the computational time of $\hbSigma_1$). For the traditional PCA, the running time is 0.42 seconds at $d = 500$, 3.48 seconds at $d = 1000$ and 30.62 seconds at $d=2000$; (c) Empirical coverage probability; (d) Q-Q plot for $\tilde\vb_1$ at $Lp/d = 10$.}\label{fig: exm4 simu}
	\end{figure}
\subsection{Additional Results for Example~\ref{ex: spiked gaussian} in the Genetic Setting}\label{sec: exm1 simu gene setting}

 Section~\ref{sec: exm1 simu} compares FADI with several existing methods under a relatively large eigengap.
 In practice, the eigengap of the population covariance matrix may not be large. To assess different methods in a more realistic scenario, we imitate the setting of the 1000 Genomes Data, where we take the number of spikes $K= 20$, $\sigma^2 = 0.4$ and the eigengap to be $\Delta = 0.2$. We generate the data by $\{\bX_i\}_{i=1}^n \overset{\text{i.i.d.}}{\sim} {\cN}(\mathbf{0},\mathbf{\Sigma})$, where
 $$\mathbf{\Sigma} = \operatorname{diag} (2.4, 1.2, \underbrace{0.6, \ldots, 0.6}_{K-2}, 0.4 \ldots, 0.4).$$
 The dimension is $d = 2504$ and the sample size is $n=160,000$. Error rates and running times using different algorithms are compared  under different number of splits $m$ for the sample size $n$. For FADI, we take $L = 75$, $p=p' = 40$ and $q = 7$. 


 { 
 Table \ref{tab: err rate realistic} shows that the number of sample splits $m$ has little impact on the error rate of FADI as expected, while the error rate of \citet{fandistributed2019}'s distributed PCA increases as $m$ increases.  FADI is much faster than the other two methods
 in all the practical settings when the eigengap is small.
 This suggests that in practical problems where the sample size is large and the eigengap is small, FADI not only enjoys much higher computational efficiency compared to the existing methods, but also gives stable estimation for different sample splits along the sample size $n$.    Although the settings of small eigengap are of major interest in this section, we still conduct simulations where the eigengap increases gradually to see how it affects the performance of FADI. Table \ref{tab: err rate n runtime realistic incr det} shows that as the eigengap gets larger, the error rate of FADI gets closer to that of the traditional full sample PCA, whereas the error rate ratio of distributed PCA 
  to FADI gets below 1, but are still above 0.9 when the eigengap is larger than 1. As to the running time, FADI outperforms the other two methods in all the settings. In summary, when the eigengap grows larger, the performance of the three algorithms becomes similar to what we see in Section~\ref{sec: exm1 simu}.
 }
 
\begin{table}[htbp]
\centering
    \begin{tabular}{  c c c c|c}
      \hline
      \hline
 & FADI & Traditional PCA & Distributed PCA & $m$ \\
      \hline
Error Rate & 2.296 & 1.811 (0.79) & 2.629 (1.15)  & 10\\
& 2.294 & 1.811 (0.79) & 3.412 (1.49)  & 20\\
& 2.294 & 1.811 (0.79) & 3.955 (1.72)  & 40\\
& 2.294 & 1.811 (0.79) & 4.215 (1.84)  & 80\\
      \hline
Running Time & 5.76 & 983.86 (170.8) & 189.76 (32.9) & 10\\
& 3.82 & 992.09 (259.8) & 144.18 (37.8) & 20\\
& 2.86 & 972.47 (339.5) & 119.29 (41.6) & 40\\
& 2.37 & 968.43 (408.5) & 99.39 (41.9)  & 80\\
      \hline
      \hline
    \end{tabular}
    \caption{\small Comparison of the error rates {and running times (in seconds)} among FADI, full sample PCA and distributed PCA   \citep{fandistributed2019}, using different numbers of sample splits $m$ in the genetic setting. Values in the parentheses represent the error rate ratios or the computational time ratios of each method with respect to FADI.
    }\label{tab: err rate realistic}
  \end{table}


  \begin{table}[htbp]
  \centering
    \begin{tabular}{ c c c c |c}
      \hline
      \hline
       & FADI & Traditional PCA & Distributed PCA& Eigengap\\
      \hline
Error Rate&1.28 & 1.06 (0.82) & 1.57 (1.22) & 0.4\\
  &0.77 & 0.65 (0.85) & 0.71 (0.92) & 0.8\\
&0.48 & 0.42 (0.88) & 0.43 (0.90) & 1.6\\
&0.31 & 0.29 (0.92) & 0.29 (0.93) & 3.2\\
      \hline
Running Time&2.76 & 925.15 (334.7) & 115.29 (41.7)  & 0.4\\
 &2.77 & 916.52 (331.4)& 114.76 (41.5)  & 0.8\\
&2.69 & 922.85 (342.7) & 114.75 (42.6)  & 1.6\\
&2.77 & 919.20 (332.2) & 115.26 (41.7)  & 3.2\\
      \hline
      \hline
    \end{tabular}
    \caption{\small Comparison of the error rates and running times (in seconds) among FADI, full sample PCA and distributed PCA \citep{fandistributed2019} for different eigengaps $\Delta$ in the genetic setting. The number of sample splits $m$ is 40 for FADI and distributed PCA. 
    The settings of the other parameters are the same as those in Table \ref{tab: err rate realistic}. 
    }\label{tab: err rate n runtime realistic incr det}
  \end{table}
{
\subsection{Performance of Powered Sketching Method on Example~\ref{ex: spiked gaussian} for Computing PCs for Local Splits}\label{sec: add simu local power mtd}
As discussed in Remark~\ref{rmk: different bias reductions} in Section~\ref{sec: alg}, the power method cannot be directly applied to distributed data $\widehat\Mb = \sum_{s=1}^m \widehat\Mb^{(s)}$ in Step 1 due to data communication restrictions. In this section, we evaluate the performance of applying the power method independently to each data split with subsequent aggregation. Specifically, we generate $m$ i.i.d. Gaussian test matrices $\{ \bOmega^{(s)} \}_{s=1}^m \subset \RR^{d \times p}$ and compute $\widehat\Vb^{(s)}$ locally as the top $K$ left singular vectors of $\widehat\Yb^{(s)} = (\widehat\Mb^{(s)})^q \bOmega^{(s)}$. Then, we apply the same aggregation method in Step 3 to $\{\widehat\Vb^{(s)}\}_{s = 1}^{m}$ to obtain the final estimator
$\widetilde\Vb^{\F}$. 


Table~\ref{tab: compare FADI power trad} presents numerical comparisons between FADI, the power method, and traditional (centralized) PCA on Example~\ref{ex: spiked gaussian}. We generate the data from the same covariance structure as specified in Section~\ref{sec: exm1 simu}, and divide the sample equally into $m$ splits. The results show that the power method yields significantly higher error rates than FADI and traditional PCA when the sample size per split is smaller than or equal to the dimension $d$, while the running times of FADI and the power method are similar across settings. FADI's performance remains stable across different values of $m$, with consistent error rates as $m$ increases, indicating its suitability for distributed data settings where sample sizes per split are limited. 
\begin{table}[htbp]
    \centering
     \resizebox{1\textwidth}{!}{%
     
    \begin{tabular}{ccc| r c r| r c r}
    \hline
    \hline
       \multicolumn{3}{c|}{Parameters} & \multicolumn{3}{c|}{Error Rate} &   \multicolumn{3}{c}{Running time (seconds)}\\
         \hline
         $d$ & $n_s$ & $m$ & FADI & Power Method & Centralized  & FADI & Power Method & Centralized \\
         \hline
         500 & 1000 & 20 & 0.270 & 0.261 &0.249 & 0.076 & 0.089 & 5.284 \\
         500 & 500 & 40 & 0.270 & 0.285 &0.249 & 0.071 & 0.063 & 5.284 \\
         500 & 250 & 80 & 0.270 & 0.356 &0.249 & 0.065 & 0.059 & 5.284 \\
         1000 & 1000 & 20 &0.377 & 0.410 & 0.351 &0.163 & 0.179 & 22.520 \\
         1000 & 500 & 40 &0.377 & 0.553 & 0.351 &0.152 & 0.114 & 22.520 \\
         1000 & 250 & 80 &0.377 & 0.710 & 0.351 &0.147 & 0.103 & 22.520 \\
         2000 & 1000 & 20 &0.528 & 0.853 & 0.491 &0.527 & 0.407 & 104.462\\
         2000 & 500 & 40 &0.528 & 1.095 & 0.491 &0.492 & 0.231 & 104.462\\
         2000 & 250 & 80 &0.528 & 1.160 & 0.491 &0.476 & 0.198 & 104.462\\
         \hline
         \hline
    \end{tabular}}
    \caption{ Comparison of FADI with the power method and the centralized PCA without divide and conquer on Example~\ref{ex: spiked gaussian} under different number of sample splits $m$, where $n_s$ denotes the sample size per data split. We set the total sample size at $n = 20000$ and $Lp/d = 2$ for FADI across the settings. FADI remained close to the centralized PCA and outperformed the power method in terms of the error rate when the sample size per split is smaller than or equal to the number of variables, i.e., $n_s \le d$, while the running time for FADI is comparable to that of the power method across all settings, suggesting FADI as a more suitable method for distributed data setting to allow for insufficient sample size per split. }
    \label{tab: compare FADI power trad}
\end{table}
\subsection{Estimation of \(K\) Using Various Aggregation Methods}\label{sec: unknown K aggregate}
In Section~\ref{sec: k unknown}, we propose an aggregation method for estimating the number of spikes \(K\) when it is unknown, using the median of local estimators from parallel sketchings. Theorem~\ref{thm: est k} provides theoretical guarantees that this aggregation method recovers the true value with high probability. In practice, due to the randomness in the sketching approximation, aggregating across multiple sketchings can reduce the probability of failure in recovering the true \(K\). In this section, we evaluate different aggregation methods for estimating \(K\) and compare them with the non-aggregation approach that estimates \(K\) using a single sketching. For the non-aggregation approach,  we randomly select \(\ell \in [L]\) and estimate \(K\) by \(\hat{K}^{(\ell)}\) as defined in \eqref{eq: est K ell}. For the aggregation methods, in addition to the median approach proposed in Section~\ref{sec: k unknown}, we consider the majority voting method and the mean singular value approach. Specifically, the majority voting method selects the estimate \(\hat{K}^{\rm MV}\) that appears most frequently among \(\{\hat{K}^{(1)}, \hat{K}^{(2)}, \dots, \hat{K}^{(L)}\}\):
\[
\hat{K}^{\rm MV} = \min \left\{\text{Mode}\left(\hat{K}^{(1)}, \hat{K}^{(2)}, \dots, \hat{K}^{(L)}\right)\right\},
\]
where, in the case of multimodality in \(\{\hat{K}^{(1)}, \hat{K}^{(2)}, \dots, \hat{K}^{(L)}\}\), \(\hat{K}^{\rm MV}\) is set to be the smallest mode. The following corollary of Theorem~\ref{thm: est k} demonstrates that the majority voting method performs as well as the median aggregation method.
\begin{corollary}\label{col: est K majority vote}
     Define $\eta_0 = 480 c_e^{-1} \sqrt{d/(\Delta^2 p)} r_1(d)\log d$. Under the same conditions as Theorem~\ref{thm: est k}, 
    if we choose $\mu_0$ such that $\Delta \eta_0/24 \le \mu_0 \le \Delta \sqrt{\eta_0}/12$, then with probability at least $1-  O(d^{-(L \wedge 20)/2})$, $\hat{K}^{\rm MV} = K$.
\end{corollary}
The proof of Corollary~\ref{col: est K majority vote} follows from a straightforward modification of Theorem~\ref{thm: est k}, noting that $\PP(\hat{K}^{\rm MV} = K) \ge \PP(\sum_{\ell=1}^L \II\{\hat{K}^{(\ell)} = K\} > L / 2) $.

For the mean singular value approach, we average the singular values obtained from each local sketching: \(\bar{\sigma}_{k} = \frac{1}{L}\sum_{\ell=1}^L \sigma_k (\hat\Yb^{(\ell)})\), \(k = 1, \ldots, p\), and then estimate \(K\) as
$$
 \hat{K}^{\rm MS} = \min  \{k < p: \bar\sigma_{k+1} - \bar\sigma_{p}\le \sqrt{p}\mu_0 \}.
$$
We numerically evaluate the application of the above methods to FADI for estimating \(K\) in Example~\ref{ex: spiked gaussian}, where the data are generated following the same setup as in Section~\ref{sec: exm1 simu}. We compute the empirical recovery rates of the estimator over 300 Monte Carlo simulations with \(d = 1000\), \(n = 20,000\), \(K = 3\), \(m = 20\), \(q = 3\), and \(p' = p = 12\). The results are presented in Table~\ref{tab: K aggregate}. The aggregation methods consistently outperform the non-aggregation method that uses only one sketching. While the recovery rates are similar among the three aggregation approaches, the mean singular value approach requires communication of the local singular values for each sketch to the central processor, whereas the median aggregation method and the majority voting method only require the transfer of the local estimators \(\hat{K}^{(\ell)}\)'s. Therefore, we recommend the median and majority voting methods for aggregation in distributed settings.
\begin{table}[htbp]
    \centering
    { \begin{tabular}{c|c c c c}
    \hline
    \hline
    $Lp/d$ & Single Sketching & Median  & Majority Voting & Mean Singular Value\\
    \hline
    0.5 & 0.940 & 1.000 & 1.000 & 1.000\\
1.0 & 0.920 & 1.000 & 1.000 & 1.000\\
2.0 & 0.947 & 1.000 & 1.000 & 1.000\\
\hline
\hline
    \end{tabular}}
    \caption{ Empirical recovery rates of different aggregation methods in FADI for estimating the unknown number of spikes \(K\) with \(d = 1000\), \(n = 20000\), \(K = 3\), $m = 20$, $q = 3$ and \(p' = p = 12\). The aggregation methods uniformly outperform the non-aggregation method that uses only a single sketching.} 
    \label{tab: K aggregate}
\end{table}
\section{Impact of Over-Specification and Under-Specification of \(K'\) in Application to the 1000 Genomes Data}\label{sec: K prime 1000 g}
In Section~\ref{sec: real data}, we apply FADI to estimate the principal eigenspace of the 1000 Genomes Data, assuming the data follow the spiked covariance model described in Example~\ref{ex: spiked gaussian}. For the estimation of residual variance \(\sigma^2\), we set \(K' = \hat{K} + 1 = 27\), where \(\hat{K}\) is the estimator of \(K\) obtained from the FADI algorithm. In practice, due to data randomness, perturbations in \(\hat{K}\) may lead to over-specification or under-specification of \(K'\). In this section, we evaluate the robustness of the estimator \(\hat{\sigma}^2\) with respect to different choices of \(K'\) in the 1000 Genomes Data. We vary \(K'\) from 21 to 30 and randomly sample different sets of \(K'\) individuals from the 2504 subjects to estimate \(\sigma^2\). This sampling process is repeated 100 times, and we compute the sample mean and standard deviation of \(\hat{\sigma}^2\). We present the results in Table~\ref{tab: k prime 1000 g}.  The performance of FADI in estimating \(\sigma^2\) remains stable across different choices of \(K'\), indicating the robustness of FADI to slight over-specification or under-specification of \(K'\). This robustness may be due to the slow decay of the eigenvalues of the sample covariance matrix beyond the top 10, as reflected by Figure~\ref{fig: sample eigen 1000 g}.
\begin{table}[htbp]
    \centering
    \resizebox{1\textwidth}{!}{%
     
     \begin{tabular}{c|c c c c c c c c c c }
    \hline
    \hline
      $K'$ &  21 & 22	& 23 & 24 & 25 & 26 & 27 & 28 & 29 & 30 \\
      \hline
${\rm mean}(\hat\sigma^2)$ & 0.7881 & 0.7887 & 0.7831 & 0.7837 & 0.7815 & 0.7789 & 0.7811 & 0.7798 & 0.7810 & 0.7782 \\
${\rm SD}(\hat\sigma^2)$ & 0.0163 & 0.0178 & 0.0179 & 0.0158 & 0.0167 & 0.0189 & 0.0139 & 0.0176 & 0.0140 & 0.0169 \\
\hline
\hline
    \end{tabular}}
    \caption{ Sample mean and standard deviation of \(\hat{\sigma}^2\) using FADI under different choices of \(K'\) for the 1000 Genomes Data. The estimation of \(\sigma^2\) remains stable across across different settings of $K'$.}
    \label{tab: k prime 1000 g}
\end{table}
\begin{figure}
    \centering
    \includegraphics[width=0.7\linewidth]{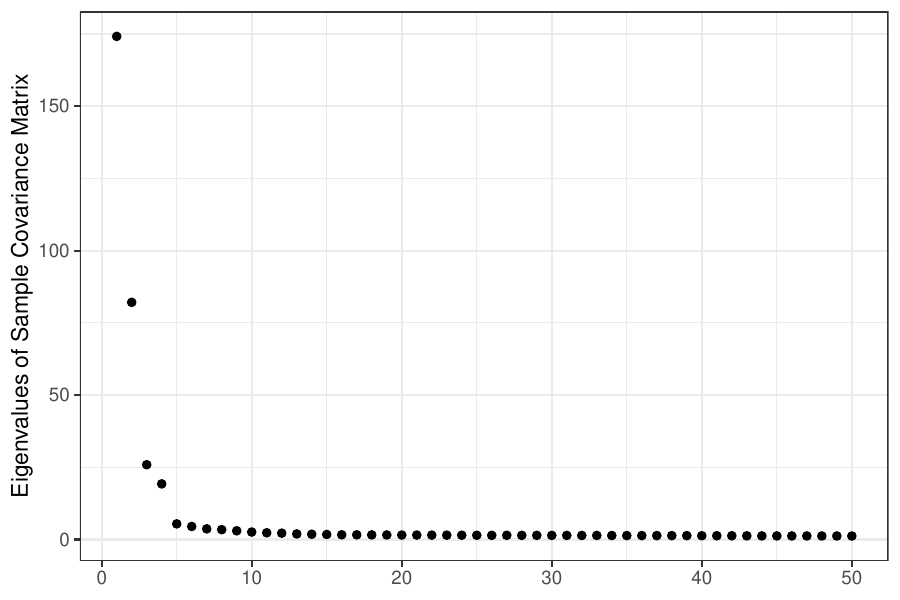}
    \caption{ Top 50 eigenvalues of the sample covariance matrix for the 1000 Genomes Data, showing a slow decay after the top 10.}
    \label{fig: sample eigen 1000 g}
\end{figure}
}
\section{Additional Application to the 1000 Genomes Data: Inference on Ancestry Membership Profiles}\label{sec: real data inference}
In this section, we consider the application of FADI to the 1000 Genomes Data for inferential analysis under the model specified by Example~\ref{ex: SBM}. We use the same pruned data described in Section~\ref{sec: real data}, and generate an undirected graph from the 1000 Genomes Data.

To increase the randomness for better fitting of the model setting in Example~\ref{ex: SBM}, we sample 1000 out of the total 168047 variants for generating the graph. More specifically, we treat each subject as a node, and for each given pair of subjects $(i,j)$, we define a genetic similarity score $s_{ij} = \sum_{k = 1}^{1000} \II\left\{x_{ik} = x_{jk}\right\}$, where $x_{ik}$ refers to the genotype of the $k$-th variant for subject $i$. We denote by $s^{0.95}$ the 0.95 quantile of $\{s_{ij}\}_{i<j}$. Subjects $i$ and $j$ are connected if and only if $s_{ij} > s^{0.95}$. Denote by $\Ab$ the adjacency matrix (allowing no self-loops). We include only four super populations: AFR, EAS, EUR and SAS, with 2058 subjects in total. We are interested in testing whether two given subjects $i$ and $j$ belong to the same super population, i.e., ${\rm H}_0: \Vb_i = \Vb_j$ vs. ${\rm H}_1 : \Vb_i \ne \Vb_j$.  We divide the adjacency matrix equally into $m=10$ splits, and perform FADI with $p = 50$, $p' = 50$, $q=3$ and $L = 1000$. The rank estimator from FADI is $\hat{K} = 4$ by setting $\mu_0 = ({\hat{\theta}}/{p})^{1/2} d \log d/12$, where $\hat\theta$ is the average degree estimator defined in Section~\ref{sec: step 0}. We can see the estimated rank is consistent with the number of super populations. We apply K-means clustering to the FADI estimator $\widetilde\Vb_{\hat{K}}^{\F}$, and calculate the misclustering rate by treating the self-reported ancestry group as the ground truth. The misclustering rate of FADI is 0.135, with computation time of  3.7 seconds.
 In comparison, 
 the misclustering rate for the traditional PCA method is 0.134 with computation time of 26.5 seconds, and the correlation between the top four PCs for the traditional PCA and FADI are 0.997, 0.994, 0.994 and 0.996 respectively. \\
 
To conduct pairwise inference on the ancestry membership profiles, we preselect 16 subjects, with 4 subjects from each super population. We apply Bonferroni correction to correct for the multiple comparison issue and set the level at $0.05\times \binom{16}{2}^{-1} = 4.17\times 10^{-4}$.  We estimate the asymptotic covariance matrix by Corollary~\ref{col: sbm} and correct $\widetilde\Mb$ by setting entries larger than 1 to 1 and entries smaller than 0 to 0. The pairwise p-values are summarized in Figure~\ref{fig: 1000g p values}. The computational time for computing the covariance matrix is 0.31 seconds. We can see that most of the comparison results are consistent with the true ancestry groups, while the inconsistency could be due to the mixed memberships of certain subjects and the unaccounted sub-population structures.
\begin{figure}[htbp]
		\centering
		\includegraphics[width=1\textwidth]{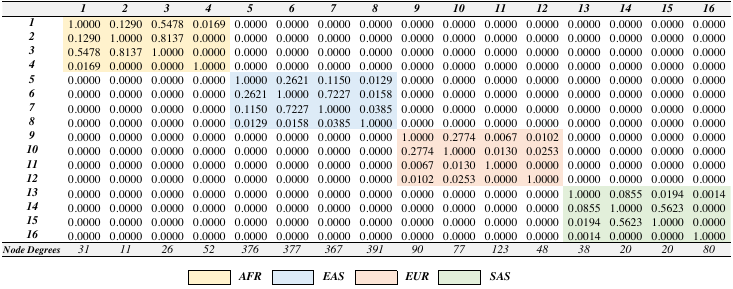} 
		\caption{\small p-values for pairwise comparison among 16 preselected subjects. For subjects pair $(i,j)$, p-value is defined as  $\PP\big(\chi_{\hat{K}}^2 > \|\tilde{\bd}\|_2^2\big)$, where $\chi_{\hat{K}}^2$ is Chi-squared distribution with degrees of freedom equal to $\hat{K}$, and $\tilde\bd = \hbSigma_{i,j}^{-1/2}\widetilde\Vb^{\rm F}_{\hat{K}}(\eb_i - \eb_j)$ with $\hbSigma_{i,j}$ being the asymptotic covariance matrix defined in Section~\ref{sec: exm2 simu}.  }\label{fig: 1000g p values}
	\end{figure}

\section{Proof of Main Theoretical Results}\label{sec: proof main theory}
In this section we provide proofs of the theoretical results in Section~\ref{sec: theory}. For the inferential results, we will present proofs of the theorems under the regime $Lp \ll d$ first, which takes into consideration the extra variability caused by the fast sketching, and then give proofs of the theorems under the regime $Lp \gg d$ where the fast sketching randomness is negligible. 
\subsection{Unbiasedness of Fast Sketching With Respect to \texorpdfstring{$\widehat\Mb$}{Mbh} }\label{sec: proof lemma bias control}
 We show by the following Lemma \ref{lm: bias control} that the fast sketching is unbiased with respect to $\widehat\Mb$ under proper conditions. 
\begin{lemma}\label{lm: bias control}{
Let $\hat\Vb_d \hat{\mathbf{\Lambda}}_d \hat{\Vb}_d^{\top}$ be the eigen-decomposition of $\widehat{\Mb}$, and let $\widehat\Vb = (\hat\vb_1, \ldots, \hat\vb_K)$ be the stacked $K$ leading eigenvectors of $\widehat\Mb$ corresponding to the eigenvalues with largest magnitudes.  When $\|{\mathbf{\Sigma}}^\prime - \hat\Vb \hat\Vb^{\top} \|_2 < 1/2$, we have that $\operatorname{Col}(\Vb^\prime) = \operatorname{Col}(\hat\Vb)$, where $\operatorname{Col}(\cdot)$ denotes the column space of the matrix.}
\end{lemma}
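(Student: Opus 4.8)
The goal is to show that under the spectral closeness condition $\|\bSigma' - \hat\Vb\hat\Vb^\top\|_2 < 1/2$, the top-$K$ eigenspace $\Vb'$ of $\bSigma'$ coincides with the column space of $\hat\Vb$. The plan is to argue via an eigenvalue-perturbation (Davis--Kahan / Weyl) comparison between the symmetric matrices $\bSigma'$ and $\hat\Vb\hat\Vb^\top$. Note that $\hat\Vb\hat\Vb^\top$ is an orthogonal projection of rank exactly $K$, so its eigenvalues are $1$ (multiplicity $K$) and $0$ (multiplicity $d-K$), with eigengap exactly $1$ separating the two groups. The key observation is that $\bSigma' = \EE_{\bOmega}(\hat\Vb^{(\ell)}\hat\Vb^{(\ell)\top})$ is an average of rank-$K$ projection matrices, hence $\zero \preceq \bSigma' \preceq \Ib_d$ and $\tr(\bSigma') = K$.

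First I would invoke Weyl's inequality: for every $i$, $|\lambda_i(\bSigma') - \lambda_i(\hat\Vb\hat\Vb^\top)| \le \|\bSigma' - \hat\Vb\hat\Vb^\top\|_2 < 1/2$. Consequently $\lambda_K(\bSigma') > 1 - 1/2 = 1/2$ and $\lambda_{K+1}(\bSigma') < 0 + 1/2 = 1/2$, so $\bSigma'$ has a genuine eigengap at location $K$, with $\lambda_K(\bSigma') - \lambda_{K+1}(\bSigma') > 0$; in particular its top-$K$ eigenspace $\Vb'$ is well-defined (no ambiguity from repeated eigenvalues straddling the cutoff). Next I would apply the Davis--Kahan $\sin\Theta$ theorem (or equivalently the modified Wedin theorem referenced in Appendix~\ref{sec: supp}) to the pair $(\hat\Vb\hat\Vb^\top, \bSigma')$: the principal angle between $\mathrm{Col}(\Vb')$ and $\mathrm{Col}(\hat\Vb)$ satisfies $\rho(\Vb', \hat\Vb) = \|\Vb'\Vb'^\top - \hat\Vb\hat\Vb^\top\|_{\F} \le \sqrt{2K}\,\|\bSigma' - \hat\Vb\hat\Vb^\top\|_2 / \mathrm{gap}$, where the gap is $\lambda_K(\bSigma') - 0 > 1/2$ (using that the bottom eigenvalues of $\hat\Vb\hat\Vb^\top$ are all $0$).

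The remaining step is to upgrade this quantitative closeness into exact equality of the two $K$-dimensional subspaces. Here I would use the fact that $\hat\Vb\hat\Vb^\top - \Vb'\Vb'^\top$ is the difference of two rank-$K$ orthogonal projections: such a difference has eigenvalues lying in $\{-1,0,1\}$ with the $+1$'s and $-1$'s coming in equal numbers, so $\|\hat\Vb\hat\Vb^\top - \Vb'\Vb'^\top\|_2 \in \{0,1\}$ unless the subspaces are "partially" aligned, and more precisely $\|\hat\Vb\hat\Vb^\top - \Vb'\Vb'^\top\|_2 < 1$ forces the two projections to have the same rank and identical range — indeed if the subspaces differ then there is a unit vector in $\mathrm{Col}(\hat\Vb)$ orthogonal to $\mathrm{Col}(\Vb')$ (a dimension count), and applying the projection difference to it gives norm $1$. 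Since we have shown $\|\Vb'\Vb'^\top - \hat\Vb\hat\Vb^\top\|_{\F} \le \sqrt{2K}\cdot\frac{1/2}{1/2} $ — wait, this bound is not automatically below $1$, so I would instead run the cleaner direct argument: for any unit $\bu \in \mathrm{Col}(\hat\Vb)$, $\bu^\top \bSigma' \bu = 1 - \bu^\top(\hat\Vb\hat\Vb^\top - \bSigma')\bu \ge 1 - 1/2 = 1/2 > \lambda_{K+1}(\bSigma')$, and by the variational (Courant--Fischer) characterization of eigenvalues this implies $\mathrm{Col}(\hat\Vb) \subseteq \mathrm{Col}(\Vb')$; since both are $K$-dimensional, equality follows.

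\textbf{Main obstacle.} The subtle point — and the step I expect to require the most care — is the passage from "subspaces are close" to "subspaces are equal." The Davis--Kahan bound alone only gives closeness; the clean resolution is the variational argument in the last paragraph (every unit vector in $\mathrm{Col}(\hat\Vb)$ has Rayleigh quotient $> \lambda_{K+1}(\bSigma')$ under $\bSigma'$, forcing containment, hence equality by dimension), which sidesteps any reliance on the perturbation being small in Frobenius norm. One must also be careful that $\lambda_K(\bSigma')$ and $\lambda_{K+1}(\bSigma')$ are strictly separated so that "the top-$K$ eigenspace $\Vb'$" is unambiguous — this is exactly what the $<1/2$ hypothesis buys via Weyl, so the constant $1/2$ is sharp for this argument.
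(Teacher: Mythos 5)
Your proposal has a genuine gap at the final step, and the gap is not cosmetic: without an additional structural fact the conclusion is simply false, so the argument cannot be patched by making the same route more careful.

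The Weyl step is fine and matches the paper: $\|\bSigma' - \hat\Vb\hat\Vb^\top\|_2 < 1/2$ forces $\lambda_K(\bSigma') > 1/2 > \lambda_{K+1}(\bSigma')$, so $\Vb'$ is well-defined. But the closing "variational" argument does not work. Knowing that every unit $\bu \in \operatorname{Col}(\hat\Vb)$ satisfies $\bu^\top \bSigma' \bu > \lambda_{K+1}(\bSigma')$ does not imply $\operatorname{Col}(\hat\Vb) \subseteq \operatorname{Col}(\Vb')$; Courant--Fischer gives you that the max-min over $K$-dimensional subspaces equals $\lambda_K$, not that any subspace on which the Rayleigh quotient is uniformly above $\lambda_{K+1}$ must be the top eigenspace. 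A vector can have large Rayleigh quotient while still having a nonzero component in the bottom eigenspace. Concretely, take $d=2$, $K=1$, $\hat\Vb = \eb_1$, and $\bSigma' = \begin{pmatrix} 0.8 & 0.2 \\ 0.2 & 0.2 \end{pmatrix}$. Then $\|\bSigma' - \hat\Vb\hat\Vb^\top\|_2 = 0.2\sqrt{2} < 1/2$ and $\eb_1^\top\bSigma'\eb_1 = 0.8 > 1/2 > \lambda_2(\bSigma')$, yet the top eigenvector of $\bSigma'$ is not $\eb_1$. Your earlier remark that the Davis--Kahan bound $\le \sqrt{2K}$ is not automatically below $1$ was the right instinct, but the replacement argument you chose is also insufficient.

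What the paper establishes first — and what your proof is missing — is that $\bSigma' = \EE_{\bOmega}(\hat\Vb^{(\ell)}\hat\Vb^{(\ell)\top} \mid \hat\Mb)$ and $\hat\Mb$ \emph{commute}, i.e.\ share a common eigenbasis. This is shown via a reflection-invariance argument: for each $j$, conjugating by $\hat\Vb_d \Db_j \hat\Vb_d^\top$ with $\Db_j = \Ib_d - 2\eb_j\eb_j^\top$ leaves the distribution of $\hat\Vb^{(\ell)}\hat\Vb^{(\ell)\top}$ (conditional on $\hat\Mb$) invariant, which forces $\hat\Vb_d^\top \bSigma' \hat\Vb_d$ to commute with every $\Db_j$, hence to be diagonal. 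Once $\bSigma'$ is known to diagonalize in the eigenbasis of $\hat\Mb$, each $\hat\vb_j$ ($j\le K$) is an eigenvector of $\bSigma'$, the bound $\|\bSigma'\hat\vb_j\|_2 > 1/2$ pins its $\bSigma'$-eigenvalue above $1/2$, and Weyl caps the remaining $d-K$ eigenvalues below $1/2$; equality of the two subspaces then follows because both are spanned by the same $K$ eigenvectors, not merely because a Rayleigh-quotient bound holds. The shared-eigenbasis property is the load-bearing idea here, and no norm-closeness argument alone can substitute for it.
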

\begin{proof}
We will first show that $\hat\Vb_d^{\top}{\mathbf{\Sigma}}^\prime \hat\Vb_d$ is diagonal. For any $j \in [d]$, we let $\Db_j = \Ib_d - 2 \eb_j \eb_j^{\top}$,
and recall we denote the eigen-decomposition of $\widehat{\Mb}$ by $\widehat{\Mb} = \hat\Vb_d \hat{\mathbf{\Lambda}}_d \hat\Vb_d^{\top}$. Then conditional on $\widehat{\Mb}$ we have
\begin{align*}
    &\hat\Vb_d \Db_j \hat\Vb_d^{\top} \widehat\Yb^{(\ell)} \widehat\Yb^{(\ell) \top} \hat\Vb_d \Db_j \hat\Vb_d^{\top} =  \hat\Vb_d \Db_j \hat\Vb_d^{\top} \hat\Vb_d \hat{\mathbf{\Lambda}}_d \hat\Vb_d^{\top} \mathbf{\Omega}^{(\ell)} \mathbf{\Omega}^{(\ell)\top} \hat\Vb_d \hat{\mathbf{\Lambda}}_d \hat\Vb_d^{\top} \hat\Vb_d \Db_j \hat\Vb_d^{\top}\\
     &\quad =  \hat\Vb_d  \hat{\mathbf{\Lambda}}_d (\Db_j \hat\Vb_d^{\top} \mathbf{\Omega}^{(\ell)})( \mathbf{\Omega}^{(\ell)\top} \hat\Vb_d \Db_j) \hat{\mathbf{\Lambda}}_d   \hat\Vb_d^{\top} \overset{\rm d}{=} \hat\Vb_d  \hat{\mathbf{\Lambda}}_d \hat\Vb_d^{\top} \mathbf{\Omega}^{(\ell)}\mathbf{\Omega}^{(\ell)\top} \hat\Vb_d  \hat{\mathbf{\Lambda}}_d   \hat\Vb_d^{\top} = \widehat\Yb^{(\ell)} \widehat\Yb^{(\ell) \top},
\end{align*}
where the second equality is due to the fact that diagonal matrices are commutative, and the last but one equivalence in distribution is due to the fact that $ \Db_j \hat\Vb_d^{\top} \mathbf{\Omega}^{(\ell)} \overset{\rm d}{=} \hat\Vb_d^{\top} \mathbf{\Omega}^{(\ell)}$. Also we know the top $K$ eigenvectors of $\hat\Vb_d \Db_j \hat\Vb_d^{\top} \widehat\Yb^{(\ell)} \widehat\Yb^{(\ell) \top} \hat\Vb_d \Db_j \hat\Vb_d^{\top}$ are $\hat\Vb_d \Db_j \hat\Vb_d^{\top} \hat{\Vb}^{(\ell)}$, and thus $\hat\Vb_d \Db_j \hat\Vb_d^{\top} \hat{\Vb}^{(\ell)} \overset{\rm d}{=} \hat{\Vb}^{(\ell)}$. Hence we have
\begin{align*}
    &\hat\Vb_d^{\top} \mathbb{E}\left(\widehat{\Vb}^{(\ell)} \widehat{\Vb}^{(\ell) \top} | \widehat{\Mb}\right) \hat\Vb_d =\hat\Vb_d^{\top} \hat\Vb_d \mathbf{D}_{j} \hat\Vb_d^{\top} \mathbb{E}\left(\widehat{\Vb}^{(\ell)} \widehat{\Vb}^{(\ell)\top} | \widehat{\Mb} \right) \hat\Vb_d \mathbf{D}_{j} \hat\Vb_d^{\top} \hat\Vb_d\\
    &\quad =\mathbf{D}_{j} \hat\Vb_d^{\top} \mathbb{E}\left(\widehat{\Vb}^{(\ell)} \widehat{\Vb}^{(\ell) \top} | \widehat{\Mb} \right) \hat\Vb_d \mathbf{D}_{j} = \mathbf{D}_{j} \hat\Vb_d^{\top} \bSigma' \hat\Vb_d \mathbf{D}_{j}.
\end{align*}
                
The above equation holds for any $j \in[d]$, which suggests that $\hat\Vb_d^{\top} \mathbb{E}\left(\widehat{\mathbf{V}}^{(\ell)} \widehat{\mathbf{V}}^{(\ell)\top} | \widehat{\Mb}\right) \hat\Vb_d$
is diagonal and that $\mathbf{\Sigma}^{\prime}$ and $\widehat{\Mb}$ share the same set of eigenvectors.

Now under the condition that $\left\|\mathbf{\Sigma}^{\prime}-\hat\Vb \hat\Vb^{\top}\right\|_2 <1 / 2$, for any $j \in[K],$ we denote by $\hat\vb_j$ the $j$-th column
of $\hat\Vb$, and we have
$$
\left\|\mathbf{\Sigma}^{\prime} \hat\vb_j\right\|_2 =\left\|\left(\mathbf{\Sigma}^{\prime}-\hat\Vb \hat\Vb^{\top}+\hat\Vb \hat\Vb^{\top} \right) \hat\vb_j\right\|_2 \geq 1-\left\|\mathbf{\Sigma}^{\prime}-\hat\Vb \hat\Vb^{\top}\right\|_2 >1-\frac{1}{2}=\frac{1}{2}.
$$
In other words, the corresponding eigenvalue of $\hat\vb_j$ in $ \mathbf{\Sigma}^{\prime}$ is larger than $1 / 2$. On the other hand, by Weyl's inequality \citep{franklin2012matrix}, the rest of the $d-K$ eigenvalues of ${\mathbf{\Sigma}}^\prime$ should be less than 1/2. Therefore, $\hat\Vb$ are still the leading $K$ eigenvectors for ${\mathbf{\Sigma}}^\prime$, and thus $\operatorname{Col}(\Vb^\prime) = \operatorname{Col}(\hat\Vb)$.
\end{proof}
Recall in Section~\ref{sec: theory} we discuss that the bias term has the following decomposition $\cD(\Vb^\prime, {\Vb}) \le \cD(\hat\Vb, {\Vb}) + \cD(\Vb^\prime, \widehat{\Vb})$. Lemma \ref{lm: bias control} shows that as long as $\mathbf{\Sigma}^{\prime}$ and $\Vb \Vb^{\top}$ are not too far apart, $\Vb'$ and $\hat{\Vb}$ will share the same column space. In fact, Lemma \ref{lm: control of prob} in Section~\ref{sec: proof thm error bd} will show that the probability that ${\mathbf{\Sigma}}^\prime$ and $\hat\Vb \hat\Vb^{\top}$ are not sufficiently close converges to 0, and $\cD(\Vb^\prime, {\Vb}) = \cD(\hat\Vb, {\Vb})$ with high probability. With the help of Lemma~\ref{lm: bias control}, we present the proof of the main error bound results in the following section.
\subsection{Proof of Theorem~\ref{thm: error bound}}\label{sec: proof thm error bd}
Recall the problem setting in Section~\ref{sec: problem setting}. It is not hard to see that we can write $\bLambda = \Pb_0 \bLambda^0$, where $\mathbf{\Lambda}^0 = \operatorname{diag}(|\lambda_1|,\ldots, |\lambda_K|)$ and $\Pb_0 = \diag \big([\operatorname{sgn}(\lambda_k)]_{k=1}^K\big)$. Then $\Mb = (\Vb \Pb_0) \bLambda^0 \Vb^{\top}$ is the SVD of $\Mb$. 

We begin with bounding $\left( \EE \|\widetilde\Vb \widetilde\Vb^{\top} - \Vb\Vb^{\top} \|^2_{\rm F} \right)^{1/2}$. Before delving into the detailed proof, the following two lemmas provide some important properties of the random Gaussian matrix.
\begin{lemma}\label{lm: gaussian norm}
Let $\mathbf{\Omega} \in \RR^{d \times p}$ be a random matrix with i.i.d. standard Gaussian entries, where $p \le d$. For a random variable, recall that we define the $\psi_1 $ norm to be $\|\cdot\|_{\psi_1} = \sup_{p \ge 1} (\EE |\cdot|^p)^{1/p}/p$. Then we have the following bound on the $\psi_1$ norm of the matrix $\mathbf{\Omega} / \sqrt{p}$:
\begin{equation}
    \| \|\mathbf{\Omega} /\sqrt{p}\|_2  \|_{\psi_1} \lesssim \sqrt{d/p}.
\end{equation}
\end{lemma}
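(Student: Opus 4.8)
\textbf{Proof strategy for Lemma~\ref{lm: gaussian norm}.}
The plan is to reduce the bound on $\|\|\mathbf{\Omega}/\sqrt p\|_2\|_{\psi_1}$ to a standard concentration fact for the operator norm of a Gaussian matrix. First I would recall the non-asymptotic bound for Gaussian matrices (see, e.g., \citet{vershynin_2012}): if $\mathbf{\Omega} \in \RR^{d \times p}$ has i.i.d.\ standard Gaussian entries with $p \le d$, then $\EE\|\mathbf{\Omega}\|_2 \le \sqrt d + \sqrt p \le 2\sqrt d$, and moreover $\|\mathbf{\Omega}\|_2$ concentrates: for every $t > 0$,
\#
\PP\big(\|\mathbf{\Omega}\|_2 \ge \sqrt d + \sqrt p + t\big) \le e^{-t^2/2},
\#
which follows from Gaussian concentration of measure applied to the $1$-Lipschitz function $A \mapsto \|A\|_2$. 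Dividing through by $\sqrt p$, this says $\|\mathbf{\Omega}/\sqrt p\|_2$ is a (shifted) sub-Gaussian, hence in particular sub-exponential, random variable centered around a quantity of order $\sqrt{d/p}$.

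Next I would convert this tail bound into a $\psi_1$-norm estimate. Writing $Z = \|\mathbf{\Omega}/\sqrt p\|_2$, the tail bound gives $\PP(Z \ge 2\sqrt{d/p} + s) \le e^{-p s^2 /2}$ for $s > 0$ (using $\sqrt d + \sqrt p \le 2\sqrt d$). From a tail bound of this form one obtains, by the standard integral identity $\EE Z^q = \int_0^\infty q u^{q-1}\PP(Z \ge u)\,du$ and splitting at $u = 2\sqrt{d/p}$, that $(\EE Z^q)^{1/q} \lesssim \sqrt{d/p} + \sqrt{q/p} \lesssim q\sqrt{d/p}$ for all $q \ge 1$, since $p \ge 1$ and $d \ge p$. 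Taking the supremum over $q \ge 1$ of $q^{-1}(\EE Z^q)^{1/q}$ then yields $\|Z\|_{\psi_1} \lesssim \sqrt{d/p}$, which is the claim. Equivalently, one can cite the fact that a random variable whose centered part is sub-Gaussian with proxy $1/p$ has $\psi_1$ norm bounded by its mean plus $O(1/\sqrt p)$; since the mean is $O(\sqrt{d/p})$ and $1/\sqrt p \le \sqrt{d/p}$, the bound follows.

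I do not anticipate a genuine obstacle here: the only mild care needed is the bookkeeping in passing from the exponential tail bound to the moment/$\psi_1$ bound, making sure the linear-in-$q$ growth is correctly extracted so that the $\psi_1$ supremum is finite and of the right order. The result is essentially a repackaging of Gaussian matrix concentration, and the proof should be short.
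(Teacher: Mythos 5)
Your proof is correct, but it takes a genuinely different route from the paper's. The paper squares the quantity, writing $\|\mathbf{\Omega}/\sqrt p\|_2^2 = (d/p)\|\mathbf{\Omega}^\top\mathbf{\Omega}/d\|_2$, and then invokes Lemma~3 of \citet{fandistributed2019}, which directly provides the sub-exponential bound $\|\,\|\mathbf{\Omega}^\top\mathbf{\Omega}/d - \Ib_p\|_2\,\|_{\psi_1} \lesssim \sqrt{p/d}$; from there $\|\,\|\mathbf{\Omega}\mathbf{\Omega}^\top/p\|_2\,\|_{\psi_1}\lesssim d/p$, and the square root is extracted by the moment comparison $q^{-1}(\EE X^q)^{1/q}\le q^{-1}(\EE X^{2q})^{1/2q}\le\|X^2\|_{\psi_1}^{1/2}$, which is what the paper calls ``Jensen's inequality.'' You instead bypass the Wishart-type deviation bound entirely and work with $\|\mathbf{\Omega}\|_2$ directly: you use the Davidson--Szarek/Gordon bound $\EE\|\mathbf{\Omega}\|_2\le\sqrt d+\sqrt p$ together with Gaussian concentration of measure for the $1$-Lipschitz map $A\mapsto\|A\|_2$, yielding a sub-Gaussian tail of width $1$ (so width $1/\sqrt p$ after rescaling), and then convert the tail bound into the moment estimate $(\EE Z^q)^{1/q}\lesssim\sqrt{d/p}+\sqrt{q/p}$ before taking the supremum over $q\ge1$. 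Both routes are clean and of comparable length; yours is more self-contained in that it relies only on textbook Gaussian-matrix concentration and does not need the intermediate $\psi_1$ bound on the centered sample-covariance operator, at the cost of a small amount of moment-integral bookkeeping. One minor point worth tightening if you write this up: you can avoid the crude step $\sqrt{d/p}+\sqrt{q/p}\lesssim q\sqrt{d/p}$ and just observe that $\sup_{q\ge1}q^{-1}(\sqrt{d/p}+\sqrt{q/p})$ is attained at $q=1$, giving $\sqrt{d/p}+1/\sqrt p\lesssim\sqrt{d/p}$ directly since $d\ge1$.
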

\begin{lemma}\label{lm: bound min eigenvalue}
Let $\boldsymbol{\Omega} \in \RR^{K \times p}$ denote a random matrix with i.i.d. Gaussian entries, where $p \ge 2K$. For any integer $a$ such that $1 \le a \le (p-K+1)/2$, there exists a constant $C>0$ such that
\begin{equation}
    \EE\left ( \left(\sigma_{\min} (\boldsymbol{\Omega} /\sqrt{p})\right)^{-a} \right) \le C^{a}.
\end{equation}
\end{lemma}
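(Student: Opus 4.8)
\textbf{Proof plan for Lemma~\ref{lm: bound min eigenvalue}.}

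The quantity $\sigma_{\min}(\boldsymbol{\Omega}/\sqrt{p})$ is the square root of the smallest eigenvalue of the $K\times K$ Wishart matrix $\boldsymbol{W} = (\boldsymbol{\Omega}/\sqrt p)(\boldsymbol{\Omega}/\sqrt p)^{\top}$, so the target $\EE\big(\sigma_{\min}(\boldsymbol{\Omega}/\sqrt p)^{-a}\big) = \EE\big(\lambda_{\min}(\boldsymbol W)^{-a/2}\big)$ is a negative moment of the smallest eigenvalue of a Wishart matrix with $p$ degrees of freedom in dimension $K$. The plan is to bound this by first controlling the tail probability $\PP\big(\sigma_{\min}(\boldsymbol{\Omega}/\sqrt p)\le t\big)$ for small $t$, and then integrating $\EE(X^{-a}) = a\int_0^\infty t^{-a-1}\PP(\sigma_{\min}\le t)\,dt$ — the integral near $t=0$ converges precisely because $p\ge 2K$ gives enough degrees of freedom (the relevant density of $\lambda_{\min}$ vanishes like a high power of $t$ at the origin). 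First I would invoke a standard small-ball estimate for the least singular value of a Gaussian matrix with more columns than rows: for $\boldsymbol{\Omega}\in\RR^{K\times p}$ with i.i.d.\ $N(0,1)$ entries and $p\ge 2K$, one has $\PP\big(\sigma_{\min}(\boldsymbol\Omega)\le t\sqrt p\big)\lesssim (C't)^{p-K+1}$ for all $t>0$ and an absolute constant $C'$ (this is, e.g., the Rudelson--Vershynin / Davidson--Szarek type bound; it can also be read off from the explicit joint density of the singular values of a Gaussian matrix). The exponent $p-K+1$ is the key gain from the rectangular shape.

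With this tail bound in hand, the computation is routine: split $\EE\big(\sigma_{\min}(\boldsymbol\Omega/\sqrt p)^{-a}\big)$ into the contribution from $\sigma_{\min}\ge 1/(2C')$, which is at most $(2C')^a$, and the contribution from $\sigma_{\min}< 1/(2C')$, which equals $a\int_0^{1/(2C')} t^{-a-1}\PP\big(\sigma_{\min}(\boldsymbol\Omega/\sqrt p)\le t\big)\,dt \lesssim a\int_0^{1/(2C')} t^{-a-1}(C't)^{p-K+1}\,dt$. This last integral is finite and bounded by an absolute constant times $a/(p-K+1-a)$ exactly when $p-K+1-a>0$, i.e.\ $a\le (p-K+1)/2$ certainly suffices (in fact $a<p-K+1$ suffices, and the factor $1/2$ in the hypothesis gives a clean uniform bound: $a/(p-K+1-a)\le a/a = 1$ when $a\le(p-K+1)/2$). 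Combining the two pieces yields $\EE\big(\sigma_{\min}(\boldsymbol\Omega/\sqrt p)^{-a}\big)\le (2C')^a + C''^{\,a} \le C^a$ for a suitable absolute constant $C$, which is the claim.

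The main obstacle is obtaining the small-ball bound with the \emph{correct} exponent $p-K+1$ and with the polynomial prefactor constant $C'$ genuinely absolute (independent of $p$ and $K$); a crude union-bound argument gives a weaker exponent and would not make the negative-moment integral converge for $a$ as large as $(p-K+1)/2$. I would therefore cite a sharp form of the least-singular-value estimate for rectangular Gaussian matrices rather than reprove it, and spend the care instead on verifying that the constants produced by the tail-to-moment integration step stay bounded uniformly so that the final bound is genuinely of the form $C^a$. Everything else — the dyadic/integral split and the elementary estimate $\int_0^{c} t^{-a-1}(C't)^{m}\,dt = (C')^m c^{\,m-a}/(m-a)$ — is mechanical.
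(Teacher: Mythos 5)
Your proposal is correct and follows essentially the same route as the paper: the key ingredient is the sharp tail bound $\PP\big(\|\boldsymbol{\Omega}^{\dagger}\|_2 \ge \tfrac{e\sqrt{p}}{p-K+1}t\big) \le t^{-(p-K+1)}$ (the paper cites Proposition 10.4 of Halko et al., which is the same small-ball estimate you invoke, just phrased on the pseudo-inverse rather than on $\sigma_{\min}$), and the condition $p\ge 2K$ makes the prefactor $\tfrac{ep}{p-K+1}$ an absolute constant. The paper then integrates $\EE\big(\|\sqrt{p}\,\boldsymbol{\Omega}^{\dagger}\|_2^{a}\big)=\int_0^\infty \PP\big(\|\sqrt{p}\,\boldsymbol{\Omega}^{\dagger}\|_2^{a}\ge u\big)\,du$ split at $u=c^a$, which is the same tail-to-moment computation you carry out (your form $\EE(X^{-a})=a\int_0^\infty t^{-a-1}\PP(X\le t)\,dt$ is the change-of-variables equivalent), and the hypothesis $a\le(p-K+1)/2$ enters in exactly the way you describe, to keep the tail integral bounded by a constant independent of $p,K$.
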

The following lemma shows that $\|{\mathbf{\Sigma}}^\prime - \Vb \Vb^{\top} \|_2$ and $\|{\mathbf{\Sigma}}^\prime - \hat\Vb \hat\Vb^{\top} \|_2$ are bounded by a small constant with high probability.
\begin{lemma}\label{lm: control of prob}
If Assumption \ref{asp: tail prob bound} holds and $p \ge \max(2K,K+3)$, there exists a constant $c_0 >0$ such that for any $\varepsilon >0$, we have 
\[\max\left\{\PP \Big( \|{\mathbf{\Sigma}}^\prime - \Vb \Vb^{\top} \|_2 \ge \varepsilon \Big), \PP \left( \|{\mathbf{\Sigma}}^\prime - \hat\Vb \hat\Vb^{\top} \|_2 \ge \varepsilon \right)\right\}  \lesssim \exp\left( -c_0 \sqrt{\frac{p}{d}} \frac{\Delta \varepsilon }{r_1(d)}\right).\]
\end{lemma}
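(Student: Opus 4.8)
\textbf{Proof proposal for Lemma~\ref{lm: control of prob}.}

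The plan is to bound $\|\bSigma' - \Vb\Vb^\top\|_2$ by a deterministic function of $\|\Eb\|_2$ and the random sketching matrix, and then integrate out the sketching randomness using Lemma~\ref{lm: bound min eigenvalue}, before finally invoking the sub-exponential tail of $\|\Eb\|_2$ from Assumption~\ref{asp: tail prob bound} and Remark~\ref{rmk: Eb subexp const}. Recall $\bSigma' = \EE_{\bOmega}(\widehat\Vb^{(\ell)}\widehat\Vb^{(\ell)\top})$, where $\widehat\Vb^{(\ell)}$ is the top-$K$ left singular space of $\widehat\Yb^{(\ell)} = \widehat\Mb\bOmega^{(\ell)}$. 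First I would observe that by Jensen/convexity the operator-norm deviation of the expectation is controlled by the expectation of the deviation: $\|\bSigma' - \Vb\Vb^\top\|_2 \le \EE_{\bOmega}\big\|\widehat\Vb^{(\ell)}\widehat\Vb^{(\ell)\top} - \Vb\Vb^\top\big\|_2 \le \EE_{\bOmega}\,\rho(\widehat\Vb^{(\ell)}, \Vb)$, so it suffices to bound the expected sketched-subspace error conditional on $\widehat\Mb$, and then take expectation over $\widehat\Mb$.

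The core estimate is a Davis--Kahan / Wedin-type bound on $\rho(\widehat\Vb^{(\ell)}, \Vb)$ for a single sketch. Write $\Yb = \Mb\bOmega^{(\ell)} = \Vb\bLambda\widetilde\bOmega^{(\ell)}$ with $\widetilde\bOmega^{(\ell)} = \Vb^\top\bOmega^{(\ell)} \in \RR^{K\times p}$, and $\widehat\Yb^{(\ell)} = \Yb + \Eb\bOmega^{(\ell)}$. The left singular space of $\Yb$ is exactly $\Vb$ (since $\widetilde\bOmega^{(\ell)}$ has full row rank $K$ almost surely, as $p\ge 2K$). The smallest nonzero singular value of $\Yb$ is at least $\Delta\cdot\sigma_{\min}(\widetilde\bOmega^{(\ell)})$, while $\|\widehat\Yb^{(\ell)} - \Yb\|_2 = \|\Eb\bOmega^{(\ell)}\|_2 \le \|\Eb\|_2\,\|\bOmega^{(\ell)}\|_2$. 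Applying the modified Wedin theorem (cited in Appendix~\ref{sec: supp}, which I may assume), provided the perturbation is smaller than, say, half the gap, I get
$$
\rho(\widehat\Vb^{(\ell)}, \Vb) \lesssim \frac{\|\Eb\|_2\,\|\bOmega^{(\ell)}\|_2}{\Delta\,\sigma_{\min}(\widetilde\bOmega^{(\ell)})} = \frac{\|\Eb\|_2}{\Delta}\cdot\frac{\|\bOmega^{(\ell)}/\sqrt p\|_2}{\sigma_{\min}(\widetilde\bOmega^{(\ell)}/\sqrt p)}.
$$
On the complementary event where the perturbation exceeds the gap, I bound $\rho(\widehat\Vb^{(\ell)},\Vb) \le \sqrt{2K} \le \sqrt{2K}\cdot(\text{perturbation}/\text{gap})$, so the same bound holds up to a constant regardless. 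Now I take $\EE_{\bOmega}$: $\|\bOmega^{(\ell)}/\sqrt p\|_2 \le 1 + \sqrt{d/p} + (\text{small random fluctuation})$ has all moments bounded by $O(\sqrt{d/p})$ (Lemma~\ref{lm: gaussian norm}), and $\widetilde\bOmega^{(\ell)}/\sqrt p \in \RR^{K\times p}$ has $\EE\big[(\sigma_{\min}(\widetilde\bOmega^{(\ell)}/\sqrt p))^{-1}\big] \le C$ by Lemma~\ref{lm: bound min eigenvalue} with $a = 1$ (valid since $p\ge \max(2K, K+3)$ gives $1 \le (p-K+1)/2$); independence of numerator and denominator factors (or Cauchy--Schwarz with $a=2$, which needs $p \ge K+3$) lets me multiply the two bounds. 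Hence $\EE_{\bOmega}\,\rho(\widehat\Vb^{(\ell)}, \Vb) \lesssim \sqrt{d/p}\cdot\|\Eb\|_2/\Delta$, so $\|\bSigma' - \Vb\Vb^\top\|_2 \lesssim \sqrt{d/p}\,\|\Eb\|_2/\Delta$ pointwise in $\widehat\Mb$.

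With this deterministic-in-$\bOmega$ bound in hand, the tail follows immediately: $\PP(\|\bSigma' - \Vb\Vb^\top\|_2 \ge \varepsilon) \le \PP\big(C\sqrt{d/p}\,\|\Eb\|_2/\Delta \ge \varepsilon\big) = \PP\big(\|\Eb\|_2 \ge C^{-1}\sqrt{p/d}\,\Delta\varepsilon\big) \le \exp\big(-c_e C^{-1}\sqrt{p/d}\,\Delta\varepsilon/r_1(d)\big)$ by Remark~\ref{rmk: Eb subexp const}, which is the claimed bound with $c_0 = c_e/C$. For the second inequality involving $\widehat\Vb\widehat\Vb^\top$ instead of $\Vb\Vb^\top$, I combine the above with the triangle inequality $\|\bSigma' - \widehat\Vb\widehat\Vb^\top\|_2 \le \|\bSigma'-\Vb\Vb^\top\|_2 + \rho(\widehat\Vb,\Vb)$ and the standard Davis--Kahan bound $\rho(\widehat\Vb,\Vb)\lesssim \|\Eb\|_2/\Delta \le \sqrt{d/p}\,\|\Eb\|_2/\Delta$ (since $\sqrt{d/p}\ge 1$), so the same tail applies after adjusting the constant $c_0$.

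\textbf{Main obstacle.} The delicate point is handling the regime where the sketching perturbation is \emph{not} small compared to the eigengap, i.e. making the single-sketch subspace bound genuinely uniform so that the expectation over $\bOmega$ converges; this requires the crude fallback $\rho \le \sqrt{2K}$ to still be dominated by the same ratio, and requires the inverse-moment control of $\sigma_{\min}(\widetilde\bOmega^{(\ell)}/\sqrt p)$ to be strong enough (degree $a=1$ or $a=2$) to survive multiplication against the operator-norm moment of $\bOmega^{(\ell)}$. Lemma~\ref{lm: bound min eigenvalue} is exactly calibrated for this, so the obstacle is really just bookkeeping the constants and the event decomposition carefully rather than any deep new idea.
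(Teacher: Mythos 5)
Your proposal follows the same route as the paper: Jensen to pass the norm through the conditional expectation, a Wedin-type bound on the per-sketch subspace error with $\sigma_K(\Yb^{(\ell)})\ge\Delta\,\sigma_{\min}(\widetilde\bOmega^{(\ell)})$ in the denominator, Cauchy--Schwarz to decouple $\|\bOmega^{(\ell)}\|_2$ from $\sigma_{\min}(\widetilde\bOmega^{(\ell)})^{-1}$, then Lemma~\ref{lm: gaussian norm}, Lemma~\ref{lm: bound min eigenvalue} with $a=2$, and the sub-exponential tail of $\|\Eb\|_2$; the second inequality via triangle inequality and Davis--Kahan is also how the paper does it. Two small slips worth tightening. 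First, you pass to $\rho(\widehat\Vb^{(\ell)},\Vb)$, which in this paper's notation is the Frobenius distance; the Frobenius version of Wedin carries an extra $\sqrt{K}$ that your displayed bound omits, and if you keep it you would only prove $\exp\bigl(-c_0\sqrt{p/(Kd)}\,\Delta\varepsilon/r_1(d)\bigr)$, slightly weaker than claimed. The paper avoids this by staying with $\|\widehat\Vb^{(\ell)}\widehat\Vb^{(\ell)\top}-\Vb\Vb^\top\|_2$ and using the operator-norm version (D.1) of the modified Wedin lemma, which has no $\sqrt{r}$ factor; you should do the same (it is zero extra work). Second, the ``main obstacle'' you identify is already resolved in the statement of Lemma~\ref{lm: wedin}: equations (D.1)--(D.2) hold unconditionally precisely because the trivial bound $\le 1$ (resp. $\le\sqrt{2K}$) absorbs the large-perturbation case, so no separate event decomposition is needed. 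Finally, the parenthetical claim that the numerator and denominator are independent is false --- both are functionals of the same Gaussian $\bOmega^{(\ell)}$, with $\widetilde\bOmega^{(\ell)}=\Vb^\top\bOmega^{(\ell)}$ a sub-block --- but the Cauchy--Schwarz route you also mention is exactly what the paper uses, so this does not affect the argument.
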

The proof of Lemma \ref{lm: gaussian norm}, Lemma \ref{lm: bound min eigenvalue} and Lemma \ref{lm: control of prob} are deferred to Supplementary Materials~\ref{sec: proof tec lems}. Now we can start with the proof. We first decompose the bias term into two parts,
\begin{equation}\label{eq: decomp error}
    \left( \EE |\cD(\widetilde\Vb, \Vb)|^2 \right)^{1/2} \le \underbrace{\Big( \EE |\cD(\widetilde\Vb, \Vb')|^2 \Big)^{1/2}}_{\text{I}} \!\!\!+ \underbrace{\Big( \EE |\cD(\Vb', \Vb)|^2 \Big)^{1/2}}_{\text{II}}.
\end{equation}
Term I can be regarded as the variance term, whereas term II is the bias term. We will consider the bias term first.
\subsubsection{Control of the Bias Term}
We can see that term II can be further decomposed into two terms
\begin{equation}\label{eq: term II decomp}
    \left( \EE |\cD(\Vb', \Vb)|^2 \right)^{1/2} \le \left( \EE \|\Vb^\prime \Vb^{\prime \top} - \hat\Vb \hat\Vb^{\top} \|_{\rm F}^2 \right)^{1/2} + \left( \EE \|\hat\Vb \hat\Vb^{\top} - \Vb \Vb^{\top} \|_{\rm F}^2 \right)^{1/2}.
\end{equation}
We can bound both terms separately. First note that $\|\Vb^\prime \Vb^{\prime \top} - \hat\Vb \hat\Vb^{\top} \|_{\rm F} \le \sqrt{2K}\|\Vb^\prime \Vb^{\prime \top} - \hat\Vb \hat\Vb^{\top} \|_2 \le \sqrt{2K}$. Thus we have,
\begin{align*}
     &\left( \EE \|\Vb^\prime \Vb^{\prime \top} - \hat\Vb \hat\Vb^{\top} \|_{\rm F}^2 \right)^{1/2}  \le  \left( \EE \|\Vb^\prime \Vb^{\prime \top} - \hat\Vb \hat\Vb^{\top} \|_{\rm F}^2 \II\big\{ \|{\mathbf{\Sigma}}^\prime - \hat\Vb \hat\Vb^{\top} \|_2 \ge 1/2\big\}\right)^{1/2} \\
     & \quad \quad+  \left( \EE \|\Vb^\prime \Vb^{\prime \top} - \hat\Vb \hat\Vb^{\top} \|_{\rm F}^2 \II{\big\{ \|{\mathbf{\Sigma}}^\prime - \hat\Vb \hat\Vb^{\top} \|_2 < 1/2\big\}}\right)^{1/2} \\
     &\lesssim 0 + \sqrt{K} \left(\PP \left( \|{\mathbf{\Sigma}}^\prime - \hat\Vb \hat\Vb^{\top} \|_2 \ge 1/2\right) \right)^{1/2} \lesssim \sqrt{K} \exp\left( -\frac{c_0}{4} \sqrt{\frac{p}{d}} \frac{\Delta  }{r_1(d)}\right),
\end{align*}
where the last but one inequality follows from Lemma \ref{lm: bias control}, and the last inequality is a result of Lemma \ref{lm: control of prob}. As for the second term on the RHS of \eqref{eq: term II decomp}, by Davis-Kahan’s Theorem \citep{yu2015davis}, we have
\begin{align*}
    \left( \EE \|\hat\Vb \hat\Vb^{\top} - \Vb \Vb^{\top} \|_{\rm F}^2 \right)^{1/2} & \lesssim \frac{\sqrt{K}}{\Delta}\left( \EE \|\hat{\Mb} - \Mb\|_2^2 \right)^{1/2} = \frac{\sqrt{K}}{\Delta}\left( \EE \|\Eb\|_2^2 \right)^{1/2} \\
    & \le \frac{\sqrt{K}}{\Delta} \| \|\Eb\|_2 \|_{\psi_1} \lesssim \frac{\sqrt{K}}{\Delta} r_1(d).
\end{align*}
Therefore, the bound for the bias term is
$$ \text{II} \lesssim \sqrt{K} \exp\left( -\frac{c_0}{4} \sqrt{\frac{p}{d}} \frac{\Delta  }{r_1(d)}\right) + \frac{\sqrt{K}}{\Delta} r_1(d).$$
\subsubsection{Control of the Variance Term}
Now we move on to control the variance term. Suppose that $\left\|\mathbf{\Sigma}^{\prime}-\Vb \Vb^{\top}\right\|_2 <1 / 4$. Then by Weyl's inequality \citep{franklin2012matrix} we have that $\sigma_{K} (\mathbf{\Sigma}^{\prime}) > 1- 1/4 = 3/4$ and $\sigma_{K+1} (\mathbf{\Sigma}^{\prime}) < 1/4$. Thus by Davis-Kahan theorem \citep{yu2015davis}
\begin{align*}
    &\left( \EE \left(\|\widetilde\Vb \widetilde\Vb^{\top} \!\!- \!\!\Vb^\prime \Vb^{\prime \top} \|_{\rm F}^2 \II\left\{\left\|\mathbf{\Sigma}^{\prime}-\Vb \Vb^{\top}\right\|_2<1 / 4\right\}\right)\right)^{1/2}\!\!\!\\
    &\lesssim \left( \EE \left(\frac{\|\widetilde{\mathbf{\Sigma}} - {\mathbf{\Sigma}}^\prime \|_{\rm F}^2}{\left(\sigma_K({\mathbf{\Sigma}}^\prime ) - \sigma_{K+1}({\mathbf{\Sigma}}^\prime )\right)^2} \II\left\{\left\|\mathbf{\Sigma}^{\prime}-\Vb \Vb^{\top}\right\|_2 <1 / 4\right\}\right)\right)^{1/2} \\
    &\quad \lesssim \left( \EE \left(\|\widetilde{\mathbf{\Sigma}} - {\mathbf{\Sigma}}^\prime \|_{\rm F}^2\II{\left\{\left\|\mathbf{\Sigma}^{\prime}-\Vb \Vb^{\top}\right\|_2 <1 / 4\right\}} \right) \right)^{1/2} \le \underbrace{\left( \EE \|\widetilde{\mathbf{\Sigma}} - {\mathbf{\Sigma}}^\prime \|_{\rm F}^2 \right)^{1/2}}_{\text{III}}.
\end{align*}
We will bound term III later. Also similar as previously, note that $\|\widetilde\Vb \widetilde\Vb^{\top} \!\!- \!\!\Vb^\prime \Vb^{\prime \top} \|_{\rm F} \le \sqrt{2K}$. Thus by Lemma \ref{lm: control of prob},
\begin{align*}
    &\left( \EE \left(\|\widetilde\Vb \widetilde\Vb^{\top} \!\!- \!\!\Vb^\prime \Vb^{\prime \top} \|_{\rm F}^2 \II{\left\{\left\|\mathbf{\Sigma}^{\prime}-\Vb \Vb^{\top}\right\|_2 \ge \frac{1}{4}\right\}}\right)\right)^{{1}/{2}}\!\!\! \lesssim \sqrt{K} \left(\PP \left(\left\|\mathbf{\Sigma}^{\prime}-\Vb \Vb^{\top}\right\|_2 \ge \frac{1}{4}\right) \right)^{{1}/{2}}\\
    & \quad \lesssim \sqrt{K} \exp\left( -\frac{c_0}{8} \sqrt{\frac{p}{d}} \frac{\Delta  }{r_1(d)}\right).
\end{align*}
Therefore, we have 
$$
\left( \EE \|\widetilde\Vb \widetilde\Vb^{\top} \!\!- \!\!\Vb^\prime \Vb^{\prime \top} \|_{\rm F}^2\right)^{1/2} \lesssim \sqrt{K} \exp\left( -\frac{c_0}{8} \sqrt{\frac{p}{d}} \frac{\Delta  }{r_1(d)}\right) + \underbrace{\left( \EE \|\widetilde{\mathbf{\Sigma}} - {\mathbf{\Sigma}}^\prime \|_{\rm F}^2 \right)^{1/2}}_{\text{III}}.
$$
Now we move on to bound term III. 
\begin{align*}
    \left( \EE \|\widetilde{\mathbf{\Sigma}} - {\mathbf{\Sigma}}^\prime \|_{\rm F}^2 \right)^{1/2} &= \left( \EE \left\|  \frac{1}{L}\sum_{\ell = 1}^{L} \widehat{\Vb}^{(\ell)} \widehat{\Vb}^{(\ell) \top} - \EE \left( \widehat{\Vb}^{(1)} \widehat{\Vb}^{(1)\top} | \hat{\Mb}\right) \right\|_{\rm F}^2 \right)^{1/2}\\
    &=\bigg( \EE \bigg( \EE \bigg( \bigg\|  \frac{1}{L}\sum_{\ell = 1}^{L} \widehat{\Vb}^{(\ell)} \widehat{\Vb}^{(\ell) \top} - \EE \left( \widehat{\Vb}^{(1)} \widehat{\Vb}^{(1)\top} | \hat{\Mb}\right) \bigg\|_{\rm F}^2 \bigg| \hat{\Mb} \bigg)\bigg) \bigg)^{1/2}\\
    & = \frac{1}{\sqrt{L}} \left( \EE  \left\|   \widehat{\Vb}^{(\ell)} \widehat{\Vb}^{(\ell) \top} - \EE \left( \widehat{\Vb}^{(1)} \widehat{\Vb}^{(1)\top} | \hat{\Mb}\right) \right\|_{\rm F}^2  \right)^{1/2}\\
    & \le \frac{1}{\sqrt{L}} \left( \EE  \left\|   \widehat{\Vb}^{(\ell)} \widehat{\Vb}^{(\ell) \top} -\Vb \Vb^{\top}\right\|_{\rm F}^2  \right)^{1/2} + \frac{1}{\sqrt{L}} \left( \EE  \left\|  \Vb \Vb^{\top}- {\mathbf{\Sigma}}^\prime \right\|_{\rm F}^2  \right)^{1/2}.
\end{align*}
where the last but one equality is due to the independence of estimators from different sketches conditional on $\hat{\Mb}$. By Jensen's inequality \citep{jensen1906fonctions}, we have 
$$
\frac{1}{\sqrt{L}} \left( \EE  \left\|  \Vb \Vb^{\top}- {\mathbf{\Sigma}}^\prime \right\|_{\rm F}^2  \right)^{1/2} \le \frac{1}{\sqrt{L}} \left( \EE  \left\|   \widehat{\Vb}^{(\ell)} \widehat{\Vb}^{(\ell) \top} -\Vb \Vb^{\top}\right\|_{\rm F}^2  \right)^{1/2}.
$$
Thus we have 
\begin{equation}\label{eq: bound sig tilde}
\left( \EE \|\widetilde{\mathbf{\Sigma}} - {\mathbf{\Sigma}}^\prime \|_{\rm F}^2 \right)^{1/2}  \lesssim \frac{1}{\sqrt{L}} \left( \EE  \left\|   \widehat{\Vb}^{(\ell)} \widehat{\Vb}^{(\ell) \top} -\Vb \Vb^{\top}\right\|_{\rm F}^2  \right)^{1/2},
\end{equation}
Before bounding the RHS, let's consider the matrix ${\Yb}^{(\ell)} := \Vb \Pb_0 \mathbf{\Lambda}^0 \Vb^{\top} \mathbf{\Omega}^{(\ell)}$. If $\tilde{\mathbf{\Omega}}^{(\ell)} := \Vb^{\top} \mathbf{\Omega}^{(\ell)} \in \RR^{K \times p}$ does not have full row rank, then the entries will be restricted to a linear space with dimension less than $K\times p$. Since $\tilde{\mathbf{\Omega}}^{(\ell)}$ is a $K \times p$ standard Gaussian matrix, the probability that $\tilde{\mathbf{\Omega}}^{(\ell)}$ has full row rank is 1.  And thus with probability 1, the matrix ${\Yb}^{(\ell)}$ is of rank $K$, and $\Vb$ and the top $K$ left singular vectors of ${\Yb}^{(\ell)}/\sqrt{p} $ span the same column space. In other words, if we let $\boldsymbol{\Gamma}_K^{(\ell)}$ be the left singular vectors of ${\Yb}^{(\ell)} /\sqrt{p} $, then $\boldsymbol{\Gamma}_K^{(\ell)} \boldsymbol{\Gamma}_K^{(\ell)\top} = \Vb \Vb^{\top} $.

Now consider the $K$-th singular value of ${\Yb}^{(\ell)} /\sqrt{p} $, we let $ \Ub_{\tilde{\mathbf{\Omega}}} \Db_{\tilde{\mathbf{\Omega}}} \Vb_{\tilde{\mathbf{\Omega}}}^{\top}$ be the SVD of $\tilde{\mathbf{\Omega}}^{(\ell)} /\sqrt{p}$, and we have 
\begin{align*}
    \sigma_K\left({\Yb}^{(\ell)} /\sqrt{p}\right) &= \sigma_K  \left(\Vb \Pb_0 \mathbf{\Lambda}^0  \tilde{\mathbf{\Omega}}^{(\ell)} /\sqrt{p} \right) = \sigma_K  \left( \mathbf{\Lambda}^0 \Ub_{\tilde{\mathbf{\Omega}}} \Db_{\tilde{\mathbf{\Omega}}} \right)  \\
    & = \min_{\|\bx\|_2=1} \|\mathbf{\Lambda}^0 \Ub_{\tilde{\mathbf{\Omega}}} \Db_{\tilde{\mathbf{\Omega}}} \bx\|_2 \overset{(i)}{\ge} \sigma_{\min}\left( \tilde{\mathbf{\Omega}}^{(\ell)} /\sqrt{p}\right) \min_{\|\bv_1\|_2=1} \left\|\mathbf{\Lambda}^0 \Ub_{\tilde{\mathbf{\Omega}}} \bv_1\right\|_2\\
    &\overset{(ii)}{\ge}\sigma_{\min}\left( \tilde{\mathbf{\Omega}}^{(\ell)} /\sqrt{p}\right) \min_{\|\bv_2\|_2=1} \left\|\mathbf{\Lambda}^0 \bv_2\right\|_2 \ge \Delta \sigma_{\min}\left( \tilde{\mathbf{\Omega}}^{(\ell)} /\sqrt{p}\right),
\end{align*}
where $\bv_1 =  \Db_{\tilde{\mathbf{\Omega}}} \bx / \|\Db_{\tilde{\mathbf{\Omega}}} \bx\|_2$, and $\bv_2 = \Ub_{\tilde{\mathbf{\Omega}}} \bv_1$. Inequality (i) follows because $$\|\Db_{\tilde{\mathbf{\Omega}}} \bx\|_2 \ge \sigma_{\min}\left( \tilde{\mathbf{\Omega}}^{(\ell)} /\sqrt{p}\right) \|\bx\|_2 = \sigma_{\min}\left( \tilde{\mathbf{\Omega}}^{(\ell)} /\sqrt{p}\right),$$ and inequality (ii) is because $\|\bv_2\|_2 = \|\bv_1\|_2=1$.

Now by Wedin's Theorem \citep{wedin1972wedin} we have the following bound on the RHS of \eqref{eq: bound sig tilde},
\begin{align*}
    & \frac{1}{\sqrt{L}}\left( \EE  \big|\cD(\widehat\Vb^{(\ell)}, \Vb)\big|^2  \right)^{1/2} \!\!\! \lesssim \frac{\sqrt{K}}{\sqrt{L}} \left( \EE  \left\|   \widehat{\Yb}^{(\ell)}/\sqrt{p}  -{\Yb}^{(\ell)}/\sqrt{p}  \right\|_2^2 / \left(\Delta \sigma_{\min}\big( \tilde{\mathbf{\Omega}}^{(\ell)} /\sqrt{p}\big)\right)^2 \right)^{1/2}  \\
    & \quad \le \frac{\sqrt{K}}{\Delta \sqrt{L}} \left( \EE  \left\|   \widehat{\Yb}^{(\ell)}/\sqrt{p}  - {\Yb}^{(\ell)}/\sqrt{p}  \right\|_2^4 \right)^{1/4} \left(\EE \left( \sigma_{\min} \big( \tilde{\mathbf{\Omega}}^{(\ell)}/\sqrt{p}\big)\right)^{-4} \right)^{1/4}\\
    &\quad \lesssim \frac{\sqrt{K}}{\Delta \sqrt{L}} \|\|\Eb\|_2\|_{\psi_1} \cdot \|\|\mathbf{\Omega}^{(\ell)} / \sqrt{p}\|_2 \|_{\psi_1} \lesssim  \sqrt{\frac{Kd }{\Delta^2 pL}} \|\|\Eb\|_2\|_{\psi_1} \lesssim \sqrt{\frac{Kd }{\Delta^2 pL}} r_1(d),
\end{align*}
where the last but one inequality is due to Lemma \ref{lm: bound min eigenvalue}. Therefore, we have the final error rate for the estimator $\widetilde{\Vb}$:
\[
\left( \EE \|\widetilde\Vb \widetilde\Vb^{\top} - \Vb \Vb \|_{\rm F}^2 \right)^{1/2} \lesssim \underbrace{\sqrt{K} \exp\left( -\frac{c_0}{8} \sqrt{\frac{p}{d}} \frac{\Delta  }{r_1(d)}\right) + \frac{\sqrt{K}}{\Delta} r_1(d)}_{\text{bias}} + \underbrace{\sqrt{\frac{Kd }{\Delta^2 pL}} r_1(d)}_{\text{variance}}.
\]

Now consider the function $g(x) := \exp(a_0 \sqrt{\frac{p}{d}} x) / (\sqrt{d}x^2)$, where $a_0 >0$ is a fixed constant. We have 
$$
   \frac{ d \log g(x)}{ dx} = a_0 \sqrt{\frac{p}{d}} - \frac{2}{x}> 0, \quad \text{for } x \ge \frac{2}{a_0}\sqrt{\frac{d}{p}}. 
$$
Thus $g(x)$ is increasing on $x \ge 2\sqrt{d/p}/a_0$, and if we take $x \ge C \sqrt{\frac{d}{p}}\log d$ for some large enough constant $C > 0$, we have that $g(x) \ge 1$. Then by plugging in $x = \Delta / r_1(d)$ and taking $a_0 = c_0 / 8$,  under the condition that $ (\log d)^{-1}\sqrt{p/d} \Delta/r_1(d) \ge C$ for some large enough constant $C > 0$, we have that
$$
\exp\left( -\frac{c_0}{8} \sqrt{\frac{p}{d}} \frac{\Delta  }{r_1(d)}\right) \lesssim \frac{1}{\sqrt{d}} \left(\frac{r_1(d)}{\Delta}\right)^2 = o\left(\frac{r_1(d)}{\Delta}\right),
$$
and the error rate simplifies to 
\[
\left( \EE \|\widetilde\Vb \widetilde\Vb^{\top} - \Vb \Vb \|_{\rm F}^2 \right)^{1/2} \lesssim \underbrace{ \frac{\sqrt{K}}{\Delta} r_1(d)}_{\text{bias}} + \underbrace{\sqrt{\frac{Kd }{\Delta^2 pL}} r_1(d)}_{\text{variance}}.
\]
Now we move on to bound $\left( \EE \|\widetilde{\Vb}^{\rm F}\widetilde{\Vb}^{\text{F}\top} - \Vb\Vb^{\top} \|^2_{\rm F} \right)^{1/2}$. Since$\|\cdot\|_2^{2q}$ is convex, by Jensen's inequality \citep{jensen1906fonctions}, under the condition that $p \ge \max(2K, 8q + K - 1)$ we have that there exists some constant $\eta$ such that
\begin{align*}
    \EE \| \widetilde{\mathbf{\Sigma}} - \Vb \Vb^{\top} \|_2^{2q} 
    & \le \frac{1}{L}\sum_{\ell = 1}^L \EE \| \hat\Vb^{(\ell)}\hat\Vb^{(\ell)\top} - \Vb \Vb^{\top} \|_2^{2q} = \EE \| \hat\Vb^{(1)}\hat\Vb^{(1)\top} - \Vb \Vb^{\top} \|_2^{2q}\\
    & \le \EE \bigg( \left\|   \widehat{\Yb}^{(\ell)}/\sqrt{p}  -{\Yb}^{(\ell)}/\sqrt{p}  \right\|_2^{2q} \Big/ \left(\Delta \sigma_{\min}\left( \tilde{\mathbf{\Omega}}^{(\ell)} /\sqrt{p}\right)\right)^{2q}\bigg)\\
    & \le \frac{1}{\Delta^{2q}} \left(\EE  \left\|   \widehat{\Yb}^{(\ell)}/\sqrt{p}  - {\Yb}^{(\ell)}/\sqrt{p}  \right\|_2^{4q} \right)^{1/2}\left( \EE \left( \sigma_{\min} \left( \tilde{\mathbf{\Omega}}^{(\ell)}/\sqrt{p}\right)\right)^{-4q} \right)^{1/2}\\
    &\lesssim \left( \eta q^2  \sqrt{\frac{d }{\Delta^2p}}\|\|\Eb\|_2\|_{\psi_1} \right)^{2q}.
\end{align*}
Thus by Markov's inequality, we also have
\begin{align*}
    \PP \left(\| \widetilde{\mathbf{\Sigma}} - \Vb \Vb^{\top} \|_2 \ge \frac{1}{2} \right) & = \PP \left(\| \widetilde{\mathbf{\Sigma}} - \Vb \Vb^{\top} \|_2^{2q} \ge \frac{1}{2^{2q}} \right) \le 2^{2q}\EE\big(\| \widetilde{\mathbf{\Sigma}} - \Vb \Vb^{\top} \|_2^{2q}\big)\\
    & \lesssim  \left( 2\eta q^2  \sqrt{\frac{d }{\Delta^2p}}\|\|\Eb\|_2\|_{\psi_1} \right)^{2q}.
\end{align*}
Since $\widetilde\bSigma$ is the summation of positive semi-definite matrices by construction, $\widetilde\bSigma$ is also positive semi-definite. By Weyl's inequality \citep{franklin2012matrix}, we know that $\sigma_K(\widetilde{\mathbf{\Sigma}}) \ge 1- \| \widetilde{\mathbf{\Sigma}} - \Vb \Vb^{\top} \|_2 $ and $\sigma_{K+1}(\widetilde{\mathbf{\Sigma}}) \le \| \widetilde{\mathbf{\Sigma}} - \Vb \Vb^{\top} \|_2$. 

Now if we denote the SVD of $\widetilde{\mathbf{\Sigma}}^q$ by $\widetilde{\Vb} \tilde{\mathbf{\Lambda}}_K^q \widetilde{\Vb}^{\top} + \tilde{\Vb}_{\perp} \tilde{\mathbf{\Lambda}}_{\perp}^q \tilde{\Vb}_{\perp}^{\top}$, then with probability 1, $\widetilde{\Vb} \tilde{\mathbf{\Lambda}}_K^q \widetilde{\Vb}^{\top} \bOmega^{\F}$ and $\widetilde{\Vb}$ share the same column space. By the relationship $\sigma_{k}(\widetilde{\mathbf{\Sigma}}^q)=\sigma_{k}^q(\widetilde{\mathbf{\Sigma}})$ for $k \in [d]$ and Davis-Kahan’s Theorem \citep{yu2015davis}, we have
\begin{align*}
    \EE \left(\|\widetilde{\Vb}^{\rm F}\widetilde{\Vb}^{\text{F}\top} - \widetilde\Vb \widetilde\Vb^{\top}  \|^2_{\rm F} \, | \widetilde{\mathbf{\Sigma}} \right) & \lesssim \EE \left(K\| \widetilde{\mathbf{\Sigma}}^q \mathbf{\Omega}^{\rm F} - \widetilde{\Vb} \tilde{\mathbf{\Lambda}}_K^q \widetilde{\Vb}^{\top}\mathbf{\Omega}^{\rm F}\|_2^2/\sigma^2_{\min}(\widetilde{\Vb} \tilde{\mathbf{\Lambda}}_K^q \widetilde{\Vb}^{\top}\mathbf{\Omega}^{\rm F}) \, | \widetilde{\mathbf{\Sigma}} \right)\\
    & \lesssim \left( \frac{\sqrt{K}}{\sigma_{K}^q(\widetilde{\mathbf{\Sigma}}) } \|\tilde{\Vb}_{\perp} \tilde{\mathbf{\Lambda}}_{\perp}^q \tilde{\Vb}_{\perp}^{\top}\|_2 \cdot\| \|\mathbf{\Omega}^{\rm F} /\sqrt{p'}\|_2\|_{\psi_1} \right)^2\\
    & \lesssim \frac{Kd}{p'} \frac{\| \widetilde{\mathbf{\Sigma}} - \Vb \Vb^{\top} \|_2^{2q} }{ \left(1- \| \widetilde{\mathbf{\Sigma}} - \Vb \Vb^{\top} \|_2\right)^{2q} }.
\end{align*}
Therefore we have,
\begin{align*}
    &\left( \EE \|\widetilde{\Vb}^{\rm F}\widetilde{\Vb}^{\text{F}\top} - \widetilde\Vb \widetilde\Vb^{\top}  \|^2_{\rm F} \right)^{1/2}  \lesssim \left( \EE \|\widetilde{\Vb}^{\rm F}\widetilde{\Vb}^{\text{F}\top} - \widetilde\Vb \widetilde\Vb^{\top}  \|^2_{\rm F} \II\big\{\| \widetilde{\mathbf{\Sigma}} - \Vb \Vb^{\top} \|_2 \le {1}/{2} \big\} \right)^{1/2} \\
    &\quad + \left( \EE \|\widetilde{\Vb}^{\rm F}\widetilde{\Vb}^{\text{F}\top} - \widetilde\Vb \widetilde\Vb^{\top}  \|^2_{\rm F} \II\big\{\| \widetilde{\mathbf{\Sigma}} - \Vb \Vb^{\top} \|_2 > {1}/{2} \big\} \right)^{1/2}\\
    & \lesssim 2^q \sqrt{\frac{Kd}{p'}} \left( \EE \| \widetilde{\mathbf{\Sigma}} - \Vb \Vb^{\top} \|_2^{2q}  \right)^{1/2} + \sqrt{K} \left\{ \PP \left(\| \widetilde{\mathbf{\Sigma}} - \Vb \Vb^{\top} \|_2 \ge \frac{1}{2} \right) \right\}^{1/2}\\
    & \lesssim \sqrt{\frac{Kd}{p'}}\left( 2\eta q^2  \sqrt{\frac{d }{\Delta^2p}}\|\|\Eb\|_2\|_{\psi_1} \right)^{q} + \sqrt{K}\left( 2\eta q^2  \sqrt{\frac{d }{\Delta^2p}}\|\|\Eb\|_2\|_{\psi_1} \right)^{q}\\
    & \lesssim \sqrt{\frac{Kd}{p'}}\left( 2\eta q^2  \sqrt{\frac{d }{\Delta^2p}}\|\|\Eb\|_2\|_{\psi_1} \right)^{q},
\end{align*}
where the last but one inequality is by Markov's inequality, i.e., 
$$
\PP \left(\| \widetilde{\mathbf{\Sigma}} - \Vb \Vb^{\top} \|_2 \ge \frac{1}{2} \right) \le 2^{2q}\EE \| \widetilde{\mathbf{\Sigma}} - \Vb \Vb^{\top} \|_2^{2q} \lesssim \left( 2\eta q^2  \sqrt{\frac{d }{\Delta^2p}}\|\|\Eb\|_2\|_{\psi_1} \right)^{2q}.
$$
 Thus by previous results and triangle inequality we have
\begin{align*}
    & \left( \EE \big|\cD(\widetilde\Vb^{\F}, \Vb)\big|^2 \right)^{1/2}  \lesssim \left( \EE \|\widetilde{\Vb}^{\rm F}\widetilde{\Vb}^{\text{F}\top} - \widetilde\Vb \widetilde\Vb^{\top}  \|^2_{\rm F} \right)^{1/2}  + \left( \EE \|\widetilde\Vb \widetilde\Vb^{\top} - \Vb\Vb^{\top} \|^2_{\rm F} \right)^{1/2} \\
    & \quad \lesssim { \frac{\sqrt{K}}{\Delta} r_1(d)}\! +\! {\!\sqrt{\frac{Kd }{\Delta^2 pL}} r_1(d)} \!+\! \sqrt{\frac{Kd}{p'}}\left(\!\! 2\eta q^2  \sqrt{\frac{d }{\Delta^2p}}r_1(d) \right)^{q}.
\end{align*}	

\subsection{Proof of Corollary~\ref{prop: err rate terms} and Corollary~\ref{prop: err rate terms add exms}}\label{sec: proof prop err rates}

The case-specific error rates can be calculated by computing $r_1(d)$ and studying the proper value of $q$ for each example.

\noindent $\bullet$ \textbf{Example~\ref{ex: spiked gaussian}: } we know that $\Eb = \hbSigma- \bSigma+ (\sigma^2 - \hat{\sigma}^2) \Ib$. Now consider the $K'\times K'$ submatrix of $\bSigma$ corresponding to the the index set $S$, which we denote by $\mathbf{\Sigma}_{S} = \bSigma_{[S,S]}$.
We have $\mathbf{\Sigma}_{S} = \sigma^2 \Ib_{K'} + (\Vb)_{[S,:]} \mathbf{\Lambda}  (\Vb)_{[S,:]}^{\top}$, where $(\Vb)_{[S,:]}$ is the submatrix of $\Vb$ composed of the rows in $S$. Then since $(\Vb)_{[S,:]} \mathbf{\Lambda}  (\Vb)_{[S,:]}^{\top} \succeq \mathbf{0}$ and $\operatorname{rank}\big((\Vb)_{[S,:]} \mathbf{\Lambda}  (\Vb)_{[S,:]}^{\top} \big) \le K$, we know that $\sigma_{\min} (\mathbf{\Sigma}_{S}) = \sigma^2$. By Weyl's inequality \citep{franklin2012matrix}, we know $|\sigma^2 - \hat{\sigma}^2 | \le \|\widehat{\mathbf{\Sigma}}_{S} - \mathbf{\Sigma}_{S} \|_2  \le  \| \hat{\mathbf{\Sigma}} - \mathbf{\Sigma}\|_2 $. Thus we have 
$\|\Eb \|_2 \le \|\hat{\mathbf{\Sigma}} - \mathbf{\Sigma}\|_2  + |\sigma^2 - \hat{\sigma}^2 | \le 2\|\hat{\mathbf{\Sigma}} - \mathbf{\Sigma}\|_2$. Then by Lemma 3 in \citet{fandistributed2019}, we have that there exists some constant $c \ge 1$ such that for any $t \ge 0$, we have
\begin{align*}
    \PP(\|\Eb\|_2 \ge t) \le \PP(2\|\hbSigma - \bSigma\|_2 \ge t) \le \exp(-\frac{t}{2c(\lambda_1 + \sigma^2)\sqrt{r/n}}) ,
\end{align*}
where $r = \operatorname{tr}(\bSigma) / \|\bSigma\|_2$ is the effective rank of $\bSigma$. Thus we can see that $\|\Eb\|_2$ is sub-exponential with  
$$
\|\|\Eb\|_2\|_{\psi_1} \lesssim \|\|\hbSigma - \bSigma\|_2\|_{\psi_1} \lesssim (\lambda_1 + \sigma^2)\sqrt{\frac{r}{n}},
$$
and hence we can take $r_1(d) = (\lambda_1 + \sigma^2)\sqrt{\frac{r}{n}}$. When { $n \ge C (dr/{p})\kappa_1^2 (\log d)^6$ for some large enough constant $C  > 0$}, by Theorem~\ref{thm: error bound}, we have
$$
            \left( \EE |\cD(\widetilde\Vb^{\F}, \Vb)|^2 \right)^{1/2} \lesssim  \kappa_1 \sqrt{\frac{Kr}{n}} + \kappa_1  \sqrt{\frac{Kd r}{npL}} + \sqrt{\frac{Kd}{p'}}\left( \eta q^2\kappa_1 \sqrt{\frac{d r}{np}} \right)^{q},
$$
where by \eqref{eq: q range} the third term will be dominated by the first bias term { when taking $q = \lceil \log d /\log\log d + 3\rceil$}.

\noindent $\bullet$ \textbf{Example~\ref{ex: GMM}: }From the problem setting we know that we can represent $\bW_j$ as $\bW_j = \sum_{k=1}^K \II\{k_j = k \}\btheta_k + \bZ_j$, where $\bZ_j \overset{\text{i.i.d}}{\sim} {\cN}(\mathbf{0},\Ib_n )$, $j \in [d]$. Denote  $\Zb = (\bZ_1, \ldots, \bZ_d)$, then it can be seen that $\EE(\Xb^{\top}\Xb) = \EE(\Xb)^{\top}\EE(\Xb) + \EE(\Zb^{\top}\Zb) = \Fb\bTheta^{\top}\bTheta\Fb^{\top} + n\Ib_d$, and we can write 
$$
\Eb  = \Xb^{\top}\Xb - \EE(\Xb^{\top}\Xb ) = \Fb \bTheta^{\top} \Zb + \Zb^{\top}\bTheta\Fb^{\top} + \Zb^{\top}\Zb - n\Ib_d,
$$
then we know that $\|\Eb\|_2 \le 2\|\Fb \bTheta^{\top} \Zb\|_2 + n\|\Zb^{\top}\Zb/n - \Ib_d\|_2$. We consider $\|\Fb \bTheta^{\top} \Zb\|_2$ first. We know that $\widetilde\Zb := \bTheta^{\top} \Zb =  \bTheta^{\top} (\bZ_1, \ldots, \bZ_d) = (\widetilde\bZ_1,\ldots,\widetilde\bZ_d) \in \RR^{K \times d}$, where $\widetilde\bZ_j \overset{\text{i.i.d}}{\sim} {\cN}(\mathbf{0}, \bTheta^{\top}\bTheta )$. Under the given conditions we know that $\| \bTheta^{\top}\bTheta \|_2 \le \Delta_0^2 $. Since $(\bTheta^{\top}\bTheta )^{-1/2} \widetilde\Zb$ is a $K\times d$ i.i.d. Gaussian matrix,  by Lemma \ref{lm: gaussian norm}, we have that $$\|\|\widetilde\Zb\|_2\|_{\psi_1} \lesssim \|(\bTheta^{\top}\bTheta )^{1/2} \|_2 \|\|(\bTheta^{\top}\bTheta )^{-1/2} \widetilde\Zb\|_2\|_{\psi_1} \lesssim \Delta_0\sqrt{d}.$$ 

As for $\|\Zb^{\top}\Zb/n - \Ib_d\|_2$, when $n > d$, by Lemma 3 in \citet{fandistributed2019} we know that $\|\|\Zb^{\top}\Zb/n - \Ib_d\|_2\|_{\psi_1} \lesssim \sqrt{d/n}$, and hence in summary we have 
\begin{align*}
    \|\|\Eb\|_2\|_{\psi_1} \lesssim \|\Fb\|_2 \|\|\widetilde\Zb\|_2\|_{\psi_1} + n\|\|\Zb^{\top}\Zb/n - \Ib_d\|_2\|_{\psi_1} \lesssim \Delta_0d/\sqrt{K} + \sqrt{nd},
\end{align*}
and we can take $r_1(d) = \Delta_0d/\sqrt{K} + \sqrt{nd}$.  We know that $\Delta = \sigma_{\min}(\Fb \bTheta^{\top} \bTheta \Fb^{\top}) \gtrsim  d\Delta_0^2/K$, and thus under the condition that { $\Delta_0^2 \ge C K(\log d)^3 \left(d(\log d)^3/p \vee \sqrt{n/p}\right)$} for some large enough constant $C > 0$, by Theorem~\ref{thm: error bound} we have that 

$$
         \begin{aligned}
        \left( \EE |\cD(\widetilde\Vb^{\F}, \Vb)|^2 \right)^{1/2} &\!\!\!\lesssim \left(\frac{K}{\Delta_0} \!+\!\frac{K}{\Delta_0^2} \sqrt{\frac{Kn}{d}}\right)\!+  \!\sqrt{\frac{d}{pL}}\left(\frac{K}{\Delta_0} \!+\!\frac{K}{\Delta_0^2} \sqrt{\frac{Kn}{d}}\right)\\
        &+ \sqrt{\frac{Kd}{p'}}\left(\! \eta q^2\! \left(\sqrt{\frac{dK}{p\Delta_0^2}}+ \frac{K}{\Delta_0^2} \sqrt{\frac{n}{p}}\right) \right)^{q},
    \end{aligned}
$$
{ where the third term is  dominated by the bias term when taking $q = \lceil \log d /\log\log d + 3\rceil$ according to \eqref{eq: q range}.} 

 \begin{remark}\label{rmk: exm3 rate}
In fact we can derive a slightly sharper tail bound for the convergence rate of $\|\Eb\|_2$. More specifically, for any $ t \ge \Delta_0 \sqrt{d}$, by Lemma 3 in \citet{fandistributed2019} there exists some constant $c \ge 1$ such that
\begin{align*}
    &\PP\big(\|\widetilde\Zb\|_2 \ge t \big)  = \PP\big(\|\widetilde\Zb \widetilde\Zb^{\top} \|_2 \ge t^2 \big) = \PP\big(d\|\widetilde\Zb \widetilde\Zb^{\top}/d - \bTheta^{\top}\bTheta + \bTheta^{\top}\bTheta \|_2 \ge t^2 \big)\\
    &\le \PP\big(d\|\widetilde\Zb \widetilde\Zb^{\top}/d - \bTheta^{\top}\bTheta\|_2 \ge t^2  - d\|\bTheta^{\top}\bTheta \|_2 \big) \le \PP\big(\|\widetilde\Zb \widetilde\Zb^{\top}/d - \bTheta^{\top}\bTheta\|_2 \ge t^2/d  - \Delta_0^2 \big) \\
    &\le \exp \Big( -\frac{t^2/d - \Delta_0^2}{c  \Delta_0^2\sqrt{K/d}} \Big),
\end{align*}
which indicates that $\|\widetilde\Zb\|_2 \lesssim \Delta_0 \sqrt{d}$ with probability at least $1 - d^{-10}$. Hence  under the condition that $\sqrt{K/d}\log d = O(1)$, with probability at least $1 - O(d^{-10})$ we have that $\|\Eb\|_2 \lesssim d\Delta_0/\sqrt{K} + \sqrt{dn} \log d$, which will be used as the statistical rate of $\|\Eb\|_2$ in later proofs. 
\end{remark}

\noindent $\bullet$ \textbf{Example~\ref{ex: SBM}: }  Under the problem settings we know that $\Eb = \widehat\Mb - \Mb = \Xb - \EE \Xb$. For the eigenvalues of $\Mb$, under the given conditions we know that 
$$
\sigma_K(\Mb) \gtrsim \theta \sigma_K (\Pb) \sigma_K^2 (\mathbf{\Pi}) \gtrsim d\theta/K, \quad \sigma_1(\Mb) \lesssim \theta \sigma_1 (\Pb) \sigma_1^2 (\mathbf{\Pi}) \lesssim Kd\theta \|\mathbf{\Pi}\|_{2, \infty}^2 \le Kd\theta,
$$
where the last inequality is because for $i \in [d]$, we have that $$\|\boldsymbol{\pi}_i\|_2 = \big(\sum_{k=1}^K \boldsymbol{\pi}_i(k)^2\big)^{1/2} \le \big(\sum_{k=1}^K \boldsymbol{\pi}_i(k)\big)^{1/2} = 1 \quad \text{and}\quad \|\mathbf{\Pi}\|_{2,\infty} \le 1.$$
Thus we know that $\Delta \gtrsim d\theta/K$. 

We then bound the entries of $\Mb$.  We know $\Mb_{ij} = \theta_{i} \theta_{j} \sum_{k=1}^{K} \sum_{k'=1}^{K} \boldsymbol{\pi}_{i}(k) \boldsymbol{\pi}_{j}(k') \Pb_{k k'}$, and thus we have that 
\begin{align*}
    &\Mb_{ij}  \ge \theta_{i} \theta_{j} \sum_{k=1}^{K} \sum_{k'=1}^{K} \boldsymbol{\pi}_{i}(k) \boldsymbol{\pi}_{j}(k') \min_{kk'} (\Pb_{k k'}) \\
    & = \theta_i\theta_j \min_{kk'} (\Pb_{k k'})  \sum_{k=1}^{K} \sum_{k'=1}^{K} \boldsymbol{\pi}_{i}(k) \boldsymbol{\pi}_{j}(k') = \theta_i\theta_j \min_{kk'} (\Pb_{k k'}); \\  
    &\Mb_{ij}  \le \theta_{i} \theta_{j} \sum_{k=1}^{K} \sum_{k'=1}^{K} \boldsymbol{\pi}_{i}(k) \boldsymbol{\pi}_{j}(k') \max_{kk'} (\Pb_{k k'})\\
    &= \theta_i\theta_j \max_{kk'} (\Pb_{k k'})  \sum_{k=1}^{K} \sum_{k'=1}^{K} \boldsymbol{\pi}_{i}(k) \boldsymbol{\pi}_{j}(k') = \theta_i\theta_j \max_{kk'} (\Pb_{k k'}) .
\end{align*}
Thus we can see that $\Mb_{ij} \asymp \theta$, $\max_{ij}\EE (\Eb_{ij}^2) \lesssim \theta$ and $\max_{i} \sum_{j} \EE(\Eb_{ij}^2) \lesssim d\theta$. 
By Theorem 3.1.4 in \citet{chen2020spectral}, we know that there exists some constant $c>0$ such that for any $t > 0$, 
$$
\mathbb{P}\{\|\Eb\|_2 \geq 4 \sqrt{d\theta}+t\} \leq d \exp \left(-t^{2}/c\right).
$$
Also, since for $t \ge 5\sqrt{d\theta}$, there exists a constant $c >0$ such that $ \PP(\|\Eb\|_2 \ge t) \le \exp(-t^2/c)$, we have that $\|\|\Eb\|_2\|_{\psi_1} \lesssim \sqrt{d\theta}$, and hence we can take $r_1(d) = \sqrt{d\theta}$. Besides, $\sqrt{p/d} \Delta/r_1(d) = \sqrt{p\theta}/K \gtrsim d^{\epsilon/2}$, and hence by Theorem~\ref{thm: error bound} we have 
$$
            \left( \EE |\cD(\widetilde\Vb^{\F}, \Vb)|^2 \right)^{1/2} \lesssim K\sqrt{\frac{K}{d\theta}} + K \sqrt{\frac{K}{pL\theta}} + \sqrt{\frac{Kd}{p'}}\left( \eta q^2  \frac{K}{\sqrt{p\theta}} \right)^{q},
$$
{ where the third term is  dominated by the bias term when taking $q = \lceil \log d /\log\log d + 3\rceil$ by \eqref{eq: q range}. }
\begin{remark}\label{rmk: exm2 rate}
It's worth noting that here in Example~\ref{ex: SBM} $\|\Eb\|_2$ converges faster than sub-Exponential random variables and $\|\Eb\|_2 \lesssim \sqrt{d\theta}$ with probability at least $1 - d^{-10}$, which we will take into account in later proofs.
\end{remark}
\begin{remark}\label{rmk: exm2 no-self-loop rate}
    Under the case where no self-loops are present, $\Eb$ is replaced by $\Eb' = \Eb - \diag(\Xb) = \Eb - \diag(\Eb) - \diag(\Mb)$. With similar arguments we can show that 
    $$\|\|\Eb'\|_2\|_{\psi_1} \lesssim \|\|\Eb - \diag(\Eb) \|_2\|_{\psi_1} + \|\diag(\Mb)\|_2 \lesssim \sqrt{d\theta} + \theta \lesssim \sqrt{d\theta},$$
    $$\text{and}\quad \|\Eb'\|_2 \lesssim \|\Eb - \diag(\Eb)\|_2 + \|\diag(\Mb)\|_2 \lesssim \sqrt{d\theta} + \theta \lesssim \sqrt{d\theta},$$
    with probability at least $1-d^{-10}$, and hence \eqref{eq: err bd exm 2} also holds for the no-self-loops case.
\end{remark}

\noindent $\bullet$ \textbf{Example~\ref{ex: missing mat}: } 
We define $\bar{\cE} = [\varepsilon_{ij}]$, then $\widehat\Mb = (1/\hat\theta) \cP_{\cS}(\Mb+\bar{\cE})$, where $\cP_{\cS}$ is the projection onto the subspace of matrices with non-zero entries only in $\cS$. Since $\widehat\Mb$ and $\widehat\Mb' := (\hat\theta/\theta) \widehat\Mb = ({1}/{\theta})\cP_{\cS}(\Mb+\bar{\cE})$ differ only by a positive factor, $\widehat\Mb$ and $\widehat\Mb'$ share exactly the same sequence of eigenvectors and  $\widetilde\Vb^{\F}$ can be viewed as the output by applying FADI to $\widehat\Mb'$. Thus we will establish the results for $\widehat\Mb'$ instead, and abuse the notation by denoting $\Eb := \widehat\Mb' - \Mb$.  We first study the order of $\|\Mb\|_{\max}$, where $\|\Mb\|_{\max} = \max_{i,j} |\Mb_{ij}|$ denotes the matrix max norm. When $\|\Vb\|_{2,\infty} \le \sqrt{\mu K / d}$ for some rate $\mu \ge 1$ (that may change with $d$), for any $i,j \in [d]$, we have that 
$$
|\Mb_{ij}| =| \eb_i^{\top}\Vb \bLambda (\eb_j^{\top}\Vb)^{\top}| \le \|\bLambda\|_2 \|\eb_i^{\top}\Vb\|_2 \|\eb_j^{\top}\Vb\|_2  \le |\lambda_1|\|\Vb\|_{2,\infty}^2 \le \frac{|\lambda_1|\mu K}{d}.
$$
Thus we have $\|\Mb\|_{\max} = O({|\lambda_1|\mu K}/{d})$. Also, we can write $\Eb = \Eb_1 + \Eb_2$, where $(\Eb_1)_{ij} = \Mb_{ij}(\delta_{ij}-\theta)/\theta, (\Eb_2)_{ij} = \varepsilon_{ij}\delta_{ij}/\theta$, and for $ i \le j$
$$\operatorname{Var}\big((\Eb_1)_{ij}\big) = \Mb_{ij}^2(1-\theta)/\theta \le \|\Mb\|_{\max}^2/\theta = O\Big(\frac{(\lambda_1\mu K)^2 }{d^2\theta}\Big), \quad \operatorname{Var}\big((\Eb_2)_{ij}\big) = \sigma^2/\theta.$$
It is not hard to see that $\Cov((\Eb_1)_{ij},(\Eb_2)_{ij})=0$. Also, by the setting of Example~\ref{ex: missing mat} we have that $|(\Eb_1)_{ij}| \le \|\Mb\|_{\max}/\theta = O(\frac{|\lambda_1|\mu K}{d\theta})$, and there exists a constant $C > 0$ independent of $d$ such that $|(\Eb_2)_{ij}| \le C \sigma\log d/\theta $ for all $i \le j$. Then we will study $\|\Eb_1\|_2$ and $\|\Eb_2\|_2$ separately. We denote $\nu_1 = d \|\Mb\|_{\max}^2/\theta$ and $\nu_2 = d\sigma^2/\theta$. Under the condition that $ \theta \ge d^{-1/2+ \epsilon}$ for some constant $\epsilon > 0$, by Theorem 3.1.4 in \citet{chen2020spectral}, there exists constant $c>0$ such that for any $t \ge 4$ we have 
\begin{align*}
    \PP\Big(\frac{\|\Eb_1\|_2}{2\sqrt{\nu_1}} \ge t\Big)& \le \PP\big(\|\Eb_1\|_2/\sqrt{\nu_1} \ge 4 + t\big) = \PP\big(\|\Eb_1\|_2 \ge 4\sqrt{\nu_1} + t\sqrt{\nu_1} \big) \\
    & \le d\exp\Big(-\frac{t^2 d\|\Mb\|_{\max}^2/\theta}{c\|\Mb\|_{\max}^2/\theta^2}\Big) = \exp( -d\theta t^2/c + \log d) \\
    & \le \exp( -\frac{d\theta t^2}{2c}) \le \exp(-t^2).
\end{align*}
Very similarly for $\|\Eb_2\|_2$, there exists $c'>0$ such that for any $t \ge 4$, we have 
\begin{align*}
    &\PP\Big(\frac{\|\Eb_2\|_2}{2\sqrt{\nu_2}} \ge t\Big) \le \PP\big(\|\Eb_2\|_2\ge 4\sqrt{\nu_2} + t\sqrt{\nu_2} \big) \le d\exp\Big(-\frac{t^2 d\sigma^2/\theta}{c'\sigma^2(\log d)^2/\theta^2}\Big) \\
    & = \exp\Big( -\frac{d\theta t^2}{c'(\log d)^2} + \log d\Big)
    \le \exp\Big( -\frac{d\theta t^2}{2c'(\log d)^2}\Big)  \le \exp(-t^2).
\end{align*}
Thus we can see that 
$$\|\|\Eb\|_2\|_{\psi_1} \le \|\|\Eb_1\|_2\|_{\psi_1} +\|\|\Eb_2\|_2\|_{\psi_1} \lesssim \sqrt{\nu_1} + \sqrt{\nu_2} \lesssim \frac{|\lambda_1| \mu K}{\sqrt{d\theta}} + \sqrt{\frac{d\sigma^2}{\theta}}.$$
By Theorem~\ref{thm: error bound}, under the condition that $p = \Omega(\sqrt{d})$, { $\sigma/\Delta \ll (\log d)^{-3}d^{-1}\sqrt{p\theta}$ and 
$\kappa_2\mu K \ll (\log d)^{-3} d^{\epsilon/2} $}, it holds that
    $$
        \begin{aligned}
        \left( \EE |\cD(\widetilde\Vb^{\F}, \Vb)|_2^2 \right)^{1/2} &\!\!\lesssim \sqrt{K} \left(\frac{\kappa_2 \mu K}{\sqrt{d\theta}} \!+\! \sqrt{\frac{d\sigma^2}{\Delta^2 \theta}}\right) \!\!+\!  K\sqrt{\frac{d}{pL}} \!\!\left(\frac{\kappa_2 \mu K}{\sqrt{d\theta}} \!+\! \sqrt{\frac{d\sigma^2}{\Delta^2 \theta}}\right)\\
        & + \sqrt{\frac{Kd}{p'}}\left( \eta q^2 \left(\frac{\kappa_2 \mu K}{\sqrt{p\theta}} + \sqrt{\frac{d^2\sigma^2}{p\Delta^2 \theta}}\right) \right)^{q}.
    \end{aligned}
    $$
 Furthermore, { when taking $q = \lceil \log d /\log\log d + 3\rceil$,
the third term vanishes according to \eqref{eq: q range} and \eqref{eq: err bd exm 4} holds.
}

\begin{remark}\label{rmk: err rate exm 4}
Here we can also obtain a statistical rate sharper than subexponential rate for $\|\Eb\|_2$ that would be used in later proofs. Combining the above results for any $t \ge 16\max(\sqrt{\nu_1}, \sqrt{\nu_2})$ we have
\begin{align*}
    \PP\big(\|\Eb\|_2 \ge t\big) &\le \PP\big(\|\Eb_1\|_2 \ge t/2 \big) + \PP\big(\|\Eb_2\|_2 \ge t/2 \big) \le \exp( -\frac{d\theta t^2}{32c\nu_1}) + \exp\Big(\!\! -\!\frac{d\theta t^2}{32c'(\log d)^2\nu_2}\Big) \\
    & = \exp\Big(-\frac{d^2\theta^2 t^2}{C_1(\lambda_1\mu K)^2}\Big) + \exp\Big(-\frac{\theta^2 t^2}{C_2(\log d)^2 \sigma^2} \Big),
\end{align*}
where $C_1, C_2 > 0$ are constants. 
Thus $\|\Eb\|_2 \lesssim \frac{|\lambda_1| \mu K}{\sqrt{d\theta}} + \sqrt{\frac{d\sigma^2}{\theta}}$ with probability at least $1- d^{-10}$.

\end{remark}
{ 
\subsection{Heterogeneous Case for Example~\ref{ex: spiked gaussian}}\label{sec: hetero ex 1}
In this section we consider the generalization of Example~\ref{ex: spiked gaussian} to the heterogeneous setting. 
\begin{example}\label{exm: exm 1 hetero}
Let $\bX_1, \ldots, \bX_n \in \RR^d$ be independent random vectors distributed across $m$ sites, with $\EE (\bX_i) = \mathbf{0}$ and $\EE(\bX_i \bX_i^{\top}) = \bSigma_i$ for $i = 1, \ldots, n$. Assume that there exists a rate $B_n > 0$ possibly depending on $n$, such that we have 
\begin{align}
    \PP\Big(\max_{i \in [n]}\|\bX_i \bX_i^{\top} - \bSigma_i \|_2 \ge B_n \Big) &\le q_1(n),\label{eq: tail rate 1}\\
    \max_{i \in [n]}\big\|\EE\big((\bX_i \bX_i^{\top}- \bSigma_i  ) \II\{\|\bX_i \bX_i^{\top} - \bSigma_i \|_2 \ge B_n\}\big)\big\|_2 &\le q_2(n), \label{eq: tail rate 2}\\
    \max_{i \in [n]} \|\EE [(\bX_i \bX_i^{\top}- \bSigma_i  ) (\bX_i \bX_i^{\top}- \bSigma_i  ) ]\|_2 &\le \nu_n,\label{eq: var bd M}
\end{align}
where $q_1(n), q_2(n), \nu_n > 0$ are rates possibly dependent of $n$. Additionally, suppose that there exists $ \bSigma = \Db + \Vb \bLambda \Vb^{\top} \in \RR^{d \times d}$ with $\Db = \diag(\sigma_1^2,\ldots, \sigma_d^2)$, $\bLambda = \diag(\lambda_1,\ldots, \lambda_K)$ ($\lambda_1 \ge \lambda_2 \ge \ldots \ge \lambda_K > 0$) and $\Vb^{\top} \Vb = \Ib_K$, such that 
\begin{equation}\label{eq: tail rate 3}
   \Big \|\frac{1}{n} \sum_{i=1}^n \bSigma_i  - \bSigma \Big \|_2 \le q_3(n), \quad 
   \lambda_1 \|\Vb\|_{2,\infty}^2/ \Delta \le q_4(d),
\end{equation}
where $\Delta = \lambda_K$, and $q_3(n), q_4(d) > 0$ are rates possibly dependent of $n$ and $d$. Then we let $\Mb = \Vb \bLambda \Vb^{\top}$, $\widehat\Mb = \widehat\bSigma - \diag(\hbSigma)$, where $\hbSigma = \frac{1}{n} \sum_{i = 1}^n \bX_i \bX_i^{\top}$, and we will estimate $\Vb$ by performing the FADI algorithm in Section~\ref{sec: alg}, with the resulting estimator $\widetilde\Vb^{\F}$. 
\end{example}
\begin{remark}
    Example~\ref{exm: exm 1 hetero} generalizes Example~\ref{ex: spiked gaussian} to allow for heterogeneity in the data. The conditions specified in~\eqref{eq: tail rate 1} through~\eqref{eq: var bd M} are standard scaling conditions commonly met by sub-Gaussian random vectors. Condition~\eqref{eq: tail rate 3} imposes a specific structure on the population covariance matrices. A relevant scenario where Example~\ref{exm: exm 1 hetero} applies is when data partitions stored across different machines exhibit different population covariance matrices. In particular, consider the sample split $\{\bX_i^{(s)}\}_{i=1}^{n_s}$ of size $n_s$ at the $s$-th site, and assume $\bSigma^{(s)} = \EE[\bX_i^{(s)}\bX_i^{(s)\top}] = \Db_s + \Vb \bLambda_s \Vb^{\top}$, where $\Db_s$ and $\bLambda_s$ are diagonal matrices allowed to vary across sites. Under this framework, Example~\ref{exm: exm 1 hetero} is applicable, leading to $\bSigma = \Vb (\sum_{s \in [m]} \frac{n_s}{n}  \bLambda_s) \Vb^{\top} + \sum_{s \in [m]} \frac{n_s}{n} \Db_s$ such that Condition~\eqref{eq: tail rate 3} holds with $q_3(n) = 0$.
\end{remark}
\begin{remark}
  A further generalization of Example~\ref{exm: exm 1 hetero} to the non-independent setting relies on controlling the rate of $\|\widehat\bSigma - \frac{1}{n} \sum_{i=1}^n \bSigma_i \|_2$. Several studies have examined the convergence of sample covariance matrices for non-i.i.d. data \citep{Banna_2016, fan2013poet}.
\end{remark}
We characterize the statistical rate of $\widetilde\Vb^{\F}$ by the following corollary of Theorem~\ref{thm: error bound}.
\begin{corollary}\label{col: exm 1 hetero error bound}
    Let $\bX_1, \ldots, \bX_n \in \RR^d$ be the independent centered random vectors defined in Example~\ref{exm: exm 1 hetero}, and recall $\widetilde\Vb^{\F}$ is the FADI estimator of $\Vb$.  Define 
    $$
    \kappa(d,n) =  \Delta^{-1} \left( \sqrt{\frac{2\nu_n \log d}{n} } + \frac{2\big(B_n + \!q_2(n)\big) \log d} {3n} + q_2(n) + q_3(n) \right) + q_4(d).
    $$
    Under the condition that $(\log d)^{-3}\sqrt{p/d} \kappa(d,n)^{-1} \ge C$ for some large enough constant $C > 0$, if we take $p' \ge \max(2K, K + 7)$, $q = \lceil \log d /\log\log d + 3\rceil$ and $p \ge \max(2K, K + 8q -1)$, there exists some constant $\eta > 0$ such that we have
    \begin{equation}\label{eq: exm 1 hetero error bound}
          \big( \EE|\cD (\widetilde\Vb^{\F}, \Vb) |^2\big)^{1/2} \lesssim \sqrt{K} \kappa(d,n) + \sqrt{\!\frac{Kd}{p L}} \kappa(d,n)
          + \sqrt{K} q_1(n).
    \end{equation}
\end{corollary}
\begin{remark}
   In most applications, such as when $\{\bX_i\}_{i=1}^n$ are sub-Gaussian, the rate \(\nu_n\) is typically of order \(O(d)\), the rate \(B_n\) is polylogarithmic, and \(q_1(n)\) and \(q_2(n)\) are of order \(O(n^{-c})\) for some constant \(c \ge 1\) depending on \(B_n\). The rate \(\kappa(d, n)\) is generally of order \(\widetilde{O}\big(\sqrt{d/n} + {K/d} + q_3(n)\big)\), where the \(\widetilde{O}(\cdot)\) notation suppresses polylogarithmic factors. 
\end{remark}
\begin{proof}
    Define the matrix
    $$
    \widetilde\Mb_i = \bSigma_i + (\bX_i \bX_i^{\top}- \bSigma_i ) \II\{\|\bX_i \bX_i^{\top} - \bSigma_i\|_2 < B_n\}, \quad i = 1, \ldots, n,
    $$
   and the event 
    $$
    \cA_{B_n} = \{\widetilde\Mb_i  = \bX_i \bX_i^{\top}, \text{ for all } i = 1,\ldots, n\}. 
    $$
   By \eqref{eq: tail rate 1}, we have that $\PP(\cA_{B_n}) \ge 1 - q_1(n)$. Then denote $\widehat\Mb' = \frac{1}{n} \sum_{i=1}^n \widetilde\Mb_i -\diag\big( \frac{1}{n} \sum_{i=1}^n \widetilde\Mb_i\big) $, and let $\widetilde\Vb^{\F}_1$ be the resulting estimator by performing the FADI algorithm on $\widehat\Mb'$ instead. We have the following decomposition,
   \begin{align*}
       \cD (\widetilde\Vb^{\F}, \Vb) &= \cD (\widetilde\Vb^{\F}, \Vb)\II\{\cA_{B_n}\} + \cD (\widetilde\Vb^{\F}, \Vb)\II\{\cA_{B_n}^c\} = \cD (\widetilde\Vb^{\F}_1, \Vb)\II\{\cA_{B_n}\} + \cD (\widetilde\Vb^{\F}, \Vb)\II\{\cA_{B_n}^c\} \\
       &\le  \cD (\widetilde\Vb^{\F}_1, \Vb) + 2\sqrt{K} \II\{\cA_{B_n}^c\},
   \end{align*}
   where the second equality follows from the fact that under the event $\cA_{B_n}$ we have $\widetilde\Vb^{\F} = \widetilde\Vb^{\F}_1$. Hence in turn, we have the upper bound
     \begin{align*}
      \big( \EE|\cD (\widetilde\Vb^{\F}, \Vb) |^2\big)^{1/2} \lesssim   \big( \EE|\cD (\widetilde\Vb^{\F}_1, \Vb) |^2\big)^{1/2} + \sqrt{K} \PP(\cA_{B_n}^c) \le  \big( \EE|\cD (\widetilde\Vb^{\F}_1, \Vb) |^2\big)^{1/2} + \sqrt{K}q_1(n).
   \end{align*}
   Now to bound $ \big( \EE|\cD (\widetilde\Vb^{\F}_1, \Vb) |^2\big)^{1/2}$, it suffices for us to study the concentration behavior of $\Eb' = \widehat\Mb' - \Mb$ and apply Theorem~\ref{thm: error bound}. We have that 
   \begin{align*}
          \|\Eb'\|_2 &= \left\|\frac{1}{n} \sum_{i=1}^n \widetilde\Mb_i -\diag\Big( \frac{1}{n} \sum_{i=1}^n \widetilde\Mb_i\Big)  - \bSigma + \Db \right\|_2 \\
          & = \left\|\frac{1}{n} \sum_{i=1}^n \widetilde\Mb_i -\diag\Big( \frac{1}{n} \sum_{i=1}^n \widetilde\Mb_i\Big)  - \bSigma + \diag(\bSigma) - \diag(\Vb \bLambda \Vb^{\top}) \right\|_2 \\
          & \le 2 \left\|\frac{1}{n} \sum_{i=1}^n \widetilde\Mb_i - \bSigma\right \|_2 +  \lambda_1 \|\Vb\|_{2,\infty}^2,
   \end{align*}
   where the last inequality is due to the fact that 
   $$
   \left\|\diag\Big( \frac{1}{n} \sum_{i=1}^n \widetilde\Mb_i\Big)  - \diag (\bSigma ) \right \|_2 = \max_{j \in [d]}\left|\left(\frac{1}{n} \sum_{i=1}^n \widetilde\Mb_i - \bSigma\right)_{jj} \right| \le \left\|\frac{1}{n} \sum_{i=1}^n \widetilde\Mb_i - \bSigma\right \|_2 .
   $$
   Besides, we have that 
   \begin{align*}
       &\left\|\frac{1}{n} \!\sum_{i=1}^n \widetilde\Mb_i - \bSigma\right \|_2 \!\!\le \left\|\frac{1}{n} \!\sum_{i=1}^n \widetilde\Mb_i - \frac{1}{n}\! \sum_{i=1}^n \EE ( \widetilde\Mb_i) \right \|_2  \!\!+ \left\|\frac{1}{n} \!\sum_{i=1}^n \EE ( \widetilde\Mb_i) - \frac{1}{n}\! \sum_{i = 1}^n \!\bSigma_i \right \|_2 \!\!+   \left \|\frac{1}{n}\! \sum_{i=1}^n \!\bSigma_i  - \bSigma \right \|_2 \\
       & \quad \le \left\|\frac{1}{n} \!\sum_{i=1}^n \widetilde\Mb_i - \frac{1}{n}\! \sum_{i=1}^n \EE ( \widetilde\Mb_i) \right \|_2 + \max_{i \in [n]}\big\|\EE\big((\bX_i \bX_i^{\top}- \bSigma_i  ) \II\{\|\bX_i \bX_i^{\top} - \bSigma_i \|_2 \ge B_n\}\big)\big\|_2 \\
       & \quad \quad + \left \|\frac{1}{n}\! \sum_{i=1}^n \!\bSigma_i  - \bSigma \right \|_2 \le  \left\|\frac{1}{n} \!\sum_{i=1}^n \widetilde\Mb_i - \frac{1}{n}\! \sum_{i=1}^n \EE ( \widetilde\Mb_i) \right \|_2 + q_2(n) + q_3(n),
   \end{align*}
   and we will study the convergence rate of $\left\|\frac{1}{n} \!\sum_{i=1}^n \widetilde\Mb_i - \frac{1}{n}\! \sum_{i=1}^n \EE ( \widetilde\Mb_i) \right \|_2 $. By the definition of $\widetilde\Mb_i$ and \eqref{eq: var bd M}, we have that 
   \begin{align*}
       \max_{i \in [n]}\|\widetilde\Mb_i - \EE (\widetilde\Mb_i )\|_2 &\le B_n + q_2(n),\\
        \max_{i \in [n]} \| \EE [(\widetilde\Mb_i - \EE (\widetilde\Mb_i ))(\widetilde\Mb_i - \EE (\widetilde\Mb_i ))] \|_2 &\le  \max_{i \in [n]} \|\EE [(\bX_i \bX_i^{\top}- \bSigma_i  ) (\bX_i \bX_i^{\top}- \bSigma_i  ) ]\|_2 &\le \nu_n,
   \end{align*}
   and hence by Theorem~3.1.1 of \citet{chen2020spectral}, for any $t > 0$, we have that 
   \begin{align*}
       & \PP\left(\left\|\frac{1}{n} \!\sum_{i=1}^n \widetilde\Mb_i - \frac{1}{n}\! \sum_{i=1}^n \EE ( \widetilde\Mb_i) \right \|_2   \ge t \right)  \le 2d \exp\left(-\frac{nt^2 /2}{\nu_n + (B_n + q_2(n)) t /3}\right)\\
        & \le  2d \exp\left(-\frac{nt^2 /2}{\nu_n} - \frac{nt^2 /2}{(B_n + q_2(n)) t /3}\right)  \le d \left\{ \exp\left(-\frac{nt^2 }{\nu_n} \right)+\exp\left( - \frac{nt}{(B_n + q_2(n))  /3}\right)\right\},
   \end{align*}
   where the last two inequalities follow from Jensen's inequality \citep{jensen1906fonctions}. Then when $t \ge \max\{\sqrt{2\nu_n \log d/ n} , 2(B_n + q_2(n))\log d/(3n)\}$, we further have that 
   \begin{align}
        \PP\left(\left\|\frac{1}{n} \!\sum_{i=1}^n \widetilde\Mb_i - \frac{1}{n}\! \sum_{i=1}^n \EE ( \widetilde\Mb_i) \right \|_2   \ge t \right)   &\le 2 \exp\left(-\frac{t}{ \sqrt{2\nu_n / n} + 2(B_n + q_2(n))/(3n)}\right) \notag\\
        & \le  2 \exp\!\left(-\frac{t}{ \sqrt{2\nu_n \log d/ n} + 2(B_n + \!q_2(n)\!) \!\log d/(3n)}\right).  \label{eq: exp bd tail}
   \end{align}
   Then we have the following holds for all $t > 0$,
   $$
   \PP\left(\left\|\frac{1}{n} \!\sum_{i=1}^n \widetilde\Mb_i - \frac{1}{n}\! \sum_{i=1}^n \EE ( \widetilde\Mb_i) \right \|_2   \ge t \right)  \le  2 \exp\!\left(-\frac{t/2}{ \sqrt{2\nu_n \log d/ n} + 2(B_n + \!q_2(n)\!) \!\log d/(3n)}\right),
   $$
where the equality holds when $t \ge \sqrt{2\nu_n \log d/ n} + 2(B_n + q_2(n))\log d/(3n) $ as a result of  \eqref{eq: exp bd tail}, and when $t < \sqrt{2\nu_n \log d/ n} + 2(B_n + q_2(n))\log d/(3n) $, the equality holds because 
 $$
  2 \exp\!\left(-\frac{t/2}{ \sqrt{2\nu_n \log d/ n} + 2(B_n + \!q_2(n)\!) \!\log d/(3n)}\right) \ge 2 e^{-1/2} > 1.
   $$
   Then by standard probability theory, we know that $\left\|\frac{1}{n} \!\sum_{i=1}^n \widetilde\Mb_i - \frac{1}{n}\! \sum_{i=1}^n \EE ( \widetilde\Mb_i) \right \|_2 $ is sub-exponential and there exists $r_1'(d) = \sqrt{2\nu_n \log d/ n} + 2(B_n + \!q_2(n)\!) \!\log d/(3n)$ such that $\left\|\left\|\frac{1}{n} \!\sum_{i=1}^n \widetilde\Mb_i - \frac{1}{n}\! \sum_{i=1}^n \EE ( \widetilde\Mb_i) \right \|_2\right\|_{\psi_1} \lesssim r_1(d)$. Hence we have that 
   $$
   \|\|\Eb' \|_2\|_{\psi_1} \lesssim \lambda_1 \|\Vb\|_{2,\infty}^2 + q_2(n) + q_3(n) + r_1'(n),
   $$
   and applying Theorem~\ref{thm: error bound}, if $p \ge \max(2K, K+7)$ and $ (\log d)^{-1}\sqrt{p/d} \kappa(d,n)^{-1} \ge C$ for some large enough constant $C > 0$, there exists some constant $\eta > 0$ such that we have
   \begin{align*}
       \big( \EE|\cD (\widetilde\Vb^{\F}_1, \Vb) |^2\big)^{1/2} \lesssim \sqrt{K} \kappa(d,n) + \sqrt{\frac{Kd}{p L}} \kappa(d,n) + \sqrt{\frac{Kd}{p'}} \left(\eta q^2 \sqrt{\frac{d}{p}} \kappa(d,n)\right)^q,
   \end{align*}
where, according to \eqref{eq: q range}, the third term vanishes if we further impose \((\log d)^{-3}\sqrt{p/d} \kappa(d,n)^{-1} \ge C\) for some sufficiently large constant \(C > 0\), take \(q = \lceil \log d /\log\log d + 3\rceil\), and \(p \ge \max(2K, K + 8q -1)\). Then, by the triangle inequality, \eqref{eq: exm 1 hetero error bound} holds.
\end{proof}
}
\subsection{Proof of Theorem~\ref{thm: est k}}\label{sec: proof thm est k}
We first bound the recovery probability of $\hat{K}^{(\ell)}$ for each $\ell \in [L]$. Recall that $\widehat\Yb^{(\ell)}/\sqrt{p} = \Vb \mathbf{\Lambda} \tilde{\mathbf{\Omega}}^{(\ell)} / \sqrt{p} + {\Eb} \mathbf{\Omega}^{(\ell)}/\sqrt{p}$, where $\tilde{\mathbf{\Omega}}^{(\ell)} = \Vb^{\top} \mathbf{\Omega}^{(\ell)}$.

For the residual term ${\Eb}\mathbf{\Omega}^{(\ell)}/\sqrt{p}$, by Lemma 3 in \citet{fandistributed2019}, under the condition that $\sqrt{p/d}\log d = o(1)$, with probability at least $1 - d^{-10}$ we have $\|\mathbf{\Omega}^{(\ell)} /\sqrt{p}\|_2 \le 2\sqrt{\frac{d}{p}}$.
Denote by $\cA_{\Eb} $  the event $\big\{\|{\Eb}\|_2 \le 10c_e^{-1} r_1(d)\log d \big\}$, where $c_e > 0 $ is the constant defined in Remark~\ref{rmk: Eb subexp const}. Then conditional on $\cA_{\Eb}$, we have that  $\|{\Eb}\mathbf{\Omega}^{(\ell)}/\sqrt{p}\|_2 \le 20 c_e^{-1} \sqrt{\frac{d}{p}} r_1(d) \log d$ with probability at least $1- d^{-10}$ for each $\ell \in [L]$. Recall $\eta_0 = 480 c_e^{-1} \sqrt{\frac{d}{\Delta^2 p}} r_1(d) \log d$.
From Proposition 10.4 in \citet{halkofinding2011}, we know that when $p \ge 2K$,
$$
\PP \left( \sigma_{\min} \big(\tilde{\mathbf{\Omega}}^{(\ell)} / \sqrt{p}\big) \le \frac{1}{6} \sqrt{\eta_0} \right) \le \PP \left( \sigma_{\min} \big(\tilde{\mathbf{\Omega}}^{(\ell)} / \sqrt{p}\big) \le \frac{p-K+1}{{\rm e}p} \sqrt{\eta_0}\right) \le \eta_0^{\frac{p-K+1}{2}}.
$$
Therefore, with probability at least $1-\eta_0^{{(p-K+1)}/2}$, $$\sigma_{\min} \big(\Vb \mathbf{\Lambda} \tilde{\mathbf{\Omega}}^{(\ell)} / \sqrt{p}\big) \ge \Delta  \sigma_{\min} \big(\tilde{\mathbf{\Omega}}^{(\ell)} / \sqrt{p}\big) > \Delta\sqrt{\eta_0}/6 \ge 2 \mu_0.$$ 

By Weyl's inequality \citep{franklin2012matrix}, we know that conditional on $\cA_{\Eb}$, with probability at least $1-d^{-10}$, $\sigma_{K+1}(\widehat\Yb^{(\ell)} /\sqrt{p}) \le \|{\Eb}\mathbf{\Omega}^{(\ell)}/\sqrt{p}\|_2 \le 20 c_e^{-1} \sqrt{\frac{d}{p}} r_1(d) = \Delta \eta_0/24 \le \mu_0 $ for large enough $d$, which indicates that $\sigma_{k + 1}(\widehat\Yb^{(\ell)}) - \sigma_{p}(\widehat\Yb^{(\ell)}) < \sqrt{p}\mu_0$ for any $k \ge K$. For $k \le K-1$, under the same event we have
\begin{align*}
    & \sigma_{k + 1}(\widehat\Yb^{(\ell)}) - \sigma_{p}(\widehat\Yb^{(\ell)})  \ge \sigma_{K}(\widehat\Yb^{(\ell)}) - \sigma_{p}(\widehat\Yb^{(\ell)}) \ge \sigma_{\min} \big(\Vb \mathbf{\Lambda} \tilde{\mathbf{\Omega}}^{(\ell)} \big) - 2\|{\Eb}\mathbf{\Omega}^{(\ell)}\|_2 \\
    & \quad > \sqrt{p} (\Delta\sqrt{\eta_0}/6 - \Delta\eta_0/12) \ge \sqrt{p} (\Delta\sqrt{\eta_0}/6 - \Delta\sqrt{\eta_0}/12) = \Delta\sqrt{p\eta_0}/12 \ge \sqrt{p}\mu_0.
\end{align*}
Then we have
\begin{align*}
    &\PP\left(\hat{K}^{(\ell)} = K \, \,\big| \, \cA_{\Eb}\right) \ge  \PP\left(\!\! \sigma_{K}\!\big(\!{\widehat\Yb^{(\ell)} }\!\big) - \sigma_{p}\big(\!{\widehat\Yb^{(\ell)} }\!\big) > \sqrt{p}\mu_0,\,\,  \sigma_{K+1}\!\big(\!{\widehat\Yb^{(\ell)} }\!\big) - \sigma_{p}\!\big(\!{\widehat\Yb^{(\ell)} }\!\big)\le \sqrt{p} \mu_0\,\,\Big|\,\cA_{\Eb}\right) \\
    &\quad \ge \PP\left( \sigma_{\min} \big(\Vb \mathbf{\Lambda} \tilde{\mathbf{\Omega}}^{(\ell)} / \sqrt{p}\big) \ge \Delta\sqrt{\eta_0}/6,\quad  \|{\Eb}\mathbf{\Omega}^{(\ell)}/\sqrt{p}\|_2 \le \Delta\eta_0/24 \,  \Big|\,\cA_{\Eb}\right)\\
    & \quad \ge 1-  d^{-10} - \eta_0^{\frac{p-K+1}{2}}.
\end{align*}

We know that conditional on ${\Eb}$, $\II \{\hat{K}^{(\ell)} \neq K \, | \, \cA_{\Eb}\}$ are i.i.d. Bernoulli variables with expectation $p_K := \PP (\hat{K}^{(\ell)} \neq K \, | \, \cA_{\Eb}) \le  d^{-10} + \eta_0^{\frac{p-K+1}{2}} \le 1/4$ and variance $p_K(1-p_K) \le p_K$. Since the estimators $\{\hat{K}^{(\ell)}\}_{\ell =1}^L$ are all integers, we know that if $\hat{K} \neq K$, at least half of  $\{\hat{K}^{(\ell)}\}_{\ell =1}^L$ are not equal to $K$. Then by Hoeffding's inequality, we have
\begin{align*}
    \PP(\hat{K}\! \neq\! K) &\le \PP\! \left(\sum_{\ell =1}^L \II \left\{\hat{K}^{(\ell)} \neq K\right\} \!- \!p_K L \!\ge\! \frac{L}{4}\!\right) \!= \!\EE_{{\Eb}} \!\left(\!\PP \bigg(\!\sum_{\ell =1}^L \II \left\{\hat{K}^{(\ell)} \!\neq\! K\right\}\! -\! p_K L \ge \frac{L}{4} \,\big|\,{\Eb}\bigg)\!\right)\\
    & \le \PP(\cA_{\Eb}) \exp\left\{-(L/4)^2/(2Lp_K)\right\} + 1- \PP(\cA_{\Eb})\\
    & \le \exp\left\{-L\big/\big(32 d^{-10} +32 \eta_0^{\frac{p-K+1}{2}}\big)\right\} + O(d^{-10}).
\end{align*}
We know that $32 d^{-10} \le (\log d)^{-1}$ for $d \ge 2$, and under the condition that $ \eta_0 \le (32\log d)^{-\frac{2}{p-K+1}}$ we have $ \PP(\hat{K} \neq K) \le \exp(-L \log d/2) +  O(d^{-10}) \lesssim d^{-(L \wedge 20)/2}$.
\subsection{Proof of Corollary~\ref{prop: est K} and Corollary~\ref{prop: est K add exms}}\label{sec: proof of prop est K}
\noindent $\bullet$ \textbf{Example~\ref{ex: spiked gaussian}}: From the proof of Corollary~\ref{prop: err rate terms} we know that we can take $r_1(d) = (\lambda_1 + \sigma^2) \sqrt{\frac{r}{n}} \log d $. Then by plugging in each term we know that under the condition that $(\lambda_1 + \sigma^2) \left(d (np)^{-1/2} \log d\right)^{1/4} = o(1)$ and $\Delta \gg \left({\sigma^{-2} (np)^{-1/2}}d\log d\right)^{1/3}$, we have $\Delta \eta_0 /24 \ll \mu_0 \ll \Delta \sqrt{\eta_0}/12$. Besides, under the condition that $\kappa_1 \sqrt{dr/(np)}(\log d)^2 = o(1) $,  we also have $\eta_0 \le (32 \log d)^{-\frac{2}{p - K + 1}}  $. Thus the conditions for Theorem~\ref{thm: est k} are satisfied and we have $\hat{K} = K$ with probaility at least $1 - O(d^{-(L \wedge 20)/2})$.

\noindent $\bullet$ \textbf{Example~\ref{ex: GMM}: } We know from the proof of Corollary~\ref{prop: err rate terms} and Remark~\ref{rmk: exm3 rate} that $\Delta \gtrsim d\Delta_0^2/K$ and $\|\Eb\|_2 \lesssim d\Delta_0/\sqrt{K} + \sqrt{dn} \log d$ with probability at least $1- d^{-10}$. Thus we have $\eta_0 \asymp \sqrt{d/(\Delta^2 p)} \big(d\Delta_0/\sqrt{K} + \sqrt{dn} \log d\big)$. Under the condition that $ \sqrt{K (\log d)^3}\left(n/p\right)^{1/4} \ll \Delta_0 \ll \sqrt{nK/d} \log d$, we know that $d\Delta_0/\sqrt{K} + \sqrt{dn} \log d \lesssim \sqrt{dn}\log d$, $\Delta \eta_0 \asymp d\sqrt{n/p}\log d$ and $\sqrt{\eta_0} \log d = o(1)$, and thus $\Delta \eta_0/24 \ll \mu_0 \ll \Delta \sqrt{\eta_0}/12$. By Theorem~\ref{thm: est k} the claim follows. 

\noindent $\bullet$ \textbf{Example~\ref{ex: SBM}: } We know from the proof of Corollary~\ref{prop: err rate terms add exms} that $\Delta \gtrsim d\theta/K$. Also from Remark~\ref{rmk: exm2 rate} we know that $\|\Eb\|_2 \lesssim \sqrt{d\theta}$ with probability at least $1 - d^{-10}$, and thus we have $\eta_0 \asymp  \sqrt{d/(\Delta^2 p)} \sqrt{d\theta} \lesssim K / \sqrt{p\theta} \asymp 1/\sqrt{ d^{\epsilon-1/2} p}$, $\Delta \eta_0 \asymp d \sqrt{\theta/p}$ and $\Delta \sqrt{\eta_0} \gtrsim d\theta^{3/4}p^{-1/4}K^{-1/2} $. Also recall from the proof of Corollary~\ref{prop: err rate terms add exms} that $\EE(\widehat\Mb_{ij}) = \Mb_{ij} \asymp \theta$ for any $i, j \in [d]$, and hence $d^{-2} \sum_{i \le j} \Mb_{ij} \asymp \theta$. By Hoeffding's inequality \citep{hoeffding1994probability}, we have that 
$$
\PP\left(\frac{2}{d(d-1)}\left|\sum_{i \le j} \widehat\Mb_{ij} - \sum_{i \le j} \Mb_{ij}\right| \ge \frac{\sqrt{11\log d}}{d}\right) \lesssim \exp\left(- 11 d(d-1) \log d /d^2 \right) \lesssim d^{-10}.
$$
Thus we can see with probability at least $1-O(d^{-10})$, $|\hat\theta - d^{-2} \sum_{i\le j} \Mb_{ij}| \lesssim \frac{\sqrt{\log d}}{d}$ and $\hat\theta \asymp \theta$, and in turn $\Delta \eta_0/24 \ll \mu_0 \ll \Delta \sqrt{\eta_0}/12$. Thus by Theorem~\ref{thm: est k} the claim follows.

\noindent $\bullet$ \textbf{Example~\ref{ex: missing mat}: }By Hoeffding's inequality \citep{hoeffding1994probability}, with probability at least $1- d^{-10}$ we have that $ |\hat\theta-\theta|/\hat\theta \le C \sqrt{\log d}/d\theta$. As for $\hat\sigma_0^2$, we have 
\begin{align*}
    \hat\sigma_0^2 = \frac{1}{|\cS|}\sum_{(i,j) \in \cS} (\hat\theta\widehat\Mb_{ij})^2  = \frac{1}{|\cS|}\left(\sum_{i\le j} \delta_{ij}\Mb_{ij}^2 + 2\sum_{(i,j)\in \cS} \Mb_{ij}\varepsilon_{ij} + \sum_{(i,j)\in \cS} \varepsilon_{ij}^2\right).
\end{align*}
We consider the latter two terms first. We know that $|\varepsilon_{ij}| \le C\sigma\log d$ for some constant $C > 0$ and $|\Mb_{ij}| \le |\lambda_1|\mu K/d$, for any $i\le j$. Denote by $\tilde\sigma = (|\lambda_1|\mu K/d)\vee \sigma$, then we have 
$$
\Var(\M_{ij} \varepsilon_{ij}) \le ( \frac{|\lambda_1|\mu K}{d})^2\sigma^2 \le \tilde\sigma^4,\quad |\M_{ij} \varepsilon_{ij}| \le \frac{|\lambda_1|\mu K}{d}C\sigma\log d \le C\tilde\sigma^2\log d, \quad \forall i \le j,
$$
and
$$
\Var(\varepsilon_{ij}^2) \le C^4 \sigma^4 (\log d)^4 \le C^4 \tilde\sigma^4 (\log d)^4, \quad |\varepsilon_{ij}^2| \le C^2 \sigma^2 (\log d)^2 \le C^2 \tilde\sigma^2 (\log d)^2,\quad \forall i\le j. 
$$
Thus by Bernstein inequality \citep{bernstein1924ineq}, conditional on $\cS$, with probability at least $1 - 2d^{-10}$ we have that there exists a constant $C' > 0$ independent of $\cS$ such that
\begin{equation}\label{eq: est k mis mat 1}
    \left|\frac{1}{|\cS|}\sum_{(i,j)\in \cS} \Mb_{ij}\varepsilon_{ij} \right| \le C' \left(\frac{\tilde\sigma^2 \sqrt{\log d}}{\sqrt{|\cS|}} + \frac{\tilde\sigma^2(\log d)^2}{|\cS|}\right),
\end{equation}
and
\begin{equation}\label{eq: est k mis mat 2}
    \left|\frac{1}{|\cS|} \sum_{(i,j)\in \cS} \varepsilon_{ij}^2 - \sigma^2 \right| \le C' \left(\frac{\tilde\sigma^2(\log d)^{5/2}}{\sqrt{|\cS|}} + \frac{\tilde\sigma^2(\log d)^3}{|\cS|}\right).
\end{equation}
Now we consider the first term. Since $\delta_{ij}$'s are i.i.d. Bernoulli random variables with expectation $\theta$, we have 
$$
\Var(\Mb_{ij}^2 \delta_{ij}) \le \theta\tilde\sigma^4, \quad |\Mb_{ij}^2 \delta_{ij}| \le \tilde\sigma^2, \quad i\le j.
$$
Also, we know that $\sum_{i \le j}\Mb_{ij}^2 \ge \|\Mb\|_{\F}^2/2 \ge K\Delta^2/2$ and $\sum_{i \le j}\Mb_{ij}^2 \le \|\Mb\|_{\F}^2 \le K\lambda_1^2$, and hence $K\Delta^2\theta/2 \le \EE\Big(\sum_{i\le j} \delta_{ij}\Mb_{ij}^2\Big) \le K\lambda_1^2 \theta$.
Then by Bernstein inequality \citep{bernstein1924ineq} with probability at least $1-d^{-10}$, it holds that 
\begin{equation}\label{eq: est k mis mat 3}
    \left|\Big(\sum_{i\le j} \delta_{ij}\Mb_{ij}^2\Big)  - \EE\Big(\sum_{i\le j} \delta_{ij}\Mb_{ij}^2\Big) \right| \lesssim d^2 \left(\frac{\tilde\sigma^2\sqrt{\theta\log d}}{d} + \frac{\tilde\sigma^2\log d}{d^2}\right)= \tilde\sigma^2 (d\sqrt{\theta \log d} + \log d).
\end{equation}
Thus combining \eqref{eq: est k mis mat 1}, \eqref{eq: est k mis mat 2} and \eqref{eq: est k mis mat 3} with the fact that $|\cS| \asymp d^2\theta$ with probability at least $1- d^{-10}$, under the condition that $\kappa_2^2 \mu^2 K \ll (\log d)^2$, with probability at least $1- O(d^{-10})$ we have 
$$
\tilde\sigma \ll \left(\frac{\Delta\sqrt{K}\log d}{d}\vee\sigma\right)+o(\tilde\sigma) \lesssim \hat\sigma_0 \log d \lesssim \left(\frac{|\lambda_1|\sqrt{K}\log d}{d}\vee\sigma\right)+o(\tilde\sigma) \lesssim \tilde\sigma \log d.
$$

From the proof of Corollary~\ref{prop: err rate terms add exms} and Remark~\ref{rmk: err rate exm 4},  we know that with probability at least $1 - d^{-10}$,
$$\|\widehat\Mb - \Mb\|_2 \lesssim \left\|\frac{\hat\theta}{\theta}\widehat\Mb - \widehat\Mb\right\|_2 + \left\|\frac{\hat\theta}{\theta}\widehat\Mb - \Mb\right\|_2 \lesssim \frac{|\lambda_1|\sqrt{\log d}}{d\theta} + \frac{|\lambda_1 |\mu K}{\sqrt{d \theta}} + \sqrt{\frac{d \sigma^2}{\theta}} \lesssim \sqrt{\frac{d \tilde\sigma^2}{\theta}},$$
and hence  $\eta_0 \asymp d\tilde\sigma(\Delta\sqrt{p\theta})^{-1}$ and $\Delta \eta_0 \asymp d\tilde\sigma/\sqrt{p\theta} $. 

Under the condition that $(p\theta)^{-1/4}\sqrt{d\sigma/\Delta}\log d = o(1)$, with probability at least $1-O(d^{-10})$ we have $\Delta \eta_0/24 \ll \mu_0 \ll \Delta \sqrt{\eta_0}/12$. Thus by Theorem~\ref{thm: est k} the claim follows.
\subsection{Proof of Theorem~\ref{thm: leading term}}\label{sec: proof of thm leading term}
We first decompose $\widetilde{\Vb}^{\F}\Hb - \Vb = \widetilde{\Vb}^{\F}\Hb -\widetilde{\Vb}\Hb_0 + \widetilde{\Vb}\Hb_0 - \Vb$, and we consider the term $\widetilde{\Vb}\Hb_0 - \Vb$ first.

By Lemma 8 in \citet{fandistributed2019}, we have that $\|\widetilde{\Vb} \Hb_0 - \Vb - \Pb_{\perp} (\widetilde{\mathbf{\Sigma}} - \Vb \Vb^{\top})\Vb\|_2 \lesssim \|\widetilde{\mathbf{\Sigma}} - \Vb \Vb^{\top}\|_2\|\Pb_{\perp} (\widetilde{\mathbf{\Sigma}} - \Vb \Vb^{\top})\Vb\|_2$. Note that in Lemma 8 of \citet{fandistributed2019}, the norm is Frobenius norm rather than operator norm, and the modification from Frobenius norm to operator norm is trivial and hence omitted. We first study the leading term $\Pb_{\perp} (\widetilde{\mathbf{\Sigma}} - \Vb \Vb^{\top})\Vb = \frac{1}{L}\sum_{\ell =1}^L \Pb_{\perp} (\widehat{\Vb}^{(\ell)}\widehat{\Vb}^{(\ell) \top} - \Vb \Vb^{\top})\Vb$. 

For a given $\ell \in [L]$, we know that $ \widehat{\Vb}^{(\ell)}$ is the top $K$ left singular vectors of  $\widehat{\Yb}^{(\ell)} = \widehat\Mb \mathbf{\Omega}^{(\ell)}/\sqrt{p}= \Vb \mathbf{\Lambda} \Vb^{\top}\mathbf{\Omega}^{(\ell)}/\sqrt{p} + \Eb \mathbf{\Omega}^{(\ell)}/\sqrt{p} = {\Yb}^{(\ell)} + \mathbf{\cE}^{(\ell)} $, where 
$$\Yb^{(\ell)} = \Vb \mathbf{\Lambda} \Vb^{\top}\mathbf{\Omega}^{(\ell)}/\sqrt{p}\quad\text{and}\quad\mathbf{\cE}^{(\ell)}  = \Eb \mathbf{\Omega}^{(\ell)}/\sqrt{p}.$$
By the ``symmetric dilation'' trick, we denote $$\cS(\widehat{\Yb}^{(\ell)}) = \begin{pmatrix} \mathbf{0} & \widehat{\Yb}^{(\ell)} \\ \widehat{\Yb}^{(\ell)\top} & \mathbf{0} \end{pmatrix}, \quad \cS({\Yb}^{(\ell)}) = \begin{pmatrix} \mathbf{0} & {\Yb}^{(\ell)} \\ {\Yb}^{(\ell)\top} & \mathbf{0} \end{pmatrix},$$
 $$\text{and} \quad \cS(\mathbf{\cE}^{(\ell)}) = \cS(\widehat{\Yb}^{(\ell)}) - \cS({\Yb}^{(\ell)}) = \begin{pmatrix} \mathbf{0} &  \Eb \mathbf{\Omega}^{(\ell)}/\sqrt{p} \\ \mathbf{\Omega}^{(\ell)\top} \Eb/\sqrt{p} & \mathbf{0}\end{pmatrix}.$$
We let $\mathbf{\Gamma}_K^{(\ell)} {\mathbf{\Lambda}}_K^{(\ell)} {\Ub}_K^{(\ell)\top}$ be the SVD of ${\Yb}^{(\ell)}$, and we know that with probability 1 we have $\mathbf{\Gamma}_K^{(\ell)} = \Vb \Ob_{\mathbf{\Omega}^{(\ell)}}$, where $\Ob_{\mathbf{\Omega}^{(\ell)}}$ is an orthonormal matrix depending on $\mathbf{\Omega}^{(\ell)}$. It is not hard to verify that the eigen-decomposition of $\cS({\Yb}^{(\ell)})$ is:
\[
\cS({\Yb}^{(\ell)}) = \frac{1}{\sqrt{2}} \begin{pmatrix}\mathbf{\Gamma}_K^{(\ell)} & \mathbf{\Gamma}_K^{(\ell)} \\ {\Ub}_K^{(\ell)} & -{\Ub}_K^{(\ell)} \end{pmatrix} \cdot \begin{pmatrix} {\mathbf{\Lambda}}_K^{(\ell)} & \mathbf{0}\\ \mathbf{0} & -{\mathbf{\Lambda}}_K^{(\ell)}  \end{pmatrix} \cdot  \frac{1}{\sqrt{2}} \begin{pmatrix}\mathbf{\Gamma}_K^{(\ell)} & \mathbf{\Gamma}_K^{(\ell)} \\ {\Ub}_K^{(\ell)} & -{\Ub}_K^{(\ell)} \end{pmatrix}^{\top},
\]
where ${\mathbf{\Lambda}}_K^{(\ell)} = \operatorname{diag}(\lambda_1^{(\ell)}, \ldots, \lambda_K^{(\ell)})$. First we study the eigengap $\sigma_{\min}({\mathbf{\Lambda}}_K^{(\ell)}) = \lambda_K^{(\ell)}$. Recall $\widetilde{\mathbf{\Omega}}^{(\ell)} = \Vb^{\top} \mathbf{\Omega}^{(\ell)} \in \RR^{K \times p}$, and it can be seen that the entries of $\widetilde{\mathbf{\Omega}}^{(\ell)}$ are i.i.d. standard Gaussian. By Lemma 3 in \citet{fandistributed2019}, we know that with probability at least $1-d^{-10}$, we have that $\|\widetilde{\mathbf{\Omega}}^{(\ell)} \widetilde{\mathbf{\Omega}}^{(\ell)\top} / p - \Ib_K\|_2 \lesssim \sqrt{\frac{K}{p}}\log d$, and thus $\sigma_{\min}(\widetilde{\mathbf{\Omega}}^{(\ell)} / \sqrt{p}) \ge 1-O(\sqrt{\frac{K}{p}}\log d)$ with probability at least $1-d^{-10}$. Thus under the condition that $\sqrt{\frac{K}{p}}\log d = o(1)$, under the same high probability event we have that $\sigma_{\min}({\mathbf{\Lambda}}_K^{(\ell)}) \ge \Delta/2$. Now we let $\widehat{\Ub}_K^{(\ell)}$ be the top $K$ right singular vectors of $\widehat\Yb^{(\ell)}$. For $j \in [K]$ we define 
$$
\Gb_j^{(\ell)} \!\!=\! \frac{1}{2}\!\! \begin{pmatrix} \mathbf{\Gamma}_K^{(\ell)} \\-\Ub_K^{(\ell)} \end{pmatrix}\!\! (-\mathbf{\Lambda}_K^{(\ell)} - \lambda_j^{(\ell)} \Ib_K)^{-1} \!\!\begin{pmatrix} \mathbf{\Gamma}_K^{(\ell)} \\-\Ub_K^{(\ell)} \end{pmatrix}^{\!\!\!\top}\!\!\! - \frac{1}{\lambda_j^{(\ell)}}\bigg\{\Ib_K - \frac{1}{2}\begin{pmatrix}\mathbf{\Gamma}_K^{(\ell)} & \mathbf{\Gamma}_K^{(\ell)} \\ {\Ub}_K^{(\ell)} & -{\Ub}_K^{(\ell)} \end{pmatrix} \!\!\!\begin{pmatrix}\mathbf{\Gamma}_K^{(\ell)} & \mathbf{\Gamma}_K^{(\ell)} \\ {\Ub}_K^{(\ell)} & -{\Ub}_K^{(\ell)} \end{pmatrix}^{\!\!\!\top}\!\!\bigg\}.
$$ 
Then we have $\|\Gb_j^{(\ell)}\|_2 \le 1/\lambda_K^{(\ell)} \le 2/\Delta$ with probability at least $1-d^{-10}$. Correspondingly we define the linear mapping 
$$
f: \mathbb{R}^{(d+p) \times K} \rightarrow \mathbb{R}^{(d+p) \times K}, \quad\left(\mathbf{w}_{1}, \cdots, \mathbf{w}_{K}\right) \mapsto\left(-\mathbf{G}_{1}^{(\ell)} \mathbf{w}_{1}, \cdots,-\mathbf{G}_{K}^{(\ell)}{\mathbf{w}}_{K}\right),
$$ 
and denote $\tilde{\mathbf{\Gamma}}_K^{(\ell)} = \begin{pmatrix} \mathbf{\Gamma}_K^{(\ell)} \\\Ub_K^{(\ell)} \end{pmatrix}$.  By Lemma 8 in \citet{fandistributed2019}, under the condition that $\|\cS(\cE^{(\ell)})\|_2 /\Delta =o(1)$ we have 
\begin{align*}
    &\bigg\|\begin{pmatrix}\widehat{\Vb}^{(\ell)}\\\widehat{\Ub}_K^{(\ell)}\end{pmatrix}(\widehat{\Vb}^{(\ell)\top},\widehat{\Ub}_K^{(\ell)\top}) -  \tilde{\mathbf{\Gamma}}_K^{(\ell)} \tilde{\mathbf{\Gamma}}_K ^{(\ell)\top} - f\big(\cS(\cE^{(\ell)})\tilde{\mathbf{\Gamma}}_K^{(\ell)} \big)\tilde{\mathbf{\Gamma}}_K^{(\ell)\top}  -\tilde{\mathbf{\Gamma}}_K^{(\ell)} f\big(\cS(\cE^{(\ell)})\tilde{\mathbf{\Gamma}}_K^{(\ell)} \big)^{\top} \bigg \|_2 \\
    &\le \bigg\| \begin{pmatrix}\widehat{\Vb}^{(\ell)}\widehat{\Vb}^{(\ell)\top} - \mathbf{\Gamma}_K^{(\ell)}\mathbf{\Gamma}_K^{(\ell)\top} &\quad \widehat{\Vb}^{(\ell)}\widehat{\Ub}_K^{(\ell)\top} - \mathbf{\Gamma}_K^{(\ell)}\Ub_K^{(\ell)\top}\\\widehat{\Ub}_K^{(\ell)}\widehat{\Vb}^{(\ell)\top} - \Ub_K^{(\ell)}\mathbf{\Gamma}_K^{(\ell)\top}&\quad \widehat{\Ub}_K^{(\ell)}\widehat{\Ub}_K^{(\ell)\top} - \Ub_K^{(\ell)}\Ub_K^{(\ell)\top}\end{pmatrix} \\
    &\quad - f\big(\cS(\cE^{(\ell)})\tilde{\mathbf{\Gamma}}_K^{(\ell)}\big)\tilde{\mathbf{\Gamma}}_K ^{(\ell)\top} -\tilde{\mathbf{\Gamma}}_K^{(\ell)}  f\big(\cS(\cE^{(\ell)})\tilde{\mathbf{\Gamma}}_K^{(\ell)} \big)^{\top} \bigg\|_2 \lesssim \|\cS(\cE^{(\ell)})\|_2^2/\Delta^2.
\end{align*}
 By taking the upper left block of the matrix, we have
\begin{align*}
    &\big\| \widehat{\Vb}^{(\ell)}\widehat{\Vb}^{(\ell)\top} -  \mathbf{\Gamma}_K^{(\ell)}\mathbf{\Gamma}_K^{(\ell)\top} - f\big(\cS(\cE^{(\ell)})\tilde{\mathbf{\Gamma}}_K^{(\ell)} \big)_{[1:d,:]} \mathbf{\Gamma}_K^{(\ell)\top} - \mathbf{\Gamma}_K^{(\ell)}f\big(\cS(\cE^{(\ell)})\tilde{\mathbf{\Gamma}}_K^{(\ell)} \big)_{[1:d,:]}^{\top} \big\|_2 \\
    &= \big\| \widehat{\Vb}^{(\ell)}\widehat{\Vb}^{(\ell)\top} -  \Vb\Vb^{\top} - f\big(\cS(\cE^{(\ell)})\tilde{\mathbf{\Gamma}}_K^{(\ell)} \big)_{[1:d,:]} \mathbf{\Gamma}_K^{(\ell)\top} - \mathbf{\Gamma}_K^{(\ell)}f\big(\cS(\cE^{(\ell)})\tilde{\mathbf{\Gamma}}_K^{(\ell)} \big)_{[1:d,:]}^{\top} \big\|_2\\
    &\lesssim \|\cS(\cE^{(\ell)})\|_2^2/\Delta^2.
\end{align*}
Now for $j \in [K]$, we study $\Pb_{\perp} (\Gb_j^{(\ell)})_{[1:d,:]}$. Since $\mathbf{\Gamma}_K^{(\ell)} = \Vb \Ob_{\mathbf{\Omega}^{(\ell)}}$, we have $\Pb_{\perp} \mathbf{\Gamma}_K^{(\ell)} = \mathbf{0}$. Therefore we have,
\begin{align*}
&\Pb_{\perp} \mathbf{\Gamma}_K^{(\ell)}(-\mathbf{\Lambda}_K^{(\ell)} - \lambda_j^{(\ell)} \Ib_K)^{-1} \begin{pmatrix} \mathbf{\Gamma}_K^{(\ell)} \\-\Ub_K^{(\ell)} \end{pmatrix}^{\top} = \mathbf{0}, \quad \text{and}\\
& \Pb_{\perp}\bigg\{\Ib_{d+p} - \frac{1}{2}\begin{pmatrix}\mathbf{\Gamma}_K^{(\ell)} & \mathbf{\Gamma}_K^{(\ell)} \\ {\Ub}_K^{(\ell)} & -{\Ub}_K^{(\ell)} \end{pmatrix} \!\!\!\begin{pmatrix}\mathbf{\Gamma}_K^{(\ell)} & \mathbf{\Gamma}_K^{(\ell)} \\ {\Ub}_K^{(\ell)} & -{\Ub}_K^{(\ell)} \end{pmatrix}^{\top}\!\bigg\}_{[1:d,:]} \\
&= (\Pb_{\perp},\mathbf{0})- \frac{1}{2}\Pb_{\perp} \mathbf{\Gamma}_K^{(\ell)}(\Ib_d,\Ib_d) \begin{pmatrix}\mathbf{\Gamma}_K^{(\ell)} & \mathbf{\Gamma}_K^{(\ell)} \\ {\Ub}_K^{(\ell)} & -{\Ub}_K^{(\ell)} \end{pmatrix}^{\top}\\
& = (\Pb_{\perp},\mathbf{0}) + \mathbf{0} = (\Pb_{\perp},\mathbf{0}),
\end{align*}
and as a result we have
$$
\Pb_{\perp} (\Gb_j)_{[1:d,:]} = \frac{1}{2}\cdot \mathbf{0}- \frac{1}{\lambda_j^{(\ell)}}\left\{(\Pb_{\perp},\mathbf{0})-\mathbf{0}\right\} =-\frac{1}{\lambda_j^{(\ell)}}(\Pb_{\perp},\mathbf{0}).
$$
Thus in turn,
\begin{align*}
    &\Pb_{\perp} \Big(f\big(\cS(\cE^{(\ell)})\tilde{\mathbf{\Gamma}}_K^{(\ell)} \big)_{[1:d,:]} \mathbf{\Gamma}_K^{(\ell)\top} + \mathbf{\Gamma}_K^{(\ell)}f\big(\cS(\cE^{(\ell)})\tilde{\mathbf{\Gamma}}_K^{(\ell)} \big)_{[1:d,:]}^{\top} \Big) = \Pb_{\perp} f\big(\cS(\cE^{(\ell)})\tilde{\mathbf{\Gamma}}_K^{(\ell)} \big)_{[1:d,:]} \mathbf{\Gamma}_K^{(\ell)\top}\\
    & \quad = (\Pb_{\perp},\mathbf{0})\begin{pmatrix} \mathbf{0} &  \Eb \mathbf{\Omega}^{(\ell)}/\sqrt{p} \\ \mathbf{\Omega}^{(\ell)\top} \Eb/\sqrt{p} & \mathbf{0}\end{pmatrix}\begin{pmatrix} \mathbf{\Gamma}_K^{(\ell)} \\\Ub_K^{(\ell)} \end{pmatrix} (\mathbf{\Lambda}_K^{(\ell)})^{-1}\mathbf{\Gamma}_K^{(\ell)\top}\\
    & \quad = \Pb_{\perp} \Eb (\mathbf{\Omega}^{(\ell)}/\sqrt{p}) \Ub_K^{(\ell)} (\mathbf{\Lambda}_K^{(\ell)})^{-1}\mathbf{\Gamma}_K^{(\ell)\top} = \Pb_{\perp} \Eb (\mathbf{\Omega}^{(\ell)}/\sqrt{p}) (\Yb^{(\ell)})^{\dagger}.
\end{align*}
For a given $\ell \in [L]$, under the condition that $\sqrt{p/d} \log d = O(1)$, by Lemma 3 in \citet{fandistributed2019} we have that with probability at least $1-d^{-10}$, $\|\mathbf{\Omega}^{(\ell)}\|_2 \lesssim \sqrt{d}$. Combined with previous results on the eigengap $\sigma_{\min}(\mathbf{\Lambda}_K^{(\ell)})$, we have that with probability $1-O(d^{-9})$, for a fixed constant $C >0$  
$$\|\mathbf{\Omega}^{(\ell)}\|_2 \le  C\sqrt{d}, \quad  \sigma_{\min}(\mathbf{\Lambda}_K^{(\ell)}) \ge  \Delta/2, \quad  \forall \ell \in [L].$$
Besides, under Assumption \ref{asp: tail prob bound}, we have that  $\|\Eb\|_2 \lesssim r_1(d)\log d$ with probability at least $1- d^{-10}$, and in turn by Wedin's Theorem \citep{wedin1972wedin}, with high probability for all $\ell \in [L]$ we have that $$\|\widehat{\Vb}^{(\ell)}\widehat{\Vb}^{(\ell)\top} - \Vb \Vb^{\top}\|_2 \lesssim \|\cE^{(\ell)}\|_2 / \sigma_{\min}(\mathbf{\Lambda}_K^{(\ell)}) \lesssim \|\Eb\|_2 \|\mathbf{\Omega}^{(\ell)}/\sqrt{p}\|_2 /\Delta \lesssim \frac{r_1(d)}{\Delta}\log d \sqrt{\frac{d}{p}},$$
and thus $\|\widetilde{\mathbf{\Sigma}} - \Vb \Vb^{\top}\|_2 = O_P\big( r_1(d) \log d  \sqrt{d/p} /\Delta \big)$.
Besides, we have
\begin{align*}
    & \Pb_{\perp} (\widetilde{\mathbf{\Sigma}} - \Vb \Vb^{\top})\Vb = \frac{1}{L}\sum_{\ell =1}^L \Pb_{\perp} (\widehat{\Vb}^{(\ell)}\widehat{\Vb}^{(\ell) \top} - \Vb \Vb^{\top})\Vb \\
    & \quad = \frac{1}{L}\sum_{\ell =1}^L \Pb_{\perp} \Big(f\big(\cS(\cE^{(\ell)})\tilde{\mathbf{\Gamma}}_K^{(\ell)} \big)_{[1:d,:]} \mathbf{\Gamma}_K^{(\ell)\top} + \mathbf{\Gamma}_K^{(\ell)}f\big(\cS(\cE^{(\ell)})\tilde{\mathbf{\Gamma}}_K^{(\ell)} \big)_{[1:d,:]}^{\top} \Big)\Vb + \Rb_1 (\widetilde{\mathbf{\Sigma}})\\
    & \quad = \frac{1}{L}\sum_{\ell =1}^L \Pb_{\perp} \Eb (\mathbf{\Omega}^{(\ell)}/\sqrt{p}) (\Yb^{(\ell) })^{\dagger}\Vb + R_1 (\widetilde{\mathbf{\Sigma}}) = \frac{1}{L}\sum_{\ell =1}^L \Pb_{\perp} \Eb (\mathbf{\Omega}^{(\ell)}/\sqrt{p})\Bb^{(\ell) \top} + \Rb_1 (\widetilde{\mathbf{\Sigma}})\\
    & \quad = \frac{1}{L}\Pb_{\perp} \Eb \mathbf{\Omega} \Bb_{\bOmega} + \Rb_1(\widetilde{\mathbf{\Sigma}}),
\end{align*}
where $\Rb_1(\widetilde{\mathbf{\Sigma}})$ is the residual matrix with $\|\Rb_1(\widetilde{\mathbf{\Sigma}})\|_2 = O_P(\|\cS(\cE^{(\ell)})\|_2^2/\Delta^2)$. Now we study the matrix $\Bb^{(\ell)} = (\mathbf{\Lambda}\Vb^{\top} \mathbf{\Omega}^{(\ell)}/\sqrt{p})^{\dagger}$. From previous results we know that with probability at least $1-d^{-9}$,  $1/2 \le \sigma_{\min}(\widetilde{\mathbf{\Omega}}^{(\ell)} / \sqrt{p}) \le \sigma_{\max}(\widetilde{\mathbf{\Omega}}^{(\ell)} / \sqrt{p}) \le 3/2 $ for any $\ell \in [L]$, and in turn $\frac{2}{3|\lambda_1|}\le \sigma_{\min} (\Bb^{(\ell)}) \le \sigma_{\max} (\Bb^{(\ell)}) \le \frac{2}{\Delta}, \quad \forall \ell \in [L]$. Now for any vector $\yb \in \RR^{K}$ such that $\|\yb\|_2=1$, with probability $1-O(d^{-9})$ we have that 
\begin{align*}
    \|\Bb_{\mathbf{\Omega}} \yb \|_2  &= \|(\yb^{\top} \Bb^{(1)\top}, \ldots, \yb^{\top} \Bb^{(L)\top})^{\top}\|_2 = \Big(\sum_{\ell = 1}^L \|\Bb^{(\ell)} \yb\|_2^2 \Big)^{1/2},\\
    \|\Bb_{\mathbf{\Omega}}\|_2 &= \max_{\|\yb\|_2 = 1} \|\Bb_{\mathbf{\Omega}} \yb \|_2 = \max_{\|\yb\|_2 = 1} \Big(\sum_{\ell = 1}^L \|\Bb^{(\ell)} \yb\|_2^2 \Big)^{1/2} \le \Big(\sum_{\ell = 1}^L \|\Bb^{(\ell)}\|_2^2 \Big)^{1/2} \le \frac{2 \sqrt{L}}{\Delta},\\
    \sigma_{\min}\left(\Bb_{\mathbf{\Omega}} \right)&= \min_{\|\yb\|_2 = 1} \|\Bb_{\mathbf{\Omega}} \yb \|_2 = \min_{\|\yb\|_2 = 1} \Big(\sum_{\ell = 1}^L \|\Bb^{(\ell)} \yb\|_2^2 \Big)^{1/2} \ge \Big(\sum_{\ell = 1}^L \sigma_{\min}^2(\Bb^{(\ell)}) \Big)^{1/2} \ge \frac{2 \sqrt{L}}{3|\lambda_1|}.
\end{align*}
Now since we know that the entries of $\sqrt{p} \mathbf{\Omega}$ are i.i.d. standard Gaussian, similar as before, under the condition that $Lp \ll d$, by Lemma 3 in \citet{fandistributed2019} we have with high probability that $\frac{1}{2}\sqrt{\frac{d}{p}} \le \sigma_{\min} (\mathbf{\Omega}) \le \sigma_{\max} (\mathbf{\Omega}) \le \frac{3}{2}\sqrt{\frac{d}{p}}$. Therefore, we have the following upper bound on the norm of the leading term
\begin{align*}
   \| \Pb_{\perp} (\widetilde{\mathbf{\Sigma}} - \Vb \Vb^{\top})\Vb \|_2 & \lesssim \|\frac{1}{L}\Pb_{\perp} \Eb \mathbf{\Omega} \Bb_{\bOmega} \|_2+ \|\Rb_1(\widetilde{\mathbf{\Sigma}})\|_2 \le \frac{1}{L} \|\Eb\|_2 \|\mathbf{\Omega}\|_2\|\Bb_{\mathbf{\Omega}}\|_2 + \|\Rb_1(\widetilde{\mathbf{\Sigma}})\|_2 \\
   & = O_P\Big( \sqrt{\frac{d}{Lp}}\frac{r_1(d)\log d}{\Delta} + r_1(d)^2(\log d)^2 \frac{d}{p\Delta^2}\Big).
\end{align*}
Thus we have the following decomposition 
\begin{align*}
    \widetilde{\Vb} \Hb_0 - \Vb &=  \Pb_{\perp} (\widetilde{\mathbf{\Sigma}} - \Vb \Vb^{\top})\Vb + \Rb_0(\widetilde{\mathbf{\Sigma}})\\
    & =  \frac{1}{L}\Pb_{\perp} \Eb \mathbf{\Omega} \Bb_{\bOmega} + \Rb_1(\widetilde{\mathbf{\Sigma}}) + \Rb_0(\widetilde{\mathbf{\Sigma}})\\
    & =  \frac{1}{L}\Pb_{\perp} \Eb_0 \mathbf{\Omega} \Bb_{\bOmega} + \frac{1}{L}\Pb_{\perp} \Eb_b \mathbf{\Omega} \Bb_{\bOmega} + \Rb_1(\widetilde{\mathbf{\Sigma}})+ \Rb_0(\widetilde{\mathbf{\Sigma}}),
\end{align*}
where $\Rb_0(\widetilde{\mathbf{\Sigma}})$ is a residual matrix with 
\begin{align*}
    \|\Rb_0(\widetilde{\mathbf{\Sigma}})\|_2 &= O_P(\|\widetilde{\mathbf{\Sigma}} - \Vb \Vb^{\top}\|_2\|\Pb_{\perp} (\widetilde{\mathbf{\Sigma}} - \Vb \Vb^{\top})\Vb\|_2) \\
    &= O_P\Big(\frac{r_1(d)^2 (\log d)^2 {d }}{\sqrt{L}p\Delta^2}\Big )+ o_P\Big(r_1(d)^2 (\log d)^2 \frac{d}{p\Delta^2}\Big ).
\end{align*}
 Thus 
$$
\|\Rb_0(\widetilde{\mathbf{\Sigma}})+\Rb_1(\widetilde{\mathbf{\Sigma}})\|_2 = O_P\Big(r_1(d)^2 (\log d)^2 \frac{d}{p\Delta^2}  \Big).
$$

Next we consider the term $\widetilde{\Vb}^{\F} \Hb - \tilde{\Vb}\Hb_0$. We denote the SVD of $\widetilde{\bSigma}^q$ by $\widetilde{\Vb} \widetilde{\mathbf{\Lambda}}_K^q \widetilde{\Vb}^{\top} + \widetilde{\Vb}_{\perp} \widetilde{\mathbf{\Lambda}}_{\perp}^q \widetilde{\Vb}_{\perp}^{\top}$, and by Weyl's inequality \citep{franklin2012matrix}, we know that $\|\widetilde{\mathbf{\Lambda}}_{\perp}\|_2 \le \|\widetilde{\mathbf{\Sigma}} - \Vb \Vb^{\top}\|_2 =O_P\big( r_1(d)\log d \sqrt{d/p}/\Delta \big)$ and $\sigma_K(\widetilde{\mathbf{\Lambda}}_K) \ge 1 - \|\widetilde{\mathbf{\Sigma}} - \Vb \Vb^{\top}\|_2 \ge 1 - O_P(r_1(d)\log d \sqrt{d/p}/\Delta)$. Thus under the condition that $r_1(d) \log d \sqrt{d/p} /\Delta = o(1)$, for large enough $d$ with high probability we have 
$$\|\widetilde{\mathbf{\Lambda}}_{\perp}^q\|_2 \le (r_1(d)\log d \sqrt{d/p}/\Delta)^q \quad\text{and}\quad \sigma_K(\widetilde{\mathbf{\Lambda}}_K^q) \ge (1- O(r_1(d)\log d \sqrt{d/p}/\Delta))^q \ge (1/2)^q.$$ Similar as before, we know that with probability 1 the left singular vector space of $\widetilde{\Vb} \widetilde{\mathbf{\Lambda}}_K^q \widetilde{\Vb}^{\top} \mathbf{\Omega}^{\F}  = \widetilde{\Vb} \widetilde{\mathbf{\Lambda}}_K^q \widetilde{\mathbf{\Omega}}^{\F}$ and the column space of $\widetilde{\Vb}$ are the same, where $\widetilde{\mathbf{\Omega}}^{\F} := \widetilde{\Vb}^{\top} \mathbf{\Omega}^{\F} \in \RR^{K \times p'}$ is still a Gaussian test matrix with i.i.d. entries. By Lemma 3 in \citet{fandistributed2019}, we have with probability at least $1-d^{-10}$,  $ \sigma_{\min}(\widetilde{\mathbf{\Omega}}^{\F} / \sqrt{p'}) \ge 1-O(\sqrt{\frac{K}{p'}}\log d) $. When $ \sqrt{\frac{K}{p'}}\log d = o(1)$, by Wedin's Theorem \citep{wedin1972wedin}, there exists a constant $\eta>0$ such that with high probability we have
\begin{align*}
    \|\widetilde{\Vb}^{\F} \Hb - \tilde{\Vb}\Hb_0\|_2 & = \|\widetilde{\Vb}^{\F} \Hb_1 - \tilde{\Vb}\|_2 \lesssim \|\widetilde{\Vb}_{\perp} \widetilde{\mathbf{\Lambda}}_{\perp}^q \widetilde{\Vb}_{\perp}^{\top}{\mathbf{\Omega}}^{\F}/\sqrt{p'} \|_2 / \sigma_K (\widetilde{\Vb} \widetilde{\mathbf{\Lambda}}_K^q \widetilde{\mathbf{\Omega}}^{\F}/\sqrt{p'} )\\
    & \le \frac{\|\widetilde{\mathbf{\Lambda}}_{\perp}\|_2^q\|{\mathbf{\Omega}}^{\F}/\sqrt{p'}\|_2}{\sigma_K (\widetilde{\mathbf{\Lambda}}_K^q)\sigma_K( \widetilde{\mathbf{\Omega}}^{\F}/\sqrt{p'} )} \lesssim \left(\frac{2\eta r_1(d)\log d \sqrt{d/p}}{\Delta}\right)^q \sqrt{\frac{d}{p'}}.
\end{align*}
Denote  $r' :=  2\eta r_1(d)\log d \sqrt{d/p}/\Delta = o\big((\log d)^{-1/4}\big)$. Then it can be seen that when 
$$
q \ge \log d \gg 2 + \frac{\log d}{\log\log d} \ge  2 + \frac{\log \sqrt{d/p'}}{\log (1/r')},
$$
we have that $(r')^q \sqrt{d/p'} = o\big((r')^2\big)$ and 
$\|\widetilde{\Vb}^{\F} \Hb - \tilde{\Vb}\Hb_0\|_2 = O_P\Big(r_1(d)^2(\log d)^2 \frac{d}{p\Delta^2}  \Big)$. 

Now for a given $j \in [d]$, recall that with high probability  $\sigma_{\min}(\mathbf{\Sigma}_j) = \Omega\big(\eta_2(d)\Big)$. 
Therefore, under the condition that ${d^2 r_1(d)^4(\log d)^4}\big({p^2 \Delta^4 \eta_2(d)}\big)^{-1} = o(1)$ and ${d r_2(d)^2}\big({Lp \Delta^2 \eta_2(d)}\big)^{-1} = o(1)$, we have with probability $1-O(d^{-9})$, $\|\frac{1}{L}\Pb_{\perp} \Eb_b \mathbf{\Omega} \Bb_{\bOmega}\|_2 = O_P\Big(\sqrt{\frac{d}{\Delta^2Lp}}r_2(d)\Big) = o_P\big((\sigma_{\min}(\mathbf{\Sigma}_j))^{1/2}\big)$, and $\|\Rb_0(\widetilde\bSigma) + \Rb_1(\widetilde\bSigma)\|_2 = o_P\big((\sigma_{\min}(\mathbf{\Sigma}_j))^{1/2}\big)$. Then under Assumption \ref{asp: clt}, we have 
\begin{align*}
    &\mathbf{\Sigma}_j^{-1/2}(\widetilde{\Vb}^{\F} \Hb - \Vb)^{\top} \eb_j = \mathbf{\Sigma}_j^{-1/2}(\widetilde{\Vb}^{\F} \Hb - \widetilde\Vb\Hb_0 + \widetilde\Vb\Hb_0  - \Vb)^{\top} \eb_j\\
    & = \mathbf{\Sigma}_j^{-1/2}(\frac{1}{L}\Bb_{\mathbf{\Omega}}^{\top}\mathbf{\Omega}^{\top} \Eb_0 \Pb_{\perp}\eb_j)\!+\! \mathbf{\Sigma}_j^{-1/2}(\widetilde{\Vb}^{\F} \Hb \!-\! \widetilde\Vb\Hb_0 \!+\!\Rb_0(\widetilde\bSigma) \!+\! \Rb_1(\widetilde\bSigma)\! +\! \frac{1}{L}\Pb_{\perp} \Eb_b \mathbf{\Omega} \Bb_{\bOmega} )^{\top}\eb_j\\
    &=\mathbf{\Sigma}_j^{-1/2}\cV (\Eb_0)^{\top}\eb_j + o_P(1) \overset{d}{\rightarrow} {\cN}(\mathbf{0}, \Ib_K).
\end{align*}

\subsection{Proof of Corollary~\ref{col: guassian case}}\label{sec: proof col gaussian case}
To prove Corollary~\ref{col: guassian case}, it suffices for us to show that Assumptions \ref{asp: tail prob bound}, \ref{asp: stat rate biased error} and \ref{asp: clt} are met. From the proof of Corollary~\ref{prop: err rate terms}, we know that Assumption \ref{asp: tail prob bound} is satisfied. We move on to show that Assumption \ref{asp: stat rate biased error} is met. 
Define $\Vb_d = (\Vb, \Vb^{\perp})$ as the stacking of eigenvectors for the covariance matrix $\mathbf{\Sigma}$. Note that $\Vb^{\perp}$ is not identifiable under the spiked covariance model and is unique up to orthogonal transformation. Let $\bZ_i = \Vb_d^{\top} \bX_i$, and $\bZ_i \sim {\cN}(\mathbf{0}, \mathbf{\Lambda}_d)$, where $\mathbf{\Lambda}_d = \operatorname{diag}(\mathbf{\Lambda} + \sigma^2 \Ib_K, \sigma^2\Ib_{d-K})$. We let $\mathbf{\Gamma}_S = (\ub_1,\ldots,\ub_{K+1})$ be the stacking of eigenvectors for the matrix $\bSigma_{S}$, and let $\tilde{\sigma}_1 \ge \tilde{\sigma}_2 \ge \ldots \ge \tilde{\sigma}_{K+1}$ be the $K+1$ eigenvalues of $\bSigma_S$. Correspondingly, let $\hat{\sigma}_1 \ge \ldots \ge \hat\sigma_{K+1} = \hat{\sigma}^2$ be the eigenvalues of the sample covariance matrix $\widehat{\bSigma}_{S}$. Since $\bSigma_S = (\Vb)_{[S,:]}\bLambda(\Vb)_{[S,:]}^{\top} + \sigma^2 \Ib_{K+1}$, we know that $\tilde{\sigma}_{K+1} = \sigma^2$ and $\delta = \tilde{\sigma}_{K} - \tilde{\sigma}_{K+1} \ge \Delta\sigma_{\min}^2 \big((\Vb)_{[S,:]}\big)$. We define $\tilde\cb = (\Vb^{\perp}_{[S,:]})^{\top}\ub_{K+1}$, and denote $\tilde{\cb}_0 = (\mathbf{0},\Ib_{d-K})^{\top}\tilde{\cb} \in \RR^d$.  Then by the proof of Lemma 6.2 in \citet{fan2017eigenasymp}, we know that 
$$
\hat{\sigma}^2 - \sigma^2 =\tilde{\cb}_0^{\top}(\frac{1}{n}\sum_{i=1}^n \bZ_i \bZ_i^{\top} - \mathbf{\Lambda}_d )\tilde{\cb}_0 + \frac{1}{n}O_P\left(M_{K+1}-\sigma^2 W_{K+1}\right),
$$
where $M_{K+1}=\sum_{k \le K} f_{k}^{2}\left(\tilde\sigma_{k}+ (\hat{\sigma}_k - \tilde\sigma_{k})\right), W_{K+1}=\sum_{k \le K} f_{ k}^{2}$ and $f_{ k}$ is the $(K+1)$-th element of the $k$-th eigenvector of $\mathbf{\Gamma}_S^{\top} \hbSigma_S \mathbf{\Gamma}_S $ multiplied by $\sqrt{n}$ for $k \le K$. We let $\fb = (f_1, \ldots, f_K)^{\top}/\sqrt{n}$. By Wedin's Theorem \citep{wedin1972wedin} and Lemma 3 in \citet{fandistributed2019}, we have that with probability at least $1-d^{-10}$, $|\hat{\sigma}_k - \tilde{\sigma}_k| \le \|\widehat{\bSigma}_S - \bSigma_S\|_2 \lesssim \tilde{\sigma}_1 \log d \sqrt{\frac{K}{n}}$ for $k \le K$. If we denote by $\Fb_S := (\Ib_K, \mathbf{0})^{\top}$ the stacked top $K$ eigenvectors of $\mathbf{\Gamma}_S^{\top} \bSigma_S \mathbf{\Gamma}_S $, and by $\widehat\Fb_S$ the stacked top $K$ eigenvectors of $\mathbf{\Gamma}_S^{\top} \hbSigma_S \mathbf{\Gamma}_S $, then we know that $\fb$ is the $(K+1)$-th row of $\widehat\Fb_S$. By Davis-Kahan’s Theorem \citep{yu2015davis}, we also know that there exists an orthonormal matrix $\Ob_S \in \RR^{K \times K}$ such that $\|\fb\|_2 = \|\Ob_S^{\top} \fb -\mathbf{0}\|_2 \le \|\widehat\Fb_S\Ob_S  - \Fb_S \|_2 \lesssim \frac{\tilde{\sigma}_1\log d}{\delta}\sqrt{\frac{K}{n}}$, and thus 
$$W_{K+1} = \sum_{k \le K} f_{ k}^{2} = n\|\fb\|_2^2 \lesssim \frac{\tilde{\sigma}_1^2K (\log d)^2}{\delta^2 },$$
$$\text{and} \quad M_{K+1} \le \tilde{\sigma}_1 \sum_{k \le K} f_{ k}^{2} + (\sum_{k \le K} f_{ k}^{2})\|\widehat{\bSigma}_S - \bSigma_S\|_2 \lesssim \frac{\tilde{\sigma}_1^3K}{\delta^2 }(\log d)^2 .$$ 
Thus we can write $\hat{\sigma}^2 - \sigma^2 = \tilde{\cb}_0^{\top}(\frac{1}{n}\sum_{i=1}^n \bZ_i \bZ_i^{\top} - \mathbf{\Lambda}_d )\tilde{\cb}_0 + O_P\big(\frac{\tilde{\sigma}_1^3K}{\delta^2 n} (\log d)^2\big)$. 

 Now we take $\Eb_0 = \widehat{\mathbf{\Sigma}} - \mathbf{\Sigma} -(\tilde{\cb}_0^{\top}(\frac{1}{n}\sum_{i=1}^n \bZ_i \bZ_i^{\top} - \mathbf{\Lambda}_d )\tilde{\cb}_0)\Ib_d$, and from previous results we know that with high probability $\|\Eb_b\|_2 = \|\Eb - \Eb_0\|_2 \lesssim \frac{\tilde{\sigma}_1^3K}{\delta^2 n}(\log d)^2$, such that we have $r_2(d) \asymp \frac{\tilde{\sigma}_1^3K}{\delta^2 n}(\log d)^2 = o\left(r_1(d)\right)$ and Assumption \ref{asp: stat rate biased error} is satisfied. 
 
 Now we move on to study the statistical rate $\eta_2(d)$. For any $j \in [d]$, we first study the covariance of $\Eb_0 \Pb_{\perp} \eb_j $. We denote $\widetilde{\Eb} = \bZ_1 \bZ_1^{\top} - \bLambda_d$, then it's not hard to verify that $\Cov(\widetilde{\Eb}_{st}, \widetilde{\Eb}_{gh} ) = \lambda_s(\bSigma) \lambda_t (\bSigma) (\II\{s=g, t=h\} + \II\{s=h, t=g\})$. Since $\Eb_0 \Pb_{\perp} \eb_j$ and $\Vb_d^{\top}\Eb_0 \Pb_{\perp} \eb_j$ share the same eigenvalues, we can study the covariance of $\Vb_d^{\top}\Eb_0 \Pb_{\perp} \eb_j$ instead. Then $\operatorname{Cov}(\Vb_d^{\top}\Eb_0 \Pb_{\perp} \eb_j) $ can be calculated as following
\begin{align*}
     & \operatorname{Cov}\Big\{\Vb_d^{\top}\big(\frac{1}{n}\sum_{i=1}^n \bX_i \bX_i^{\top} - \mathbf{\Sigma}  \big)\Vb^{\perp} (\Vb^{\perp})^{\top} \eb_j - \big(\tilde{\cb}_0^{\top}(\frac{1}{n}\sum_{i=1}^n \bZ_i \bZ_i^{\top} - \mathbf{\Lambda}_d )\tilde{\cb}_0\big)\Vb_d^{\top} \Pb_{\perp} \eb_j\Big\}\\
    & = \operatorname{Cov}\Big\{\Vb_d^{\top}\big(\frac{1}{n}\sum_{i=1}^n \bX_i \bX_i^{\top} - \mathbf{\Sigma}  \big)\Vb_d (\mathbf{0},\Ib_{d-K})^{\top} \tilde{\eb} - \big(\tilde{\cb}_0^{\top}(\frac{1}{n}\sum_{i=1}^n \bZ_i \bZ_i^{\top} - \mathbf{\Lambda}_d )\tilde{\cb}_0\big)\tilde{\eb}_0 \Big\}\\
    & = \operatorname{Cov}\Big\{\big(\frac{1}{n}\sum_{i=1}^n \bZ_i \bZ_i^{\top} - \mathbf{\Lambda}_d \big) \tilde{\eb}_0 - \big(\tilde{\cb}_0^{\top}(\frac{1}{n}\sum_{i=1}^n \bZ_i \bZ_i^{\top} - \mathbf{\Lambda}_d )\tilde{\cb}_0\big)\tilde{\eb}_0 \Big\},
    \end{align*}
    where $\tilde{\eb} = (\Vb^{\perp})^{\top}\eb_j$ and $\tilde{\eb}_0 = (\mathbf{0},\Ib_{d-K})^{\top}\tilde{\eb}$. Then we have
    \begin{align*}
        & \operatorname{Cov}(\Vb_d^{\top}\Eb_0 \Pb_{\perp} \eb_j)  = \frac{1}{n}\Cov(\widetilde\Eb \tilde\eb_0 - \tilde{\cb}_0^{\top}\widetilde\Eb \tilde{\cb}_0 \tilde{\eb}_0) \\
        &\quad = \frac{1}{n} \Big\{\Cov(\widetilde\Eb \tilde\eb_0 ) + \operatorname{Var}\big(\tilde{\cb}_0^{\top}\widetilde\Eb \tilde{\cb}_0\big)\tilde\eb_0 \tilde\eb_0^{\top} - \Cov(\widetilde\Eb \tilde\eb_0,\tilde{\cb}_0^{\top}\widetilde\Eb \tilde{\cb}_0)\tilde\eb_0^{\top} - \tilde\eb_0 \Cov(\widetilde\Eb \tilde\eb_0,\tilde{\cb}_0^{\top}\widetilde\Eb \tilde{\cb}_0)^{\top}\Big\}\\
        & \quad= \frac{1}{n}\{ \|\tilde{\eb}_0\|_2^2 \sigma^2 \mathbf{\Lambda}_d+3\sigma^4 \tilde{\eb}_0 \tilde{\eb}_0^{\top} -2\sigma^4\langle\tilde\cb,\tilde\eb\rangle(\tilde\cb_0\tilde\eb_0^{\top} + \tilde\eb_0 \tilde\cb_0^{\top}) \}.
    \end{align*}
    
     Thus it can be seen that the covariance matrix is block-diagonal:
 \[
 \operatorname{Cov}(\Vb_d^{\top}\Eb_0 \Pb_{\perp} \eb_j) = \frac{1}{n}\begin{pmatrix} \|\tilde{\eb}_0\|_2^2 \sigma^2 (\mathbf{\Lambda} +\sigma^2 \Ib_K) & \mathbf{0}\\ \mathbf{0} & \|\tilde{\eb}_0\|_2^2 \sigma^4 (\Ib_{d-K} + 3 \mathbf{\tau}_1\mathbf{\tau}_1^{\top}- 2\rho \tilde\cb \mathbf{\tau}_1^{\top}-2\rho \mathbf{\tau}_1 \tilde\cb^{\top}) \end{pmatrix},
 \]
 where $\mathbf{\tau}_1 = \tilde{\eb}/\|\tilde{\eb}\|_2$ and $\rho = \langle \tilde{\cb}, \mathbf{\tau}_1 \rangle$. Then following basic algebra, we can write $\operatorname{Cov}(\Eb_0 \Pb_{\perp} \eb_j)$ as:
 $$\frac{1}{n}\!\!\left\{\!\sigma^2 \|\tilde\eb_0\|_2^2\bSigma \!+\! 3\sigma^4 \Pb_{\perp}\eb_j \eb_j^{\top}\Pb_{\perp} \!\!-\!2\sigma^4\rho\|\tilde\eb_0\|_2\big[\!(\Pb_{\perp})_{[:,S]}\ub_{K+1}\eb_j^{\top}\! \Pb_{\perp}\!\!+\!\Pb_{\perp}\eb_j(\ub_{K+1})^{\!\top}\!(\Pb_{\perp})_{[S,:]}\big]\!\!\right\} .$$

To study $\eta_2(d)$, we will first define $\bSigma_j'$ as following
$$
    \bSigma_j' = \frac{1}{nL^2}\Bb_{\mathbf{\Omega}}^{\top}\mathbf{\Omega}^{\top}\Big\{\sigma^2 \bSigma+3\sigma^4\eb_j\eb_j^{\top}-2\sigma^4\rho\|\tilde\eb_0\|_2\big((\Ib_d)_{[:,S]}\ub_{K+1}\eb_j^{\top}+ \eb_j\ub_{K+1}^{\top}(\Ib_d)_{[S,:]}\big)\Big\}\mathbf{\Omega}\Bb_{\mathbf{\Omega}}.
$$
We know that $\|\tilde\eb_0\|^2_2 = \|\Pb_{\perp}\eb_j \|_2^2 = 1 - O(\mu K/d)$, thus we have
 \begin{align*}
     &\Big\|\sigma^2 \|\Pb_{\perp} \eb_j\|_2^2\bSigma - \sigma^2 \bSigma\Big\|_2  \le O\big(\frac{\mu K\sigma^2}{d}(\sigma^2+\lambda_1)\big), \\
     &\|3\sigma^4 \Pb_{\perp}\eb_j \eb_j^{\top}\Pb_{\perp} - 3\sigma^4\eb_j \eb_j^{\top}\|_2  \le 3\sigma^4 \|(\Pb_{\perp}\eb_j - \eb_j)\eb_j^{\top}\Pb_{\perp}\|_2 + 3\sigma^4\|\eb_j(\Pb_{\perp}\eb_j - \eb_j)^{\top}\|_2 \lesssim \sigma^4 \sqrt{\frac{\mu K}{d}},\\
     &\|(\Pb_{\perp})_{[:,S]}\ub_{K+1}\eb_j^{\top} \Pb_{\perp} - (\Ib_d)_{[:,S]}\ub_{K+1}\eb_j^{\top} \|_2 \le \|[(\Pb_{\perp})_{[:,S]} - (\Ib_d)_{[:,S]}]\ub_{K+1}\eb_j^{\top}\Pb_{\perp}\|_2 \\
     &\quad + \|(\Ib_d)_{[:,S]}]\ub_{K+1}\eb_j^{\top}(\Pb_{\perp}-\Ib_d)\|_2  \lesssim K\sqrt{\frac{\mu}{d}} + \sqrt{\frac{\mu K}{d}}\lesssim K\sqrt{\frac{\mu}{d}}, \\
     & 2\sigma^4\rho\|\tilde\eb_0\|_2\Big\|\!\big[(\Pb_{\perp})_{[:,S]}\ub_{K+1}\eb_j^{\top} \Pb_{\perp}\!\!+\!\Pb_{\perp}\eb_j(\ub_{K+1})^{\top}\!(\Pb_{\perp})_{[S,:]}\big] \!\!-\!\big[\!(\Ib_d)_{[:,S]}\ub_{K+1}\eb_j^{\top} \!\!\!+\!\eb_j(\ub_{K+1})^{\top}\!(\Ib_d)_{[S,:]}\big]\!\Big\|_2\\
     &\quad \lesssim K\sigma^4\sqrt{\frac{\mu}{d}},
 \end{align*}
 and in summary we have $\|\bSigma_j - {\bSigma}_j'\|_2 = O_P\Big( \frac{Kd \sigma^4}{n\Delta^2Lp}\sqrt{\frac{\mu}{d}} \Big)= O_P\Big(\frac{K\lambda_1^2}{\Delta^2}\sqrt{\frac{\mu}{d}}\Big) \frac{d\sigma^4}{nLp\lambda_1^2} = o_P\big(\frac{d\sigma^4}{nLp\lambda_1^2}\big)$. Now we study $\|\bSigma_j' - \widetilde\bSigma_j\|_2$. Since the entries of  $\sqrt{p}\mathbf{\Omega}$ are i.i.d. standard Gaussian, by Lemma 3 in \citet{fandistributed2019}, we know that with probability $1-O(d^{-9})$, we have 
 $$
 \|\mathbf{\Omega}\|_{2,\infty} \lesssim \sqrt{L}, \quad \text{and} \quad  \|\mathbf{\Omega}_{[S, :]}\|_2 \lesssim \sqrt{L}.
 $$
    Therefore, under the condition that $\frac{\lambda_1^2 Lp}{\Delta^2 d} = o(1)$ we have 
    \begin{align*}
        \|\bSigma_j' - \widetilde\bSigma_j\|_2& =  \sigma^4\Big\| \frac{1}{nL^2}\Bb_{\mathbf{\Omega}}^{\top}\mathbf{\Omega}^{\top}\Big(3\eb_j\eb_j^{\top}-2\rho\big((\Ib_d)_{[:,S]}\ub_{K+1}\eb_j^{\top}+ \eb_j\ub_{K+1}^{\top}(\Ib_d)_{[S,:]}\big)\Big)\mathbf{\Omega}\Bb_{\mathbf{\Omega}}\Big\|_2\\
        & \lesssim \frac{\sigma^4}{nL^2} \|\Bb_{\mathbf{\Omega}}\|_2^2  \|\mathbf{\Omega}\|_{2,\infty} \big( \|\mathbf{\Omega}\|_{2,\infty} +\|\mathbf{\Omega}_{[S,:]}\|_2 \big) = O_P\big(\frac{\sigma^4}{n\Delta^2}\big) = o_P(\frac{d\sigma^4}{nLp\lambda_1^2}).
    \end{align*}
    As for $\widetilde\bSigma_j$, by Lemma 3 in \citet{fandistributed2019} with high probability we have that $\sigma_K(\bOmega^{\top}\Vb) \gtrsim \sqrt{L}$ and in turn
    \begin{align*}
        \sigma_K(\widetilde\bSigma_j) &\gtrsim \frac{\sigma^2}{nL^2} \big(\sigma_K(\Bb_{\bOmega})\big)^2 \left(\big(\sigma_K(\bOmega^{\top}\Vb\big)^2\Delta +\big(\sigma_K(\bOmega\big)^2 \sigma^2\right) \gtrsim \frac{d\sigma^4}{nLp\lambda_1^2}  + \frac{\sigma^2\Delta}{n\lambda_1^2}.
    \end{align*}
    Therefore, combining the previous results, we have that by Weyl's inequality \citep{franklin2012matrix}, with high probability
    \begin{align*}
            \lambda_{K}\big(\bSigma_j\big)& \ge \lambda_{K}\big(\widetilde\bSigma_j\big) - \|\bSigma_j - \bSigma_j'\|_2 - \|\bSigma_j' - \widetilde\bSigma_j\|_2 \\
            & \gtrsim \frac{d\sigma^4}{nLp\lambda_1^2}  + \frac{\sigma^2\Delta}{n\lambda_1^2} - o(\frac{d\sigma^4}{nLp\lambda_1^2})  \gtrsim \frac{d\sigma^4}{nLp\lambda_1^2}  + \frac{\sigma^2\Delta}{n\lambda_1^2}.
    \end{align*}

 Thus we know $\eta_2(d) \asymp d \sigma^4/(nLp\lambda_1^2) + \sigma^2\Delta/(n\lambda_1^2)$.

 Recall from the proof of Corollary~\ref{prop: err rate terms} with probability $1-O(d^{-10})$ we have $\|\Eb_0\|_2 \lesssim (\lambda_1+\sigma^2)\log d\sqrt{\frac{r}{n}}$. Also recall that $ r_2(d) \asymp \frac{\tilde{\sigma}_1^3K}{\delta^2 n}(\log d)^2 $. Therefore, under the condition that
 $$n \gg \frac{\kappa_1^4\lambda_1 d r^2 (\log d)^4 }{p \sigma^2 }\left(\kappa_1\frac{d}{p}\wedge\frac{\lambda_1}{\sigma^2}L\right) \quad\text{and}\quad \frac{\tilde{\sigma}_1^6K^2}{\delta^4\sigma^4 n}(\log d)^4 \ll (\frac{\Delta}{\lambda_1})^2,$$
 we have ${d^2 r_1(d)^4(\log d)^4}\big({p^2 \Delta^4 \eta_2(d)}\big)^{-1} = o(1)$ and ${d r_2(d)^2}\big({Lp \Delta^2 \eta_2(d)}\big)^{-1} = o(1)$. 
 
 Now we need to verify Assumption \ref{asp: clt}. It can be seen that the randomness of the leading term comes from $\bOmega$ and $\Eb_0$ both. We will first establish the results conditional on $\bOmega$. In fact, we will first show a more general CLT that will also cover the case of the leading term under the regime $Lp \gg d$. More specifically, we will show that
 for any matrix $\Ab \in \RR^{d \times K}$ that satisfies the following two conditions: (1) $  \sigma_{\max}(\Ab)/\sigma_{\min}(\Ab) \le C|\lambda_1|/\Delta$; (2) $\lambda_K \big(\Cov(\Ab^{\top} \Eb_0 \Pb_{\perp} \eb_j ) \big) \ge c n^{-1}\sigma^4 \big(\sigma_{\min}(\Ab)\big)^2$, where $C, c > 0$ are fixed constants irrelevant to $\Ab$ and we abuse the notation by denoting $\bSigma_j := \Cov(\Ab^{\top} \Eb_0 \Pb_{\perp} \eb_j )$, it holds that 
\begin{equation}\label{eq: asp clt 1}
\mathbf{\Sigma}_j^{-1/2}\Ab^{\top} \Eb_0 \Pb_{\perp} \eb_j \overset{d}{\rightarrow} {\cN}(\mathbf{0}, \Ib_K).
\end{equation}
 Now for any matrix $\Ab \in \RR^{d \times K}$ satisfying the aforementioned conditions,  
 to show that $\Ab^{\top} \Eb_0 \Pb_{\perp} \eb_j$ is asymptotically normal, we only need to show that $\ab^{\top} \bSigma_j^{-1/2}\Ab^{\top} \Eb_0 \Pb_{\perp} \eb_j \overset{d}{\rightarrow} {\cN}(0,1)$ for any $\ab \in \RR^K$ with $\|\ab\|_2 = 1$. We can write 
 \begin{align*}
     &\ab^{\top}\bSigma_j^{-1/2} \Ab^{\top} \Eb_0 \Pb_{\perp} \eb_j  = \frac{1}{n}\sum_{i=1}^n \ab^{\top}\bSigma_j^{-1/2}\Ab^{\top}\{\bX_i\bX_i^{\top} - \bSigma -\tilde\cb_0^{\top}(\bZ_i\bZ_i^{\top} - \mathbf{\Lambda}_d)\tilde\cb_0 \Ib_d\}\Pb_{\perp} \eb_j\\
     & =  \frac{1}{n}\sum_{i=1}^n \Big\{\ab^{\top}\bSigma_j^{-1/2}\Ab^{\top}(\bX_i\bX_i^{\top} - \bSigma )\Pb_{\perp} \eb_j - \tilde\cb_0^{\top}(\bZ_i\bZ_i^{\top} - \mathbf{\Lambda}_d)\tilde\cb_0 (\ab^{\top}\bSigma_j^{-1/2}\Ab^{\top}\Pb_{\perp}\eb_j)\Big\}.
 \end{align*}
 We let $x_i = \ab^{\top}\bSigma_j^{-1/2}\Ab^{\top}(\bX_i\bX_i^{\top} - \bSigma )\Pb_{\perp} \eb_j$ and $y_i = \tilde\cb_0^{\top}(\bZ_i\bZ_i^{\top} - \mathbf{\Lambda}_d)\tilde\cb_0 (\ab^{\top}\bSigma_j^{-1/2}\Ab^{\top}\Pb_{\perp}\eb_j)$. For $\bSigma_j$, we have that $\|\bSigma_j^{-1/2}\|_2 \le \sigma_{\min}(\bSigma_j)^{-1/2} \le  \sqrt{n}/\big(\sigma^2 \sigma_{\min}(\Ab)\big)$.
  Then we have
 \begin{align*}
 &\EE|x_i|^3  \lesssim  \EE|\ab^{\top}\bSigma_j^{-1/2}\Ab^{\top}\bX_i \bX_i^{\top}\Pb_{\perp}\eb_j|^3 \le  \sqrt{\EE|\ab^{\top}\bSigma_j^{-1/2}\Ab^{\top}\bX_i|^6 \EE|\eb_j^{\top}\Pb_{\perp}\bX_i|^6} \\
 & \quad \lesssim \|\bSigma_j^{-1/2}\|_2^3\sqrt{(\lambda_1+\sigma^2)^3 \sigma^6 \|\Ab\|_2^6},\\
 &\EE|y_i|^3  \lesssim (\ab^{\top}\bSigma_j^{-1/2}\Ab^{\top}\Pb_{\perp}\eb_j)^3 \EE |\tilde\cb_0^{\top}\bZ_i\bZ_i^{\top}\tilde\cb_0|^3 \le \|\bSigma_j^{-1/2}\Ab\|_2^3\EE|\tilde{\cb}_0^{\top}\bZ_i|^6 \\
 & \quad \lesssim 
  \|\bSigma_j^{-1/2}\|_2^3(\lambda_1+\sigma^2)^3 \|\Ab\|_2^3,\\
     &\EE|x_i - y_i|^3  \lesssim \EE|x_i|^3 + \EE|y_i|^3 \lesssim \|\bSigma_j^{-1/2}\|_2^3\Big(\sqrt{(\lambda_1+\sigma^2)^3 \sigma^6 \|\Ab\|_2^6} +(\lambda_1+\sigma^2)^3 \|\Ab\|_2^3\Big)\\
     &\lesssim n^{3/2}(\lambda_1+\sigma^2)^3 \|\Ab\|_2^3/\big(\sigma^2 \sigma_{\min}(\Ab)\big)^3. 
 \end{align*}
 Thus 
 $$
 \frac{\sum_{i=1}^n \EE|x_i - y_i|^3}{\operatorname{Var}\Big\{\sum_{i=1}^n (x_i - y_i)\Big\}^{3/2}} \lesssim \frac{n (\lambda_1 + \sigma^2)^3 \|\Ab\|_2^3}{n^{3/2}\sigma^6 \sigma_{\min}(\Ab)^3} \lesssim \frac{(\lambda_1+\sigma^2)^3 \lambda_1^3}{\sqrt{n} \sigma^6\Delta^3} = o(1).
 $$
 Thus the Lyapunov’s condition is met and \eqref{eq: asp clt 1} holds. Then we take $\Ab = \bOmega\Bb_{\bOmega}$, and define the following event
\begin{align*}
    \cA_{\bOmega} &= \bigg\{1/2 \le \sigma_{\min}(\widetilde{\mathbf{\Omega}}^{(\ell)} / \sqrt{p}) \le \sigma_{\max}(\widetilde{\mathbf{\Omega}}^{(\ell)} / \sqrt{p}) \le 3/2, \quad  \forall \ell \in [L]\bigg\}\\
    & \quad \cap \bigg\{\frac{1}{2}\sqrt{\frac{d}{p}} \le \sigma_{\min} (\mathbf{\Omega}) \le \sigma_{\max} (\mathbf{\Omega}) \le \frac{3}{2}\sqrt{\frac{d}{p}}, \quad \forall \ell \in [L] \bigg\}.
\end{align*}
Then from previous results we know that $\PP((\cA_{\bOmega})^c) = o(1)$, and under the event $\cA_{\bOmega}$ we have $$\frac{\sigma_{\max}(\bOmega\Bb_{\bOmega})}{\sigma_{\min}(\bOmega\Bb_{\bOmega})} \le 9 \lambda_1/\Delta, \quad \lambda_K(\bSigma_j) \ge \frac{\sigma^4}{2n} \big(\sigma_{\min}(\bOmega\Bb_{\bOmega})\big)^2. $$

Thus from the above proof, for any vector $\tb \in \RR^K$, we have
$\PP\big(\mathbf{\Sigma}_j^{-1/2}\cV (\Eb_0)^{\top} \eb_j \le \tb|\cA_{\bOmega}\big) - \Phi(\tb) = o(1)$, where $\Phi(\cdot)$ is the CDF for ${\cN}(0,\Ib_K)$. Then we have
\begin{align*}
    & \PP\big(\mathbf{\Sigma}_j^{-1/2}\cV (\Eb_0)^{\top} \eb_j \le \tb\big) = \EE\Big(\PP\big(\mathbf{\Sigma}_j^{-1/2}\cV (\Eb_0)^{\top} \eb_j \le \tb | \mathbf{\Omega}\big)\Big)\\
    & \quad = \PP\big(\mathbf{\Sigma}_j^{-1/2}\cV (\Eb_0)^{\top} \eb_j \le \tb | \mathbf{\Omega} \in \cA_{\bOmega} \big)\PP( \cA_{\bOmega})+\PP\big(\mathbf{\Sigma}_j^{-1/2}\cV (\Eb_0)^{\top} \eb_j \le \tb | \mathbf{\Omega} \in \cA_{\bOmega}^c\big)\PP(\cA_{\bOmega}^c)\\
    & \quad = \big(\Phi(\tb) + o(1) \big)\big(1-o(1)\big) + o(1) = \Phi(\tb) + o(1).
\end{align*}
 Hence we have that Assumption \ref{asp: clt} holds and \eqref{eq: col gaussian L small 2} follows.
 Next we need to show that the result also holds for $\widetilde{\bSigma}_j$. From previous discussion we already know that $\|\bSigma_j - \widetilde\bSigma_j\|_2 = o_P\big(\lambda_K(\widetilde\bSigma_j)\big)$, then by Lemma 13 in \citet{Chen2019matcompinf} we have that 
 $\|\widetilde\bSigma_j^{-1/2}\bSigma_j^{1/2} - \Ib_d\|_2 = O_P\big(\|\widetilde\bSigma_j^{-1/2}\|_2\|\bSigma_j^{1/2} - \widetilde\bSigma_j^{1/2} \|_2\big)=  O_P\big(\lambda_K(\widetilde\bSigma_j)^{-1}\|\bSigma_j - \widetilde\bSigma_j\|_2\big) =  o_P(1) $.
 Then by Slutsky's Theorem, we have  
$$
\widetilde\bSigma_j^{-1/2}(\widetilde\Vb^{\F}\Hb - \Vb)^{\top}\eb_j = (\widetilde\bSigma_j^{-1/2}\bSigma_j^{1/2})\bSigma_j^{-1/2}(\widetilde\Vb^{\F}\Hb - \Vb)^{\top}\eb_j \overset{d}{\rightarrow} {\cN}(0,\Ib_K).
$$

Finally, we move on to verify the validity of the estimator $\hbSigma_j$ for the asymptotic covariance matrix. 
From Lemma 7 in \citet{fandistributed2019}, it can be seen that with probability $1- o(1)$, $\Hb$ is orthonormal. When $\Hb$ is orthonormal, by Slutsky's Theorem we have that 
 $$
 \Hb \widetilde\bSigma_j^{-1/2}(\widetilde\Vb^{\F}\Hb - \Vb)^{\top}\eb_j = \Hb \widetilde\bSigma_j^{-1/2} \Hb^{\top} (\widetilde\Vb^{\F} - \Vb\Hb^{\top})^{\top}\eb_j \overset{d}{\rightarrow} {\cN}(\mathbf{0}, \Ib_K),
 $$
where it can be seen that $\Hb \widetilde\bSigma_j^{-1/2} \Hb^{\top} = (\Hb \widetilde\bSigma_j \Hb^{\top})^{-1/2}$. Therefore, it suffices to show that $\|\widehat\bSigma_j - \Hb \widetilde\bSigma_j \Hb^{\top}\|_2 = o_P\big(\lambda_K(\widetilde\bSigma_j)\big)$, and the results will hold by Slutsky's Theorem. Recall from the proof of Corollary~\ref{prop: err rate terms}, we have the following bounds
$$
    \|\bSigma - \widehat\bSigma\|_2 = O_P\Big((\lambda_1 + \sigma^2)\sqrt{\frac{r}{n}}\Big), \quad |\hat\sigma^2 - \sigma^2| = O_P(\tilde{\sigma}_1\sqrt{\frac{K}{n}}), 
$$
We will bound the components of $\|\widehat\bSigma_j - \widetilde\bSigma_j\|_2$ respectively. We have
\begin{align*}
    &\| \sigma^2 \bSigma - \hat\sigma^2 \hbSigma\|_2 \lesssim |\hat\sigma^2 - \sigma^2|\|\bSigma\|_2 + \sigma^2\|\bSigma - \hbSigma\|_2 = O_P\Big(\tilde{\sigma}_1(\lambda_1+\sigma^2)\sqrt{\frac{K}{n}}\Big)\\
    &\quad + O_P\Big(\sigma^2(\lambda_1 + \sigma^2)\sqrt{\frac{r}{n}}\Big) = O_P\Big(\sigma^2(\lambda_1 + \sigma^2)\sqrt{\frac{r}{n}}\Big),
\end{align*}

Also, from proof of Theorem~\ref{thm: leading term}, we have that with high probability
$$\|\widetilde\Vb^{\F} \Hb - \Vb \|_2 = \|\widetilde\Vb^{\F} - \Vb\Hb^{\top}\|_2 \lesssim \|\Eb_0 \|_2 \|\mathbf{\Omega}\|_2\|\Bb_{\mathbf{\Omega}}\|_2/L = O_P(\kappa_1 \sqrt{\frac{dr}{npL}}),$$
and $\|\hbSigma^{\text{tr}} - \Vb \bLambda \Vb^{\top} \|_2 = O_P\Big((\lambda_1 + \sigma^2)\sqrt{\frac{r}{n}}\Big)$, where $\hbSigma^{\text{tr}} 
 = \hbSigma - \hat\sigma^2\Ib_d$.Then with high probability, for all $\ell \in [L]$ we have that 
\begin{align*}
    &\big\|(\widetilde\Vb^{\F\top} \hbSigma^{\text{tr}}-\Hb \bLambda \Vb^{\top})\mathbf{\Omega}^{(\ell)}/\sqrt{p}\big\|_2 \lesssim \sqrt{\frac{d}{p}}\big(\|\hbSigma^{\text{tr}} - \Vb\bLambda\Vb^{\top}\|_2 + \lambda_1 \|\widetilde\Vb^{\F} - \Vb\Hb^{\top}\|_2\big) \\
    &= O_P\left(\kappa_1 \lambda_1 \sqrt{\frac{d^2r}{np^2L}}\right) = o_P(\Delta),
\end{align*}
and thus by Theorem 3.3 in \citet{stewart1977pertinv}, with high probability for all $\ell \in [L]$ we have that
\begin{align*}
    &\|\widehat\Bb^{(\ell)} - \Bb^{(\ell)}\Hb^{\top}\|_2 = \big\|(\widetilde\Vb^{\F\top}\hbSigma^{\text{tr}}\mathbf{\Omega}^{(\ell)}/\sqrt{p})^{\dagger} - (\Hb\bLambda\Vb^{\top}\mathbf{\Omega}^{(\ell)}/\sqrt{p})^{\dagger}\big\|_2 \\
    &= O_P\bigg(\Delta^{-2}\kappa_1 \lambda_1 \sqrt{\frac{d^2r}{np^2L}}\bigg) ,
\end{align*}
and in turn we have $\|\widehat\Bb_{\mathbf{\Omega}} - \Bb_{\mathbf{\Omega}}\Hb^{\top}\|_2 = O_P\bigg(\Delta^{-2}\kappa_1 \lambda_1 \sqrt{\frac{d^2r}{np^2L}}\bigg) \sqrt{L} = O_P\bigg(\Delta^{-2}\kappa_1 \lambda_1 \sqrt{\frac{d^2r}{np^2}}\bigg) $.

Thus combining the above results, under the condition that $\frac{\lambda_1\kappa_1^4}{\sigma^2}\sqrt{\frac{d^2 r}{np^2L}} = o(1)$, following basic algebra we have 
\begin{align*}
   & \|\hbSigma_j -\Hb\widetilde\bSigma_j\Hb^{\top}\|_2 \!\lesssim O_P\Big(\sigma^2(\lambda_1 \!+\! \sigma^2)\sqrt{\frac{r}{n}}\Big)\frac{d}{nLp\Delta^2} \! +\! O_P\bigg(\frac{d\sqrt{L}}{nL^2p\Delta^3} \sigma^2(\sigma^2\!+\!\lambda_1) \kappa_1 \lambda_1 \sqrt{\frac{d^2r}{np^2}}\bigg)\\
   &= O_P\Big(\frac{\lambda_1^2}{\Delta^2\sigma^2}(\lambda_1 + \sigma^2)\sqrt{\frac{r}{n}}\Big)\frac{d\sigma^4}{nLp\lambda_1^2} + O_P\Big(\frac{\lambda_1\kappa_1^4}{\sigma^2}\sqrt{\frac{d^2 r}{np^2L}}\Big)\frac{d\sigma^4}{nLp\lambda_1^2}= o_P\big(\lambda_K(\widetilde\bSigma_j)\big).
\end{align*}
Therefore, by Slutsky's Theorem, under the event $\cB := \{\Hb \text{ is orthonormal}\}$, for any vector $\tb \in \RR^K$, we have that $\PP(\widehat{\bSigma}_j^{-1/2}(\widetilde{\Vb}^{\F} - \Vb\Hb^{\top})^{\top} \eb_j \le \tb |\cB) - \Phi(\tb) = o(1)$, and thus 
 \begin{align*}
     &\PP(\widehat{\bSigma}_j^{-1/2}(\widetilde{\Vb}^{\F} - \Vb\Hb^{\top})^{\top} \eb_j \le \tb) =  \PP(\widehat{\bSigma}_j^{-1/2}(\widetilde{\Vb}^{\F} - \Vb\Hb^{\top})^{\top} \eb_j \le \tb |\cB)\PP(\cB)\\
     &\quad+ \PP(\widehat{\bSigma}_j^{-1/2}(\widetilde{\Vb}^{\F} - \Vb\Hb^{\top})^{\top} \eb_j \le \tb | \cB^c) \PP(\cB^c) \\
     &= \PP\big(\widehat{\bSigma}_j^{-1/2}(\widetilde{\Vb}^{\F} - \Vb\Hb^{\top})^{\top} \eb_j \le \tb |\cB\big)\big(1-o(1)\big) + o(1) = \Phi(\tb)+o(1).
 \end{align*}
 Hence the claim follows.

\subsection{Proof of Corollary~\ref{col: GMM L small}}\label{sec: proof col gmm L small}
We will verify that Assumptions \ref{asp: tail prob bound}, \ref{asp: stat rate biased error}, \ref{asp: incoh} and \ref{asp: clt} hold. First, it is not hard to see that there exists some orthonormal matrix $\Ob \in \RR^{K \times K}$ such that $\Vb = \Fb \Cb^{-1} \Ob$, where $\Cb = \diag(\sqrt{d_1}, \ldots, \sqrt{d_K})$. From the problem setting of Example~\ref{ex: GMM} we also know that there exists a constant $C > 0$ such that
$$
C^{-1}K\max_k d_k \le K\min_k d_k \le d \le K \max_k d_k, \quad d_1 \asymp \ldots \asymp d_K \asymp d/K,
$$
and thus that $\sqrt{d/K} \lesssim \sigma_{K}(\Cb) \le \|\Cb\|_2 \lesssim \sqrt{d/K}$. Then $\|\Vb\|_{2,\infty} \lesssim \sqrt{\frac{K}{d}}\|\Fb\|_{2,\infty} = \sqrt{\frac{K}{d}}$.  Thus Assumption \ref{asp: incoh} holds with $\mu = O(1)$.

From the proof of Corollary~\ref{prop: err rate terms} we know that Assumption \ref{asp: tail prob bound} is satisfied. Besides, recall from Remark~\ref{rmk: exm3 rate},
under the condition that $\sqrt{K/d}\log d = O(1)$,  with probability at least $1 - d^{-10}$ we have that $\|\Eb\|_2 \lesssim d\Delta_0/\sqrt{K} + \sqrt{dn}\log d := r_1'(d)$, which is sharper than $r_1(d)\log d$.  Since $\Eb_b = 0$, we have $r_2(d) = 0$ and Assumption \ref{asp: stat rate biased error} holds trivially. Now we move on to study the minimum covariance eigenvalue rate $\eta_2(d)$. From the proof of Corollary~\ref{prop: err rate terms}, we know that 
$$
\Eb = \Eb_0 = \Fb \bTheta^{\top} \Zb + \Zb^{\top}\bTheta\Fb^{\top} + \Zb^{\top}\Zb - n\Ib_d = \sum_{i = 1}^n \big\{ \Qb_i \Zb_{i.}^{\top} + \Zb_{i.}\Qb_i^{\top} + \Zb_{i.}\Zb_{i.}^{\top} -\Ib_d \big\},
$$
where $\Qb_i = \Fb\bTheta_{i.} \in \RR^d$ with $\bTheta_{i.}$ being the $i$-th row of $\bTheta$, $\Zb_{i.}$ is the $i$-th row of $\Zb$ and $\Zb_{i.} \overset{\text{i.i.d}}{\sim} {\cN}(\mathbf{0}, \Ib_d)$. Then for $j \in [d]$, we have 
\begin{align*}
 \Cov(\Eb_0\Pb_{\perp}\eb_j) &= \Cov\Big(\sum_{i = 1}^n \big\{ \Qb_i \Zb_{i.}^{\top} + \Zb_{i.}\Qb_i^{\top} + \Zb_{i.}\Zb_{i.}^{\top} -\Ib_d \big\} \Pb_{\perp}\eb_j\Big)\\
 &= \sum_{i = 1}^n\Cov\Big( \big\{ \Qb_i \Zb_{i.}^{\top} + \Zb_{i.}\Qb_i^{\top} + \Zb_{i.}\Zb_{i.}^{\top} -\Ib_d \big\} \Pb_{\perp}\eb_j\Big)\\
 &= \sum_{i = 1}^n\Cov\Big( \big\{ \Qb_i \Zb_{i.}^{\top} + \Zb_{i.}\Zb_{i.}^{\top} -\Ib_d \big\} \Pb_{\perp}\eb_j\Big),
\end{align*}
where the last equality is due to the fact that $ \Pb_{\perp}\Qb_i = \Pb_{\perp} \Fb\bTheta_{i.} = \mathbf{0}$. Now for $i \in [n]$, we calculate $\Cov\Big( \big\{ \Qb_i \Zb_{i.}^{\top} + \Zb_{i.}\Zb_{i.}^{\top} -\Ib_d \big\} \Pb_{\perp}\eb_j\Big)$. Following basic algebra, we have that 
\begin{align*}
    &\Cov\Big( \big\{ \Qb_i \Zb_{i.}^{\top}\! + \!\Zb_{i.}\Zb_{i.}^{\top} \!-\!\Ib_d \big\} \Pb_{\perp}\eb_j\Big) \\
    &=\! \EE\Big(\big\{ \Qb_i \Zb_{i.}^{\top} \!+\! \Zb_{i.}\Zb_{i.}^{\top} \big\} \Pb_{\perp}\eb_j\eb_j^{\top} \Pb_{\perp}\big\{\Zb_{i.}\Qb_i ^{\top}\! + \!\Zb_{i.}\Zb_{i.}^{\top} \big\}\Big)\! -\! \Pb_{\perp}\eb_j \eb_j^{\top} \Pb_{\perp}\\
    & = \|\Pb_{\perp}\eb_j\|_2^2(\Qb_i\Qb_i^{\top} + \Ib_d) + 2 \Pb_{\perp}\eb_j \eb_j^{\top} \Pb_{\perp} -\Pb_{\perp}\eb_j \eb_j^{\top} \Pb_{\perp} \\
    &= \|\Pb_{\perp}\eb_j\|_2^2(\Qb_i\Qb_i^{\top} + \Ib_d) +  \Pb_{\perp}\eb_j \eb_j^{\top} \Pb_{\perp},
\end{align*}
and thus 
\begin{align*}
     \Cov(\Eb_0\Pb_{\perp}\eb_j) &= \sum_{i=1}^n \left(\|\Pb_{\perp}\eb_j\|_2^2(\Qb_i\Qb_i^{\top}\! +\! \Ib_d) \!+\!  \Pb_{\perp}\eb_j \eb_j^{\top} \Pb_{\perp} \right)\\
     &=\! \|\Pb_{\perp}\eb_j\|_2^2(\sum_{i=1}^n\Qb_i\Qb_i^{\top} \!+\! n\Ib_d) \!+\!  n\Pb_{\perp}\eb_j \eb_j^{\top} \Pb_{\perp} \\
     & = \|\Pb_{\perp}\eb_j\|_2^2(\Fb \bTheta^{\top} \bTheta\Fb^{\top} + n\Ib_d)+\!  n\Pb_{\perp}\eb_j \eb_j^{\top} \Pb_{\perp}.
\end{align*}
Then since $\|\Pb_{\perp}\eb_j\|_2 = 1-K/d = 1-o(1)$, we have that $\lambda_d\big( \Cov(\Eb_0\Pb_{\perp}\eb_j) \big) \gtrsim n$, and hence we have $\eta_2(d) \asymp dn/(\lambda_1^2 Lp)$. Then under the condition that { $n \gg d^3L/p$ and $\Delta_0^2 \gg K (\log d)^2 \sqrt{dnL/p}$}, we have that 
$$
\frac{r_1'(d)^4}{\eta_2(d)} \lesssim  \frac{\lambda_1^2 Lp}{d}\left(\frac{d^4\Delta_0^4}{K^2n} + d^2 n (\log d)^4 \right) \ll \frac{\lambda_1^2 p^2 d^2\Delta_0^4}{K^2 d^2  } \asymp \frac{p^2 \Delta^4}{d^2}, \quad \frac{d^2 r_1'(d)^4}{p^2 \Delta^4 \eta_2(d)}  = o(1).
$$

Now we move on to check Assumption \ref{asp: clt}. Similar as in the proof of Corollary~\ref{col: guassian case}, we will first show the results conditional on $\bOmega$ by establishing a more general CLT . More specifically, we will show that for any $\ab \in \RR^K$ with $\|\ab\|_2 = 1$, and $\Ab \in \RR^{d \times K}$ such that $\lambda_K \big( \Cov(\Ab^{\top} \Eb_0 \Pb_{\perp} \eb_j )\big) \ge c n \sigma_{\min}(\Ab)^2$ and $\sigma_{\max}(\Ab) / \sigma_{\min}(\Ab) \le C$, where $C,c >0$ are constants irrelevant to $\Ab$ and we abuse the notation by denoting $\bSigma_j := \Cov(\Ab^{\top} \Eb_0 \Pb_{\perp} \eb_j )$,  we have $\ab^{\top}\bSigma_j^{-1/2} \Ab^{\top} \Eb_0 \Pb_{\perp}\eb_j \overset{d}{\rightarrow} {\cN}(0,1)$.  
Define $\Qb = \bTheta \Fb^{\top}$. We know that 
$$\|\Qb\|_{2,\infty} = \max_{i \in [n]}\|\Qb_i\|_2 \le \|\Fb\|_2\|\bTheta\|_{2,\infty} \lesssim \mu_{\theta}\Delta_0\sqrt{\frac{d}{n}}, \, \text{and}$$ 
$$\ab^{\top}\bSigma_j^{-1/2} \Ab^{\top} \Eb_0 \Pb_{\perp}\eb_j = \sum_{i = 1}^n \big\{\ab^{\top}\bSigma_j^{-1/2} \Ab^{\top} (\Qb_i \Zb_{i.}^{\top} + \Zb_{i.}\Zb_{i.}^{\top} -\Ib_d)\Pb_{\perp}\eb_j\big\},$$ and we denote
$$
x_i = \ab^{\top}\bSigma_j^{-1/2} \Ab^{\top} \Qb_i \Zb_{i.}^{\top} \Pb_{\perp}\eb_j, \quad y_i = \ab^{\top}\bSigma_j^{-1/2} \Ab^{\top} (\Zb_{i.}\Zb_{i.}^{\top} -\Ib_d) \Pb_{\perp}\eb_j.
$$
Then we have 
\begin{align*}
    \EE|x_i + y_i|^3 & \lesssim \EE|x_i|^3 + \EE|y_i|^3 \lesssim \|\bSigma_j^{-1/2} \Ab^{\top} \Qb_i \|_2^3   + \|\bSigma_j^{-1/2} \Ab^{\top} \|_2^3 \\
    &\le \|\bSigma_j^{-1/2}\|_2^3\|\Ab\|_2^3 (\|\Qb\|_{2,\infty}^3 + 1) \lesssim n^{-3/2}\big\{\frac{\|\Ab\|_2}{\sigma_{\min}(\Ab)}\big\}^3 \big\{\mu_{\theta}^3\Delta_0^3\big(\frac{d}{n}\big)^{3/2} +1\big\}\\
    &\lesssim  n^{-3/2} \mu_{\theta}^3\Delta_0^3\big(\frac{d}{n}\big)^{3/2} + n^{-3/2}.
\end{align*} 
Then 
\begin{align*}
    \frac{\sum_{i=1}^n \EE|x_i + y_i|^3}{\operatorname{Var}\big\{\sum_{i=1}^n (x_i+y_i)\big\}^{3/2}}& = \sum_{i=1}^n \EE|x_i + y_i|^3 \lesssim  n^{-2} \mu_{\theta}^3\Delta_0^3d^{3/2} + n^{-1/2} .
\end{align*}
Then under the condition that $\Delta_0^2 \ll n^{4/3}/(\mu_{\theta}^2d)$, we have that 
$$n^{-2} \mu_{\theta}^3\Delta_0^3d^{3/2}=o(1) \quad \text{and}\quad  \left({\sum_{i=1}^n \EE|x_i + y_i|^3}\right){\operatorname{Var}\big(\sum_{i=1}^n (x_i+y_i)\big)^{-3/2}} = o(1).$$
Thus the Lyapunov's condition is met and the CLT holds. Also recall from previous arguments, there exists a fixed constant $C > 0$ such that with high probability we have
$$\frac{\sigma_{\max}(\bOmega\Bb_{\bOmega})}{\sigma_{\min}(\bOmega\Bb_{\bOmega})} \le 9 \left(\frac{\sigma_1 (\bTheta)}{\sigma_K (\bTheta)}\right)^2 \le C, \quad \lambda_K(\bSigma_j) \ge \frac{n}{2} \big(\sigma_{\min}(\bOmega\Bb_{\bOmega})\big)^2. $$
Then by taking $\Ab = \bOmega \Bb_{\bOmega}$ and following similar steps as in the proof of Corollary~\ref{col: guassian case}, we know that Assumption \ref{asp: clt} is satisfied. Then by Theorem~\ref{thm: leading term}, \eqref{eq: general clt for vk small L} holds. 

We move on to prove \eqref{eq: col GMM L small 2}. It suffices to show that $\|\bSigma_j - \widetilde\bSigma_j\|_2=o_P\big( \lambda_K(\widetilde\bSigma_j) \big)$. 
When $\Delta_0^2 \ll n$ and $K\ll d$, with high probability we have
\begin{align*}
    &\|\bSigma_j - \widetilde\bSigma_j\|_2  \lesssim \frac{d}{L\Delta^2 p} \big\{(n+\Delta)(1\!-\!\|\Pb_{\perp}\eb_j\|_2^2) + n\|\Pb_{\perp}\eb_j\eb_j^{\top}\Pb_{\perp} - \eb_j\eb_j^{\top}\|_2\big\}\\
    & \quad + \frac{n}{L\Delta^2}\|\mathbf\Omega\|_{2,\infty}^2 \lesssim \frac{d}{L\Delta^2 p} \Big(\frac{Kn}{d} +\Delta_0^2 + n \sqrt{\frac{K}{d}}\Big) + \frac{n}{\Delta^2}= o(\frac{dn}{L\lambda_1^2 p})  = o_P\big(\lambda_K(\widetilde\bSigma_j)\big).
\end{align*} 
Thus \eqref{eq: col GMM L small 2} holds. 

Last we verify the validity of $\hbSigma_j$. Similar as in the proof of Corollary~\ref{col: guassian case}, it suffices to show that $\|\hbSigma_j - \Hb\widetilde\bSigma_j\Hb^{\top}\|_2=o_P\big(\lambda_K(\widetilde\bSigma_j)\big)$. Recall with high probability $\|\widehat\Mb - \Mb\|_2 \lesssim r_1'(d) = d\Delta_0/\sqrt{K} + \sqrt{dn}\log d$. 

Also, from the proof of Theorem~\ref{thm: leading term}, we have that 
\begin{align*}
    \|\widetilde\Vb^{\F} \Hb - \Vb\|_2= \|\widetilde\Vb^{\F} - \Vb\Hb^{\top}\|_2 = \frac{1}{L} O_P(\|\widehat\Mb - \Mb\|_2\|\mathbf{\Omega}\|_2\|\Bb_{\mathbf{\Omega}}\|_2) = O_P\Big(\sqrt{\frac{d}{pL}} \frac{r_1'(d)}{\Delta}\Big),
\end{align*}
Then with high probability, for all $\ell \in [L]$ we have that 
\begin{align*}
    &\big\|(\widetilde\Vb^{\F\top} \widehat\Mb -\Hb \bLambda \Vb^{\top})\mathbf{\Omega}^{(\ell)}/\sqrt{p}\big\|_2 \lesssim \sqrt{\frac{d}{p}}\big(\|\widehat\Mb - \Mb\|_2 + \lambda_1 \|\widetilde\Vb^{\F} - \Vb\Hb^{\top}\|_2\big) \\
    &= O_P\left( \sqrt{\frac{d^2}{p^2L}}r_1'(d) \right) = o_P(\Delta),
\end{align*}
and thus by Theorem 3.3 in \citet{stewart1977pertinv}, we have that 
\begin{align*}
    \|\widehat\Bb^{(\ell)} - \Bb^{(\ell)}\Hb^{\top}\|_2 = \big\|(\widetilde\Vb^{\F\top}\widehat\Mb\mathbf{\Omega}^{(\ell)}/\sqrt{p})^{\dagger} - (\Hb\bLambda\Vb^{\top}\mathbf{\Omega}^{(\ell)}/\sqrt{p})^{\dagger}\big\|_2 = O_P\left( \sqrt{\frac{d^2}{p^2L}}\frac{r_1'(d)}{\Delta^2} \right),
\end{align*}
and in turn we have $\|\widehat\Bb_{\mathbf{\Omega}} - \Bb_{\mathbf{\Omega}}\Hb^{\top}\|_2 =O_P\left( \sqrt{\frac{d^2}{p^2L}}\frac{r_1'(d)}{\Delta^2} \right)\sqrt{L} = O_P\left( \frac{dr_1'(d)}{p\Delta^2} \right) $.

Therefore, under the condition that $\Delta_0^2 \ll {KLp^2n^2}/{d^4}$, we have
\begin{align*}
    &\|\hbSigma_j - \Hb\widetilde\bSigma_j\Hb^{\top}\|_2 \lesssim \frac{d}{L\Delta^2 p}\|\widehat\Mb - \Mb\|_2 + (n+\lambda_1)\frac{d}{pL\Delta}O_P\left( \frac{d}{p\sqrt{L}}\frac{r_1'(d)}{\Delta^2} \right)   \\
    &= o_P\big(\frac{dn}{L\Delta^2 p} \big) = o_P\big(\lambda_K(\widetilde\bSigma_j)\big).
\end{align*}
Thus the claim follows.

\subsection{Proof of Theorem~\ref{thm: leading term L big}}\label{sec: proof thm leading term L big}
We will first decompose $\widetilde{\Vb}^{\F}\Hb - \Vb = (\widetilde{\Vb}^{\F}\Hb -\widetilde{\Vb}\Hb_1\Hb_0) + (\widetilde{\Vb}\Hb_1\Hb_0 - \widehat{\Vb}\Hb_0) +(\widehat{\Vb}\Hb_0 - \Vb)$. We will show that when $L$ is sufficiently large the first two terms are negligible, and we will consider the third term $\widehat{\Vb}\Hb_0 - \Vb$ first. We will first study $\|\widehat{\Vb}\Hb_0 - \Vb - \Pb_{\perp} \Eb_0 \Vb \mathbf{\Lambda}^{-1}\|_{2, \infty}$ by conducting decomposition of the error term.
For the convenience of notations, we let $\Pb = \Vb^{\top} \Vb$ for short. If we define $\widehat\Hb_0 = \widehat{\Vb}^{\top} \Vb $, we can decompose 
\begin{align*}
    &\widehat{\Vb}\Hb_0  - \Vb - \Pb_{\perp} \Eb_0 \Vb \mathbf{\Lambda}^{-1} \\
    &= \Pb_{\perp}\widehat{\Vb}\widehat{\Hb}_0 \!-\! \Pb_{\perp} \Eb_0 \Vb\mathbf{\Lambda}^{-1}\! +\! \Pb_{\perp}\widehat\Vb({\Hb}_0\!-\!\widehat{\Hb}_0) \!+\! (\Pb \widehat{\Vb}\Hb_0\! -\!  \Vb ).
\end{align*}
Under the condition that $\|\Eb\|_2/\Delta =O_P\big( r_1(d)/\Delta \big) = o_P(1)$, we have that $\Hb_0$ is a full-rank orthonormal matrix with probability $1-o(1)$. Then we have with probability $1-o(1)$ that
\begin{align*}
    & \|\Pb_{\perp}\widehat\Vb({\Hb}_0-\widehat{\Hb}_0)\|_{2,\infty}  = \|(\Ib - \Vb \Vb^{\top})(\widehat\Vb\Hb_0 - \Vb)\Hb_0^{\top}({\Hb}_0-\widehat{\Hb}_0)\|_{2,\infty} \\
    &\le \|(\widehat\Vb\Hb_0 - \Vb)\Hb_0^{\top}({\Hb}_0-\widehat{\Hb}_0)\|_{2,\infty} + \|\Vb \Vb^{\top}(\widehat\Vb\Hb_0 - \Vb)\Hb_0^{\top}({\Hb}_0-\widehat{\Hb}_0)\|_{2,\infty}\\
    &\le \|\widehat\Vb\Hb_0 - \Vb\|_{2,\infty}\|{\Hb}_0-\widehat{\Hb}_0\|_2+ \|\Vb\|_{2,\infty}\|\widehat\Vb\Hb_0 - \Vb\|_2 \|{\Hb}_0-\widehat{\Hb}_0\|_2 \\
    &\lesssim \Big( r_3(d) + \sqrt{\frac{\mu K}{d}} \frac{\|\Eb\|_2}{\Delta}\Big)\|{\Hb}_0-\widehat{\Hb}_0\|_2.
\end{align*}
From Lemma 7 in \citet{fandistributed2019}, we know that $ \|{\Hb}_0-\widehat{\Hb}_0\|_2  \lesssim \|\widehat\Vb\widehat\Vb^{\top} - \Vb \Vb^{\top} \|_2^2 \lesssim (\|\Eb\|_2/\Delta)^2 =O_P( r_1(d)^2/\Delta^2)$, and thus we have
$$
 \|\Pb_{\perp}\widehat\Vb({\Hb}_0-\widehat{\Hb}_0)\|_{2,\infty} = O_P \left( \Big( r_3(d) + \sqrt{\frac{\mu K}{d}}\frac{r_1(d)}{\Delta}\Big)r_1(d)^2/\Delta^2\right).
$$
We move on to bound $\|\Pb \widehat{\Vb}\Hb_0 -  \Vb \|_{2,\infty}$,
\begin{align*}
    \|\Pb \widehat{\Vb}\Hb_0 -  \Vb \|_{2,\infty} &  = \|\Vb( \widehat{\Hb}_0^{\top}\Hb_0 -  \Ib_K  )\|_{2,\infty} \le \|\Vb\|_{2,\infty} \|{\Hb}_0-\widehat{\Hb}_0\|_2 \\
    & = O_P\left( \sqrt{\frac{\mu K}{d}}r_1(d)^2/\Delta^2\right).
\end{align*}
Finally, we consider the term $\Pb_{\perp}\widehat{\Vb}\widehat{\Hb}_0 - \Pb_{\perp} \Eb_0 \Vb\mathbf{\Lambda}^{-1}$. We can decompose

{
\begin{align*}
    &\Pb_{\perp}\widehat{\Vb}\widehat{\Hb}_0 - \Pb_{\perp} \Eb_0 \Vb\mathbf{\Lambda}^{-1}  = \Pb_{\perp} \widehat\Vb \widehat\Hb_0 \bLambda \bLambda^{-1} - \Pb_{\perp} \Eb_0 \Vb\mathbf{\Lambda}^{-1} \\
    &= \Pb_{\perp}\big(\Eb\widehat{\Vb}\widehat{\Hb}_0 - \Eb_0 \Vb + \widehat{\Vb}(\mathbf{\Lambda} - \widehat{\bLambda} )\widehat{\Hb}_0 + \widehat{\Vb} (\widehat{\Hb}_0  \mathbf{\Lambda} - \mathbf{\Lambda} \widehat{\Hb}_0)\big) \mathbf{\Lambda}^{-1}.
\end{align*}
}

We bound the three terms separately, with high probability
\begin{align*}
    & \|\!\Pb_{\perp}\!\big(\Eb\widehat{\Vb}\widehat{\Hb}_0 \!-\! \Eb_0 \Vb\big) \!\mathbf{\Lambda}^{-1}\!\|_{2,\infty}  \!\!\le\! \|\!\Pb_{\perp}\Eb_0(\widehat{\Vb}\widehat{\Hb}_0 \!-\!\Vb\!) \mathbf{\Lambda}^{-1}\!\|_{2,\infty} \!\!+\! \|\!\Pb_{\perp}\Eb_b\! \widehat{\Vb}\!\widehat{\Hb}_0 \mathbf{\Lambda}^{-1}\!\|_{2,\infty}\\
    & \quad \le \|\Eb_0(\widehat{\Vb}\widehat{\Hb}_0 -\Vb) \mathbf{\Lambda}^{-1}\|_{2,\infty}+ \|\Vb \Vb^{\top}\Eb_0(\widehat{\Vb}\widehat{\Hb}_0 -\Vb) \mathbf{\Lambda}^{-1} \|_{2,\infty} + \|\Eb_b\|_2/\Delta\\
    & \quad \le \|\Eb_0(\widehat{\Vb}\widehat{\Hb}_0 -\Vb)\|_{2,\infty}/\Delta + \|\Vb\|_{2,\infty}\|\Eb_0\|_2\|\widehat{\Vb}\widehat{\Hb}_0 -\Vb\|_2/\Delta + r_2(d)/\Delta\\
    & \quad = O_P\left( r_4(d,\mathbf{\Lambda} )/\Delta + \sqrt{\frac{\mu K}{d}}r_1(d)^2/\Delta^2 + r_2(d)/\Delta\right). 
\end{align*}

As for $\Pb_{\perp}\widehat{\Vb}({\mathbf{\Lambda}} - \widehat{\mathbf{\Lambda}} )\widehat{\Hb}_0\mathbf{\Lambda}^{-1}$, we have
\begin{align*}
    & \|\Pb_{\perp}\widehat{\Vb}(\widehat{\mathbf{\Lambda}} - \mathbf{\Lambda} )\widehat{\Hb}_0\mathbf{\Lambda}^{-1}\|_{2,\infty}  \le \|(\widehat{\Vb}\Hb_0 - \Vb)\Hb_0^{\top}(\widehat{\mathbf{\Lambda}} - \mathbf{\Lambda} )\widehat{\Hb}_0\mathbf{\Lambda}^{-1}\|_{2,\infty} \\
    &\quad \quad + \|\Vb \Vb^{\top}(\widehat{\Vb}\Hb_0 - \Vb)\Hb_0^{\top}(\widehat{\mathbf{\Lambda}} - \mathbf{\Lambda} )\widehat{\Hb}_0\mathbf{\Lambda}^{-1}\|_{2, \infty}\\
    &\quad \le \|\widehat{\Vb}\Hb_0 - \Vb\|_{2,\infty}\|\Eb_0\|_2/\Delta + \|\Vb\|_{2,\infty}\|\widehat{\Vb}\Hb_0 - \Vb\|_2\|\Eb_0\|_2/\Delta\\ 
    &\quad = O_P\bigg\{r_3(d)r_1(d)/\Delta + \sqrt{\frac{\mu K}{d}} r_1(d)^2/\Delta^2\bigg\},
\end{align*}
and finally
\begin{align*}
    &\|\Pb_{\perp}\widehat{\Vb} (\mathbf{\Lambda} \widehat{\Hb}_0  -\widehat{\Hb}_0  \mathbf{\Lambda})  \mathbf{\Lambda}^{-1}\|_{2,\infty} \le \|(\widehat{\Vb} \Hb_0 - \Vb)\Hb_0^{\top}(\mathbf{\Lambda} \widehat{\Hb}_0  -\widehat{\Hb}_0  \mathbf{\Lambda})  \mathbf{\Lambda}^{-1}\|_{2,\infty} \\
    & \quad \quad + \|\Vb \Vb^{\top}(\widehat{\Vb} \Hb_0 - \Vb)\Hb_0^{\top} (\mathbf{\Lambda} \widehat{\Hb}_0  -\widehat{\Hb}_0  \mathbf{\Lambda})  \mathbf{\Lambda}^{-1}\|_{2,\infty}\\
    & \quad =O_P \left( \big(r_3(d) + \sqrt{\frac{\mu K}{d}}r_1(d)/\Delta\big)\|\mathbf{\Lambda} \widehat{\Hb}_0  -\widehat{\Hb}_0  \mathbf{\Lambda}\|_2/\Delta\right)\\
    & \quad = O_P\left( \big(r_3(d) + \sqrt{\frac{\mu K}{d}}r_1(d)/\Delta\big)r_1(d)/\Delta\right),
\end{align*}
where the last inequality is due to the fact that 
{
\begin{align*}
    \|\mathbf{\Lambda} \widehat{\Hb}_0  -\widehat{\Hb}_0  \mathbf{\Lambda}\|_2 & = \|\bLambda \widehat\Vb^{\top} \Vb \Vb^{\top} - \widehat\Vb^{\top} \Vb \bLambda \Vb^{\top}\|_2 \\
    &= \|\bLambda \widehat\Vb^{\top} \Vb \Vb^{\top} - \widehat\Vb^{\top} \Mb \Vb \Vb^{\top}\|_2 \\
    & \le \|\bLambda \widehat\Vb^{\top} \Vb \Vb^{\top} - \widehat\Vb^{\top} \widehat\Mb \Vb \Vb^{\top} \|_2 + \| \widehat\Vb^{\top} \Eb \Vb \Vb^{\top} \|_2 \\
    & = \|(\bLambda - \widehat\bLambda)\widehat\Vb^{\top} \Vb \Vb^{\top} \|_2 +  \| \widehat\Vb^{\top} \Eb \Vb \Vb^{\top} \|_2 \le 2\|\Eb\|_2.
\end{align*}
}

Thus in summary, we have
\begin{align*}
    &\|\widehat{\Vb}\Hb_0 \!\!- \!\!\Vb \!\!- \!\Pb_{\perp} \Eb_0 \Vb \mathbf{\Lambda}^{-1}\|_{2,\infty} \! =\! O_P\bigg\{\!\! \frac{r_3(d)r_1(d)}{\Delta} \!+\!\! \sqrt{\frac{\mu K}{d}}\!\frac{r_1(d)^2}{\Delta^2} \!+\! \frac{r_2(d)\!+\! r_4(d)}{\Delta}\!\!\bigg\},
\end{align*}
Now we move on to bound $\|\widetilde{\Vb} \Hb_1 \Hb_0 - \widehat{\Vb} \Hb_0\|_2$. By Theorem~\ref{thm: error bound}, we know that 
\begin{align*}
    \|\widetilde{\Vb} \Hb_1 \Hb_0 - \widehat{\Vb} \Hb_0\|_2 &  \le  \|\widetilde{\Vb} \Hb_1- \widehat{\Vb}\|_2  \lesssim \|\widetilde{\Vb}\widetilde{\Vb}^{\top} - \widehat{\Vb}\widehat{\Vb}^{\top}\|_{2} \\
    &\le \|\widetilde{\Vb}\widetilde{\Vb}^{\top} - {\Vb}^{\prime}{\Vb}^{\prime\top}\|_{2} + \|{\Vb}^{\prime}{\Vb}^{\prime\top} - \widehat{\Vb}\widehat{\Vb}^{\top}\|_{2} \\
    & \le \|\widetilde{\Vb}\widetilde{\Vb}^{\top} - {\Vb}^{\prime}{\Vb}^{\prime\top}\|_{\F} + \|{\Vb}^{\prime}{\Vb}^{\prime\top} - \widehat{\Vb}\widehat{\Vb}^{\top}\|_{2} \\
    & = O_P\Big( \frac{1}{\sqrt{d}}\frac{r_1(d)^2}{\Delta^2}+ \sqrt{\frac{Kd }{\Delta^2 pL}} r_1(d) \Big).
\end{align*}
Finally, we consider $\|\widetilde{\Vb}^{\F} \Hb - \widetilde{\Vb} \Hb_1 \Hb_0\|_2 $. From the proof of Theorem~\ref{thm: error bound}, we know that 
\begin{align*}
    &\|\widetilde{\Vb}^{\F} \Hb - \widetilde{\Vb} \Hb_1 \Hb_0\|_2 \le \|\widetilde{\Vb}^{\F} \Hb_2 - \widetilde{\Vb}\|_2  \lesssim \|\widetilde{\Vb}^{\F}\widetilde{\Vb}^{\text{F}\top} -\widetilde{\Vb}\widetilde{\Vb}^{\top} \|_2 \\
    &\quad  = O_P\big(\EE(\|\widetilde{\Vb}^{\F}\widetilde{\Vb}^{\text{F}\top} -\widetilde{\Vb}\widetilde{\Vb}^{\top} \|_2^2|\widetilde{\bSigma})^{1/2}\big) \lesssim \sqrt{\frac{d}{p'}} \frac{\| \widetilde{\mathbf{\Sigma}} - \Vb \Vb^{\top} \|_2^{q} }{ \left(1- \| \widetilde{\mathbf{\Sigma}} - \Vb \Vb^{\top} \|_2\right)^{q} }.
\end{align*}
From the proof of Theorem~\ref{thm: error bound}, we know that with probability converging to 1, there exists some constant $\eta > 0$ such that $ \|\widetilde{\mathbf{\Sigma}} - \Vb \Vb^{\top}\|_2 \le \eta r_1(d)\log d \sqrt{d/p} /\Delta = o(1)$, and thus that 
$$
\|\widetilde{\Vb}^{\F} \Hb_2 - \widetilde{\Vb}\|_2 = O_P\left( \sqrt{\frac{d}{p'}} \left(2\eta r_1(d) \log d \sqrt{\frac{d}{\Delta^2 p}}\right)^q \right).
$$
When we choose $q$ to be large enough, i.e., 
$$
q \ge 2 + \frac{\log (Ld)}{\log \log d} \gg 1 + \frac{\log \left(\log d\sqrt{Ld/(Kp')}\right)}{\log \left((2\eta \log d)^{-1}\Delta/r_1(d)\sqrt{p/d}\right)},
$$
we have $\|\widetilde{\Vb}^{\F} \Hb_2 - \widetilde{\Vb}\|_2 = O_P ( \sqrt{\frac{Kd }{\Delta^2 pL}} r_1(d) )$. Therefore, if we denote 
$$r(d)  := \Delta^{-1}\Big(\sqrt{\frac{Kd }{ pL}} r_1(d) + r_3(d)r_1(d) \!+\! \sqrt{\frac{\mu K}{d\Delta^2}}r_1(d)^2\!+\! r_2(d)\!+\! r_4(d)\Big),$$  
we can write
\begin{equation}\label{eq: 2 inf norm Lp gg d}
    \widetilde{\Vb}^{\F} \Hb -\Vb = \Pb_{\perp}\Eb_0 \Vb \mathbf{\Lambda}^{-1} + \Rb(d),
\end{equation}
where $\|\Rb(d)\|_{2,\infty} = O_P\big(r(d)\big)$. Then under the condition that $\eta_1(d)^{-1/2} r(d) = o(1)$, we have that $\|\Rb(d)\|_{2,\infty} = o_P\Big(\big(\sigma_{\min}\big(\bSigma_j)\big)^{1/2}\Big)$. Thus by Assumption \ref{asp: clt},
$$
\bSigma_j^{-1/2}(\widetilde{\Vb}^{\F} \Hb -\Vb)^{\top}\eb_j = \bSigma_j^{-1/2}(\mathbf{\Lambda}^{-1}\Vb^{\top}\Eb_0 \Pb_{\perp}\eb_j ) + o_P(1) \overset{d}{\rightarrow} {\cN}(\mathbf{0},\Ib_K).
$$
{
\subsection{Sparsity Assumption on the Eigenspace for Improved Estimation Rates When $Lp \gg d$}\label{sec: sparse incorp}

In this section, we discuss how to improve the estimation performance of FADI  by incorporating sparsity constraints on the leading eigenvectors $\Vb$. For any matrix $\Ub \in \RR^{d \times K}$, we define its row support as 
$$
\cS (\Ub) = \{j \in [d]: \|\Ub^{\top} \eb_j\|_2 > 0\},
$$
which is the set of indices corresponding to the non-zero rows of $\Ub$.  Notably, this definition is invariant to orthonormal transformations of $\Ub$, i.e., $\cS (\Ub) = \cS (\Ub \Ob)$ for any orthonormal matrix $\Ob \in \RR^{K \times K}$. To formalize the sparsity structure of $\Vb$, we impose the following assumption:

\begin{assumption}[Sparsity and Signal Strength Assumption]\label{asp: sparse}
    There exists $s^* \ge K$ such that $|\cS(\Vb)| \le s^*$. In addition, there exists $\eta_3(d) > 0$ such that $\min_{j \in \cS(\Vb)} \|\Vb^{\top} \eb_j\|_2 \ge \eta_3(d)$.
\end{assumption}
When Assumption~\ref{asp: sparse} holds for the leading eigenvectors $\Vb$, we propose the following truncated estimator, $\widetilde\Vb^{\rm Tr}$, by thresholding the row norms of $\widetilde\Vb^{\F}$: 
\begin{equation}\label{eq: thresholding V}
    \eb_j^{\top}\widetilde\Vb^{\rm Tr} = \left\{\begin{array}{ll}
       \eb_j^{\top}\widetilde\Vb^{\rm F}    ,& \text{if } \|\eb_j^{\top}\widetilde\Vb^{\rm F} \|_2 \ge \mu_1 \\
       \mathbf{0}^{\top} , & \text{if } 
 \|\eb_j^{\top}\widetilde\Vb^{\rm F} \|_2 < \mu_1 
    \end{array}\right., \quad j \in [d],
\end{equation}
where $\mu_1 < 1$ is a thresholding parameter. In the following theorem, we show that under proper scaling conditions, with the proper choice of $\mu_1$, we will have an improved statistical rate for $\widetilde\Vb^{\rm Tr}$ compared to $\widetilde\Vb^{\F}$.
\begin{theorem}
When $d \gg 3$ and $Lp \gg d$, suppose that Assumptions~\ref{asp: tail prob bound}, \ref{asp: stat rate biased error}, \ref{asp: eigenspace}, and~\ref{asp: sparse} hold, and that there exist statistical rates $r_5(d), r_6(d) > 0$ such that
$$
\lim_{d \rightarrow \infty} \PP(\|\Eb_0 \Vb\|_{2,\infty} \le r_5(d), \quad \|\Vb^{\top} \Eb_0 \Vb\|_2 \le r_6(d)) = 1.
$$
Define the rate
$\tilde{r}(d) = \Delta^{-1}\big(\sqrt{\frac{Kd }{ pL}} r_1(d) + r_3(d)r_1(d) \!+\! r_2(d)\!+\! r_4(d)\big)$. 
Under the conditions that $\Delta^{-1}r_1(d)(\log d)^2\sqrt{d/p} = o(1)$ and $\Delta^{-1} r_5(d) + \tilde{r}(d)= o(\eta_3(d))$, if we take 
    $$q \ge 2 + \log (Ld)/\log\log d, \quad p' \ge \max(2K, K+7) \quad\text{and}\quad p \ge \max(2K, K+8q-1),$$
    and $r(d) + \Delta^{-1} r_5(d) \ll \mu_1 \le \eta_3(d)/2$,  then with probability $1 - o(1)$, we have
\begin{equation}\label{eq: trunc rate}
    \| \widetilde{\Vb}^{\rm Tr} \Hb -\Vb\|_{\rm F} \lesssim  \sqrt{s^*}\left( \tilde{r}(d) +  \frac{r_5(d)}{\Delta}+ \frac{r_6(d)}{\Delta}\right).
\end{equation}
\end{theorem}
\begin{remark}
    The statistical rate on the right-hand side of \eqref{eq: trunc rate} is typically much smaller than that of \eqref{eq: error VF Vk} in Theorem~\ref{thm: error bound} when $s^*$ is sufficiently small. For instance, consider Example~\ref{ex: spiked gaussian}. Under the same scaling conditions as in Corollary~\ref{col: inf gaussian case L big} and assuming that $\eta_3(d) \gtrsim \sqrt{K/s^*}$, by taking $\mu_1 = c \sqrt{K/s^*}$ for some properly large $c >0$, a straightforward application of Lemma~3 in \citet{fandistributed2019} gives that $r_5(d), r_6(d) \lesssim (\lambda_1 + \sigma^2)\log d \sqrt{K/n}$ and $\tilde{r}(d) = o(r_5(d))$. Consequently, the rate in \eqref{eq: trunc rate} reduces to 
    \begin{equation}\label{eq: trunc rate ex spiked}
          \| \widetilde{\Vb}^{\rm Tr} \Hb -\Vb\|_{\rm F} \lesssim \kappa_1 \sqrt{\frac{K s^* }{n}} \log d.
    \end{equation}
   If the sparsity parameter $s^* \ll r = \mathrm{tr}(\bSigma) / |\bSigma|_2$, the rate in \eqref{eq: trunc rate ex spiked} offers a substantial improvement over that in \eqref{eq: err bd exm 1}.
\end{remark}
\begin{proof}
   We first study the rate $\|\widetilde{\Vb}^{\F} \Hb -\Vb\|_{2, \infty}$. Define
$$\tilde{r}(d)  := \Delta^{-1}\Big(\sqrt{\frac{Kd }{ pL}} r_1(d) + r_3(d)r_1(d) + r_2(d) +  r_4(d)\Big).$$
Then with minor modifications to the steps of showing  \eqref{eq: 2 inf norm Lp gg d} in Section~\ref{sec: proof thm leading term L big}, we have that 
$$
\|\eb_j^{\top}(\widetilde{\Vb}^{\F} \Hb -\Vb - \Pb_{\perp}\Eb_0 \Vb \mathbf{\Lambda}^{-1})\|_{2} = \left\{ \begin{array}{ll}
 O_P\left(\tilde{r}(d) + \|\Vb^{\top} \eb_j\|_2 \cdot r_1(d)^2/\Delta^2\right),    & \text{if } j \in \cS(\Vb)\\
 O_P\left(\tilde{r}(d)\right),     & \text{if } j \notin \cS(\Vb)
\end{array}\right. .
$$ 
Hence we will focus on the term $\Pb_{\perp}\Eb_0 \Vb \mathbf{\Lambda}^{-1}$. For $j \notin \cS(\Vb)$, we have that 
   \begin{align*}
       \|\eb_j^{\top} \Pb_{\perp}\Eb_0 \Vb \mathbf{\Lambda}^{-1}\|_2 & = \|\eb_j^{\top} \Eb_0 \Vb \mathbf{\Lambda}^{-1} - (\Vb^{\top}\eb_j)^{\top} \Vb^{\top}\Eb_0 \Vb \mathbf{\Lambda}^{-1}\|_2 = \|\eb_j^{\top} \Eb_0 \Vb \mathbf{\Lambda}^{-1} \|_2  \\
       & \le \Delta^{-1} \| \Eb_0 \Vb \|_{2, \infty}  = O_P(\Delta^{-1} r_5(d)),
   \end{align*}
   where the first equality follows from the fact that $\Vb^{\top}\eb_j = \mathbf{0}$ for $j \notin \cS(\Vb)$. Under the condition that $\Delta^{-1} r_6(d) = o(1)$ and $\Delta^{-1} r_5(d) = o(\eta_3(d))$, for $j \in \cS(\Vb)$, we have  
   \begin{align*}
       \|\eb_j^{\top} \Pb_{\perp}\Eb_0 \Vb \mathbf{\Lambda}^{-1}\|_2 & = \|\eb_j^{\top} \Eb_0 \Vb \mathbf{\Lambda}^{-1} - (\Vb^{\top}\eb_j)^{\top} \Vb^{\top}\Eb_0 \Vb \mathbf{\Lambda}^{-1}\|_2 \\
       & \le \Delta^{-1} \| \Eb_0 \Vb \|_{2, \infty} + \Delta^{-1} \|\Vb^{\top}\Eb_0 \Vb\|_2 \cdot \|\Vb^{\top}\eb_j\|_2 \\
       &= O_P(\Delta^{-1} r_5(d) + \Delta^{-1} r_6(d) \cdot \|\Vb^{\top}\eb_j\|_2 ) = o_P(\|\Vb^{\top}\eb_j\|_2 ).
   \end{align*}
   Hence for $j \in \cS(\Vb)$, if we further assume $\tilde{r}(d)= o(\eta_3(d))$ and $r_1(d)/\Delta = o(1)$, we have that
   \begin{align*}
      \|\eb_j^{\top} \widetilde{\Vb}^{\F}\|_2 &= \!\|(\widetilde{\Vb}^{\F} \Hb)^{\top} \!\eb_j\|_2 \!\ge \!\|\Vb^{\top} \eb_j\|_2   - \!\|\eb_j^{\top}\!(\widetilde{\Vb}^{\F} \Hb -\!\Vb - \Pb_{\perp}\Eb_0 \Vb \!\mathbf{\Lambda}^{-1})\|_{2}\! -   \|\eb_j^{\top} \!\Pb_{\perp}\Eb_0 \Vb \mathbf{\Lambda}^{-1}\|_2\\
       & \ge \|\Vb^{\top} \eb_j\|_2 - O_P(\tilde{r}(d)) - o_P(\|\Vb^{\top} \eb_j\|_2 ) \ge \eta_3(d) /2 \ge \mu_1,
   \end{align*}
   and for $j \notin \cS(\Vb)$, we have 
   \begin{align*}
        \|\eb_j^{\top} \widetilde{\Vb}^{\F}\|_2 & = \|(\widetilde{\Vb}^{\F} \Hb)^{\top} \eb_j\|_2 \le \|\eb_j^{\top}(\widetilde{\Vb}^{\F} \Hb -\Vb - \Pb_{\perp}\Eb_0 \Vb \mathbf{\Lambda}^{-1})\|_{2}+  \|\eb_j^{\top} \Pb_{\perp}\Eb_0 \Vb \mathbf{\Lambda}^{-1}\|_2\\
        & = O_P(\tilde{r}(d)) + O_P(\Delta^{-1} r_5(d)) \ll \mu_1.
   \end{align*}
   Hence when $\tilde{r}(d) + \Delta^{-1} r_5(d) \ll \mu_1 \le \eta_3(d)/2$, by the definition of $\widetilde\Vb^{\rm Tr}$ in \eqref{eq: thresholding V}, with high probability we have that $\cS(\widetilde\Vb^{\rm Tr}) = \cS(\Vb)$. In turn, with high probability we have
   \begin{align*}
       \| \widetilde{\Vb}^{\rm Tr} \Hb -\Vb\|_{\F} \le \sqrt{s^*} \max_{j \in \cS(\Vb)}\| \eb_j^{\top}(\widetilde{\Vb}^{\F} \Hb -\Vb)\|_{2} \lesssim  \sqrt{s^*}\left( \tilde{r}(d) +  \frac{r_5(d)}{\Delta}+ \frac{r_6(d)}{\Delta}\right).
   \end{align*}
\end{proof}
}
\subsection{Proof of Corollary~\ref{col: inf gaussian case L big}}\label{sec: proof col inf gaussian case L big}
We define $\Eb_0$ and $\Eb_b$ the same as in the proof of Corollary~\ref{col: guassian case}. Then Assumptions \ref{asp: tail prob bound} and \ref{asp: stat rate biased error} are satisfied as been proven for Corollary~\ref{col: guassian case}. As for Assumption \ref{asp: clt},  we have shown that under the condition that ${\kappa_1^3 (\lambda_1/\sigma^2)^3} = o(\sqrt{n})$, the results \eqref{eq: asp clt 1} holds for any matrix $\Ab \in \RR^{d \times  K}$ such that $\sigma_{\max}(\Ab)/\sigma_{\min}(\Ab) \le C |\lambda_1|/\Delta$ and $\lambda_K \big(\Cov(\Ab^{\top} \Eb_0 \Pb_{\perp} \eb_j) \big) \ge c n^{-1}\sigma^4 \big(\sigma_{\min}(\Ab)\big)^2$ in the proof of Corollary~\ref{col: guassian case}.  Under the regime $Lp \gg d$, the leading term $\cV (\Eb_0) = \Pb_{\perp} \Eb_0 \Vb \bLambda^{-1}$, and by taking $\Ab = \Vb \bLambda^{-1}$, it can be seen that 
$$\sigma_{\max}(\Vb \bLambda^{-1})/\sigma_{\min}(\Vb \bLambda^{-1}) = \sigma_{\max}(\bLambda) / \sigma_{\min}(\bLambda) \le |\lambda_1|/\Delta,$$ 
and if we can show that $\eta_1(d) \ge (2 n)^{-1}\lambda_1^{-2} \sigma^4$, we have $\lambda_K ( \bSigma_j) \ge \eta_1(d) = (2n)^{-1}\sigma^4 \big(\sigma_{\min}(\Vb \bLambda^{-1})\big)^2$ and Assumption \ref{asp: clt} is satisfied. Thus we only need to verify Assumption \ref{asp: eigenspace} and the conditions for $\eta_1(d)$. 
Recall from the proof of Corollary~\ref{col: guassian case} we have the following rates
$$
r_1(d) = (\lambda_1 + \sigma^2) \sqrt{\frac{r}{n}}, \quad  r_2(d) \asymp \frac{\tilde\sigma_1^3K}{\delta^2n}(\log d)^2, 
$$
and we can further derive that the following bounds hold with high probability 
\begin{align*}
    &\|\widehat{\Vb} \operatorname{sgn}(\widehat{\Vb}^{\top} \Vb) - \Vb\|_{2,\infty} \le\|\widehat{\Vb} \operatorname{sgn}(\widehat{\Vb}^{\top} \Vb) - \Vb\|_2 \lesssim \|\Eb_0\|_2/\Delta \lesssim r_1(d)\log d/\Delta ;\\
    &\|\Eb_0(\widehat{\Vb} (\widehat{\Vb}^{\top} \Vb) - \Vb)\|_{2,\infty} \lesssim \|\Eb_0\|_2\|\widehat{\Vb} \operatorname{sgn}(\widehat{\Vb}^{\top} \Vb) - \Vb\|_2 \lesssim r_1(d)^2(\log d)^2/\Delta.
\end{align*}
Thus we know $r_3(d) \asymp \kappa_1 \log d  \sqrt{{r}/{n}}$ and $r_4(d) \asymp r_1(d)^2(\log d)^2/\Delta = \kappa_1(\lambda_1+\sigma^2) (\log d)^2 r/n$. 

From the proof of Corollary~\ref{col: guassian case}, we know that $\bSigma_j = n^{-1}\bLambda^{-1} \Vb^{\top} \bSigma_j^0 \Vb\bLambda^{-1}$, where
$$\mathbf{\Sigma}_j^0 \!=\!  \Big\{\!\sigma^2 \|\Pb_{\perp} \eb_j\|_2^2\bSigma \!+ 3\sigma^4 \Pb_{\perp}\eb_j \eb_j^{\top}\Pb_{\perp} \!\!-\!2\sigma^4\rho\|\Pb_{\perp} \eb_j\|_2\!\big[(\Pb_{\perp})_{[:,S]}\ub_{K+1}\eb_j^{\top}\! \Pb_{\perp}\!\!+\!\Pb_{\perp}\eb_j(\ub_{K+1})^{\top}\!(\Pb_{\perp})_{[S,:]}\big]\!\!\Big\}.$$

Similar as in the proof of Corollary~\ref{col: guassian case}, we will first define $\bSigma_j'$ as following
$$
    \bSigma_j' = \frac{1}{n}\bLambda^{-1}\Vb^{\top}\Big\{\sigma^2 \bSigma+3\sigma^4\eb_j\eb_j^{\top}-2\sigma^4\rho\|\Pb_{\perp} \eb_j\|_2\big((\Ib_d)_{[:,S]}\ub_{K+1}\eb_j^{\top}+ \eb_j\ub_{K+1}^{\top}(\Ib_d)_{[S,:]}\big)\Big\}\Vb \bLambda^{-1}.
$$
Then following similar arguments as in the proof of Corollary~\ref{col: guassian case}, we have that 
$$\|\bSigma_j - {\bSigma}_j'\|_2 = O\Big( \frac{K \sigma^4}{n\Delta^2}\sqrt{\frac{\mu}{d}} \Big)= O\Big(\frac{K\lambda_1^2}{\Delta^2}\sqrt{\frac{\mu}{d}}\Big) \frac{\sigma^4}{n\lambda_1^2} = o\big(\frac{\sigma^4}{n\lambda_1^2}\big).$$
Besides, under the condition that $\mu^2\kappa_1^4 K^3 \ll d^2$ we have 
\begin{align*}
    &\|\bSigma_j' - \widetilde\bSigma_j\|_2 \lesssim \frac{\sqrt{K}\sigma^4}{n\Delta^2} \|\Vb\|_{2,\infty}^2  \lesssim \frac{\mu K\sqrt{K}\sigma^4}{dn\Delta^2} = O\left(\frac{\mu\kappa_1^2 K\sqrt{K}}{d}\right)\frac{\sigma^4}{n\lambda_1^2} = o\Big(\frac{\sigma^4}{n\lambda_1^2}\Big).
\end{align*}

{ Then we know that $\lambda_K\big(\bSigma_j\big) \ge \frac{\sigma^4}{2n\lambda_1^2} + \frac{\sigma^2}{2n\lambda_1}$ and we can take $\eta_1(d) = \frac{\sigma^4}{2n\lambda_1^2} + \frac{\sigma^2}{2n\lambda_1}$. Thus Assumption \ref{asp: clt} holds. }Then by plugging in the above rates, we can derive the rate $r(d)$ as 
\begin{align*}
    r(d)  &= \sqrt{\frac{Kd}{pL}} \frac{r_1(d)}{\Delta}+ r_3(d)r_1(d)/\Delta + \sqrt{\frac{\mu K}{d}}r_1(d)^2/\Delta^2 + \big(r_2(d)+ r_4(d)\big)/\Delta \\
    & \lesssim \kappa_1\sqrt{\frac{Kdr}{npL}} + \frac{\kappa_1^2 (\log d)^2 r}{n} + \frac{\tilde\sigma_1^3K}{\delta^2n \Delta}(\log d)^2.
\end{align*}
Then under the condition that $L \gg \frac{Kdr}{p}\kappa_1^2(\frac{\lambda_1}{\sigma^2})$, $n \gg \kappa_1^4(\log d)^4 r^2 (\frac{\lambda_1}{\sigma^2})$ and $K(\frac{\tilde\sigma_1}{\delta})^2 \ll \kappa_1 r$, we have $\eta_1(d)^{-1/2}r(d) = o(1)$,
and hence the condition for $\eta_1(d)$ is satisfied and \eqref{eq: general clt vk large L} holds. Also recall from the above proof that  $\|\widetilde\bSigma_j - \bSigma_j\|_2 = o\big(\lambda_K(\bSigma_j)\big)$, and \eqref{eq: col gaussian L large 2} holds. 

Now we verify the validity of $\hbSigma_j$. Similar as in the proof of Corollary~\ref{col: guassian case}, it suffices to show that $\|\widehat\bSigma_j - \Hb \widetilde\bSigma_j \Hb^{\top}\|_2 = o_P\big(\lambda_K(\widetilde\bSigma_j)\big)$, and the results will hold by Slutsky's Theorem. From proof of Corollary~\ref{col: guassian case}, we have 
$$
\|\hbSigma^{\text{tr}} - \Vb\bLambda\Vb^{\top}\|_2  = O_P\Big((\lambda_1 + \sigma^2)\sqrt{\frac{r}{n}}\Big).
$$
Also, we know that with high probability
\begin{align*}
\|\widetilde\Vb^{\F} - \Vb \Hb^{\top}\|_2 &= \|\widetilde\Vb^{\F}\Hb - \Vb\|_2 \lesssim  \sqrt{\frac{Kd }{\Delta^2 pL}} r_1(d)\log d + r_1(d)\log d/\Delta    \\
& \lesssim r_1(d)\log d/\Delta   \lesssim \kappa_1 \log d \sqrt{\frac{r}{n}} .
\end{align*}
Then we have 
\begin{align*}
   & \|\widetilde\bLambda - \Hb \bLambda \Hb^{\top}\|_2 \le \|\widetilde\Vb^{\F\top}(\hbSigma^{\text{tr}} - \Vb\bLambda\Vb^{\top})\widetilde\Vb^{\F}\|_2 + \|(\widetilde\Vb^{\F}-\Vb\Hb^{\top})^{\top}(\Vb\bLambda\Vb^{\top})\widetilde\Vb^{\F}\|_2 \\
    &\quad + \|\Hb\Vb^{\top}(\Vb\bLambda\Vb^{\top})(\widetilde\Vb^{\F}-\Vb\Hb^{\top})\|_2 =O_P\Big( \lambda_1\kappa_1 \log d \sqrt{\frac{r}{n}}\Big).
\end{align*}

Then if we denote $\Db_{\bLambda} = (\widetilde{\bLambda} - \Hb \bLambda \Hb^{\top})\Hb \bLambda^{-1} \Hb^{\top}$, we have that $\|\Db_{\bLambda}\|_2 = O_P(\kappa_1^2 \log d \sqrt{\frac{r}{n}}) = o_P(1)$, and thus we have 
 \begin{align*}
     \|\widetilde{\bLambda}^{-1} - \Hb \bLambda^{-1} \Hb^{\top}\|_2 & = \|(\Hb \bLambda \Hb^{\top} + \widetilde{\bLambda} - \Hb \bLambda \Hb^{\top})^{-1} - (\Hb \bLambda \Hb^{\top})^{-1}\|_2\\
     & = \Big\|\Hb \bLambda^{-1} \Hb^{\top}\big[ (\Ib_K + \Db_{\bLambda} )^{-1} - \Ib_K\big]\Big\|_2 \le \|\bLambda^{-1}\|_2 \big\|\sum_{i=1}^{\infty}(-\Db_{\bLambda})^i\big\|_2 \\
     & = O_P\left(\kappa_1^2 \log d \sqrt{\frac{r}{n}}\right)\Delta^{-1},
 \end{align*}
 and furthermore, we have
 \begin{align*}
     \|\widetilde{\bLambda}^{-2} - \Hb \bLambda^{-2} \Hb^{\top}\|_2 & \lesssim \|\bLambda^{-1}\|_2\|\widetilde{\bLambda}^{-1} - \Hb \bLambda^{-1} \Hb^{\top}\|_2= O_P\left(\kappa_1^2 \log d \sqrt{\frac{r}{n}}\right)\Delta^{-2}.
 \end{align*}
Then following basic algebra, under the condition that $n \gg \kappa_1^4(\log d)^4r^2(\lambda_1/\sigma^2)^2$ we have 
\begin{align*}
    &\|\Hb\widetilde\bSigma_j\Hb^{\top} - \hbSigma_j\|_2 = \frac{1}{n} \| \Hb(\sigma^2 \bLambda^{-1} + \sigma^4 \bLambda^{-2})\Hb^{\top} - (\hat\sigma^2 \widetilde\bLambda^{-1} + \hat\sigma^4 \widetilde\bLambda^{-2}) \|_2 \\
    &\le \frac{1}{n} \left(\|\sigma^2 \Hb \bLambda^{-1} \Hb^{\top}- \hat\sigma^2 \widetilde\bLambda^{-1}\|_2  + \|\sigma^4 \Hb \bLambda^{-2} \Hb^{\top}- \hat\sigma^4 \widetilde\bLambda^{-2}\|_2\right)\\
    & =  O_P\left(\kappa_1^2 \log d \frac{\sigma^2}{n\Delta} \sqrt{\frac{r}{n}}\right) + O_P\left(\frac{\tilde\sigma_1}{n\Delta}\sqrt{\frac{K}{n}}\right) +O_P\left(\kappa_1^2 \log d \frac{\sigma^4}{n\Delta^2}\sqrt{\frac{r}{n}}\right) + O_P\left(\frac{\tilde{\sigma}_1\sigma^2}{n\Delta^2}\sqrt{\frac{K}{n}}\right)\\
    & = O_P\Big(\kappa_1^2\log d \big(\frac{\Delta}{\sigma^2}\big)\sqrt{\frac{r}{n}}\Big)\frac{\sigma^4}{n\Delta^2} = O_P\Big(\kappa_1^2\log d \big(\frac{\lambda_1}{\sigma^2}\big)\big(\frac{\lambda_1}{\Delta}\big)\sqrt{\frac{r}{n}}\Big)\frac{\sigma^4}{n\lambda_1^2} \\
    & = O_P\Big(\kappa_1^3 \log d \big(\frac{\lambda_1}{\sigma^2}\big) \sqrt{\frac{r}{n}} \Big)\frac{\sigma^4}{n\lambda_1^2} = o_P\big(\lambda_K(\widetilde\bSigma_j)\big).
\end{align*}
Therefore, by Slutsky's Theorem, the claim follows.

\subsection{Proof of Corollary~\ref{col: GMM L big}}\label{sec: proof col gmm L big}
From the proof of Corollary~\ref{col: GMM L small}, we have verified Assumptions \ref{asp: tail prob bound}-\ref{asp: incoh}.
It can be checked that $\Vb \bLambda^{-1}$ satisfies the two conditions for the general CLT results in the proof of Corollary~\ref{col: GMM L small}, then under the condition that $\Delta_0^2 \ll n^{4/3}/(\mu_{\theta}^2d)$, Assumption \ref{asp: clt} is also satisfied.

Now we move on to check the conditions for $\eta_1(d)$. Recall from the proof of Corollary~\ref{col: GMM L small}, we have 
$$
\Cov(\Eb_0 \Pb_{\perp} \eb_j) = \|\Pb_{\perp}\eb_j\|_2^2(\Fb \bTheta^{\top} \bTheta\Fb^{\top} + n\Ib_d)+\!  n\Pb_{\perp}\eb_j \eb_j^{\top} \Pb_{\perp}.
$$
Then we have 
\begin{align*}
    \|\widetilde\bSigma_j - \bSigma_j\|_2 \lesssim \frac{K}{d\Delta^2} (n+\Delta) \lesssim O\left(\frac{K}{dn}(n+\Delta)\right)\frac{n}{\lambda_1^2} = o\left(\frac{n}{\lambda_1^2}\right).
\end{align*}
Besides, it can be seen that $\lambda_K(\widetilde\bSigma_j) \ge n/\lambda_1^2 + 1/\lambda_1$, and hence we can take $\eta_1(d) = \lambda_1^{-2}n/2 + \lambda_1^{-1}/2$. Next we move on to verify the statistical rates $r_3(d)$ and $r_4(d)$.  By Davis-Kahan’s Theorem \citep{yu2015davis}, we have that with high probability
\begin{align*}
    \|\widehat\Vb \operatorname{sgn}(\widehat\Vb^{\top}\Vb) - \Vb\|_{2,\infty} \le  \|\widehat\Vb \operatorname{sgn}(\widehat\Vb^{\top}\Vb) - \Vb\|_2 \lesssim \|\Eb\|_2/\Delta \lesssim r_1'(d)/\Delta,
\end{align*}
where $r_1'(d) = d\Delta_0/\sqrt{K} + \sqrt{dn}\log d$ as defined in the proof of Corollary~\ref{col: GMM L small}, and thus we know that $r_3(d) \asymp r_1'(d)/\Delta $. Besides, with high probability we have 
\begin{align*}
    \|\Eb_0(\widehat\Vb (\widehat\Vb^{\top}\Vb) - \Vb) \|_{2,\infty} \le  \|\Eb_0(\widehat\Vb (\widehat\Vb^{\top}\Vb) - \Vb) \|_2 \lesssim r_1'(d)^2/\Delta,
\end{align*}
and we have $r_4(d) \asymp r_1'(d)^2/\Delta$. Thus Assumption \ref{asp: eigenspace} is satisfied. Then we have
\begin{align*}
    r(d)  &= \sqrt{\frac{Kd}{pL}}\frac{r_1(d)}{\Delta}+ r_3(d)r_1(d)/\Delta + \sqrt{\frac{ K}{d}}r_1(d)^2/\Delta^2 + \big(r_2(d)+ r_4(d)\big)/\Delta \\
    & \lesssim \sqrt{\frac{Kd}{pL}}\frac{r_1(d)}{\Delta} + r_1'(d)^2/\Delta^2 \lesssim \frac{K}{\Delta_0^2} + \frac{K^2 n (\log d)^2}{d \Delta_0^4} + \sqrt{\frac{Kd}{pL}}\Big(\frac{\sqrt{K}}{\Delta_0} + \frac{K\sqrt{n}}{\sqrt{d}\Delta_0^2}\Big).
\end{align*}
Therefore, under the conditions that $\Delta_0^2 \gg K\sqrt{n}(\log d)^2$, $n \gg d^2$ and $L \gg Kd^2/p$, we have  $\eta_1(d)^{-1/2}r(d) = o(1)$.
Thus by Theorem~\ref{thm: leading term L big}, \eqref{eq: general clt vk large L} holds. As for \eqref{eq: col GMM L big 2}, from the above arguments we have $\|\widetilde\bSigma_j - \bSigma_j\|_2 = o\big(\lambda_K(\bSigma_j)\big)$,
and hence \eqref{eq: col GMM L big 2} holds. 

Now we need to check the validity of $\hbSigma_j$. Similar as before, it suffices for us to prove that $\|\hbSigma_j - \Hb\widetilde\bSigma_j \Hb^{\top}\|_2 = o_P\big(\lambda_K(\widetilde\bSigma_j) \big)$. From Corollary~\ref{col: GMM L big}, we have that $\|\widehat\Mb - \Mb\|_2 = O_P(d\Delta_0/\sqrt{K} + \sqrt{dn})$ and
$ \|\widetilde\Vb^{\F}\Hb - \Vb\|_2 = \|\widetilde\Vb^{\F} - \Vb\Hb^{\top}\|_2  = O_P\big(K\sqrt{\frac{n}{d}}/\Delta_0^2\big) $. Then we have

\begin{align*}
   & \|\widetilde\bLambda - \Hb \bLambda \Hb^{\top}\|_2 \le \|\widetilde\Vb^{\F\top}(\widehat\Mb - \Mb)\widetilde\Vb^{\F}\|_2 + \|(\widetilde\Vb^{\F}-\Vb\Hb^{\top})^{\top}\Mb\widetilde\Vb^{\F}\|_2 \\
   &\quad + \|\Hb\Vb^{\top}\Mb(\widetilde\Vb^{\F}-\Vb\Hb^{\top})\|_2 =O_P\Big(d\Delta_0/\sqrt{K} + \sqrt{dn} \Big).
\end{align*}

Then if we denote $\Db_{\bLambda} = (\widetilde{\bLambda} - \Hb \bLambda \Hb^{\top})\Hb \bLambda^{-1} \Hb^{\top}$, we have that $$\|\Db_{\bLambda}\|_2 = O_P\left(K\sqrt{\frac{n}{d}}\Delta_0^{-2}\right) = o_P(1),$$ and thus we have 
$$
     \|\widetilde{\bLambda}^{-1} - \Hb \bLambda^{-1} \Hb^{\top}\|_2  \lesssim \|\bLambda^{-1}\|_2 \|\Db_{\bLambda}\|_2  = O_P\left(K\sqrt{\frac{n}{d}}\Delta_0^{-2}\right) \Delta^{-1} =o_P(n/\lambda_1^2) = o_P\big(\lambda_K(\widetilde\bSigma_j) \big),
     $$
 and furthermore, we have
 \begin{align*}
     n\|\widetilde{\bLambda}^{-2} \!- \!\Hb \bLambda^{-2} \Hb^{\top}\|_2 &\! \lesssim \!n\|\bLambda^{-1}\|_2\!\|\widetilde{\bLambda}^{-1}\! -\! \Hb \bLambda^{-1} \Hb^{\top}\!\|_2\!=\! O_P\big(K\sqrt{\frac{n}{d}}/\Delta_0^2\big)n\Delta^{-2} \!=\! o_P\big(\lambda_K(\widetilde\bSigma_j) \big).
 \end{align*}
Combining the above results, we have $\|\hbSigma_j - \Hb\widetilde\bSigma_j \Hb^{\top}\|_2 = o_P\big(\lambda_K(\widetilde\bSigma_j) \big)$, and hence \eqref{eq: col GMM L big 2} holds with $\widetilde\bSigma_j$ replaced by $\hbSigma_j$.

\subsection{Proof of Corollary~\ref{col: sbm}}\label{sec: proof col sbm}
The proof for the case where no self-loops are present is almost identical to the case where there are self-loops except for some modifications. We will first prove the results for the case when self-loops are present, then in the end we will discuss how to modify the proof for the case where self-loops are absent. 

We only need to verify that Assumptions \ref{asp: tail prob bound} to \ref{asp: clt} hold. Recall from the proof of Corollary~\ref{prop: err rate terms add exms} that we have 
$ \|\|\Eb\|_2\|_{\psi_1}  \lesssim r_1(d) = \sqrt{d\theta}$,
and thus we know that Assumption \ref{asp: tail prob bound} is satisfied. Also Assumption \ref{asp: stat rate biased error} holds trivially due to the unbiasedness of $\Eb$. We will then verify Assumption \ref{asp: incoh} holds under the model.
We know that $\mathbf{\Theta}\mathbf\Pi$ and $\Vb$ share the same column space, and thus there exists a non-singular matrix $\Cb \in \RR^{K\times K}$ such that $\mathbf{\Theta}\mathbf\Pi = \Vb\Cb$ and $\Vb = \mathbf{\Theta}\mathbf\Pi\Cb^{-1}$. Then we can see that $\sigma_{\min}(\Cb) = \sigma_{\min}(\mathbf{\Theta}\mathbf\Pi) \gtrsim \sqrt{d\theta/K} $, and $\|\Cb^{-1}\|_2 \lesssim \sqrt{K/d\theta}$. Hence we have $\|\Vb\|_{2,\infty} \le \|\mathbf{\Theta}\mathbf\Pi\|_{2,\infty}\|\Cb^{-1}\|_2 \lesssim \sqrt{\theta}\sqrt{K/d\theta} = \sqrt{K/d} $. Thus we can see that Assumption \ref{asp: incoh} is satisfied with $\mu = O(1)$. 

Now we move on to verify Assumption \ref{asp: eigenspace}. Recall from the proof of Corollary~\ref{prop: err rate terms add exms} that $\Delta \gtrsim d\theta/K$, $\|\Mb\|_2 \lesssim Kd\theta$, $\Mb_{ij} \asymp \theta$ and $\max_{ij}\EE (\Eb_{ij}^2) \lesssim \theta$.
By Theorem 4.2.1 in \citet{chen2020spectral}, we have that with probability $1-O(d^{-5})$,
$$
\|\widehat{\Vb} \operatorname{sgn}(\widehat{\Vb}^{\top} \Vb) - \Vb\|_{2,\infty} \lesssim \frac{K^3\sqrt{ K} + K\sqrt{K\log d}}{d\sqrt{\theta}},\quad r_3(d) \asymp \frac{K^3\sqrt{ K} + K\sqrt{K\log d}}{d\sqrt{\theta}}, 
$$
and by the proof of Theorem 4.2.1 in \citet{chen2020spectral}, we further have that with probability $1-O(d^{-7})$,
\begin{align*}
    &\|\Eb(\widehat{\Vb} (\widehat{\Vb}^{\top} \Vb) \! -\! \Vb)\|_{2,\infty} \lesssim\! \frac{K\!\sqrt{K\theta\log d}}{d\theta}\|\Eb\|_2 \!+ \!  r_3(d)(\log d \!\!+\!\!\sqrt{d\theta})\\
    &\lesssim  r_3(d)(\log d + \sqrt{d\theta}) + K\sqrt{K\log d/d}\\
    &\lesssim  \frac{K^3\sqrt{ K} + K\sqrt{K\log d}}{\sqrt{d}}, r_4(d) \asymp   \frac{K^3\sqrt{ K} + K\sqrt{K\log d}}{\sqrt{d}}.
\end{align*}
Thus Assumption \ref{asp: eigenspace} is met and now we move on to study the order of $\eta_1(d)$. Before we continue with the proof, we state the following elementary lemma that helps study the operator norm of a covariance matrix.
\begin{lemma}\label{lm: cov op norm up bd}
$\xb_1, \xb_2 \in \RR^d$ are two random vectors, then we have
$$
\|\Cov(\xb_1,\xb_2)\|_2 = \|\Cov(\xb_2,\xb_1)\|_2  \le \sqrt{\|\Cov(\xb_1)\|_2\|\Cov(\xb_2)\|_2},
$$
and
$$
\|\Cov(\xb_1 + \xb_2)\|_2 \le 2\|\Cov(\xb_1)\|_2 + 2\|\Cov(\xb_2)\|_2.
$$
\end{lemma}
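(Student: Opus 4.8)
\textbf{Proof plan for Lemma~\ref{lm: cov op norm up bd}.}

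The plan is to prove the two inequalities by reducing the operator norm to a quadratic form and then applying Cauchy--Schwarz. For the first inequality, I would fix unit vectors $\ub, \vb \in \RR^d$ and write $\ub^\top \Cov(\xb_1,\xb_2)\vb = \Cov(\ub^\top \xb_1, \vb^\top \xb_2)$, using bilinearity of the covariance operator. Then the scalar Cauchy--Schwarz inequality for covariances gives $|\Cov(\ub^\top \xb_1, \vb^\top \xb_2)| \le \sqrt{\Var(\ub^\top \xb_1)}\sqrt{\Var(\vb^\top \xb_2)} = \sqrt{\ub^\top \Cov(\xb_1)\ub}\sqrt{\vb^\top \Cov(\xb_2)\vb}$. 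Bounding each factor by the corresponding operator norm, $\ub^\top \Cov(\xb_1)\ub \le \|\Cov(\xb_1)\|_2$ and $\vb^\top \Cov(\xb_2)\vb \le \|\Cov(\xb_2)\|_2$, and then taking the supremum over $\ub$ and $\vb$ yields $\|\Cov(\xb_1,\xb_2)\|_2 \le \sqrt{\|\Cov(\xb_1)\|_2\|\Cov(\xb_2)\|_2}$. The symmetry $\|\Cov(\xb_1,\xb_2)\|_2 = \|\Cov(\xb_2,\xb_1)\|_2$ is immediate since $\Cov(\xb_2,\xb_1) = \Cov(\xb_1,\xb_2)^\top$ and the spectral norm is invariant under transposition.

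For the second inequality, I would expand $\Cov(\xb_1+\xb_2) = \Cov(\xb_1) + \Cov(\xb_2) + \Cov(\xb_1,\xb_2) + \Cov(\xb_2,\xb_1)$ by bilinearity, then apply the triangle inequality for the operator norm together with the first part: $\|\Cov(\xb_1+\xb_2)\|_2 \le \|\Cov(\xb_1)\|_2 + \|\Cov(\xb_2)\|_2 + 2\sqrt{\|\Cov(\xb_1)\|_2\|\Cov(\xb_2)\|_2}$. Finally, the elementary bound $2\sqrt{ab} \le a + b$ for $a,b \ge 0$ gives $\|\Cov(\xb_1+\xb_2)\|_2 \le 2\|\Cov(\xb_1)\|_2 + 2\|\Cov(\xb_2)\|_2$, as claimed.

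This lemma is entirely elementary, so there is no real obstacle; the only point requiring mild care is justifying the reduction from the matrix operator norm to scalar quadratic forms via $\|\Mb\|_2 = \sup_{\|\ub\|_2 = \|\vb\|_2 = 1} \ub^\top \Mb \vb$, which is standard. One should also note implicitly that all the relevant second moments are finite in the intended applications (e.g.\ $\xb_1 = \Eb_0 \Pb_\perp \eb_j$ restricted to a high-probability event, or sub-Gaussian/sub-exponential vectors), so the covariance operators are well-defined; this is guaranteed in every invocation within the paper.
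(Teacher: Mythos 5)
Your proof is correct and follows essentially the same route as the paper's: reduce the spectral norm of the cross-covariance to a scalar bilinear form via $\|\Mb\|_2 = \sup_{\|\ub\|_2=\|\vb\|_2=1}\ub^\top\Mb\vb$, apply scalar Cauchy--Schwarz for covariances, then expand $\Cov(\xb_1+\xb_2)$ by bilinearity and combine with the triangle inequality and AM--GM. The only cosmetic difference is where the bound $\sqrt{ab}\le\tfrac{1}{2}(a+b)$ is invoked (the paper folds it into the first display, you apply it at the end of the second part), which does not affect the argument.
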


The proof of Lemma \ref{lm: cov op norm up bd} can be found in Supplementary Materials~\ref{sec: proof lm cov op norm}. With the help of Lemma \ref{lm: cov op norm up bd}, we first decompose $\Eb = \Eb_1 + \Eb_2$, where $\Eb_1 = [\Eb_{ij} \mathbb{I}\{i\le j\}]$ is composed of the diagonal and upper triangular entries of $\Eb$ and $\Eb_2 = [\Eb_{ij} \mathbb{I}\{i > j\}]$ is composed of the off-diagonal lower triangular entries of $\Eb$. Then it can be seen that both $\Eb_1$ and $\Eb_2$ have independent entries. Now for $j \in [d]$, we can write 
$$\Eb \Pb_{\perp} \eb_j = \Eb \eb_j - \Eb\Vb \Vb^{\top} \eb_j = \Eb \eb_j - (\Eb_1\Vb \Vb^{\top} \eb_j + \Eb_2 \Vb \Vb^{\top} \eb_j).$$
Then we study the covariance of the three terms separately. We have
\begin{align*}
    &\Cov(\Eb \eb_j)  = \Cov(\Eb_{.j}) = \diag\big(\Mb_{1j}(1-\Mb_{1j}), \ldots, \Mb_{dj}(1-\Mb_{dj})\big);\\
    &\Cov(\Eb_1 \Vb \Vb^{\top} \eb_j)  = \diag\Big(\big[\sum_{k=1}^d \Mb_{ik}(1-\Mb_{ik})(\Pb_{\Vb}\eb_j)_k^2 \mathbb{I}\{i\le k\}\big]_{i=1}^d\Big);\\
    &\Cov(\Eb_2 \Vb \Vb^{\top} \eb_j)  = \diag\Big(\big[\sum_{k=1}^d \Mb_{ik}(1-\Mb_{ik})(\Pb_{\Vb}\eb_j)_k^2 \mathbb{I}\{i > k\}\big]_{i=1}^d\Big).
\end{align*}
Then we have $ \theta \lesssim \lambda_d \big(\Cov(\Eb \eb_j)\big) \le \|\Cov(\Eb \eb_j) \|_2  \le \max_{ij} \EE(\Eb_{ij}^2) \lesssim \theta$ and
\begin{align*}
    &\|\Cov(\Eb_1 \Vb \Vb^{\top} \eb_j)\|_2 \le \max_{i \in [d]} \sum_{k=1}^d \Mb_{ik}(1-\Mb_{ik})(\Pb_{\Vb}\eb_j)_k^2 \mathbb{I}\{i\le k\} \\
    &\le \max_{ik}\EE(\Eb_{ik})^2 \sum_{k=1}^d (\Pb_{\Vb}\eb_j)_k^2 \lesssim \theta \|\Pb_{\Vb}\eb_j\|_2^2 \le \theta \|\Vb\|_{2,\infty}^2 \le \frac{\theta  K}{d},
\end{align*}
and very similarly we also have $\|\Cov(\Eb_2 \Vb \Vb^{\top} \eb_j)\|_2 \lesssim \theta  K/d$. Thus by Lemma \ref{lm: cov op norm up bd}, we know that $\|\Cov(\Eb_1\Vb \Vb^{\top} \eb_j + \Eb_2 \Vb \Vb^{\top} \eb_j)\|_2 \lesssim \theta  K/d$ and 
$$
\|\Cov(\Eb_1\Vb \Vb^{\top} \eb_j + \Eb_2 \Vb \Vb^{\top} \eb_j, \Eb \eb_j)\|_2 \lesssim \sqrt{\theta^2  K/d} = \theta \sqrt{ K/d}.
$$
Therefore, we can write
\begin{align*}
    & \|\Cov(\Eb \Pb_{\perp}\eb_j) - \Cov(\Eb \eb_j)\|_2  \le 2\|\Cov(\Eb_1\Vb \Vb^{\top} \eb_j + \Eb_2 \Vb \Vb^{\top} \eb_j, \Eb \eb_j)\|_2\\
    & \quad +\|\Cov(\Eb_1\Vb \Vb^{\top} \eb_j + \Eb_2 \Vb \Vb^{\top} \eb_j)\|_2 \lesssim \theta \sqrt{ K/d}.
\end{align*}
Thus we have $\lambda_d \big( \Cov(\Eb \Pb_{\perp}\eb_j)\big) \ge \lambda_d \big(\Cov(\Eb \eb_j)\big) - \|\Cov(\Eb \Pb_{\perp}\eb_j) - \Cov(\Eb \eb_j)\|_2 \gtrsim \theta$, and we have $\eta_1(d) \asymp \lambda_1^{-2}\theta$. Therefore, { when $\theta = K^2 d^{-1/2+\epsilon}$ for some constant $\epsilon > 0$, $p = \Omega(\sqrt{d})$ and $L \gg K^5 d^2/p$, $K= o(d^{1/18})$}, 
we have that 
\begin{align*}
r(d) & = \Delta^{-1}\Big(\sqrt{\frac{Kd }{ pL}} r_1(d) + r_3(d)r_1(d) + \sqrt{\frac{\mu K}{d\Delta^2}}r_1(d)^2+ r_2(d)+ r_4(d)\Big)\\
&\lesssim  \frac{K^4\sqrt{ K} + K^2\sqrt{K\log d}}{d^{3/2}\theta} + K\sqrt{\frac{K}{\theta p L}} \ll \frac{1}{Kd\sqrt{\theta}} \lesssim \eta_1(d)^{1/2}.
\end{align*}
Thus $\eta_1(d)^{-1/2}r(d) = o(1)$ and the condition for the asymptotic covariance matrix is satisfied. Now we need to verify Assumption \ref{asp: clt}, and similar as in the proof of Corollary~\ref{col: guassian case}, 
we can verify the following more general result. 

Given $j \in [d]$, for any matrix $\Ab \in \RR^{d \times K}$ that satisfies the following two conditions: (1)$\|\Ab\|_{2,\infty}/\sigma_{\min} (\Ab)\le C \sqrt{\lambda_1^2 \mu K/(d \Delta^2 )} $; (2) $\lambda_K \big(\bSigma_j \big) \ge c \theta\big(\sigma_{\min}(\Ab)\big)^2$, where $\bSigma_j := \Cov(\Ab^{\top} \Eb_0 \Pb_{\perp} \eb_j )$ and $C, c > 0$ are fixed constants independent of $\Ab$, it holds that 
\begin{equation}\label{eq: asp clt2}
\mathbf{\Sigma}_j^{-1/2}\Ab^{\top} \Eb_0 \Pb_{\perp} \eb_j \overset{d}{\rightarrow} {\cN}(\mathbf{0}, \Ib_K).
\end{equation}
It can be checked from the previous proof that $\Ab = \Vb \bLambda^{-1} $ satisfies the two conditions. To show \eqref{eq: asp clt2}, we need to show that $\ab^{\top} \bSigma_j^{-1/2}\Ab^{\top} \Eb \Pb_{\perp}\eb_j \overset{d}{\rightarrow} {\cN}(0,1)$ for any $\ab \in \RR^K, \|\ab\|_2=1$. We will first study the entries of $\Pb_{\perp}\eb_j$ and $\Ab \bSigma_j^{-1/2}\ab$. It holds that 
\begin{align*}
    &|(\Pb_{\perp}\eb_j)_j|  = |\big((\Ib_d - \Vb \Vb^{\top})\eb_j\big)_j| \le 1 + \|\Vb\|_{2,\infty}^2 = 1+o(1);\\
    &\max_{i \neq j} |(\Pb_{\perp}\eb_j)_i|  = \max_{i \neq j}|\eb_i^{\top}\eb_j - \eb_i^{\top}\Vb \Vb^{\top}\eb_j| \le 0+ \|\Vb\|_{2,\infty}^2 = \frac{ K}{d};\\
    &\|\Ab \bSigma_j^{-1/2}\ab\|_{\infty} \le \|\Ab\|_{2,\infty} \|\bSigma_j^{-1/2}\|_2 \lesssim \theta^{-1/2} \|\Ab\|_{2,\infty}/\sigma_{\min}(\Ab) \lesssim K^2\sqrt{\frac{ K}{d\theta}}.
\end{align*}
 Then we know that 
 \begin{align*}
     &\ab^{\top} \bSigma_j^{-1/2}\Ab^{\top} \Eb \Pb_{\perp}\eb_j = \sum_{ik} \Eb_{ik} (\Ab \bSigma_j^{-1/2}\ab)_i (\Pb_{\perp}\eb_j)_k = \sum_{i=1}^d \Eb_{ii} (\Ab \bSigma_j^{-1/2}\ab)_i (\Pb_{\perp}\eb_j)_i\\
     & \quad + \sum_{i < k} \Eb_{ik} \big[ (\Ab \bSigma_j^{-1/2}\ab)_i (\Pb_{\perp}\eb_j)_k + (\Ab \bSigma_j^{-1/2}\ab)_k (\Pb_{\perp}\eb_j)_i \big].
 \end{align*}
Then for the diagonal entries we have
 \begin{align*}
     & \sum_{i=1}^d \EE |\Eb_{ii} (\Ab \bSigma_j^{-1/2}\ab)_i (\Pb_{\perp}\eb_j)_i|^3 \\
     &= \EE |\Eb_{jj} (\Ab \bSigma_j^{-1/2}\ab)_j (\Pb_{\perp}\eb_j)_j|^3 + \sum_{i \neq j} \EE |\Eb_{ii} (\Ab \bSigma_j^{-1/2}\ab)_i (\Pb_{\perp}\eb_j)_i|^3\\
     & \quad \lesssim \theta \|\Ab \bSigma_j^{-1/2}\ab\|_{\infty}^3 + d\theta \|\Ab \bSigma_j^{-1/2}\ab\|_{\infty}^3 \max_{i \neq j} |(\Pb_{\perp}\eb_j)_i|^3 \lesssim \frac{K^6}{d}\sqrt{\frac{K^3}{d \theta}},
 \end{align*}
 and for the off-diagonal entries, when $K = o(d^{1/26})$ it holds that
 \begin{align*}
      & \sum_{i < k} \EE\Big|\Eb_{ik} \big[ (\Ab \bSigma_j^{-1/2}\ab)_i (\Pb_{\perp}\eb_j)_k + (\Ab \bSigma_j^{-1/2}\ab)_k (\Pb_{\perp}\eb_j)_i \big]\Big|^3  \lesssim d \theta \|\Ab \bSigma_j^{-1/2}\ab\|_{\infty}^3\\
      &\quad + d^2 \theta \|\Ab \bSigma_j^{-1/2}\ab\|_{\infty}^3 \big(\frac{ K}{d}\big)^3 \lesssim K^6 \sqrt{\frac{K^3}{d \theta}} = o(1).
 \end{align*}
Moreover, since $\operatorname{Var}(\ab^{\top} \bSigma_j^{-1/2}\Ab^{\top} \Eb \Pb_{\perp}\eb_j ) = 1$, by the Lyapunov's condition and plugging in $\Ab = \Vb \bLambda^{-1}$, Assumption \ref{asp: clt} is met and \eqref{eq: general clt vk large L} follows. 
 
 Now we only need to verify that the result also holds when replacing $\bSigma_j$ by $\widetilde{\bSigma}_j$. From previous discussion we learnt that 
 \begin{align*}
     &\|\widetilde{\bSigma}_j - \bSigma_j\|_2 \le \|\Vb\mathbf{\Lambda}^{-1}\|_2^2 \|\Cov(\Eb \Pb_{\perp}\eb_j) - \Cov(\Eb \eb_j)\|_2 \\
     &\le \frac{K^2}{d^2\theta} \sqrt{\frac{ K}{d}} \lesssim K^4 \sqrt{\frac{ K}{d}} \lambda_K(\widetilde{\bSigma}_j) = o(\lambda_K(\widetilde{\bSigma}_j)).
 \end{align*}
 Then by Slutsky's Theorem, \eqref{eq: col SBM 2} holds.

 Now we verify the validity of $\hbSigma_j$. Similar as in the proof of Corollary~\ref{col: inf gaussian case L big}, $\Hb$ is orthonormal with probability $1-o(1)$, and we will start by showing that $\|\hbSigma_j - \Hb\widetilde\bSigma_j\Hb^{\top}\|_2 = o_P\big(\lambda_K(\widetilde\bSigma_j)\big)$. From previous discussion we have the following bounds
 $$
 \|\widehat\Mb - \Mb\|_2 = O_P(\sqrt{d\theta}),\quad \|\widetilde\Vb^{\F} - \Vb \Hb^{\top}\|_2 =  \|\widetilde\Vb^{\F}\Hb - \Vb \|_2 = O_P(\frac{K}{\sqrt{d\theta}}), 
 $$
 and 
 \begin{align*}
      \|\widetilde\Vb^{\F}\Hb - \Vb \|_{2,\infty} &\le \|\widetilde\Vb^{\F}\Hb - \widehat\Vb \Hb_0\|_2 + \|\widehat\Vb \Hb_0 - \Vb\|_{2,\infty} = o_P(\frac{1}{Kd\sqrt{\theta}})\\
      &\quad + O_P(\frac{K^3\sqrt{ K}+ K\sqrt{K\log d}}{d\sqrt{\theta}}) = O_P(\frac{K^3\sqrt{ K} + K\sqrt{K\log d}}{d\sqrt{\theta}}).
 \end{align*}
With the help of the above results, we will study the components of $\hbSigma_j - \Hb\widetilde\bSigma_j\Hb^{\top}$ separately. In the following proof, we will base the discussion on the event that $\Hb$ is orthonormal. We first study $\widetilde\Mb = (\widetilde\Vb^{\F}\widetilde\Vb^{\F\top}) \widehat\Mb (\widetilde\Vb^{\F}\widetilde\Vb^{\F\top}) = \widetilde\Vb^{\F}\Hb (\Hb^{\top}\widetilde\Vb^{\F\top} \widehat\Mb \widetilde\Vb^{\F} \Hb)\Hb^{\top}\widetilde\Vb^{\F\top}$. We have that 
\begin{align*}
    & \|\Hb^{\top}\widetilde\Vb^{\F\top} \widehat\Mb \widetilde\Vb^{\F}\Hb- \bLambda\|_2  \le \|\Hb^{\top}\widetilde\Vb^{\F\top} \widehat\Mb \widetilde\Vb^{\F}\Hb - \Hb^{\top}\widetilde\Vb^{\F\top} \Mb \widetilde\Vb^{\F}\Hb\|_2 \\
    &\quad + \|\Hb^{\top}\widetilde\Vb^{\F\top} \Mb (\widetilde\Vb^{\F}\Hb - \Vb)\|_2 + \|(\widetilde\Vb^{\F}\Hb - \Vb)^{\top}\Mb\Vb\|_2 \\
    &\le \|\widehat\Mb - \Mb\|_2 + 2 \|\Mb\|_2\|\widetilde\Vb^{\F}\Hb - \Vb\|_2 = O_P(K^2\sqrt{d\theta}).
\end{align*}
Then for $i,k \in [d]$, we have
\begin{align*}
    & |\widetilde\Mb_{ik}-\Mb_{ik}|  = |(\widetilde\Vb^{\F}\Hb)_i^{\top} (\Hb^{\top}\widetilde\Vb^{\F\top} \widehat\Mb \widetilde\Vb^{\F}\Hb)(\widetilde\Vb^{\F}\Hb)_k - \Mb_{ik}|\\
    &\le |(\widetilde\Vb^{\F}\Hb)_i^{\top} (\Hb^{\top}\widetilde\Vb^{\F\top} \widehat\Mb \widetilde\Vb^{\F}\Hb - \bLambda)(\widetilde\Vb^{\F}\Hb)_k| + |(\widetilde\Vb^{\F}\Hb -\Vb)_i \bLambda (\widetilde\Vb^{\F}\Hb)_k| \\
    &\quad + |(\Vb)_i \bLambda (\widetilde\Vb^{\F}\Hb - \Vb)_k|. 
\end{align*}
It is not hard to see that 
\begin{align*}
     &|(\widetilde\Vb^{\F}\Hb)_i^{\top} (\Hb^{\top}\widetilde\Vb^{\F\top} \widehat\Mb \widetilde\Vb^{\F}\Hb - \bLambda)(\widetilde\Vb^{\F}\Hb)_k| \lesssim \|\Hb^{\top}\widetilde\Vb^{\F\top} \widehat\Mb \widetilde\Vb^{\F}\Hb- \bLambda\|_2 \|\widetilde\Vb^{\F}\Hb\|_{2,\infty}^2 \\
     & = O_P(K^2\sqrt{d\theta}\|\widetilde\Vb^{\F}\Hb\|_{2,\infty}^2 ) = O_P\left( K^3 \sqrt{\frac{\theta}{d}}\right),\\
     & |(\widetilde\Vb^{\F}\Hb -\Vb)_i \bLambda (\widetilde\Vb^{\F}\Hb)_k| + |(\Vb)_i \bLambda (\widetilde\Vb^{\F}\Hb - \Vb)_k|  \\
     & = O_P(Kd\theta\|\Vb\|_{2,\infty}\|\widehat\Vb \Hb_0 - \Vb\|_{2,\infty}) = O_P\left(K^3( K^2 + \sqrt{ \log d})\sqrt{\frac{\theta}{d}}\right),
\end{align*}
and in turn we have the upper bound 
\begin{align*}
    & |\widetilde\Mb_{ik}-\Mb_{ik}|  = O_P\left( K^3 \sqrt{\frac{\theta}{d}}\right) + O_P\left(K^3( K^2 + \sqrt{ \log d})\sqrt{\frac{\theta}{d}}\right)\\
    & = O_P\Big(\frac{K^3( K^2 + \sqrt{ \log d})}{\sqrt{d\theta}}\Big)\theta = o_P(\theta) = o_P(\Mb_{ik}). 
\end{align*}
 Thus we have 
 $$
 \|\diag\big([\widetilde\Mb_{ij}(1-\widetilde\Mb_{ij})]_{i=1}^d\big) - \diag\big([\Mb_{ij}(1-\Mb_{ij})]_{i=1}^d\big) \|_2 = O_P\Big(\frac{K^3( K^2 + \sqrt{ \log d})}{\sqrt{d\theta}}\theta\Big). 
 $$
 Then we move on to study $\widetilde{\bLambda}$. We have
 \begin{align*}
     & \|\widetilde{\bLambda} - \Hb \bLambda \Hb^{\top}\|_2 \le \|\widetilde\Vb^{\F \top}( \widehat\Mb - \Mb) \widetilde{\Vb}^{\F} \|_2 + \|(\widetilde\Vb^{\F} - \Vb\Hb^{\top})^{\top}\Mb\widetilde\Vb^{\F}\|_2\\
     &\quad + \|\Hb\Vb^{\top}\Mb(\widetilde\Vb^{\F} - \Vb\Hb^{\top})\|_2 = O_P(\sqrt{d\theta}) + O_P(K^2\sqrt{d\theta}) = O_P(K^2\sqrt{d\theta}).
 \end{align*}
 Then if we denote $\Db_{\bLambda} = (\widetilde{\bLambda} - \Hb \bLambda \Hb^{\top})\Hb \bLambda^{-1} \Hb^{\top}$, we have that $\|\Db_{\bLambda}\|_2 = O_P(K^3/\sqrt{d\theta}) = o_P(1)$, and thus we have 
 \begin{align*}
     \|\widetilde{\bLambda}^{-1} - \Hb \bLambda^{-1} \Hb^{\top}\|_2 & = \|(\Hb \bLambda \Hb^{\top} + \widetilde{\bLambda} - \Hb \bLambda \Hb^{\top})^{-1} - (\Hb \bLambda \Hb^{\top})^{-1}\|_2\\
     & = \Big\|\Hb \bLambda^{-1} \Hb^{\top}\big[ (\Ib_K + \Db_{\bLambda} )^{-1} - \Ib_K\big]\Big\| \le \|\bLambda^{-1}\|_2 \big\|\sum_{i=1}^{\infty}(-\Db_{\bLambda})^i\big\|_2 \\
     & = O_P\big({K^4}/{(d\theta)^{3/2}} \big).
 \end{align*}
  Thus, following basic algebra we have the following bounds
 \begin{align*}
      &\|\widetilde\Vb^{\F\top} \diag\big([\widetilde\Mb_{ij}(1-\widetilde\Mb_{ij})]_{i=1}^d\big)\widetilde\Vb^{\F} - \Hb \Vb^{\top}\diag\big([\Mb_{ij}(1-\Mb_{ij})]_{i=1}^d\big)\Vb\Hb^{\top} \|_2 \\
      &\le  \|\widetilde\Vb^{\F\top}\Big(\diag\big([\widetilde\Mb_{ij}(1-\widetilde\Mb_{ij})]_{i=1}^d\big) - \diag\big([\Mb_{ij}(1-\Mb_{ij})]_{i=1}^d\big)\Big)\widetilde\Vb^{\F} \|_2 \\
      &\quad + 2\|\widetilde\Vb^{\F} - \Vb\Hb^{\top}\|_2\|\diag\big([\Mb_{ij}(1-\Mb_{ij})]_{i=1}^d\big)\|_2 = O_P\Big(\frac{K^3( K^2 + \sqrt{ \log d})}{\sqrt{d\theta}}\Big)\theta,
 \end{align*}
 and further, under the condition that $K = o(d^{1/32})$, we have
 \begin{align*}
     \|\widehat\bSigma_j - \Hb \widetilde\bSigma_j\Hb^{\top}\|_2 & \lesssim O_P\Big(\frac{K^3( K^2 + \sqrt{ \log d})}{\sqrt{d\theta}}\theta\Big)\|\widetilde\bLambda^{-1}\|_2^2 + \theta\|\bLambda^{-1}\|_2\|\widetilde{\bLambda}^{-1} - \Hb \bLambda^{-1} \Hb^{\top}\|_2\\
     & = O_P\Big(\frac{K^7( K^2 + \sqrt{ \log d})}{\sqrt{d\theta}}\Big)\frac{1}{K^2d^2\theta} + O_P(\frac{K^7}{\sqrt{d\theta}})\frac{1}{K^2d^2\theta}\\
     &=O_P\Big(\frac{K^7( K^2 + \sqrt{ \log d})}{\sqrt{d\theta}}\Big)\frac{1}{K^2 d^2\theta} = o_P\big(\lambda_K(\widetilde\bSigma_j)\big).
 \end{align*}
 Thus with similar arguments as in the proof of Corollary~\ref{col: inf gaussian case L big}, the claim follows.
 \begin{remark}\label{rmk: exm2 inf no-self-loop}
 The inferential results also hold for the case where self-loops are absent. Recall that under the no-self-loop case, the observed matrix is 
 $$\widehat\Mb = \Xb - \diag(\Xb) = \Mb + \Eb - \diag(\Mb + \Eb) = \Mb + \Eb - \diag(\Eb) -\diag(\Mb),$$
 where $\Eb = \Xb - \Mb$ is the error matrix between the adjacency matrix with self-loops and its expectation. We define $\widehat\Mb' = \Mb + \Eb - \diag(\Eb)$ and denote by $\widehat\Vb'$ its $K$ leading eigenvectors. By Weyl's inequality \citep{franklin2012matrix} we know that with probability at least $1 - d^{-10}$ we have that 
 $\sigma_K(\widehat\Mb') - \sigma_{K+1}(\widehat\Mb') \ge \Delta - O(\sqrt{d\theta}) \gtrsim d\theta/K$, and hence by Davis-Kahan’s Theorem \citep{yu2015davis} we have 
 $$
 \|\widehat\Vb \widehat\Vb^{\top} - \widehat\Vb' \widehat\Vb'^{\top}\|_2 \le \|\diag(\Mb)\|_2/\big(\sigma_K(\widehat\Mb') - \sigma_{K+1}(\widehat\Mb')\big) \lesssim K/d,
 $$
 with probability at least $1 - d^{-10}$.
 The verification of Assumptions \ref{asp: tail prob bound}, \ref{asp: incoh} and \ref{asp: clt} when self-loops are present can also be applied to the no-self-loop case. For Assumption~\ref{asp: stat rate biased error}, we can take $\Eb_0 = \Eb - \diag(\Eb)$ and $\Eb_b = -\diag(\Mb)$. Then $r_2(d) = \|\diag(\Mb)\|_2 \lesssim \theta = o(r_1(d))$ and Assumption~\ref{asp: stat rate biased error} is satisfied. As for Assumption~\ref{asp: eigenspace}, by Lemma~7 in \citet{fandistributed2019}, we have 
 $$
 \|\sgn(\widehat\Vb'^{\top}\Vb) - \widehat\Vb'^{\top}\Vb\|_2 \lesssim \|\widehat\Vb'\widehat\Vb'^{\top} - \Vb\Vb^{\top}\|_2^2 \lesssim \frac{K^2}{d\theta}, 
 $$
 $$
 \|\sgn(\widehat\Vb^{\top}\widehat\Vb') - \widehat\Vb^{\top}\widehat\Vb'\|_2 \lesssim \|\widehat\Vb\widehat\Vb^{\top} - \widehat\Vb'\widehat\Vb'^{\top}\|_2^2\lesssim\frac{K^2}{d^2}.
 $$
 With similar arguments as in the self-loop case, for $\widehat\Vb'$ with high probability we have 
 $$
 \|\widehat{\Vb}' \operatorname{sgn}(\widehat{\Vb}'^{\top} \Vb) - \Vb\|_{2,\infty} \lesssim \frac{K^3\sqrt{ K} + K\sqrt{K\log d}}{d\sqrt{\theta}},
$$
$$
\|\Eb_0(\widehat{\Vb}' (\widehat{\Vb}'^{\top} \Vb) \! -\! \Vb)\|_{2,\infty} \lesssim \frac{K^3\sqrt{ K} + K\sqrt{K\log d}}{\sqrt{d}}.
 $$
 Then for $\widehat\Vb$, with high probability we have that 
 \begin{align*}
     &\|\widehat{\Vb} \operatorname{sgn}(\widehat{\Vb}^{\top} \Vb) - \Vb\|_{2,\infty}  \le \|\widehat{\Vb} (\widehat{\Vb}^{\top} \Vb) - \Vb\|_{2,\infty} + \|\widehat{\Vb} \big(\operatorname{sgn}(\widehat{\Vb}^{\top} \Vb) - \widehat{\Vb}^{\top} \Vb\big) - \Vb\|_{2,\infty}\\
     & \le \|\widehat{\Vb}\big (\widehat{\Vb}^{\top}(\Ib_d - \widehat\Vb'\widehat\Vb'^{\top})\Vb\big) - \Vb\|_{2,\infty} + \|\widehat{\Vb} (\widehat{\Vb}^{\top} \widehat\Vb'\widehat\Vb'^{\top}\Vb) - \Vb\|_{2,\infty}  + O(\frac{K^2}{d\theta})\|\widehat\Vb\|_{2,\infty}\\
     &  \le  \|\widehat{\Vb} (\widehat{\Vb}^{\top} \widehat\Vb') \!- \!\widehat\Vb'\|_{2,\infty}\! +\!  \|\widehat{\Vb}' (\widehat\Vb'^{\top}\Vb) \!-\! \Vb\|_{2,\infty} \!+\! O(\frac{K^2}{d\theta})\|\widehat\Vb\|_{2,\infty}  \!+\! \|\widehat\Vb \widehat\Vb^{\top} \!-\! \widehat\Vb' \widehat\Vb'^{\top}\|_2\|\widehat\Vb\|_{2,\infty} \\
     &  \le O(\frac{K^2}{d\theta}) \!\!\left(\|\Vb\|_{2,\infty} \!+\! \|\widehat{\Vb} \operatorname{sgn}(\widehat{\Vb}^{\top} \Vb) - \Vb\|_{2,\infty}\!\right) \!+\! \|\widehat\Vb \widehat\Vb^{\top}\! -\! \widehat\Vb' \widehat\Vb'^{\top}\|_2 \!+\!  \|\widehat{\Vb}' (\widehat\Vb'^{\top}\Vb) \!-\! \Vb\|_{2,\infty} ,
 \end{align*}
 where in the last two inequalities we use the fact that 
 $$
 \|(\Ib_d - \widehat\Vb\widehat\Vb^{\top})\widehat\Vb'\|_2 = \|(\Ib_d - \widehat\Vb'\widehat\Vb'^{\top})\widehat\Vb\|_2 = \|\widehat\Vb_{\perp}^{\top}\widehat\Vb'\|_2 = \|\widehat\Vb_{\perp}'^{\top}\widehat\Vb\|_2  = \|\widehat\Vb \widehat\Vb^{\top} - \widehat\Vb' \widehat\Vb'^{\top}\|_2,
 $$
with $\widehat\Vb_{\perp}$ and $\widehat\Vb'_{\perp}$ being the orthogonal complement of $\widehat\Vb$ and $\widehat\Vb'$ respectively. Since $K^2/(d\theta) = o(1)$, for large enough $d$ we further get 
\begin{align*}
    &\frac{1}{2} \|\widehat{\Vb} \operatorname{sgn}(\widehat{\Vb}^{\top} \Vb) - \Vb\|_{2,\infty} \le \left(1 -  O\left({K^2}/{(d\theta)}\right)\right)\|\widehat{\Vb} \operatorname{sgn}(\widehat{\Vb}^{\top} \Vb) - \Vb\|_{2,\infty}\\
    &\quad \le  O(\frac{K^2}{d\theta})\|\Vb\|_{2,\infty} + O(\frac{K}{d}) + \|\widehat{\Vb}' \sgn(\widehat\Vb'^{\top}\Vb) - \Vb\|_{2,\infty} + O(\frac{K^2}{d\theta})\|\widehat\Vb'\|_{2,\infty} \\
    &\quad \lesssim \frac{K^2}{d\theta}\sqrt{\frac{K}{d}} + \frac{K}{d} + \frac{K^3\sqrt{ K} + K\sqrt{K\log d}}{d\sqrt{\theta}} \lesssim \frac{K^3\sqrt{ K} + K\sqrt{K\log d}}{d\sqrt{\theta}}. 
\end{align*}
Hence $r_3(d) = K\sqrt{K}(K^2 + \sqrt{\log d})/(d\sqrt{\theta})$. We also have 
 \begin{align*}
     \|\Eb(\widehat{\Vb} (\widehat{\Vb}^{\top} \Vb)  - \Vb)\|_{2,\infty} & \lesssim \|\Eb_0(\widehat{\Vb} (\widehat{\Vb}^{\top} \Vb)  - \Vb)\|_{2,\infty} + \frac{r_2(d)r_1(d)}{\Delta}\\
     & \lesssim \|\Eb_0(\widehat{\Vb}' (\widehat{\Vb}'^{\top} \Vb)  - \Vb)\!\|_{2,\infty} + \frac{r_2(d)r_1(d)}{\Delta }\\
     &\lesssim \frac{K^3\sqrt{ K} + K\sqrt{K\log d}}{\sqrt{d}} + K\sqrt{\frac{\theta}{d}} \lesssim \frac{K^3\sqrt{ K} + K\sqrt{K\log d}}{\sqrt{d}},
 \end{align*}
 and hence we can take $r_4(d) = K\sqrt{K}(K^2 + \sqrt{\log d})/\sqrt{d}$. Now to get a sharper rate for $r(d)$, we take into consideration the diagonal structure of $\Eb_b$ and derive the following bound
 \begin{align*}
     \|\!\Pb_{\perp}\Eb_b \widehat{\Vb}\widehat{\Hb}_0 \mathbf{\Lambda}^{-1}\!\|_{2,\infty} & \le  \|\Vb\Vb^{\top}\Eb_b \widehat{\Vb}\widehat{\Hb}_0 \mathbf{\Lambda}^{-1}\!\|_{2,\infty} +  \|\Eb_b \widehat{\Vb}\!\widehat{\Hb}_0 \mathbf{\Lambda}^{-1}\|_{2,\infty}\\
     & \le \frac{r_2(d)\|\Vb\|_{2,\infty}}{\Delta} +\frac{\|\diag(\Mb) \widehat\Vb\|_{2,\infty}}{\Delta}\\
     & \lesssim \frac{K}{d}\sqrt{\frac{K}{d}} + \frac{\|\diag(\Mb)\|_2\| \widehat\Vb\|_{2,\infty}}{\Delta} \lesssim \frac{K}{d}\sqrt{\frac{K}{d}}.
 \end{align*}
Then from the proof of Theorem~\ref{thm: leading term L big} we have that $$r(d) \lesssim \frac{K^4\sqrt{ K} + K^2\sqrt{K\log d}}{d^{3/2}\theta} + K\sqrt{\frac{K}{\theta p L}} + \frac{K}{d}\sqrt{\frac{K}{d}} \ll \frac{1}{Kd\sqrt{\theta}},$$
 and we are only left to verify the minimum eigenvalue condition of $\bSigma_j$ by showing that the order of $\eta_1(d)$ is the same as when there are self-loops. With the same arguments, we know that 
$$
\|\Cov(\bLambda^{-1}\Vb^{\top}\Eb'\Pb_{\perp}\eb_j ) - \Cov(\bLambda^{-1}\Vb^{\top}\Eb'\eb_j )\| \le O\left(K^4\sqrt{\frac{K}{d}}\right)\frac{1}{K^2d^2\theta}. 
$$
Besides, we also have 
\begin{align*}
    &\|\Cov(\bLambda^{-1}\Vb^{\top}\Eb'\eb_j ) - \widetilde\bSigma_j\|_2 = \big\|\bLambda^{-1}\Vb^{\top} \big(\Mb_{jj}(1-\Mb_{jj})\eb_j\eb_j^{\top}\big)\Vb\bLambda^{-1}\big\|_2\\
    & \lesssim \Mb_{jj}\|\bLambda^{-1}\|_2^2 \|\Vb\|_{2,\infty}^2  \lesssim \frac{K^2}{d^2 \theta} \frac{K}{d} = O(\frac{K^5}{d}) \frac{1}{K^2d^2 \theta} = o\big(\lambda_K(\widetilde\bSigma_j)\big).
\end{align*}
Thus we also have $\|\Cov(\bLambda^{-1}\Vb^{\top}\Eb'\Pb_{\perp}\eb_j ) - \widetilde\bSigma_j\|_2 = o\big(\lambda_K(\widetilde\bSigma_j)\big)$, and thereby 
$$ \lambda_K\big(\Cov(\bLambda^{-1}\Vb^{\top}\Eb'\Pb_{\perp}\eb_j ) \big) = \lambda_K\big(\widetilde\bSigma_j\big)\big(1 + o(1)\big) \gtrsim \frac{\theta}{K^2d^2\theta^2}.$$ 
Thus we  still have $\eta_1(d) = \lambda_1^{-2}\theta$ for the case where self-loops are absent. The condition for $\eta_1(d)$ also holds for the no-self-loop case and both \eqref{eq: general clt vk large L} and \eqref{eq: col SBM 2} hold.  The verification of \eqref{eq: est cov SBM} is almost identical to the self-loop case and is hence omitted.
\end{remark}
 
\subsection{Proof of Corollary~\ref{col: missing mat}}\label{sec: proof col missing mat}
Recall that $\widehat\Mb =  ({1}/{\hat\theta})\cP_{\cS}(\Mb + \bar{\cE}) $ and $ \widehat\Mb' = ({1}/{\theta})\cP_{\cS}(\Mb + \bar{\cE}) $ share exactly the same sequence of eigenvectors, and we can treat $\widetilde\Vb^{\F}$ as the FADI estimator applied to $ \widehat\Mb'$. We will abuse the notation and denote $\Eb := \widehat\Mb' - \Mb$.

To show that \eqref{eq: general clt vk large L} holds, we need to verify that Assumptions \ref{asp: tail prob bound} to \ref{asp: clt} hold and the minimum eigenvalue conditions hold for the asymptotic covariance matrix. We know from Corollary~\ref{prop: err rate terms add exms} that Assumption \ref{asp: tail prob bound} and Assumption \ref{asp: stat rate biased error} are satisfied, and that $r_1(d) = |\lambda_1| \mu K/\sqrt{d\theta} + \sqrt{d\sigma^2/\theta}$ and $r_2(d) = 0$. Define $\tilde{\sigma} = (|\lambda_1|\mu K/d)\vee \sigma$, we have from the proof of Corollary~\ref{prop: err rate terms add exms} that $\operatorname{Var}(\Eb_{ij}) \asymp \tilde\sigma^2/\theta $ and $|\Eb_{ij}| = O(\tilde{\sigma}\log d/\theta)$ for $i,j \in [d]$. From Theorem 4.2.1 in \citet{chen2020spectral}, we have that with probability $1-O(d^{-5})$
\begin{align*}
    &\|\widehat\Vb\operatorname{sgn}(\widehat\Vb^{\top}\Vb) - \Vb\|_{2,\infty} \lesssim \frac{\kappa_2\tilde{\sigma}\sqrt{\mu K/\theta} + \tilde{\sigma}\sqrt{K \log d/\theta}}{\Delta},
\end{align*}
and thus we know $ r_3(d) \asymp \big(\kappa_2\tilde{\sigma}\sqrt{\mu K/\theta} + \tilde{\sigma}\sqrt{K \log d/\theta}\big)/{\Delta}$. Besides, by the proof of Theorem 4.2.1 in \citet{chen2020spectral}, with probability $1-O(d^{-7})$, we have
\begin{align*}
    \big\|\Eb \big(\widehat\Vb(\widehat\Vb^{\top}\Vb) \!- \!\!\Vb\big)\!\big\|_{2,\infty} \!&\lesssim\! \frac{\sqrt{dK}\tilde{\sigma}^2}{\Delta{\theta}}\big(\!\sqrt{\log d}\!+\!\!\sqrt{\mu }\big)\! \!+\! \tilde{\sigma}\sqrt{\frac{d}{\theta}}r_3(d)\!+\!\frac{\tilde{\sigma}}{\Delta}\!\sqrt{K\frac{\log d}{\theta}}\|\Eb\|_2\\
    & \quad \lesssim \frac{\sqrt{d}\tilde{\sigma}^2}{\Delta{\theta}}\big(\sqrt{K\log d}+\kappa_2\sqrt{\mu K}\big),
\end{align*}
and thus $r_4(d) \asymp \frac{\sqrt{d}\tilde{\sigma}^2}{\Delta{\theta}}\big(\sqrt{K\log d}+\kappa_2\sqrt{\mu K}\big)$. Therefore, Assumption \ref{asp: eigenspace} is met and we have 
\begin{align*}
    r(d)  &= \sqrt{\frac{Kd}{pL}}\frac{r_1(d)}{\Delta}+ r_3(d)r_1(d)/\Delta + \sqrt{\frac{\mu K}{d}}r_1(d)^2/\Delta^2 + \big(r_2(d)+ r_4(d)\big)/\Delta\\
    & \lesssim \left(\frac{\sqrt{d}\tilde{\sigma}}{\Delta \sqrt{\theta}}\right)\left(\left(\frac{\kappa_2 \sqrt{\mu K} + \sqrt{K \log d}}{\Delta}\right)\frac{\tilde{\sigma}}{\sqrt{\theta}} + \sqrt{\frac{Kd}{pL}}\right).
\end{align*}

Now we will study the statistical rate $\eta_1(d)$. We know that $\Eb_{ij} = \Eb_{ji}$ are i.i.d. across $i \le j$ and $\operatorname{Var}(\Eb_{ij}) \asymp \tilde{\sigma}^2/\theta$, then by Lemma \ref{lm: cov op norm up bd}, with almost identical arguments as in the proof of Corollary~\ref{col: sbm}, for $j \in [d]$ we have that $\|\Cov(\Eb\Pb_{\perp}\eb_j) - \Cov(\Eb\eb_j)\|_2 \lesssim \tilde{\sigma}^2/\theta \sqrt{\mu K/d}$, and thus $\lambda_d\big( \Cov(\Eb\Pb_{\perp}\eb_j) \big) \gtrsim \lambda_d\big( \Cov(\Eb\eb_j) \big)  \gtrsim \tilde{\sigma}^2/\theta$ and we have $\eta_1(d) \asymp \lambda_1^{-2} \theta^{-1 }\tilde{\sigma}^2$. Therefore, under the condition that { $L \gg \kappa_2^2 Kd^2/p$ and $\tilde{\sigma}/\Delta \sqrt{d/\theta} \ll \min\left(\big(\kappa_2^2\sqrt{\mu K} + \kappa_2\sqrt{K \log d}\big)^{-1}, \sqrt{p/d}\right)$}, we have that $\eta_1(d)^{-1/2}r(d) = o(1)$.


Now we move on to verify Assumption \ref{asp: clt}. More specifically, we will show that the following results hold:

Given $j \in [d]$, for any matrix $\Ab \in \RR^{d \times K}$ that satisfies the following two conditions: (1)$\|\Ab\|_{2,\infty}/\sigma_{\min} (\Ab)\le C\sqrt{\lambda_1^2 \mu K/(d \Delta^2 )} $; (2) $\lambda_K \big(\bSigma_j \big) \ge c \tilde\sigma^2\theta^{-1}\big(\sigma_{\min}(\Ab)\big)^2$, where $\bSigma_j := \Cov(\Ab^{\top} \Eb_0 \Pb_{\perp} \eb_j )$ and $C,c >0$ are fixed constants independent of $\Ab$, it holds that 
\begin{equation}\label{eq: asp clt2 mismat}
\mathbf{\Sigma}_j^{-1/2}\Ab^{\top} \Eb_0 \Pb_{\perp} \eb_j \overset{d}{\rightarrow} {\cN}(\mathbf{0}, \Ib_K).
\end{equation}
To prove \eqref{eq: asp clt2 mismat}, it suffices to show that $\ab^{\top} \bSigma_j^{-1/2}\Ab^{\top} \Eb \Pb_{\perp}\eb_j \overset{d}{\rightarrow} {\cN}(0,1)$ for any $\ab \in \RR^K, \|\ab\|_2=1$. We will first study $\Pb_{\perp}\eb_j$, $\Ab \bSigma_j^{-1/2}\ab$ and $\max_{ik}\EE|\Eb_{ik}|^3$. It holds that 
\begin{align*}
    &|(\Pb_{\perp}\eb_j)_j|  = |\big((\Ib_d - \Vb \Vb^{\top})\eb_j\big)_j| \le 1 + \|\Vb\|_{2,\infty}^2 = 1+o(1);\\
    &\max_{i \neq j} |(\Pb_{\perp}\eb_j)_i|  = \max_{i \neq j}|\eb_i^{\top}\eb_j - \eb_i^{\top}\Vb \Vb^{\top}\eb_j| \le 0+ \|\Vb\|_{2,\infty}^2 = \frac{\mu K}{d};\\
    &\|\Ab \bSigma_j^{-1/2}\ab\|_{\infty} \le \|\Ab\|_{2,\infty} \|\bSigma_j^{-1/2}\|_2 \lesssim (\tilde\sigma^2/{\theta})^{-1/2} \|\Ab\|_{2,\infty}/{\sigma}_{\min}(\Ab) \lesssim \kappa_2 \sqrt{\frac{\mu K}{d}} \frac{\sqrt{\theta}}{\tilde{\sigma}};\\
    &\max_{ik}\EE|\Eb_{ik}|^3 \lesssim \frac{\|\Mb\|_{\max}^3}{\theta^3}\theta + \frac{{\sigma}^3 (\log d)^3}{\theta^3}\theta \lesssim \frac{\tilde{\sigma}^3}{\theta^2}(\log d)^3. 
\end{align*}
 Then we know that 
 \begin{align*}
     &\ab^{\top} \bSigma_j^{-1/2}\Ab^{\top} \Eb \Pb_{\perp}\eb_j = \sum_{ik} \Eb_{ik} (\Ab \bSigma_j^{-1/2}\ab)_i (\Pb_{\perp}\eb_j)_k = \sum_{i=1}^d \Eb_{ii} (\Ab \bSigma_j^{-1/2}\ab)_i (\Pb_{\perp}\eb_j)_i\\
     & \quad + \sum_{i < k} \Eb_{ik} \big[ (\Ab \bSigma_j^{-1/2}\ab)_i (\Pb_{\perp}\eb_j)_k + (\Ab \bSigma_j^{-1/2}\ab)_k (\Pb_{\perp}\eb_j)_i \big].
 \end{align*}
Then for the diagonal entries we have
 \begin{align*}
     & \sum_{i=1}^d \EE |\Eb_{ii} (\Ab \bSigma_j^{-1/2}\ab)_i (\Pb_{\perp}\eb_j)_i|^3 \\
     &= \EE |\Eb_{jj} (\Ab \bSigma_j^{-1/2}\ab)_j (\Pb_{\perp}\eb_j)_j|^3 + \sum_{i\neq j} \EE |\Eb_{ii} (\Ab \bSigma_j^{-1/2}\ab)_i (\Pb_{\perp}\eb_j)_i|^3\\
     & \lesssim \EE|\Eb_{jj}|^3 \|\Ab \bSigma_j^{-1/2}\ab\|_{\infty}^3 + d\max_{i}\EE|\Eb_{ii}|^3 \|\Ab \bSigma_j^{-1/2}\ab\|_{\infty}^3 \max_{i \neq j} |(\Pb_{\perp}\eb_j)_i|^3 \\
     & \lesssim \frac{\kappa_2^3 K\mu}{d}\sqrt{\frac{\mu K}{d \theta}}(\log d)^3,
 \end{align*}
 and for the off-diagonal entries, under the condition { $\kappa_2^6 K^3\mu^3=o(d^{1/2})$ }it holds that
 \begin{align*}
      & \sum_{i < k} \EE\Big|\Eb_{ik} \big[ (\Ab \bSigma_j^{-1/2}\ab)_i (\Pb_{\perp}\eb_j)_k + (\Ab \bSigma_j^{-1/2}\ab)_k (\Pb_{\perp}\eb_j)_i \big]\Big|^3  \lesssim d \frac{\tilde{\sigma}^3}{\theta^2} \|\Ab \bSigma_j^{-1/2}\ab\|_{\infty}^3(\log d)^3\\
      &\quad + d^2 \frac{\tilde{\sigma}^3}{\theta^2}(\log d)^3 \|\Ab \bSigma_j^{-1/2}\ab\|_{\infty}^3 \big(\frac{\mu K}{d}\big)^3 \lesssim {\kappa_2^3 K\mu}\sqrt{\frac{\mu K}{d \theta}} (\log d)^3= o(1).
 \end{align*}
 Moreover, since $\operatorname{Var}(\ab^{\top} \bSigma_j^{-1/2}\Ab^{\top} \Eb \Pb_{\perp}\eb_j ) = 1$, by the Lyapunov's condition, \eqref{eq: asp clt2 mismat} holds and Assumption \ref{asp: clt} is satisfied by plugging in $\Ab = \Vb \bLambda^{-1}$. By Theorem~\ref{thm: leading term L big}, we have that \eqref{eq: general clt vk large L} follows. 
 
 To show that \eqref{eq: missing mat 2} holds we need to show that $\|\widetilde{\bSigma}_j - \bSigma_j\|_2  = o(\lambda_K(\widetilde{\bSigma}_j))$.
 From previous discussion we learnt that 
 \begin{align*}
     &\|\widetilde{\bSigma}_i - \bSigma_j\|_2 \le \|\Vb\mathbf{\Lambda}^{-1}\|_2^2 \|\Cov(\Eb \Pb_{\perp}\eb_j) - \Cov(\Eb \eb_j)\|_2 \\
     &\le \frac{1}{\Delta^2} \sqrt{\frac{\mu K}{d}} \frac{\tilde{\sigma}^2}{\theta}\lesssim \kappa_2^2 \sqrt{\frac{\mu K}{d}} \lambda_K(\widetilde{\bSigma}_j) = o(\lambda_K(\widetilde{\bSigma}_j)).
 \end{align*}
 
 Then by Slutsky's Theorem, \eqref{eq: missing mat 2} holds. 
 
 Last we verify that the distributional convergence still holds when we plug in the estimator $\hbSigma_j$. Similar as in the previous proof, it suffices for us to prove that $\|\widehat\bSigma_j - \Hb \widetilde\bSigma_j\Hb^{\top}\|_2 = o_P\big(\lambda_K(\widetilde\bSigma_j)\big)$. In the following proof, we will base the discussion on the event that $\Hb$ is orthonormal. We will first bound $\|\widetilde\Mb - \Mb\|_{\max}$.  From previous discussion we have the following bounds
 $$
 \|\widehat\Mb' - \Mb\|_2 = O_P(\sqrt{d\tilde\sigma^2/\theta}),\quad \|\widetilde\Vb^{\F} - \Vb \Hb^{\top}\|_2 =  \|\widetilde\Vb^{\F}\Hb - \Vb \|_2 = O_P(\frac{1}{\Delta}\sqrt{d\tilde\sigma^2/\theta}), 
 $$
 and 
 \begin{align*}
     & \|\widetilde\Vb^{\F}\Hb - \Vb \|_{2,\infty} \le \|\widetilde\Vb^{\F}\Hb - \widehat\Vb \Hb_0\|_2 + \|\widehat\Vb \Hb_0 - \Vb\|_{2,\infty} = o_P(\frac{\tilde\sigma}{|\lambda_1|\sqrt{\theta}})\\
      &\quad + O_P(\frac{\kappa_2\tilde\sigma\sqrt{\mu K/\theta} + \tilde\sigma\sqrt{K \log d/\theta}}{\Delta}) =O_P(\frac{\kappa_2\tilde\sigma\sqrt{\mu K/\theta} + \tilde\sigma\sqrt{K \log d/\theta}}{\Delta}).
 \end{align*}
Now we can study $\widetilde\Mb = (\widetilde\Vb^{\F}\widetilde\Vb^{\F\top}) \widehat\Mb (\widetilde\Vb^{\F}\widetilde\Vb^{\F\top}) = \widetilde\Vb^{\F}\Hb (\frac{\theta}{\hat\theta}\Hb^{\top}\widetilde\Vb^{\F\top} \widehat\Mb' \widetilde\Vb^{\F} \Hb)\Hb^{\top}\widetilde\Vb^{\F\top}$. Recall by Hoeffding's inequality \citep{hoeffding1994probability}, with probability $1-O(d^{-10})$ we have that $|\hat\theta - \theta| \lesssim \frac{\sqrt{ \log d}}{d}$ and $|\cS| =\Omega(d^2\theta)$, and we have that 
\begin{align*}
    & \|\frac{\theta}{\hat\theta}\Hb^{\top}\widetilde\Vb^{\F\top} \widehat\Mb' \widetilde\Vb^{\F}\Hb- \bLambda\|_2  \le \|\Hb^{\top}\widetilde\Vb^{\F\top} \widehat\Mb' \widetilde\Vb^{\F}\Hb - \Hb^{\top}\widetilde\Vb^{\F\top} \Mb \widetilde\Vb^{\F}\Hb\|_2 \\
    & \quad + \|\Hb^{\top}\widetilde\Vb^{\F\top} \Mb (\widetilde\Vb^{\F}\Hb - \Vb)\|_2  + \|(\widetilde\Vb^{\F}\Hb - \Vb)^{\top}\Mb\Vb\|_2 + O_P\Big(\frac{\sqrt{\log d}}{d{\theta}}|\lambda_1|\Big) \\
    &\lesssim \|\widehat\Mb' - \Mb\|_2 + 2 \|\Mb\|_2\|\widetilde\Vb^{\F}\Hb - \Vb\|_2 + O_P\Big(\frac{\sqrt{\log d}}{d{\theta}}|\lambda_1|\Big)  = O_P(\kappa_2\sqrt{d\tilde\sigma^2/\theta}).
\end{align*}
Then for any $i,k \in [d]$, we have
\begin{align*}
    & |\widetilde\Mb_{ik}\!-\!\Mb_{ik}|  = |(\widetilde\Vb^{\F}\Hb)_i^{\top} (\frac{\theta}{\hat\theta}\Hb^{\top}\widetilde\Vb^{\F\top} \widehat\Mb' \widetilde\Vb^{\F}\Hb)(\widetilde\Vb^{\F}\Hb)_k - \Mb_{ik}| \\
    &\le |(\widetilde\Vb^{\F}\Hb)_i^{\top} (\frac{\theta}{\hat\theta}\Hb^{\top}\widetilde\Vb^{\F\top} \widehat\Mb' \widetilde\Vb^{\F}\Hb - \bLambda)(\widetilde\Vb^{\F}\Hb)_k| + |(\widetilde\Vb^{\F}\Hb -\Vb)_i \bLambda (\widetilde\Vb^{\F}\Hb)_k| \\
    &\quad + |(\Vb)_i \bLambda (\widetilde\Vb^{\F}\Hb - \Vb)_k| = O_P(\kappa_2\sqrt{d\tilde\sigma^2/\theta}\|\widetilde\Vb^{\F}\Hb\|_{2,\infty}^2 ) \\
    &\quad + O_P(|\lambda_1|\|\Vb\|_{2,\infty}\|\widehat\Vb \Hb_0 - \Vb\|_{2,\infty}) =O_P(\frac{\sqrt{d}\tilde\sigma}{\Delta\sqrt{\theta}})\frac{|\lambda_1|\mu K}{d} = O_P\left(\frac{\kappa_2\mu K}{\sqrt{d\theta}}\right) \tilde\sigma,
\end{align*}
and in turn we have    
\begin{align*}
    |\widetilde\Mb_{ik}^2\!-\!\Mb_{ik}^2| \lesssim \frac{|\lambda_1| \mu K}{d} |\widetilde\Mb_{ik}\!-\!\Mb_{ik}| = O_P\left(\frac{\sqrt{d}\tilde\sigma}{\Delta\sqrt{\theta}}\right) \Big(\frac{|\lambda_1|\mu K}{d}\Big)^2, \quad \forall i,k \in [d].   
\end{align*}
 Now we move on to bound the error of $\hat\sigma^2$. We know from the setting of Example~\ref{ex: missing mat} that $\varepsilon_{ik}$'s are sub-Gaussian with variance proxy of order $O(\sigma^2 (\log d)^2)$, and thus 
\begin{align*}
    |\hat\sigma^2 \!- \!\sigma^2| &\!=\! \Big|\!\!\sum_{(i,k) \in \cS}\!\!\!(\Mb_{ik}\! +\! \varepsilon_{ik}\!-\!\widetilde\Mb_{ik})^2/ |\cS|\! - \!\sigma^2\Big|\! =\! \Big|\!\!\sum_{(i,k) \in \cS}\!\!\!(\Mb_{ik} \!+\! \varepsilon_{ik}\!-\!\Mb_{ik}\! +\! \Mb_{ik} \!-\! \widetilde\Mb_{ik})^2/ |\cS|\! -\! \sigma^2 \Big|\\
    & \lesssim \Big|\frac{1}{|\cS|}\sum_{(i,k) \in \cS} \varepsilon_{ik}^2 - \sigma^2\Big| + \|\widetilde\Mb - \Mb\|_{\max}^2 = O_P\big(\frac{\sigma^2(\log d)^2}{\sqrt{|\cS|}}\big) + O_P\Big(\frac{\kappa_2^2\mu^2K^2}{d\theta}\Big)\tilde\sigma^2\\
    &= O_P\Big(\frac{(\log d)^2}{d\sqrt{\theta}}\Big)\sigma^2 +O_P\Big( \frac{\kappa_2^2\mu^2K^2}{d\theta}\Big)\tilde\sigma^2.
\end{align*}
Then for any $i \in [d]$, we have that
\begin{align*}
    & \Big|\frac{\widetilde\Mb_{ij}^2(1-\hat\theta)}{\hat\theta}\! +\! \frac{\hat\sigma^2}{\hat\theta} \!-\! \frac{\Mb_{ij}^2(1\!-\!\theta)}{\theta}\!-\!\frac{\sigma^2}{\theta}\Big| \!\lesssim \! |\widetilde\Mb_{ij}|^2\Big|\frac{1}{\hat\theta}\! -\! \frac{1}{\theta}\Big| \!+\! \frac{|\widetilde\Mb_{ij}^2\! -\! \Mb_{ij}^2|}{\theta} \!+\! \hat\sigma^2\Big|\frac{1}{\hat\theta} \!-\! \frac{1}{\theta}\Big|\! +\! \frac{|\hat\sigma^2 \!-\! \sigma^2|}{\theta}\\
    & = O_P\left(\frac{\sqrt{d}\tilde\sigma}{\Delta\sqrt{\theta}}\right)\theta^{-1} \Big(\frac{|\lambda_1|\mu K}{d}\Big)^2 + O_P\Big(\frac{(\log d)^2}{d\sqrt{\theta}} + \frac{\kappa_2^2\mu^2K^2}{d\theta}\Big)\frac{\tilde\sigma^2}{\theta} =O_P\left(\frac{\sqrt{d}\tilde\sigma}{\Delta\sqrt{\theta}}\right)\frac{\tilde\sigma^2}{\theta} ,
\end{align*}
and thus we have that 
$$
\|\diag\big([\widetilde\Mb_{ij}^2(1-\hat\theta)/\hat\theta + \hat\sigma^2/\hat\theta]_{i=1}^d\big) - \diag\big([\Mb_{ij}^2(1-\theta)/\theta + \sigma^2/\theta]_{i=1}^d\big)\|_2 = O_P\Big(\frac{\sqrt{d}\tilde\sigma}{\Delta\sqrt{\theta}}\Big)\frac{\tilde\sigma^2}{\theta}.
$$
Also, we have shown that 
$$\|\widetilde\bLambda - \Hb\bLambda\Hb^{\top}\|_2 = \|\frac{\theta}{\hat\theta}\Hb^{\top}\widetilde\Vb^{\F\top}\widehat\Mb'\widetilde\Vb^{\F}\Hb - \bLambda\|_2 = O_P\Big(\kappa_2 \frac{\sqrt{d}\tilde\sigma}{\Delta\sqrt{\theta}}\Big)\Delta,$$
then we have $\|\widetilde\bLambda^{-1} - \Hb \bLambda^{-1}\Hb^{\top}\|_2 = O_P\Big(\kappa_2 \frac{\sqrt{d}\tilde\sigma}{\Delta\sqrt{\theta}}\Big)\frac{1}{\Delta}$, and hence
 \begin{align*}
     &\|\widetilde\Vb^{\F}\widetilde\bLambda^{-1} - \Vb \bLambda^{-1}\Hb^{\top}\|_2 \le \|\widetilde{\bLambda}^{-1} - \Hb \bLambda^{-1} \Hb^{\top}\|_2 + \|\bLambda^{-1}\|_2\|\widetilde\Vb^{\F} - \Vb \Hb^{\top}\|_2 \\
     & = O_P\Big(\kappa_2 \frac{\sqrt{d}\tilde\sigma}{\Delta\sqrt{\theta}}\Big)\frac{1}{\Delta} + O_P\Big( \frac{\sqrt{d}\tilde\sigma}{\Delta\sqrt{\theta}}\Big)\frac{1}{\Delta} = O_P\Big(\kappa_2 \frac{\sqrt{d}\tilde\sigma}{\Delta\sqrt{\theta}}\Big)\frac{1}{\Delta}.
 \end{align*}
 Then following basic algebra we have that with high probability
 \begin{align*}
     \|\hbSigma_j - \Hb\widetilde\bSigma_j \Hb^{\top}\|_2 \lesssim O_P\Big(\frac{\sqrt{d}\tilde\sigma}{\Delta\sqrt{\theta}}\Big)\frac{\tilde\sigma^2}{\Delta^2\theta} + O_P\Big(\kappa_2\frac{\sqrt{d}\tilde\sigma}{\Delta\sqrt{\theta}}\Big)\frac{\tilde\sigma^2}{\Delta^2\theta} = O_P\Big(\kappa_2\frac{\sqrt{d}\tilde\sigma}{\Delta\sqrt{\theta}}\Big)\frac{\tilde\sigma^2}{\Delta^2\theta}.
 \end{align*}
 Then under the condition that { $\kappa_2^3\frac{\sqrt{d}\tilde\sigma}{\Delta\sqrt{\theta}} = o(1)$,} we have that $$\|\hbSigma_j - \Hb\widetilde\bSigma_j \Hb^{\top}\|_2 = O_P(\kappa_2^3\frac{\sqrt{d}\tilde\sigma}{\Delta\sqrt{\theta}} ) \frac{\tilde\sigma^2}{\lambda_1^2\theta} = o_P\big(\lambda_K(\widetilde\bSigma_j)\big).$$
 \section{Proof of Technical Lemmas}\label{sec: proof tec lems}
In this section, we provide proofs of the technical lemmas used in the proofs of the main theorems. 
\subsection{Proof of Lemma \ref{lm: gaussian norm}}\label{sec: proof lm gaussian norm}
It can be easily seen that 
$$\|\mathbf{\Omega} / \sqrt{p}\|_2 = (\|\mathbf{\Omega} \mathbf{\Omega}^{ \top} / p \|_2 )^{1/2}= \left((d/p) \|\mathbf{\Omega}^{\top} \mathbf{\Omega} / d \|_2 \right)^{1/2}.$$ 
By Lemma 3 in \citet{fandistributed2019}, we know that $\|\|\mathbf{\Omega}^{\top} \mathbf{\Omega} / d - \Ib_p \|_2 \|_{\psi_1} \lesssim \sqrt{p/d}$, and thus $\|\|\mathbf{\Omega}^{\top} \mathbf{\Omega} / d \|_2 \|_{\psi_1} \lesssim 1 + \sqrt{p/d} = O(1)$. Therefore, we have $\|\|\mathbf{\Omega} \mathbf{\Omega}^{ \top} / p\|_2 \|_{\psi_1} \lesssim d/p$. By Jensen's inequality, we in turn get $\|\|\mathbf{\Omega}/ \sqrt{p}\|_2 \|_{\psi_1} \lesssim \sqrt{d/p}$.
\subsection{Proof of Lemma \ref{lm: bound min eigenvalue}}\label{sec: proof lm bound min eig}
By Proposition 10.4 in \citet{halkofinding2011}, we know that for any $t \ge 1$, we have 
\begin{equation}
\mathbb{P}\left(\left\|\boldsymbol{\Omega}^{\dagger}\right\|_2 \geq \frac{{\rm e} \sqrt{p}}{p-K+1} \cdot t\right) \leq t^{-(p-K+1)}.    
\end{equation}
Since $p \ge 2K$, there exists a constant $c$ such that $\frac{{\rm e}p}{p-K+1} \le c$, and thus 
\begin{equation}
\mathbb{P}\left(
\sqrt{p}\left\|\boldsymbol{\Omega}^{\dagger}\right\|_2 \geq c t\right) \leq t^{-(p-K+1)}.  
\end{equation}
Therefore, we have 
\begin{align*}
    &\EE\left ( \left(\sigma_{\min} (\boldsymbol{\Omega}/\sqrt{p})\right)^{-a} \right)  = \EE \left( \left\|\sqrt{p}\boldsymbol{\Omega}^{\dagger}\right\|_2^{a}\right) = \int_{u \ge 0} \PP \left( \left\|\sqrt{p}\boldsymbol{\Omega}^{\dagger}\right\|_2^{a} \ge u \right ) d u\\
    &\quad  = \int_{ 0 \le u \le c^{a}} \PP \left( \left\|\sqrt{p}\boldsymbol{\Omega}^{\dagger}\right\|_2^{a} \ge u \right ) d u + \int_{u \ge c^{a}} \PP \left( \left\|\sqrt{p}\boldsymbol{\Omega}^{\dagger}\right\|_2^{a} \ge u \right ) d u\\
    &\quad  \le c^{a} + \int_{u \ge c^{a}} \!\!\! \PP \left( \left\|\sqrt{p}\boldsymbol{\Omega}^{\dagger}\right\|_2 \ge u^{1/a} \right ) d u  \le c^{a} + \int_{u \ge c^{a}} \!\!\! \left(u^{1/a}/c \right)^{-(p-K+1)} d u\\
    & \quad = c^{a}\left(1+\frac{1}{(p-K+1)/a-1} \right).
\end{align*}
Since $1+\frac{1}{(p-K+1)/a-1} \le 2$, the claim follows. 

\subsection{Proof of Lemma \ref{lm: control of prob}}\label{sec: proof lm control prob}
We first consider the probability $\PP \left( \|{\mathbf{\Sigma}}^\prime - \Vb \Vb^{\top} \|_2  \ge \varepsilon \right) $. Recall the matrix ${\Yb}^{(\ell)} := \Vb \Pb_0 \mathbf{\Lambda}^0 \Vb^{\top} \mathbf{\Omega}^{(\ell)}$. 
Now by Jensen's inequality and Wedin's Theorem \citep{wedin1972wedin}, we have
\begin{align*}
    &\|{\mathbf{\Sigma}}^\prime - \Vb \Vb^{\top} \|_2  = \|  \mathbb{E}\left(\widehat{\Vb}^{(\ell)} \widehat{\Vb}^{(\ell) \top} | \hat{\Mb} \right)- \Vb \Vb^{\top}\|_2 \le \mathbb{E}\left(\left\|  \widehat{\Vb}^{(\ell)} \widehat{\Vb}^{(\ell) \top} - \Vb \Vb^{\top}\right\|_2\Big| \hat{\Mb} \right)\\
    & \quad \lesssim \EE \left( \|\widehat\Yb^{(\ell)}/\sqrt{p} - {\Yb}^{(\ell)}/\sqrt{p}\|_2/ \sigma_K\left({\Yb}^{(\ell)} /\sqrt{p}\right)| \hat{\Mb}\right)
    \le \frac{\|\Eb\|_2}{\Delta} \EE \left( \frac{\|\mathbf{\Omega}^{(\ell)}/\sqrt{p}\|_2}{\sigma_{\min}\left( \tilde{\mathbf{\Omega}}^{(\ell)} /\sqrt{p}\right)} \quad \bigg| \hat{\Mb}\right)\\
    &\quad = \frac{\|\Eb\|_2}{\Delta} \EE \left( \frac{\|\mathbf{\Omega}^{(\ell)}/\sqrt{p}\|_2}{\sigma_{\min}\left( \tilde{\mathbf{\Omega}}^{(\ell)} /\sqrt{p}\right)}\right)
    \le \frac{\|\Eb\|_2}{\Delta} \EE \left( \|\mathbf{\Omega}^{(\ell)}/\sqrt{p}\|_2^2\right)^{1/2} \EE\left ( \left(\sigma_{\min} (\boldsymbol{\Omega}^{(\ell)} /\sqrt{p})\right)^{-2} \right)^{1/2}\\
    & \quad \lesssim \frac{\|\Eb\|_2}{\Delta} \|\|\mathbf{\Omega}^{(\ell)}/\sqrt{p}\|_2\|_{\psi_1} \lesssim \frac{\|\Eb\|_2}{\Delta} \sqrt{d/p},
\end{align*}
where the last but one inequality is due to Lemma \ref{lm: bound min eigenvalue} under the condition that $p \ge \max(2K, K + 3)$, and the last inequality is due to Lemma \ref{lm: gaussian norm}. Therefore, by Assumption \ref{asp: tail prob bound}, there exist constants $c_0, c_0' >0$ such that 
$$
    \PP \left( \|{\mathbf{\Sigma}}^\prime - \Vb \Vb^{\top} \|_2  \ge \varepsilon \right)  \le \PP \left( \frac{\|\Eb\|_2}{\Delta} \sqrt{d/p} \ge c_0' \varepsilon \right)\le 2\exp\left( -c_0 \sqrt{\frac{p}{d}} \frac{\Delta \varepsilon }{r_1(d)}\right).
$$
Similarly, we consider the probability $\PP \left( \|{\mathbf{\Sigma}}^\prime - \hat\Vb \hat\Vb^{\top} \|_2 \ge \varepsilon \right)$. By Assumption \ref{asp: tail prob bound}, there exist constants $c_0'', c_0'''>0$ such that 
\begin{align*}
    &\PP \left( \|{\mathbf{\Sigma}}^\prime - \hat\Vb \hat\Vb^{\top} \|_2 \ge \varepsilon \right)  \le \PP \left( \|{\mathbf{\Sigma}}^\prime - \Vb \Vb^{\top} \|_2  \ge \varepsilon/2 \right) + \PP \left( \|\hat\Vb \hat\Vb^{\top}- \Vb \Vb^{\top} \|_2  \ge \varepsilon/2 \right)\\
    & \quad \lesssim \exp\left( -c_0 \sqrt{\frac{p}{d}} \frac{\Delta \varepsilon }{2r_1(d)}\right) + \PP \left(\frac{\|\Eb\|_2}{\Delta} \ge c_0''' \varepsilon\right) \lesssim \exp\left( -c_0 \sqrt{\frac{p}{d}} \frac{\Delta \varepsilon }{2r_1(d)}\right)   + \exp\left(-\frac{c_0'' \Delta \varepsilon}{r_1(d)}\right) \\
    & \quad \lesssim \exp\left( -c_0 \sqrt{\frac{p}{d}} \frac{\Delta \varepsilon }{2r_1(d)}\right). 
\end{align*}
Therefore, the claim follows.
\subsection{Proof of Lemma \ref{lm: cov op norm up bd}}\label{sec: proof lm cov op norm}
We know that $\Cov(\xb_1 + \xb_2) = \Cov(\xb_1) + \Cov(\xb_2) + \Cov(\xb_1, \xb_2) + \Cov(\xb_2, \xb_1)$, where $\Cov(\xb_1, \xb_2) = \EE(\xb_1 - \EE \xb_1)(\xb_2 - \EE \xb_2)^{\top}$, and $$\|\Cov(\xb_i)\|_2 = \max_{\|\vb\|_2=1} \vb^{\top} \Cov(\xb_i) \vb = \max_{\|\vb\|_2=1} \operatorname{Var}\big(\vb^{\top}\xb_i\big),$$ for $i =1, 2$. Therefore, we have
\begin{align*}
    \|\Cov(\xb_1, \xb_2)\|_2 & = \max_{\|\vb\|_2=1,\|\ub\|_2=1} \vb^{\top} \Cov(\xb_1, \xb_2) \ub = \max_{\|\vb\|_2=1,\|\ub\|_2=1} \Cov (\vb^{\top}\xb_1, \ub^{\top}\xb_2)\\
    & \le \max_{\|\vb\|_2=1,\|\ub\|_2=1} \sqrt{\operatorname{Var}(\vb^{\top}\xb_1) } \sqrt{\operatorname{Var}(\vb^{\top}\xb_2) }= \sqrt{\|\Cov(\xb_1)\|_2\|\Cov(\xb_2)\|_2} \\
    & \le \frac{1}{2}\|\Cov(\xb_1)\|_2 + \frac{1}{2}\|\Cov(\xb_2)\|_2.
\end{align*}
Thus we have 
\begin{align*}
\|\Cov(\xb_1 + \xb_2) \|_2 &\le \|\Cov(\xb_1)\|_2 + \|\Cov(\xb_2)\|_2 + \|\Cov(\xb_1, \xb_2)\|_2 + \|\Cov(\xb_2, \xb_1)\|_2 \\
&\le 2\|\Cov(\xb_1)\|_2 + 2\|\Cov(\xb_2)\|_2.    
\end{align*}

\section{Wedin's Theorem}\label{sec: supp}
\begin{lemma}[Modified Wedin's Theorem]\label{lm: wedin}
Let $\Mb^{\star}$ and $\Mb=\Mb^{\star}+\Eb$ be two matrices in $\mathbb{R}^{n_{1} \times n_{2}}$ (without loss of generality, we assume $n_{1} \leq n_{2}$ ), whose SVDs are given respectively by
$$
\Mb^{\star}=\sum_{i=1}^{n_{1}} \sigma_{i}^{\star} \ub_{i}^{\star} \vb_{i}^{\star \top}=\left[\begin{array}{ll}
\Ub^{\star} & \Ub_{\perp}^{\star}
\end{array}\right]\left[\begin{array}{ccc}
\boldsymbol{\Sigma}^{\star} & \mathbf{0} & \mathbf{0} \\
\mathbf{0} & \boldsymbol{\Sigma}_{\perp}^{\star} & \mathbf{0}
\end{array}\right]\left[\begin{array}{c}
\Vb^{\star \top} \\
\Vb_{\perp}^{\star \top}
\end{array}\right], 
$$
$$
\Mb=\sum_{i=1}^{n_{1}} \sigma_{i} \ub_{i} \vb_{i}^{\top}=\left[\begin{array}{ll}
\Ub & \Ub_{\perp}
\end{array}\right]\left[\begin{array}{ccc}
\boldsymbol{\Sigma} & \mathbf{0} & \mathbf{0} \\
\mathbf{0} & \boldsymbol{\Sigma}_{\perp} & \mathbf{0}
\end{array}\right]\left[\begin{array}{l}
\Vb^{\top} \\
\Vb_{\perp}^{\top}
\end{array}\right].
$$

Here, $\sigma_{1} \geq \cdots \geq \sigma_{n_{1}}$ (resp. $\sigma_{1}^{\star} \geq \cdots \geq \sigma_{n_{1}}^{\star}$) stand for the singular values of $\Mb$ (resp. $\Mb^{\star}$) arranged in descending order, $\ub_{i}$ (resp. $\left.\ub_{i}^{\star}\right)$ denotes the left singular vector associated with the singular value $\sigma_{i}$ (resp. $\sigma_{i}^{\star}$), and $\vb_{i}$ (resp. $\vb_{i}^{\star}$) represents the right singular vector associated with $\sigma_{i}$ (resp. $\sigma_{i}^{\star}$). $\Ub$ and $\Ub^{\star}$ stand for the top $r$ eigenvectors of $\Mb$ and $\Mb^{\star}$ respectively. Then, 

\begin{equation}\label{eq: wedin 1}
\max \left\{\|\Ub \Ub^{\top} - \Ub^{\star} \Ub^{\star \top} \|_2, \|\Vb\Vb^{\top} - \Vb^{\star}\Vb^{\star \top} \|_2 \right\}  \lesssim \frac{2\|\Eb\|}{\sigma_{r}^{\star}-\sigma_{r+1}^{\star}}, 
\end{equation}
and 
\begin{equation}\label{eq: wedin 2}
\max \left\{\|\Ub \Ub^{\top} - \Ub^{\star} \Ub^{\star \top} \|_{\rm F}, \|\Vb\Vb^{\top} - \Vb^{\star}\Vb^{\star \top} \|_{\rm F} \right\}  \lesssim \frac{2 \sqrt{r}\|\Eb\|}{\sigma_{r}^{\star}-\sigma_{r+1}^{\star}}.
\end{equation}
\end{lemma}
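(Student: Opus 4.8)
The plan is to pass from the projection-distance quantities on the left-hand side to the canonical $\sin\Theta$ distances between the relevant singular subspaces, apply the classical Wedin $\sin\Theta$ perturbation bound, and then collapse the residual terms onto $\|\Eb\|_2$. First I would record the standard identities relating projections to principal angles: writing $\Theta_U$ (resp.\ $\Theta_V$) for the diagonal matrix of principal angles between $\operatorname{Col}(\Ub)$ and $\operatorname{Col}(\Ub^{\star})$ (resp.\ between $\operatorname{Col}(\Vb)$ and $\operatorname{Col}(\Vb^{\star})$), one has
$$\|\Ub\Ub^{\top}-\Ub^{\star}\Ub^{\star\top}\|_2=\|\sin\Theta_U\|_2,\qquad \|\Ub\Ub^{\top}-\Ub^{\star}\Ub^{\star\top}\|_{\rm F}=\sqrt{2}\,\|\sin\Theta_U\|_{\rm F},$$
and likewise for $V$. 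Since $\sin\Theta_U$ has at most $r$ nonzero singular values, $\|\sin\Theta_U\|_{\rm F}\le\sqrt{r}\,\|\sin\Theta_U\|_2$, so \eqref{eq: wedin 2} follows from \eqref{eq: wedin 1} once the operator-norm bound is in hand; it therefore suffices to bound $\max\{\|\sin\Theta_U\|_2,\|\sin\Theta_V\|_2\}$.

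Next I would invoke the Wedin $\sin\Theta$ theorem in the residual form, which bounds $\max\{\|\sin\Theta_U\|_2,\|\sin\Theta_V\|_2\}$ by $\max\{\|\Rb\|_2,\|\widetilde\Rb\|_2\}/\delta$, where $\Rb=\Mb\Vb^{\star}-\Ub^{\star}\boldsymbol{\Sigma}^{\star}$ and $\widetilde\Rb=\Mb^{\top}\Ub^{\star}-\Vb^{\star}\boldsymbol{\Sigma}^{\star}$ are the two residual matrices and $\delta$ is the separation between the singular values of $\Mb$ carried by $\Ub$ and those of $\Mb^{\star}$ carried by $\Ub^{\star}_{\perp}$, i.e.\ $\delta=\sigma_r(\Mb)-\sigma_{r+1}^{\star}$. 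Using $\Mb=\Mb^{\star}+\Eb$ together with $\Mb^{\star}\Vb^{\star}=\Ub^{\star}\boldsymbol{\Sigma}^{\star}$, the residual collapses: $\Rb=\Eb\Vb^{\star}$, hence $\|\Rb\|_2\le\|\Eb\|_2$, and symmetrically $\|\widetilde\Rb\|_2=\|\Eb^{\top}\Ub^{\star}\|_2\le\|\Eb\|_2$. To replace $\delta$ by the cleaner quantity $\sigma_r^{\star}-\sigma_{r+1}^{\star}$ in the statement, I would use Weyl's inequality: $\sigma_r(\Mb)\ge\sigma_r^{\star}-\|\Eb\|_2$, so $\delta\ge\sigma_r^{\star}-\sigma_{r+1}^{\star}-\|\Eb\|_2$. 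In the regime $\|\Eb\|_2\le\tfrac12(\sigma_r^{\star}-\sigma_{r+1}^{\star})$ we get $\delta\ge\tfrac12(\sigma_r^{\star}-\sigma_{r+1}^{\star})$, giving $\|\sin\Theta_U\|_2\le 2\|\Eb\|_2/(\sigma_r^{\star}-\sigma_{r+1}^{\star})$ and similarly for $V$; in the complementary regime $\|\Eb\|_2>\tfrac12(\sigma_r^{\star}-\sigma_{r+1}^{\star})$ the right-hand side of \eqref{eq: wedin 1} already exceeds $1\ge\|\Ub\Ub^{\top}-\Ub^{\star}\Ub^{\star\top}\|_2$, so the inequality holds trivially, and analogously (up to the constant absorbed by $\lesssim$) for \eqref{eq: wedin 2} after the rank reduction.

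I expect the main obstacle to be the careful bookkeeping of the gap condition: making sure the invoked version of Wedin's theorem has its hypotheses met, i.e.\ that the separation $\delta$ is genuinely between the \emph{perturbed} subspace's singular values and the \emph{unperturbed} complement's (not between two perturbed or two unperturbed blocks), and that the regime $\|\Eb\|_2>\tfrac12(\sigma_r^{\star}-\sigma_{r+1}^{\star})$ is dispatched separately so that the denominator appearing in the final bound is $\sigma_r^{\star}-\sigma_{r+1}^{\star}$ rather than $\sigma_r^{\star}-\sigma_{r+1}^{\star}-\|\Eb\|_2$. The $\sin\Theta$-to-projection identities, the residual computation, and the Frobenius rank reduction are all routine; everything else is swept into the constants hidden by $\lesssim$.
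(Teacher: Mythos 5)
Your proof is correct and follows the same two-regime strategy as the paper: apply a quantitative form of Wedin's $\sin\Theta$ theorem when $\|\Eb\|_2$ is small relative to $\sigma_r^\star-\sigma_{r+1}^\star$, and observe that the claimed bounds hold trivially (LHS bounded by $1$, resp. $\sqrt{2r}$) when $\|\Eb\|_2$ is large. The paper simply cites the standard statement of Wedin's theorem with the threshold $(1-1/\sqrt{2})(\sigma_r^\star-\sigma_{r+1}^\star)$, while you unpack it via the $\sin\Theta$–projection identities, the residual computation $\Rb=\Eb\Vb^\star$, and Weyl's inequality, arriving at the threshold $\tfrac12(\sigma_r^\star-\sigma_{r+1}^\star)$; the constant difference is immaterial under $\lesssim$.
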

\begin{proof}
By Wedin's Theorem \citep{wedin1972wedin}, if $\|\Eb\|_2 <(1-1 / \sqrt{2})\left(\sigma_{r}^{\star}-\sigma_{r+1}^{\star}\right)$, \eqref{eq: wedin 1} and \eqref{eq: wedin 2} are true. When $\|\Eb\|_2 \ge (1-1 / \sqrt{2})\left(\sigma_{r}^{\star}-\sigma_{r+1}^{\star}\right)$, the RHS of \eqref{eq: wedin 1} are larger than or equal to $2 - \sqrt{2}$, whereas the LHS are bounded by 1. Thus \eqref{eq: wedin 1} follows trivially, and so is \eqref{eq: wedin 2}.
\end{proof}
\section{Supplementary Figures}\label{sec: supp figs}
We provide in this section additional figures deferred from the main paper.
\begin{figure}[htbp]
		\centering
		\begin{tabular}{cc}
 		     {\small (a) Example~\ref{ex: spiked gaussian}: Spiked Covariance Model }&{\small (b) Example~\ref{ex: GMM}: Gaussian Mixture Models }\\
 		    \includegraphics[height=0.27\textwidth]{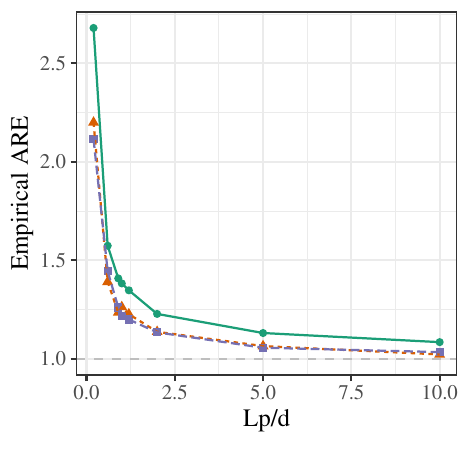}  & \includegraphics[height=0.27\textwidth]{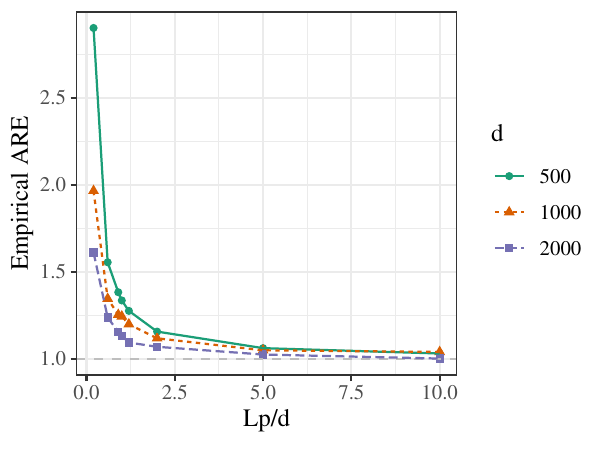} 
 		\end{tabular}
		\caption{\small \small Asymptotic relative efficiency (ARE) between the FADI estimator and the traditional PCA estimator under the spiked covariance model and GMM, where the ARE is measured by  $\det(\hbSigma^{\rm FADI})^{1/K}\cdot \det(\hbSigma^{\rm PCA})^{-1/K}$ with 
		$\hbSigma^{\rm FADI}$ and $\hbSigma^{\rm PCA}$  being the empirical covariance matrices for   the FADI and traditional PCA estimators \citep{serfling2009approximation}. 
  The results suggest that when $Lp/d>1$ and increases, the ARE between FADI and the traditional PCA approaches 1.
  } \label{fig: ARE exm 1 3}
	\end{figure}
\begin{figure}[H]
	    \centering
	    \includegraphics[width=0.7\textwidth]{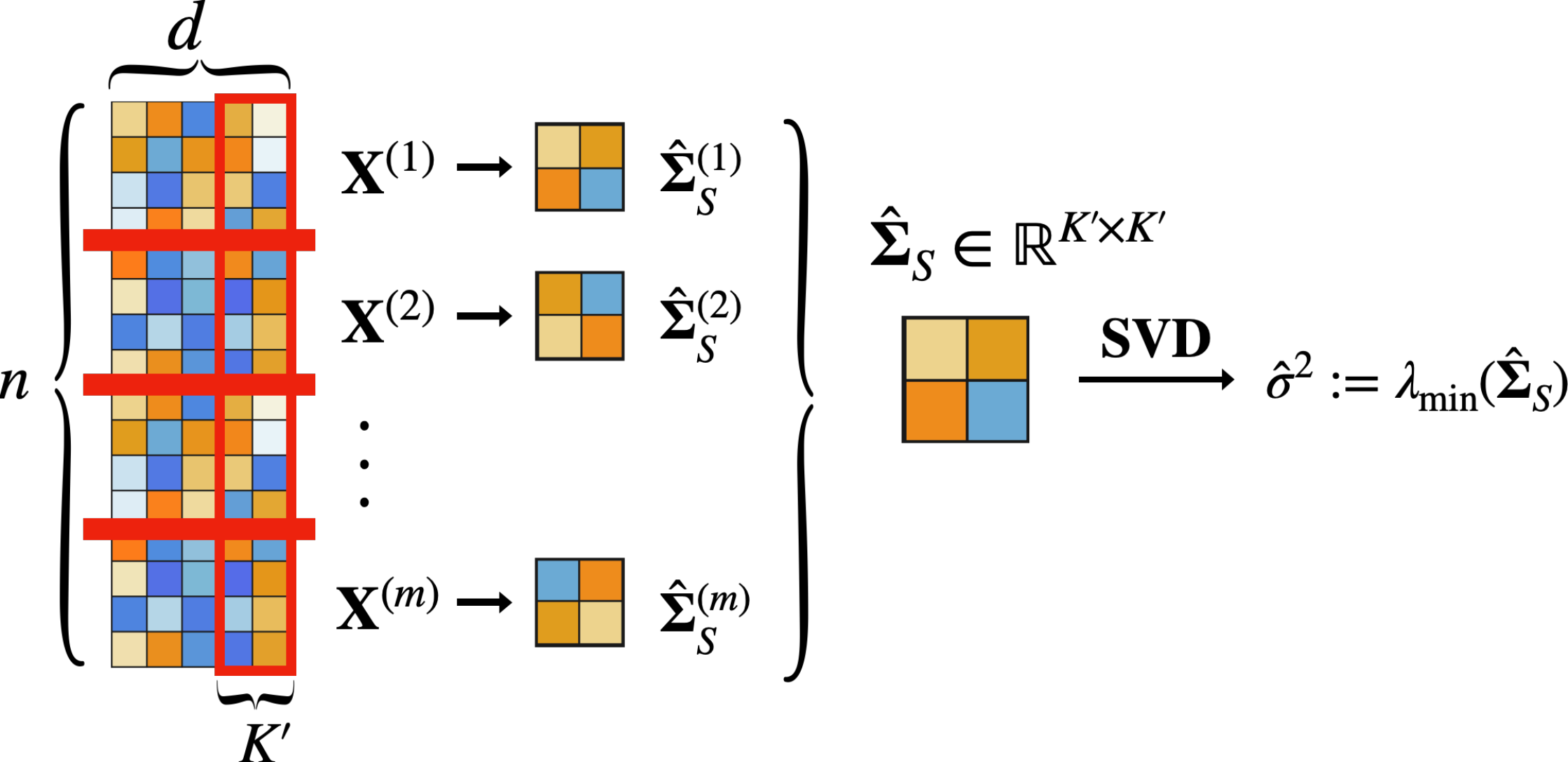}
	    \caption{\small Illustration of Step 0 for Example~\ref{ex: spiked gaussian}. $\hbSigma_S^{(s)} = \Xb_{[:,S]}^{(s)\top}\Xb_{[:,S]}^{(s)}$ is calculated by the data columns in the set $S$ for the $s$-th split ($s \in [m]$), and $\hbSigma_S = n^{-1} \sum_{s \in [m]} \hbSigma_S^{(s)}$.}
	    \label{fig: step 0 exm 1}
	\end{figure}
 \begin{figure}[H]
		\centering
		\includegraphics[height=0.27\textwidth]{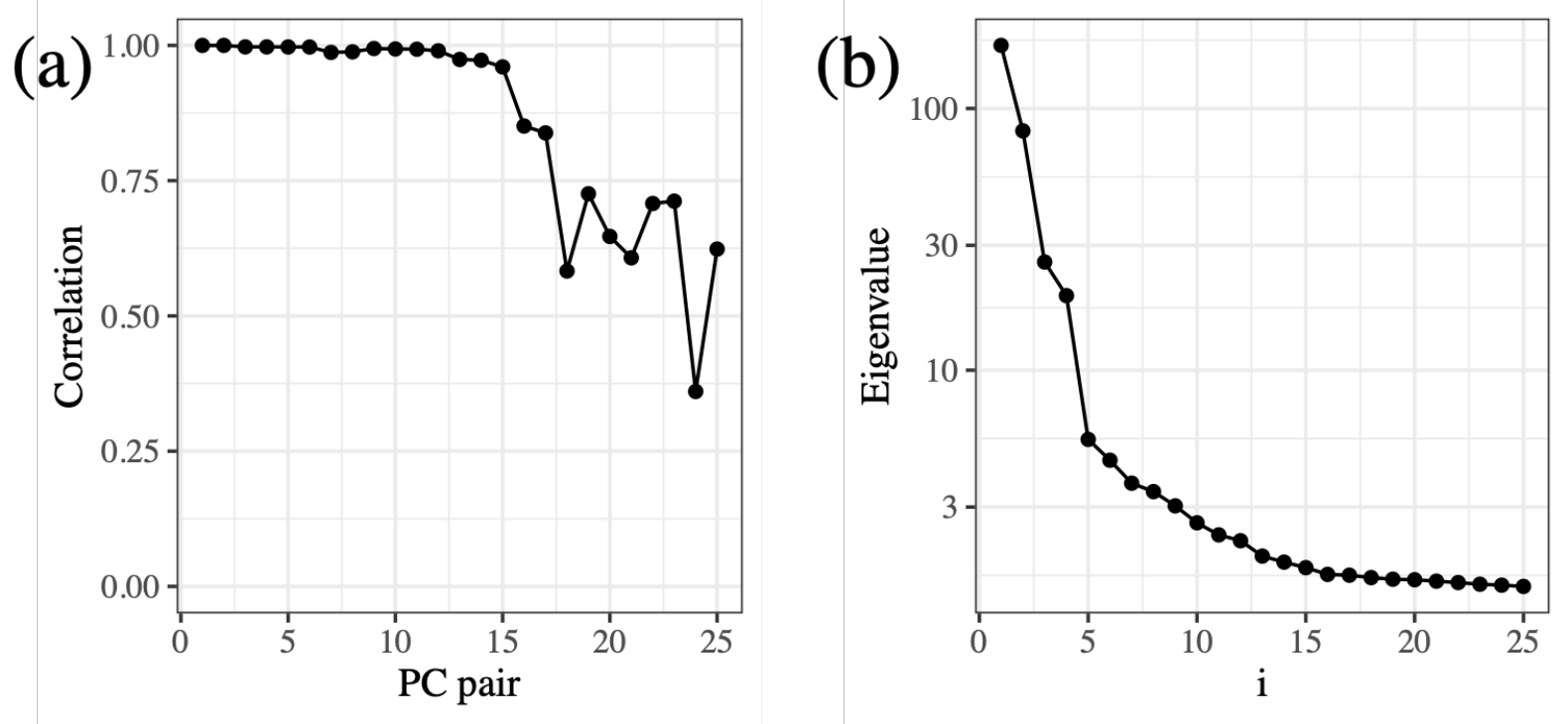} 
		\caption{\small (a) Correlations between the 25 leading PCs calculated by FADI and by full sample PCA on  the 1000 Genomes Data; (b) Top 25 eigenvalues for the sample covariance matrix of the 1000 Genomes Data. We can see that for the 15 leading PCs, the results calculated by FADI are highly correlated to the results calculated by the traditional full sample PCA, whereas the correlations drop afterward. This can be attributed to the fact that the top 15 eigenvalues are well-separated for the sample covariance matrix of the 1000 Genomes Data, and the eigengaps get smaller after the 15th eigenvalue. 
  }\label{fig: 1000g corr}
	\end{figure}
 \begin{figure}[H]
		\centering
			\includegraphics[width=0.9\textwidth]{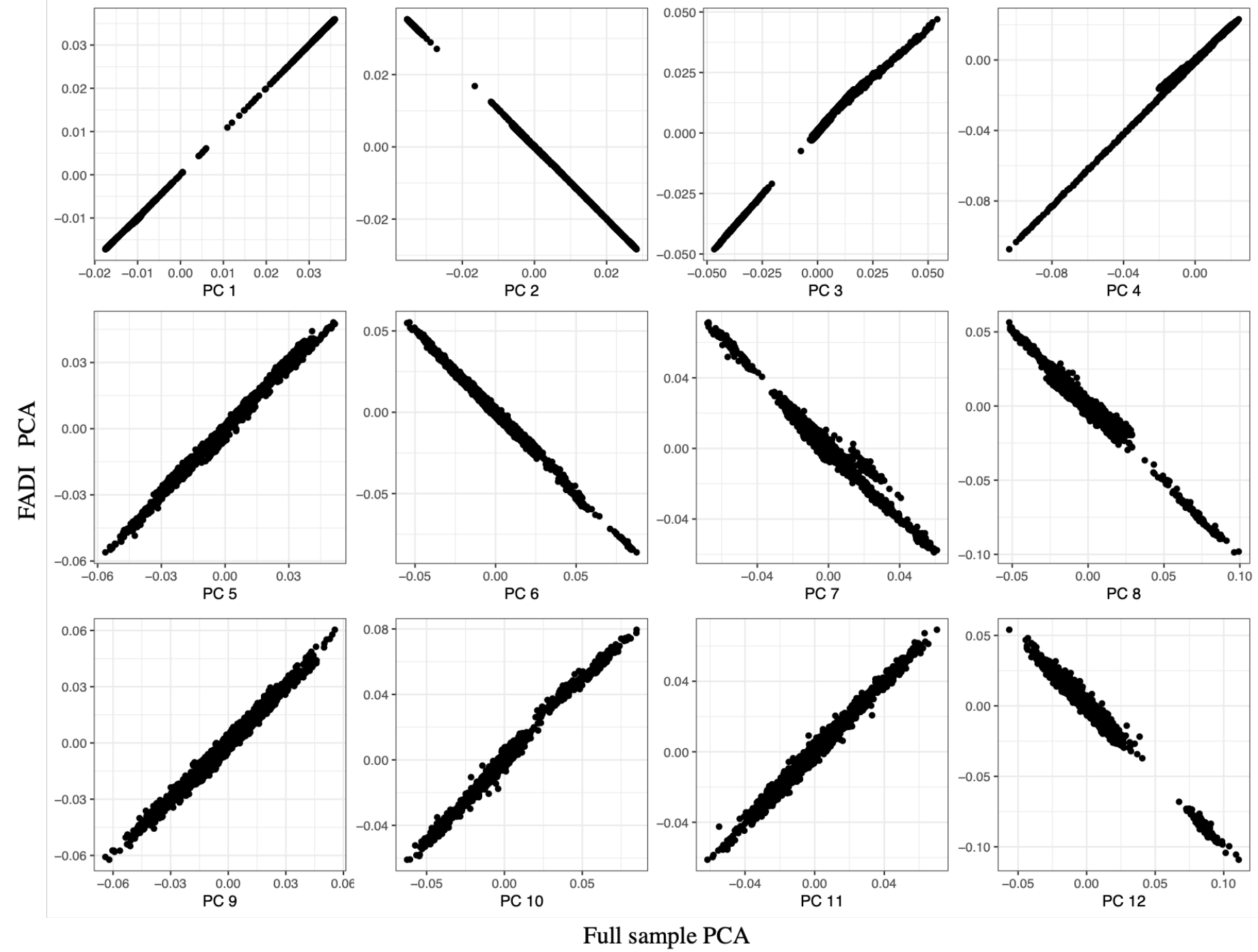}
		\caption{\small Comparison of the top 12 PCs of the 1000 Genomes Data calculated by full sample traditional PCA and by FADI.}\label{fig: pc corr scatter plot 1000g}
	\end{figure}
\end{document}